\keywords{linear logic, Taylor expansion, proof-structure, pullback, natural transformation}
\theoremstyle{plain}
\newtheorem{theorem}[thm]{Theorem}
\newtheorem{proposition}[thm]{Proposition}
\newtheorem{lemma}[thm]{Lemma}
\theoremstyle{definition}
\newtheorem{definition}[thm]{Definition}
\newtheorem{example}[thm]{Example}
\newtheorem{remark}[thm]{Remark}
\crefname{thm}{Theorem}{Theorems}
\crefname{lem}{Lemma}{Lemmas}
\crefname{prop}{Proposition}{Propositions}
\crefname{defi}{Definition}{Definitions}
\crefname{exa}{Example}{Examples}
\crefname{rem}{Remark}{Remarks}
\crefname{theorem}{Theorem}{Theorems}
\crefname{lemmma}{Lemma}{Lemmas}
\crefname{proposition}{Proposition}{Propositions}
\crefname{definition}{Definition}{Definitions}
\crefname{example}{Example}{Examples}
\crefname{remark}{Remark}{Remarks}
\Crefname{thm}{Theorem}{Theorems}
\Crefname{lem}{Lemma}{Lemmas}
\Crefname{prop}{Proposition}{Propositions}
\Crefname{defi}{Definition}{Definitions}
\Crefname{exa}{Example}{Examples}
\Crefname{rem}{Remark}{Remarks}
\Crefname{theorem}{Theorem}{Theorems}
\Crefname{lemma}{Lemma}{Lemmas}
\Crefname{proposition}{Proposition}{Propositions}
\Crefname{definition}{Definition}{Definitions}
\Crefname{example}{Example}{Examples}
\Crefname{remark}{Remark}{Remarks}
\newcommand{\tikzsetnextfilename}[1]{}
\DeclareMathOperator{\WHY}{\textup{?}}
\def\?{\setbox0=\hbox{$\WHY$}\raisebox{.2\ht0}{\clipbox{0pt .2\ht0 0pt
      -.1\ht0}{?}}}
\newcommand{\Contr}{
		\textup{\ooalign{\raisebox{0ex}{$\?$}\cr\raisebox{-.5ex}{$\scriptscriptstyle
					c$}}}}
\newcommand{\Der}{
		\textup{\ooalign{\raisebox{0ex}{$\?$}\cr\raisebox{-.5ex}{$\scriptscriptstyle
					d$}}}}
\newcommand{\Weak}{
		\textup{\ooalign{\raisebox{0ex}{$\?$}\cr\raisebox{-.5ex}{$\scriptscriptstyle
					w$}}}}
\newcommand{\Ax}{\mathbin{\ax}}
\newcommand{\OcAx}{\mathbin{\ocax}}
\newcommand{\Cut}{\mathbin{\cut}}
\newcommand{\Ex}{\mathbin{\mathtt{exc}}}
\newcommand{\One}{\mathbin{\mathbf{1}}}
\newcommand{\BoxR}{\mathbin{\mathtt{box}}}
\newcommand{\Mix}{\mathbin{\mathtt{mix}}}
\newcommand{\Orient}{\ensuremath{\mathsf{o}}}
\newcommand{\emptynet}{\kl[empty proof-structure]{\varepsilon}}
\newcommand{\emptynets}{\kl[empty quasi-proof-structure]{\vec{\varepsilon}}}
\newcommand{\emptylist}{\kl[empty list]{\epsilon}}
\newcommand{\emptylists}{\kl[empty lists]{\vec{\epsilon}}}
\newcommand{\pnrewrite}[1][\quad]{%
  % \tikzset{external/export next=false}
  \mathrel{%
    \begin{tikzpicture}[%
      baseline={(current bounding box.south)}
      ]
      \node[%
      ,inner sep=.44ex
      ,align=center
      ] (tmp) {$\scriptstyle #1$\vphantom{\quad}};
      \path[%
      ,draw,<-
      ,decorate,decoration={%
        ,snake
        ,amplitude=0.7pt
        ,segment length=1.2mm,pre length=3.5pt
      }
      ] 
      (tmp.south east) -- (tmp.south west);
    \end{tikzpicture}
  }
}
\newcommand{\FlagType}{\kl[edge type]{\ensuremath{\mathsf{c}}}}
\newcommand{\VertType}{\kl[vertex type]{\ensuremath{\mathsf{\ell}}}}
\newcommand{\SystemFont}[1]{\mathsf{#1}}
\newcommand{\LL}{\ensuremath{\SystemFont{LL}}\xspace}
\newcommand{\MELLPS}{\kl[$\MELL$ proof-structure]{\MELL}}
\newcommand{\MELL}{\ensuremath{\SystemFont{ME}\LL}\xspace}
\newcommand{\MLL}{\ensuremath{\SystemFont{M}\LL}\xspace}
\newcommand{\DiLL}{\ensuremath{\SystemFont{Di}\LL}\xspace}
\newcommand{\polyadic}{\kl[resource]{\ensuremath{\DiLL_0}}}
\newcommand{\one}{\ensuremath{\mathbf{1}}}
\newcommand{\ax}{\ensuremath{\mathtt{ax}}}
\newcommand{\ocax}{\ensuremath{\oc\mathtt{ax}}}
\newcommand{\cut}{\ensuremath{\mathtt{cut}}}
\newcommand{\ie}{\textit{i.e.}\xspace}
\newcommand{\eg}{\textit{e.g.}\xspace}
\newcommand{\resp}{resp.\xspace}
\newcommand{\gluability}{gluability\xspace}
\newcommand{\Gluability}{Gluability\xspace}
\newcommand{\gluable}{gluable\xspace}
\newcommand{\smallproofnets}{0.75}
\newcommand{\beforepn}{-1cm}
\newcommand{\PowerSet}[1]{\mathfrak{P}(#1)}
\newcommand{\NatTransf}{\ensuremath{\mathfrak{T}^{\maltese}}}
\newcommand{\Taylor}[1]{\kl[Taylor expansion]{\TaylorSym(#1)}}
\newcommand{\TaylorSym}{\ensuremath{\mathcal{T}}}
\newcommand{\ProtoTaylor}[1]{\kl[proto-Taylor expansion]
  {\ensuremath{\TaylorSym^{\mathsf{pr}}(#1)}}}
\newcommand{\FatTaylor}[1]{\ensuremath{\TaylorSym^{\maltese}(#1)}}
\newcommand{\Size}[1]{\mathsf{s}(#1)}
\newcommand{\TreeT}{\ensuremath{\mathcal{A}}}
\newcommand{\ForestT}{\ensuremath{\mathcal{F}}}
\newcommand{\ReflexiveTransitive}{\kl[reflexive-transitive closure]{\circlearrowleft}}
\newcommand{\id}{\mathrm{id}}
\newcommand{\sched}{\nu}
\newcommand{\schedtwo}{\mu}
\newcommand{\Nat}{\ensuremath{\mathbb{N}}}
\newcommand{\BoxFunction}{\ensuremath{\mathsf{box}}}
\newcommand{\In}{\ensuremath{\mathbf{in}}}
\newcommand{\Out}{\ensuremath{\mathbf{out}}}
\newcommand{\Formulas}{\ensuremath{\mathcal{F}\!\mathit{orm}}}
\newcommand{\CatFont}[1]{\mathbf{#1}}
\newcommand{\Graph}{\ensuremath{\CatFont{Graph}}}
\newcommand{\Scheduling}{\kl[scheduling]{\CatFont{Sched}}}
\newcommand{\Rel}{\kl[relation]{\CatFont{Rel}}}
\newcommand{\op}[1]{{#1}^{\mathrm{op}}}
\newcommand{\RootedTree}{\ensuremath{\CatFont{RoTree}}}
\newcommand{\GenericFont}[1]{\mathcal{#1}}
\newcommand{\catC}{\ensuremath{\GenericFont{C}}} % Generic category
\newcommand{\MELLFunctor}{\kl[mell functor]{\ensuremath{\mathsf{qMELL}}}}
\newcommand{\PolyPN}{\ensuremath{\mathsf{qDiLL_0}}}
\newcommand{\PPolyPN}{\kl[dill functor]{\ensuremath{\mathfrak{P}\PolyPN}}}
\newcommand{\mathnodes}[4]{
	\matrix (m) [
	matrix of math nodes,
	row sep=#1em,
	column sep=#2em,
	minimum width=#3em,
	ampersand replacement = \&]
	{#4}
}
\begin{document}

\title[Gluing resource proof-structures: inhabitation and inverting Taylor]{Gluing resource proof-structures: inhabitation and inverting the {Taylor} expansion}

\author[G. Guerrieri]{Giulio Guerrieri}[a]	%required
\address{Huawei Research, Edinburgh Research Centre, Edinburgh, United Kingdom}	%required
\email{giulio.guerrieri@huawei.com}  %optional
%\thanks{thanks 1, optional.}	%optional

\author[L. Pellissier]{Luc Pellissier}[b]	%optional
\address{Université Paris Est Créteil, LACL, F-94010 Créteil, France}	%optional
\email{luc.pellissier@lacl.fr}  %optional
%\thanks{thanks 2, optional.}	%optional

\author[L. Tortora de Falco]{Lorenzo Tortora de Falco}[c]	%optional
\address{Università Roma Tre, Dipartimento di Matematica e Fisica, Rome, Italy}	%optional
\email{tortora@uniroma3.it}  %optional
%\thanks{thanks 3, optional.}	%optional

\begin{abstract}
  A Multiplicative-Exponential Linear Logic (MELL) proof-structure can be
  expanded into a set of resource proof-structures: its Taylor expansion. We
  introduce a new criterion characterizing (and deciding in the finite case) those sets of resource
  proof-structures that are part of the Taylor expansion of some MELL
  proof-structure, through a rewriting system acting both on resource and MELL
  proof-structures.  We also prove semi-decidability of
  the type inhabitation problem for cut-free MELL proof-structures.
\end{abstract}

\maketitle

% !TEX root = mainFinal.tex
\section{Introduction}
\label{sect:intro}

\subsubsection*{The Taylor expansion}
Girard's linear logic (\LL, \cite{Girard:1987}) is a
refinement of intuitionistic and classical logic that isolates the infinitary
parts of reasoning 
in two dual modalities: the \emph{exponentials} $\oc$
and $\wn$.
They give a logical status to operations of memory management such as
\emph{copying} and \emph{erasing}: a linear (\ie~exponential-free) proof corresponds---via the
Curry--Howard isomorphism---to a program that uses its argument \emph{linearly},
\ie~exactly once, while an exponential proof corresponds to a program that can
use its argument \emph{at will}.

The intuition that linear programs are analogous to linear functions
(as studied in linear algebra) while exponential programs mirror a more general
class of analytic functions
got a technical incarnation in Ehrhard's work \cite{Ehrhard:2002,Ehrhard:2005} on $\LL$-based denotational semantics for the $\lambda$-calculus.
This investigation has been then internalized in the syntax, yielding the \emph{resource $\lambda$-calculus} \cite{Ehrhard:2003,Ehrhard:2008}, inspired by \cite{Boudol}: 
there, copying and erasing are forbidden 
and replaced by the possibility to apply a 
function to a \emph{bag} of resource $\lambda$-terms which specifies how many times (in a finite number) an argument can be linearly passed to the function, so as to represent only bounded computations.

The \emph{Taylor expansion} \cite{Ehrhard:2008} (more precisely, its support; but here and throughout the paper we do not consider the rational coefficients in the Taylor expansion) associates with every ordinary $\lambda$-term a---generally infinite---set of resource $\lambda$-terms, recursively approximating
the usual application: the Taylor expansion of the $\lambda$-term $MN$ is made up
of all resource $\lambda$-terms of the form $t[u_1, \dots, u_n]$, where $t$ is a
resource $\lambda$-term in the Taylor expansion of $M$, and $[u_1,\dots,u_n]$ is
a bag of $n$ (for any $n \geqslant 0$) resource $\lambda$-terms in
the Taylor expansion of $N$.  Roughly, the idea is to decompose a program into a
set of purely ``resource-sensitive programs'', all of them containing only bounded (although
possibly non-linear) calls to inputs. The notion of Taylor expansion has had
many applications in the theory of the $\lambda$-calculus, \eg{} in the study of
linear head reduction \cite{EhrhardRegnier06}, normalization
\cite{DBLP:conf/fossacs/PaganiTV16,DBLP:conf/csl/Vaux17}, Böhm trees
\cite{DBLP:conf/csl/BoudesHP13,KerinecMP18,BarbarossaManzonetto20},
$\lambda$-theories~\cite{ManzonettoR14}, intersection
types~\cite{MazzaPellissierVial18}. 
In general, understanding the relation
between a program and its Taylor expansion renews the logical approach to the
quantitative analysis of computation started with the inception~of~\LL.

A natural question is the \emph{inverse Taylor expansion problem}: Which sets of resource $\lambda$-terms are included in the Taylor expansion of a same $\lambda$-term? 
How to characterize them?
Ehrhard and Regnier \cite{Ehrhard:2008} defined a simple \emph{coherence} binary relation such that a finite set of resource $\lambda$-terms is included in the Taylor expansion of a $\lambda$-term if and only if all the elements
of this set are pairwise coherent. Coherence is crucial in many structural
properties of the resource $\lambda$-calculus, such as in the proof that in the $\lambda$-calculus normalization and Taylor expansion commute \cite{EhrhardRegnier06,Ehrhard:2008}.

We aim to solve the inverse Taylor expansion problem in the more general context of \LL, more precisely in the \emph{multiplicative-exponential fragment} $\MELL$ of \LL, being aware that for $\MELL$ 
no coherence relation can characterize the solutions (see below).
Our characterization is constructive, in that it allows us to define a \emph{decision} procedure to solve the inverse Taylor expansion problem in the \emph{finite} case.
A side effect of this investigation is apparently unrelated to the notion of Taylor expansion:
we characterize $\MELL$ formulas that are inhabited by cut-free \MELL proof-structures, so as to prove semi-decidability of the \emph{type inhabitation problem} for cut-free \MELL proof-structures (again, see~below).

\subsubsection*{Proof-nets, proof-structures and their Taylor expansion: seeing trees behind graphs}

In $\MELL$, linearity and the sharp analysis of computations naturally lead to
represent proofs in a more general \emph{graph}-like syntax instead of a
term-like or tree-like one.\footnotemark \footnotetext{A term-like object is
  essentially a tree, with one output (its root) and many inputs (its other
  leaves).}  Indeed, linear negation is involutive and classical duality can be
interpreted as the possibility of juggling between different conclusions,
without a distinguished output \cite{Parigot1992}.  Graphs representing proofs
in \MELL are called \emph{proof-nets}: their syntax is richer and more
expressive than the $\lambda$-calculus (which corresponds to an intuitionistic implicative fragment of \MELL).  
Contrary to $\lambda$-terms, proof-nets
are special inhabitants of the wider land of \emph{proof-structures}.
A proof-structure is any ``graph'' that can be build in the language of proof-nets and it need not represent a proof in \MELL.
Proof-nets can be characterized, among
proof-structures, by abstract (geometric) conditions called correctness criteria
\cite{Girard:1987}.

Proof-structures are well-behaved for performing computations: indeed, 
cut-elimination steps can be defined for proof-structures, and proof-nets
can also be seen as the proof-structures with a good behavior with respect to
cut-elimination \cite{Bechet98}. Furthermore, proof-structures can be
interpreted in denotational models and proof-nets can be characterized among them by semantic means \cite{Retore97}.
It is then natural to attack problems in the general framework of proof-structures. 
In our work, correctness plays no role at all, hence we will
consider proof-structures and not only proof-nets. 
\MELL proof-structures are a
particular kind of graphs, whose edges are labeled by $\MELL$ formulas and
vertices  (aka cells) by $\MELL$ connectives, and for which special subgraphs are
highlighted, the \emph{boxes}, representing the parts of the proof-structure
that can be discarded and copied (\ie called an unbounded number of times) during cut-elimination. 
A box is 
delimited from the rest of a proof-structure by exponential modalities: 
its border is made of one
$\oc$-cell, its principal door, and arbitrarily many $\wn$-cells, its auxiliary
doors.
Boxes are either nested or disjoint (they cannot partially overlap), 
so as to add a tree-like structure to \mbox{proof-structures \emph{aside} from their graph-like nature.}

As in the $\lambda$-calculus, 
one can define box-free \emph{resource} (or $\DiLL_0$) \emph{proof-structures}\footnotemark \cite{DBLP:journals/tcs/EhrhardR06}, where
\footnotetext{Aka differential proof-structures \cite{Carvalho16}, differential nets \cite{DBLP:journals/tcs/EhrhardR06,MazzaPagani07,DBLP:journals/lmcs/Carvalho18}, simple~nets~\cite{PaganiTasson2009}.}%
$\oc$-cells make resources available boundedly,
and the \emph{Taylor expansion} of $\MELL$ proof-structures into these resource proof-structures, that recursively copies the
content of the boxes an arbitrary number of times.
In fact, as somehow anticipated by Boudes \cite{Boudes:2009},
such a Taylor expansion operation can be carried on any tree-like
structure.
This primitive, abstract, notion of Taylor expansion can then be pulled back to
the structure of interest (in this case, the resource proof-structures), as shown in \cite{Wollic} and put forth again here.

\subsubsection*{The question of coherence for proof-structures}

The \emph{inverse Taylor expansion problem} has a natural counterpart 
for \MELL proof-structures: given a set $\Pi$ of resource proof-structures, is there a
$\MELL$ proof-structure the Taylor expansion of which includes $\Pi$? 
Pagani and Tasson \cite{PaganiTasson2009} give the following answer: it is possible to
decide whether a finite set of resource proof-structures is a subset of the
Taylor expansion of a same \MELL proof-structure (and even possible to do it
in nondeterministic polynomial time); but unlike the $\lambda$-calculus, the structure of the relation ``being
part of the Taylor expansion of a same proof-structure'' is \emph{much more}
complicated than a binary (or even $n$-ary) coherence. Indeed, for any $n>1$, it
is possible to find ${n+1}$ resource proof-structures such that any $n$ of them
are in the Taylor expansion of some $\MELL$ proof-structure, but there is no
$\MELL$ proof-structure whose Taylor expansion has all the $n\!+\!1$ as elements
(see our Example \ref{ex:not-coherent} and \cite[pp.~244-246]{Tasson:2009}).

In this work, we introduce a new combinatorial criterion, \emph{\gluability},
for deciding whether a set of resource proof-structures is a subset of the
Taylor expansion of some $\MELL$ proof structure, based on a \emph{rewriting system} on lists of $\MELL$ formulas. Our criterion is more general 
and simpler than the one of \cite{PaganiTasson2009}, which is limited
to the \emph{cut-free} case with \emph{atomic axioms} and characterizes only \emph{finite} sets: we do not have these limitations. 
Akin to \cite{PaganiTasson2009}, our criterion yields a \emph{decision procedure} for the inverse Taylor expansion problem when the set of resource proof-structures given as input is \emph{finite}.
We believe that our criterion is a useful tool for studying proof-structures.
We conjecture that it can be used to show that a binary coherence relation exists for resource proof-structures satisfying a suitable geometric restriction. 
It might also shed light on correctness~and~sequentialization.

As the proof-structures we consider are typed, an unrelated difficulty arises: a
resource proof-structure $\rho$ of type $A$ might not be in the Taylor expansion of any cut-free \(\MELL\)
proof-structure, not because it does not respect the structure imposed by the
Taylor expansion, but because 
there is no cut-free \MELL proof-structure of type $A$, and $\rho$ can 
``mask'' this ``untypeability''.\footnotemark 
\footnotetext{Similarly, in the $\lambda$-calculus, there is no closed $\lambda$-term of type \(X \to Y\) with \(X \neq Y\) atomic, but the resource $\lambda$-term \((\lambda f.f)[\,]\) can be given that type: the empty bag $[\,]$ kills any information on the argument.} 
To solve this issue, we enrich the resource (but not \(\MELL\)) proof-structure 
syntax with a ``universal'' proof-structure: 
a special \(\maltese\)-cell (\emph{daimon}) that can have any number of outputs of any types, 
representing information plainly missing (see \Cref{sec:conclusions} for more details and the way this matter is handled by Pagani and Tasson~\cite{PaganiTasson2009}).

\subsubsection*{Our contribution}

This paper is the long version of \cite{CSL2020} and we keep its structure and main results: the \emph{\gluability} criterion that solves the inverse Taylor expansion problem in \MELL proof-structures (\Cref{thm:characterization}), and the way we prove it, which consists in showing that the Taylor expansion defines a \emph{natural transformation} (\Cref{thm:projection-natural}) from the realm of resource proof-structures to the realm of \MELL proof-structures (see \Cref{sect:outline} for an informal explanation).
With respect to \cite{CSL2020}, the main novelties are:
\begin{enumerate}
	\item Following \cite{Wollic}, we introduce rigorous definitions of all the notions of graph theory (\Cref{sect:graphs}) involved in the definition of proof-structures and Taylor expansion. 
	In this way, we can present here a purely graphical definition of $\MELL$ proof-structures (\Cref{sect:proof-nets}), so as to keep Girard's original intuition of a proof-structure as a graph even in \MELL and to avoid \emph{ad hoc} technicalities to identify the border and the content of a box. 
	This is not only an aesthetic issue but also practical, in that our \MELL proof-structures are manageable: sophisticated operations on them can be easily defined.
	For instance, we give an elegant definition of their Taylor expansion by means of \emph{pullbacks} (\Cref{sect:taylor}).
	
	\item With respect to \cite{CSL2020}, here we use daimons in a different and more limited way. In particular, unlike \cite{CSL2020}, our \MELL proof-structures do not contain any $\maltese$-cell.
	Thus, our results refer to a more standard and interesting definition of \MELL proof-structures, 
	and we deal with less syntactic categories than in \cite{CSL2020}, simplifying the presentation and fixing some technical inaccuracies in a few definitions and lemmas in \cite{CSL2020}.
	
	\item Consequently, our results (notably, naturality and \gluability) 
		are more informative, and allow us to \emph{decide} the inverse Taylor expansion problem  in the \emph{finite} case (\Cref{thm:decisionFinite}). 
	
	\item We solve the apparently unrelated \emph{type inhabitation problem} for cut-free \MELL proof-structures (\Cref{thm:inhabitation}), not considered in \cite{CSL2020}:
	our rewrite system \emph{semi-decides} if, for a list $\Gamma$ of \MELL formulas, there is a cut-free \MELL proof-structure of type $\Gamma$.
	We only provide a semi-algorithm: we can guess a rewriting and check its correctness; but there is no bound on its length. 
	Inhabitation problems are well-studied in many type systems for the $\lambda$-calculus, but 	no results are in the literature for \MELL proof-structures.
	Our contribution may shed some light on the open problem of decidability of~\MELL.
	
	\item Akin to \cite{CSL2020}, to simplify the presentation, we first focus on proof-structures restricted to atomic axioms, then \Cref{sect:nonatomic} shows how to lift our results to the non-atomic case. 
	Compared to \cite{CSL2020}, the lift is more elegant and requires less \mbox{\textit{ad hoc} adjustments.} 
\end{enumerate}

\section{Outline and technical issues}
\label{sect:outline}

\subsubsection*{Rewriting}
The essence of our rewrite system is not located 
in proof-structures but in
lists of \MELL formulas (Definition \ref{def:unwinding-paths}). 
Roughly, the rewrite system is
generated by elementary steps akin to rules of sequent calculus read from the
\emph{bottom up}: they act on a list of conclusions, analogous to a monolaterous
right-handed sequent. 
These steps can be seen as morphisms in a category $\Scheduling$ whose objects are lists of \MELL formulas, and are actually more sequentialized than sequent
calculus rules, as they do not allow for commutation. For instance, the
rule corresponding to the introduction of a $\otimes$ on the $i$\textsuperscript{th} formula, is
defined as
$\otimes_i : (C_1,\dots,C_{i-1},A\otimes B, C_{i+1}, \dots,
C_n) \to (C_1,\dots,C_{i-1},A, B, C_{i+1}, \dots,
C_n)$.

\begin{wrapfigure}[4]{r}{3cm}
	\vspace{-3\baselineskip}
	\scalebox{0.8}{
		\centering
		$
		\vspace{\beforepn}
		\tikzsetnextfilename{images/mell-tensor-intro-1}
		\pnet{
			\pnformulae{
				\pnf[A]{$A$}~\pnf[Ab]{$A^{\bot}$}
			}
			\pnaxiom[ax]{A,Ab}
			\pntensor{A,Ab}[i]{$A \otimes A^{\bot}$}
		}
		\pnrewrite[\otimes_1]
		\tikzsetnextfilename{images/mell-tensor-intro-2}
		\pnet{
			\pnformulae{
				\pnf[A]{$A$}~\pnf[Ab]{$A^{\bot}$}
			}
			\pnaxiom[ax]{A,Ab}
		}$}
\end{wrapfigure}
These rewrite steps then act on $\MELL$ proof-structures, coherently with
their type, by modifying (most of the times, erasing) a cell immediately above the conclusion of the proof-structure. 
Formally, this means that there is a functor $\MELLFunctor$ from $\Scheduling$ to the category $\Rel$ of sets and relations, associating with
every list of \MELL formulas the set of \MELL proof-structures with these conclusions, and with
every rewrite step a relation implementing it (Definition \ref{def:FunctorMELL}). 
The rules \emph{deconstruct} the proof-structure, starting from its conclusions. The rule
$\otimes_1$ acts by removing a $\otimes$-cell on the first conclusion, replacing
it by two conclusions.

These rules can only act on specific proof-structures, and indeed, capture a lot
of their structure: $\otimes_i$ can be applied to a \MELL proof-structure $R$ if and
only if $R$ has a $\otimes$-cell in the conclusion $i$ (as opposed to, say, an
axiom). So, in particular, every proof-structure is completely characterized by
any sequence rewriting it to the empty proof-structure.

\subsubsection*{Naturality}
The same rules also act on sets of resource (aka $\DiLL_0$) proof-structures, defining the
functor $\PPolyPN$ from the category $\Scheduling$ of rewrite steps into the category $\Rel$
(Definition \ref{def:FunctorPPoly}). When carefully defined, the Taylor expansion induces
a \emph{natural transformation} from $\PPolyPN$ to $\MELLFunctor$
(\Cref{thm:projection-natural}). By applying this naturality repeatedly, we get
our characterization (\Cref{thm:characterization}): a set of resource
proof-structures $\Pi$ is a subset of the Taylor expansion of a $\MELL$
proof-structure if and only if $\Pi$ is \emph{\gluable}, that is, there is a sequence rewriting $\Pi$ to the singleton of
the~\emph{empty}~proof-structure.

The naturality property is not only a mean to obtain our characterization, but also an interesting result in itself: natural transformations can often be used to express fundamental properties in a mathematical context. 
In this case, the \emph{Taylor expansion is natural} with respect to the
possibility of building a (\(\MELL\) or resource) proof-structure by adding a
cell to its conclusions or boxing it.
Said differently, naturality of the Taylor expansion roughly means that the rewrite rules that deconstruct a \MELL proof-structure $R$ and a set of resource proof-structures in the Taylor expansion of $R$ mimic each other.

\subsubsection*{Quasi-proof-structures and mix}
\label{subsect:intro-quasi-proof-structure}

Our rewrite rules consume proof-structures from their conclusions. The rule
corresponding to boxes in $\MELL$ opens a box by deleting its principal door (a
$\oc$-cell) and its border, while for a resource proof-structure  
it deletes a $\oc$-cell and separates the different copies of the content of the box (possibly) represented by such a $\oc$-cell. 
This operation is problematic in a twofold way.
In a resource proof-structure, where the border of boxes is not marked, it is not clear how to identify such copies.
	On the other side, in a $\MELL$ proof-structure the content of a box is not to be treated as if it were at the same level as what is outside of the box: 
it can be copied many times or erased, while what is outside boxes cannot,
and treating the content in the same way as the outside suppresses this
distinction, which is crucial in \LL.
So, we need to remember that the content of a box, even if it is at depth $0$ (\ie not contained in any other box) after erasing the box wrapping it by means of our rewrite rules, 
is not to be mixed with the rest of the structure  
at~depth~$0$. 

\begin{wrapfigure}[3]{r}{1.75cm}
	\vspace{-\baselineskip}
	\scalebox{0.8}{
		\centering
		$
		\vspace{\beforepn}
		\tikzsetnextfilename{images/root}
		\pnet{
			\pnsomenet[u]{$\pi$}{1.25cm}{0.5cm}
			\pnoutfrom{u.-140}[pi1c]{$\qquad \ \!\!\cdots$}
			\pnoutfrom{u.-40}[u2a2]{$\quad$}
			\pnsemibox{u}
		}$}
\end{wrapfigure}

In order for our proof-structures to provide this information,
we need to generalize them and consider that a proof-structure can have 
a \emph{forest} of boxes, instead of just a tree: this yields the notion of
\emph{quasi-proof-structure} (Definition \ref{def:quasi-proof-structure}).
In this way, according to our rewrite rules, opening a box by deleting its principal door amounts to taking a
box in the tree and disconnecting it from its root, creating a new tree. We draw
this in a quasi-proof-structure by surrounding each \emph{component}---made of the objects having the same root---with a dashed line,  
open from the bottom, remembering the phantom presence of the border of the box,
even if it was erased. 
This allows one to open the box only when it is alone in a component, just surrounded by a dashed line 
(Definition~\ref{def:unwinding}).

This is not merely a technical remark, as this generalization gives a status to
the $\Mix$ rule of \LL: indeed, mixing two proofs amounts to taking two proofs
and considering them as one, without any other modifications. Here, it amounts to
taking two proof-structures, each with its box-tree, and considering them as one by
merging the roots of their trees (see the mix step in Definition \ref{def:unwinding}). 
We embed this design decision up to the level of formulas, which are segregated
in different zones that have to be mixed before interacting. 
%(see the notion of partition of a finite sequence of formulas in \Cref{subparagraph:Formulas}).
Indeed, our rewrite rules actually act on conclusions arranged as a \emph{list of lists} of \MELL formulas.

\subsubsection*{Geometric invariance and emptiness: the filled Taylor expansion}
\label{subsec:intro-fattened}
The use of forests instead of trees for the nesting structure of boxes, where
the different roots are thought of as the contents of long-gone boxes, has an
interesting consequence in the Taylor expansion: indeed, an element of the
Taylor expansion of a proof-structure contains an arbitrary number of copies of
the contents of the boxes, in particular \emph{zero}. If we think of the part at
depth $0$ of a $\MELL$ proof-structure as inside an invisible box, its content
can be deleted in some elements of the Taylor expansion just as any other
box.\footnotemark
\footnotetext{The dual case, of copying the contents of a box, poses no problem
  in our approach.}
As erasing completely the conclusions would
cause the Taylor expansion not preserve the conclusions (which would lead to
technical complications), we introduce the \emph{filled Taylor expansion}
(Definition \ref{def:FilledTaylor}), which contains not only the elements of the usual
Taylor expansion, but also elements of the Taylor expansion where one component
has been erased and replaced by a \(\maltese\)-cell (\emph{daimon}), representing lack of information, apart from the number and types of the conclusions.
Roughly, a $\maltese$-cell is a placeholder for any $\DiLL_0$ proof-structure of given conclusions.

\subsubsection*{\Gluability and cut-free \gluability}
\label{subsec:cut-vs-cut-free-glueab}
Our \gluability criterion is based on local rewritings, which yields a geometric and modular approach to the inverse Taylor expansion problem. From the geometric point of view, there is nothing special in the elementary rewrite step corresponding to the cut rule, and it is natural to prove our \gluability criterion in presence of cuts (\Cref{thm:characterization}.\ref{p:characterization-with-cut}). 
But from the proof-theoretical point of view, the gulf separating cut-free proof-structures from the others shows up: for every \MELL formula $A$, the set of $\MELL$ proof-structures with conclusion of type $A$ is never empty, while this might very well be the case if we restrict to cut-free $\MELL$ proof-structures (see Example~\ref{ex:glueable-non-cut-free} and Remark~\ref{rmk:non-cut-free}). The modularity of our proofs allows to straightforwardly adapt the criterion to the cut-free case (\Cref{thm:characterization}.\ref{p:characterization-cut-free}) and thus to state correctly (and easily solve) the \emph{type inhabitation problem} for (cut-free) $\MELL$ proof-structures (\Cref{thm:inhabitation}).

\subsubsection*{Outline}
\Cref{sect:graphs} recalls some preliminary notions on graph theory.
In \Cref{sect:proof-nets} we define (\MELL and $\DiLL_0$) proof-structures and quasi-proof-structures with atomic axioms.
\Cref{sect:taylor} defines the notion of Taylor expansion.
\Cref{sect:rules} introduces the rewriting rules on lists of lists of formulas and lifts them to \MELL quasi-proof-structures via the functor $\MELLFunctor$.
In \Cref{sec:naturality} we lift the rewriting rules to $\DiLL_0$ quasi-proof-structures via the functor $\PPolyPN$ and we show our first main result: the Taylor expansion induces a natural transformation between the two functors.
\Cref{sec:glueable} proves two other main results: the solution of the inverse Taylor expansion problem (via a \gluability criterion) and the solution of the  type inhabitation problem in \MELL.
\Cref{sec:decidability} proves that the inverse Taylor expansion problem is decidable in the finite case. 
In \Cref{sect:nonatomic} we show how to adapt our method when axioms are not necessarily atomic.
\Cref{sec:conclusions} concludes with some final remarks.

%%% Local Variables:
%%% mode: latex
%%% TeX-master: "mainFinal"
%%% End:

% !TEX root = mainFinal.tex
\section{Preliminaries on graphs}
\label{sect:graphs}

\subsubsection*{Graphs with half-edges.}
There are many formalizations of the familiar notion of graph. 
Here we adopt the one due to \cite{Borisov2008}:\footnotemark
\footnotetext{The folklore attributes the definition of graphs
	with half-edges to Kontsevitch and Manin, but the idea can actually be traced
	back to Grothendieck's \emph{dessins d'enfant}.}
a graph is still a set of edges and a set of vertices,
but edges are now split in halves, allowing some of them to be hanging.
Splitting every edge in two has at least four features of particular interest to represent \LL proof-structures:
\begin{itemize}
	\item two half-edges are connected by an involution, thus defining an edge linking two vertices (possibly the same vertex). 
	The fixed points of this involution are ``hanging'' edges, 
	linked to a vertex only on one endpoint:
	they are well suited for representing the conclusions of a proof-structure.
	In this way it is also easy to define some intuitive but formally tricky operations such as 
	grafting/substituting a graph into/for another graph~(see~Example~\ref{ex:graph});
	
	\item given any vertex $v$ in a graph $\tau$, it is natural to
	define the \emph{corolla} of $v$, that is $v$ itself with 
	the half-edges linked to it; $\tau$ is the union of its corollas, glued together by the involution;
	
	\item while studying proof-structures, it is often necessary to treat
	them both as directed and undirected graphs. 
	With this definition of graph, an orientation, a labeling and a coloring, are structures on top of the structure of the undirected graph (see Definition \ref{def:labelled-graphs});
	\item this definition of graph allows 
	a uniform syntax to represent both proof-structures and other structures of interest (\eg, the box-tree of a proof-structure). 
	In this way, we avoid appealing to \textit{ad hoc} conditions in the definitions of proof-structure and Taylor expansion, which are then more compact and rely only on notions from graph theory.
\end{itemize}

\begin{defi}[graph]
\label{def:graph}
	\index{graph}
	\index{flag}
	\index{vertex}
	\index{tail}
	\index{edge}
	A (finite\footnotemark)
	\footnotetext{The finiteness condition on $F_\tau$ and $V_\tau$ can be dropped, so as to allow for possibly infinite graphs. We require it because we only deal with finite graphs.} 
	\intro{graph}  is a quadruple
	$\tau = (F_{\tau}, V_{\tau}, \partial_{\tau}, j_{\tau})$, where
	\begin{itemize}
		\item $F_{\tau}$ is a finite set, whose elements are called \intro{flags} of
		$\tau$;
		\item $V_{\tau}$ is a finite set, whose elements are called \intro{vertices} of
		$\tau$;
		\item $\partial_{\tau} \colon F_{\tau} \to V_{\tau}$ is a function associating with
		each flag its \emph{endpoint}; 
		\item $j_{\tau} \colon F_{\tau} \to F_{\tau}$ is an involution, \ie $(j_{\tau} \circ j_{\tau}) (f) = f$ for every $f \in F_{\tau}$.
	\end{itemize}
 
	The graph $\tau  = (F_{\tau}, V_{\tau}, \partial_{\tau}, j_{\tau})$ is \intro[empty graph]{empty} if $V_{\tau} = \emptyset$.\footnotemark
	\footnotetext{This implies that $\partial_{\tau}$ is the empty function and $F_{\tau} = \emptyset$ (since $F_\tau$ is the domain of $\partial_\tau$).}
	
	A \intro{subgraph} of a graph $\tau = (F_{\tau}, V_{\tau}, \partial_{\tau}, j_{\tau})$ is a graph $\sigma = (F_{\sigma}, V_{\sigma}, \partial_{\sigma}, j_{\sigma})$ where $F_\sigma \subseteq F_{\tau}$, $V_\sigma \subseteq V_{\tau}$, $\partial_\sigma = \partial_{\tau} \mathord{\upharpoonright}_{F_\sigma}$, and, for all $f \in F_\sigma$, $j_\sigma(f) = j_{\tau}(f)$ if $j_{\tau}(f) \in F_\sigma$, otherwise $j_\sigma(f) = f$. 
\end{defi}

In a graph $\tau  = (F_{\tau}, V_{\tau}, \partial_{\tau}, j_{\tau})$, a \intro{tail} of $\tau$ is a fixed point of the involution $j_\tau$, \ie~a flag $f = j_\tau(f)$.
A set $\{f,f'\}$ of two flags with $j_\tau(f) = f' \neq f$ is an \intro{edge} of
$\tau$ between vertices $\partial_{\tau}(f)$ and $\partial_{\tau}(f')$; 
$f, f'$ are the \intro{halves} of the edge;
if $\partial_{\tau}(f) = \partial_{\tau}(f')$ then the edge is a \intro{loop}. 

Given two graphs $\tau$ and $\tau'$, it is always possible to consider their
disjoint union $\tau \sqcup \tau'$ defined as the disjoint union of the
underlying sets and functions.

\index{corolla}
A one-vertex graph with set of flags $F$ and 
involution the identity function $\id_F$ on $F$ is called a \intro{corolla} 
(the endpoint of each flag is the only vertex); it is usually denoted by $\ast_F$. 

Given a graph $\tau = (F_{\tau}, V_{\tau}, \partial_{\tau}, j_{\tau})$, a vertex
$v \in V_{\tau}$ defines a corolla 
$\tau_v = (F_v, \{v\}, \allowbreak \partial_{\tau}|_{F_v}, \allowbreak\id_{F_v})$ where $F_v = \partial_{\tau}^{-1}(v)$.
A graph $\tau$ can be seen as the disjoint union of the corollas of its vertices, and the involution gluing some flags (the ones that are not tails in $\tau$) in edges.

\begin{defi}[graph morphism and isomorphism]
	\label{def:graph-morphism}
	Let $\tau, \sigma$ be two graphs. A \intro{graph morphism}
	$h \colon \tau \to \sigma$
	from $\tau$ to $\sigma$ is a couple of functions $(h_F \colon F_{\tau} \to F_{\sigma}, h_V \colon V_{\tau} \to
	V_{\sigma})$ such that $h_V \circ \partial_{\tau} = \partial_{\sigma} \circ
	h_F$ and $h_F \circ j_{\tau} = j_{\sigma} \circ h_F$ ($h_F$ and $h_V$ are the \emph{components} of $h$).

A graph morphism is \intro[empty graph morphism]{empty} (\resp\ an \intro[identity graph morphism]{identity}) if 
its components are empty (\resp\ identity) functions.
A \intro{graph isomorphism} is a graph morphism whose components are bijective.
\end{defi}

Intuitively, a graph morphism preserves tails and edges.
The category $\Graph$ \index{$\Graph$} has graphs as objects and graph morphisms as arrows:
indeed, graph morphisms compose in a associative way (by composing their components) and identity graph morphisms are neutral for such a composition. 
$\Graph$ is a monoidal category, with disjoint union as a monoidal~product.

\subsubsection*{Graphs with structure.}

Some structures can be put on top of a \kl{graph}. 
\begin{defi}[structured graph]
	\label{def:labelled-graphs}
	Let $\tau = (F_\tau, V_\tau, \partial_\tau, j_\tau)$ be a graph.
	\begin{itemize}
		\item A \intro{labeled graph} $(\tau,\ell_\tau)$ \emph{with labels in $L$} is a graph
		$\tau$ 
		and a function $\ell_\tau \colon V_{\tau} \to L$.

		\item A \intro{colored graph} $(\tau,\FlagType_\tau)$ \emph{with colors in $C$} is a graph $\tau$ and a function $\FlagType_\tau \colon F_{\tau} \to C$ such that $\FlagType_\tau(f) = \FlagType_\tau(f')$ for the two halves $f,f'$ of any edge of $\tau$.

		\item A \intro{directed graph} $(\tau, \Orient_\tau)$ is a graph $\tau$
		and a function $\Orient_\tau \colon F_{\tau} \to \{\In, \Out\}$ such that $\Orient_\tau(f) \neq \Orient_\tau(f')$ for the two halves $f,f'$ of any edge of $\tau$.
		If $\Orient_\tau(f) = \Out$ and $\Orient_\tau(f') = \In$, $\{f,f'\}$ is said an \emph{edge} of $\tau$ \emph{from} $\partial_{\tau}(f)$ \emph{to} $\partial_{\tau}(f')$;
		$\In$-oriented (\resp $\Out$-oriented) tails of $\tau$ are called \intro{inputs} (resp.~\intro{outputs}) \emph{of $\tau$};
		if $v$ is a vertex of $\tau$, 
		the \emph{inputs} (\resp \emph{outputs}) \emph{of $v$} are the elements of the set
		$\In_{\tau}(v) = \partial_{\tau}^{-1}(v) \cap \mathsf{o}_\tau^{-1}(\In)$
		(resp.~$\Out_{\tau}(v) = \partial_{\tau}^{-1}(v) \cap \mathsf{o}_\tau^{-1}(\Out)$).
		\item An \intro{ordered graph} $(\tau,<_{\tau})$ is a graph $\tau$ together
		with an order $<_{\tau}$ on its flags.
	\end{itemize}
\end{defi}
Different structures on a graph $\tau$ can combine modularly, for instance~$\tau$ can be endowed with a labeling $\VertType_{\tau}$ and an orientation $\Orient_{\tau}$, so as to form a directed labeled graph $(\tau, \Orient_{\tau}, \VertType_{\tau})$; and conversely, a directed labeled graph $(\tau, \Orient_{\tau}, \VertType_{\tau})$ can be seen as a directed graph $(\tau, \Orient_{\tau})$, forgetting the labeling $\VertType_{\tau}$.
Roughly, a graph is labeled (\resp~colored) when labels are associated with its vertices (\resp edges and ``hanging'' edges).
In a directed graph, an input (\resp~output) of a vertex $v$ is a---half or hanging---edge incoming in (\resp~outgoing~from)~$v$.

Graphs can be depicted in diagrammatic form, vertices are represented by small rounded boxes and flags by wires touching their endpoint.
As a graph is just a
disjoint union of corollas glued by the involution, we only need to 
depict corollas (\Cref{fig:corolla5}, on the left) and place the two halves of an edge next to each other (\Cref{fig:corolla5}, on the right).
In directed graphs, inputs of a corolla are depicted above the corolla,
outputs below; arrows also show their orientation. 
The color of a flag $f$ (if any) is written next to $f$.
The label of a vertex $v$ (if any) is written inside $v$.
If ordered, flags of a corolla are depicted increasing from left~to~right.
A dotted gray wire denote an arbitrary number (possibly $0$) of flags (see \Cref{fig:actions-all,fig:actions-daimon}).

\begin{exa}
\label{ex:graph}
	The directed labeled colored ordered \kl{corolla} $\mathbf{5} = (\ast_{\mathbf{5}}, \Orient_\mathbf{5}, \VertType_\mathbf{5}, \FlagType_\mathbf{5}, <_{\mathbf{5}})$ depicted in \Cref{fig:corolla5} (on the left) has $\ast$ as its only vertex and $F_\mathbf{5} = \{0, 1, 2, 3, 4\}$ as its set of flags;
	it is endowed with the order $0 <_{\mathbf{5}} 4$ and $1 <_{\mathbf{5}} 2 <_{\mathbf{5}} 3$, 
the labeling 
$\VertType_\mathbf{5}(\ast) = \maltese$, the orientation $\Orient_\mathbf{5} \colon F_\mathbf{5} \to \{\In, \Out\}$ defined by
		 $\Orient_\mathbf{5}(0) = \Orient_\mathbf{5}(4) = \Out$ and $\Orient_\mathbf{5}(1) = \Orient_\mathbf{5}(2) = \Orient_\mathbf{5}(3) = \In$,
the coloring $\FlagType_\mathbf{5} \colon F_\mathbf{5} \to \{A_0,\ldots,A_4\}$  defined by $\mathsf{c}(i) = A_i$ for all $i \in F_\mathbf{5}$.

	Let $\sigma_{\ax}$ be the directed labeled colored corolla, whose only vertex is labeled by $\ax$, and whose only flags are the outputs $5$ (colored by $A_2$) and $6$ (colored by $A_3$).
	The directed labeled colored ordered two-vertex graph $\rho$ 
	in \Cref{fig:corolla5} (on the right) is  
	obtained by ``grafting'' $\sigma_{\ax}$ into $\mathbf{5}$, \ie~from $\sigma_{\ax}$ and $\mathbf{5}$	by defining the involution $j_{\rho} \colon \{0, \dots, 6\} \to \{0, \dots, 6\}$ as $j_{\rho}(i) = j_{\mathbf{5}}(i)$ for $i \in \{0,1,4\}$, and $j_{\rho}(i) = i+3$ for $i \in \{2,3\}$, and $j_{\rho}(i) = i-3$ for $i \in \{5,6\}$.
\end{exa}

Each enrichment of the graph structure introduced in Definition \ref{def:labelled-graphs}
induces a notion of structure-preserving morphism (and isomorphism) that extends the notion of \kl{graph morphism} (and \kl{graph isomorphism}) seen in Definition \ref{def:graph-morphism}, so as to form an associated category. 
Moreover, different kinds of structure preservation can be combined in a modular way.
For instance, given two directed labeled graphs $(\tau, \Orient_\tau, \VertType_\tau)$ and $(\sigma, \Orient_\sigma, \VertType_\sigma)$ both with labels in $L$, 
(in particular, they can be seen as two directed graphs $(\tau, \Orient_\tau)$ and $(\sigma, \Orient_\sigma)$, forgetting labels), 
a \intro{directed graph morphism} $h \colon (\tau, \Orient_\tau) \to (\sigma, \Orient_\sigma)$ is a \kl{graph morphism} $h =(h_F, h_V) \colon \tau \to \sigma$ such that $\Orient_\sigma \circ h_F = \Orient_\tau$;
this means that $h_F$ maps input (\resp\ output) flags of $\tau$ to input (\resp\ output) flags of $\sigma$.
And a \emph{directed labeled isomorphism} $h \colon (\tau, \Orient_\tau, \VertType_\tau) \to (\sigma, \Orient_\sigma, \VertType_\sigma)$ is a \kl{graph isomorphism} $h =(h_F, h_V) \colon \tau \to \sigma$ such that $\Orient_\sigma \circ h_F = \Orient_\tau$ and $\VertType_\sigma \circ h_V = \VertType_\tau$.

\begin{figure}[!t]
	\centering
	\vspace*{-3.5\baselineskip}
		$\mathbf{5} \ = \ $
		\scalebox{0.8}{
		\pnet{
			\pnsomenet[t]{\small$\maltese$}{1.6cm}{0.3cm}
			\pnoutfrom{t.-145}{$A_0$}
			\pnoutfrom{t.-35}{$A_4$}
			\pninto{t.155}{$A_1$}
			\pninto{t.90}{$A_2$}
			\pninto{t.25}{$A_3$}
		}}
		\qquad\qquad\qquad\qquad
		$\rho \ = \ $
		\scalebox{0.78}{
		\pnet{
			\pnsomenet[t]{\small$\maltese$}{1.6cm}{0.2cm}
			\pnoutfrom{t.-145}[lo]{$A_0$}
			\pnoutfrom{t.-35}[ro]{$A_4$}
			\pninto{t.155}[li]{$A_1$}
			\pninto{t.90}[i]{$A_2$}
			\pninto{t.25}[ri]{$A_3$}
			\pnaxiom{i,ri}[.8]
		}
	}
	\caption{A directed labeled colored ordered corolla $\mathbf{5}$ (left), and a directed labeled colored ordered two-vertex graph $\rho$ (right), see Example~\ref{ex:graph}.}
	\label{fig:corolla5}
\end{figure}

\subsubsection*{Trees and paths.}
\label{subsec:trees}

An \intro{undirected path} on a graph $\tau$ is a finite sequence of flags $\varphi = (f_1, \dots, f_{2n})$ for some $n \in \Nat$ such that, for all
$1 \leqslant i \leqslant n$, $j_\tau(f_{2i-1}) = f_{2i} \neq f_{2i-1}$ and (if $i \neq n$) $\partial_\tau(f_{2i}) = \partial_\tau (f_{2i+1})$ with $f_{2i} \neq f_{2i+1}$. 
We say that $\varphi$ is \emph{between $\partial_\tau(f_1)$ and $\partial_\tau(f_{2n})$} if $n > 0$ (and it is a \intro{cycle} if moreover $\partial_\tau(f_1) =\partial_\tau(f_{2n})$), otherwise it is the \emph{empty} (\emph{undirected}) \emph{path}, which is between any vertex and itself; the \intro[path length]{length} of $\varphi$ is $n$.
Note that $\varphi$ cannot cross any \kl{tail} of $\tau$.
Two vertices in a graph are \emph{connected} if there is an undirected path between them.

Let $\tau$ be a \kl{graph}: $\tau$ is \intro{connected} if any vertices $v, v' \in V_\tau$ are connected;
a \intro{connected component} of $\tau$ is a maximal (with respect to the inclusion of flags and vertices) connected \kl{subgraph} of $\tau$;
$\tau$ is \intro{acyclic} (or a \emph{\kl{forest}}) if 
it has no \kl{cycles};
$\tau$ is a \intro{tree} if it is a connected~forest.
Note that a \kl{forest} can be seen as a non-empty finite sequence of \kl{trees}, each tree is a \kl{connected component};
so, an empty forest can be seen as a non-empty finite sequence of \kl[empty graph]{empty} trees.

A \intro{rooted tree} $\tau$ is a \kl[empty graph]{non-empty} directed tree such that each vertex has exactly one output. 
So, by finiteness, $\tau$ has exactly one output tail $f$: the endpoint of $f$ 
is called the \intro{root} of $\tau$. 

\begin{rem}
	Let $\tau$ and $\tau'$ be two rooted trees, and $h \colon \tau \to \tau'$ be 
	a directed graph morphism.
	As $h_F$ preserves tails and orientation, $h_V$ maps the root of $\tau$ to the root of $\tau'$. 
	Rooted trees and directed graph morphisms over them form a category $\RootedTree$.
\end{rem}

A \emph{directed path} on a directed graph $\tau$ is an undirected \kl{path} $\varphi = (f_1, \dots, f_{2n})$ for some $n \in \Nat$ where $f_{2i-1}$ is output and $f_{2i}$ is input for all $1 \leqslant i \leqslant n$.
We say that $\varphi$ is \emph{from $\partial_\tau(f_1)$ to $\partial_\tau(f_{2n})$} if $n > 0$; otherwise it is the \emph{empty} (directed) \emph{path},
from any vertex~to~itself. 

The set of directed paths on a directed tree $\tau$ is finite. 
As such, we define the \intro{reflexive-transitive closure}, or \emph{free category},
$\tau^{\circlearrowleft}$ of $\tau$ as the directed graph with same vertices and same tails as $\tau$, and with an edge from $v$ to $v'$ for any directed path from $v$ to $v'$ in $\tau$.
The operator $(\cdot)^\circlearrowleft$ lifts to a functor from the category $\RootedTree$ to the category of directed graphs,
so for every morphism $f$ in $\RootedTree$, $f^\circlearrowleft$ is a directed graph morphism between directed graphs.

\subsubsection*{Pullback in the category of graphs}
\label{app:pullback}

The category of graphs has all pullbacks, a fact that we use extensively. We
recall here all the 
definitions and facts involved in that affirmation.

\begin{defi}[pullback]
	Let \(\catC\) be a category. Let \(X\), \(Y\), \(Z\) be  objects of
	\(\catC\), and $f \colon X \to Z$ and $g \colon Y \to Z$ be arrows of $\catC$.
	A \intro{pullback} of 
	$f$ and $g$ is the triple $(P,\oc_X,\oc_Y)$ where $P$ is an object of $\catC$ and $\oc_X \colon P \to X$ and $\oc_Y \colon P \to Y$ are arrows of $\catC$ such that diagram \eqref{eq:pullback}
	commutes and, for any $(Q, h \colon Q \to X, k \colon Q \to Y)$ making the same
	diagram commute, there is a unique arrow $u \colon Q \to P$ factorizing $h$ and $k$, \ie such that diagram \eqref{eq:pullback-universal} commutes.
\end{defi}
\begin{minipage}{0.25\textwidth}
	\begin{align}\label{eq:pullback}
	\begin{tikzcd}
	P \ar[r,"\oc_X"] \ar[d,swap,"\oc_Y"]
	\& 
	X 	\ar[d,"f"]
	\\
	Y  \ar[r,"g"]
	\& 
	Z
	\end{tikzcd}
	\end{align}
\end{minipage}
\quad
\begin{minipage}{0.35\textwidth}
	\begin{align}\label{eq:pullback-universal}
	\begin{tikzcd}
	Q
	\arrow[bend left]{drr}{h}
	\arrow[bend right,swap]{ddr}{k}
	\arrow[dashed]{dr}
	{u} \& \& 
	\\[-15pt]
	\& P \arrow{r}{\oc_X} \arrow{d}[swap]{\oc_Y}
	\& X \arrow{d}{f} 
	\\
	\& Y \arrow{r}{g}
	\& Z
	\end{tikzcd}
	\end{align}
\end{minipage}
\quad
\begin{minipage}{0.3\textwidth}
	\begin{align}\label{eq:pullback-corner}
	\begin{tikzcd}
	X \!\times_Z \! Y \ar[r,"\oc_X"] \ar[d,swap,"\oc_Y"]
	\arrow[dr, phantom, "\scalebox{1.5}{$\lrcorner$}" , very near start, color=black]
	\& 
	X 	\ar[d,"f"]
	\\
	Y  \ar[r,"g"]
	\& 
	Z
	\end{tikzcd}
	\end{align}
\end{minipage}

A pullback $(P, \oc_X, \oc_Y)$ of $f \colon X \to Z$ and $g \colon Y \to Z$ is unique (up to unique isomorphism); $P$ is usually denoted by $X \times_Z Y$ (leaving $f,g$ implicit) and depicted like in diagram \eqref{eq:pullback-corner}.

All pullbacks exist in the category $\Graph$ of graphs. Let us show it explicitly.
Let \(\tau = (F_{\tau}, V_{\tau}, \partial_{\tau}, j_{\tau})\), \(\sigma = (F_{\sigma},
V_{\sigma}, \partial_{\sigma}, j_{\sigma})\) and \(\rho = (F_{\rho}, V_{\rho},
\partial_{\rho}, j_{\rho})\) be graphs, and let $g = (g_F, g_V) \colon \sigma \to \tau$ and $h = (h_F, h_V) \colon \rho \to \tau$ be graph morphisms.
Consider the sets
\begin{align*}
F &= \left\{ (f_\sigma, f_\rho) \in F_{\sigma} \times F_{\rho} \mid g_F(f_\sigma) = h_F(f_\rho)
\right\}
& \
V &= \left\{ (v_\sigma, v_\rho) \in V_{\sigma} \times V_{\rho} \mid g_V(v_\sigma) = h_V(v_\rho)
\right\}.
\end{align*}
They are both equipped with their natural projections
$\pi_{\sigma}^F \colon F \to F_\sigma$, $\pi_{\rho}^F \colon F \to F_\rho$ and $\pi_{\sigma}^V \colon V \to V_\sigma$, $\pi_{\rho}^V \colon V \to V_\rho$.
Let $f\in F$.
\begin{align*}
(g_V \circ \partial_{\sigma}\circ \pi_{\sigma}^F) (f)
&= (\partial_{\tau} \circ g_F \circ \pi_{\sigma}^F) (f)
\quad\text{because \(g\) is a graph morphism}\\ 
&= (\partial_{\tau} \circ h_F \circ \pi_{\rho}^F) (f) \quad\text{by definition of
	\(F\)}\\
&= (h_V \circ \partial_{\rho} \circ \pi_{\rho}^F)(f) \quad\text{because \(h\) is a
	graph morphism.}
\end{align*}
Hence, we can define $\partial \colon F \to V\) by \(\partial (f) =
((\partial_{\sigma}\circ \pi_{\sigma}^F) (f), (\partial_{\rho} \circ
\pi_{\rho}^F)(f))$. In the same way, we define $j \colon F \to F\) by \(j(f) =
((j_{\sigma} \circ \pi_{\sigma}^F)(f), (j_{\rho} \circ \pi_{\rho}^F)(f))$, and
check that it is an involution.

Thus, \(\sigma \times_{\tau} \rho = (F,V,\partial,j)\) is a graph\footnotemark
\footnotetext{Note that if $\sigma$ or $\rho$ is the \kl{empty graph}, then so is $\sigma \times_{\tau} \rho$, and $\pi_\sigma$ and $\pi_\rho$ are \kl{empty graph morphisms}.}
and \(\pi_{\sigma} = (\pi_{\sigma}^F, \pi_{\sigma}^V) \colon \sigma \times_{\tau} \rho \to \sigma\) and
\(\pi_{\rho} = (\pi_{\rho}^F, \pi_{\rho}^V) \colon \sigma \times_{\tau} \rho \to \rho\) are graph morphisms  such that diagram \eqref{eq:pullback-graph} below commutes.
\begin{center}
\begin{minipage}{0.3\textwidth}
	\begin{align}\label{eq:pullback-graph}
		\begin{tikzcd}
		\sigma \!\times_{\tau}\! \rho \ar[r,"\pi_\sigma"] \ar[d,swap,"\pi_\rho"]
		\& 
		\sigma  \ar[d,"g"]
		\\
		\rho 	\ar[r,"h"]
		\& 
		\tau
		\end{tikzcd}
	\end{align}
\end{minipage}
\qquad\qquad
\begin{minipage}{0.3\textwidth}
	\begin{align}\label{eq:pullback-graph-universal}
	\begin{tikzcd}
	\mu \ar[r,"p"] \ar[d,swap,"q"]
	\& 
	\sigma  \ar[d,"g"]
	\\
	\rho 	\ar[r,"h"]
	\& 
	\tau
	\end{tikzcd}
	\end{align}
\end{minipage}
\end{center}

\noindent Consider now any graph $\mu = (F_{\mu},V_{\mu},\partial_{\mu},j_{\mu})$ and graph morphisms $p = (p_F, p_V) \colon \mu \to \sigma$ and $q = (q_F, q_V) \colon \mu \to \rho$ such that diagram \eqref{eq:pullback-graph-universal} above
commutes. For every $f\in F_{\mu}\), let \(r_F(f) = (p_F(f),q_F(f))$, and for every
\(v\in V_{\mu}\), let $r_V(v) = (p_V(v),q_V(v))$. 
It is easy to check that it defines a
graph morphism $r = (r_F, r_V) \colon \mu \to \sigma \times_{\tau} \rho$ that is the unique one to factorize $p$ and $q$.
Therefore, $(\sigma \times_{\tau} \rho, \pi_\sigma, \pi_\rho)$ is the pullback of $g$ and $h$.

Roughly, the pullback $\sigma \times_\tau \rho$ is obtained as a sort of ``lax intersection'' of the ``similar'' vertices of $\sigma$ and $\rho$ (they are ``similar'' when they are sent to the same vertex in $\tau$, by $g_V $ and $h_V$ respectively), and keeping whatever flags that are in the ``intersection''. 
So, $\sigma \times_\tau \rho$ is the maximal graph that is ``compatible'' with both $\sigma$ and $\rho$.

The pullback construction lifts to directed, labeled, colored and ordered graphs.

% !TEX root = mainFinal.tex
\section{\MELL proof-structures and quasi-proof-structures}
\label{sect:proof-nets}

	We present here a purely graphical definition of $\MELL$ proof-structures,
following the non-inductive approach of \cite{Wollic}: a \MELL proof-structure $R$ is essentially a labeled directed graph $|R|$ together with some additional information to identify its boxes.
Our definition is completely based on standard notions (recalled in \Cref{sect:graphs}) from graph theory: it is formal (with an eye towards complete computer formalization) but avoids \emph{ad hoc} technicalities to identify boxes. 
Indeed, the inductive and ordered structure of the boxes of $R$ is recovered by means of a tree $\TreeT_{R}$ and a graph morphism  $\BoxFunction_{R}$ from $|R|$ to $\TreeT_{R}$ which allows us to recognize the content and the border of all boxes in $R$.
We use $n$-ary vertices of type $\wn$ collapsing weakening, dereliction and contraction (like in \cite{DanosRegnier95}). 
In this way, we get a \emph{canonical} representation of \MELL proof-structures with respect to operations like associativity and commutativity of contractions, or neutrality of weakening with respect to contraction.

One of the benefits of our purely graphical definition of proof-structure is that notions involving proof-structures such as isomorphism (see below), Taylor expansion (see \Cref{sect:taylor}) and correctness graph (see \cite{Wollic}) can be defined rigorously by means of standard (and not \textit{ad hoc}) mathematical tools from graph theory, making them more ``manageable''.

\subsection{\texorpdfstring{$\MELL$}{MELL} formulas and lists}
\label{subparagraph:Formulas}

Given a countably infinite set of propositional variables $X, Y, Z, \dots$,
\intro{\(\MELL\) formulas} are defined by the following inductive grammar:
\begin{equation*}
  \label{eq:formula}
  A,B \Coloneqq X \mid X^\bot \mid \one \mid \bot \mid A
  \otimes B \mid A \parr B \mid \oc A \mid \wn A
  \qquad \text{(set: } \Formulas_{\MELL}\text{)}
\end{equation*}

\intro{Linear negation} $(\cdot)^\bot$ is defined via De Morgan laws $\one^\bot = \bot$,
$(A \otimes B)^\bot = A^\bot \parr B^\bot$ and $(\oc A)^\bot = \wn A^\bot$, so as to
be involutive, \ie{} $A^{\bot\bot} = A$ for any $\MELL$ formula $A$ (we say that $A$ and $A^\bot$ are \intro{dual}).
Variables and their negations are \intro{atomic} formulas; $\otimes$ and $\parr$ (resp.~$\oc$ and $\wn$) are \emph{multiplicative} (resp.~\intro{exponential}) \intro{connectives}; $\one$ and $\bot$ are  (\emph{multiplicative}) \intro{units}.

We call any finite sequence a \emph{list}.
The \intro{empty list} is denoted by~$\emptylist$.
A list $(x_1, \dots, x_n)$ with $n \!>\! 0$ and $x_i \!= \emptylist$ for all $1 \!\leqslant\! i \!\leqslant\! n$ is denoted by $n \emptylist$ or $\emptylists$.  
	Note that a list $\emptylists$ is not empty.

%For a list $\Gamma = (A_1, \dots, A_m)$ of \kl{\(\MELL\) formulas}, a
%\intro{partition} of $\Gamma$ is a list $\Gamma' = (\Gamma_1; \cdots;\allowbreak \Gamma_n)$ of lists
%of \kl{\(\MELL\) formulas} such that there are $0 = i_0 < \dots < i_n = m$
%with \mbox{$\Gamma_j = (A_{i_{j-1}+1}, \dots, A_{i_{j}})$} for all
%$1 \leqslant j \leqslant n$;
%the partition $\Gamma'$ of $\Gamma$ is also denoted by
%$(A_1, \dots, A_{i_1}; \cdots ; A_{i_{n-1}+1}, \dots, A_m)$, with lists
%separated by semicolons. 
%The \intro{flattening} of $\Gamma'$ is $\Gamma$.
Let $m,n \in \Nat$ and $\{i_j\}_{0 \leqslant j \leqslant n} \subseteq \Nat$ with $i_j < i_{j+1}$, $i_0 = 0$, $i_n = m$.
%Let $m, n, i_0, \dots, i_n \in \Nat$ with $i_j < i_{j+1}$ for all $0 \leqslant j < n$, $i_0 = 0$, $i_n = m$.  
Let $A_1, \dots, A_m$ be \kl{\MELL formulas}, let $\Gamma_{\!j} = (A_{i_{j-1}+1}, \dots, A_{i_{j}})$ for all $1 \leqslant j \leqslant n$.
The list $\Gamma = (\Gamma_1, \allowbreak \dots, \allowbreak \Gamma_n)$ of lists of \kl{\MELL formulas} is also denoted by $(A_1, \dots, A_{i_1}; \cdots ; A_{i_{n-1}+1}, \dots, A_m)$, with lists separated by semicolons. 
The \intro{flattening} of $\Gamma$ is the list $(A_1, \dots, A_m)$ of \MELL formulas.

\subsection{Proof-structures}
\label{subsect:proof-structure}

\begin{figure}[!t]
  \vspace*{-3\baselineskip}
  \centering 
  \pnet{
    \pnformulae{
      ~~
      \pnf[Xcut]{$A$}~\pnf[Xncut]{$A^{\bot}$}~~~
      \pnf[Ao]{$A$}~\pnf[Bo]{$B$}~
      \pnf[Ap]{$A$}~\pnf[Bp]{$B$}~
      \pnf[A1]{$A$}~\pnf[l]{$\cdots$}~\pnf[An]{$A$}~
      \pnf[A1e]{$A$}~\pnf[le]{$\cdots$}~\pnf[Ane]{$A$}
      \\
      \pnf[Xax]{$X$}~\pnf[Xnax]{$X^{\bot}$}~~~
      \pnf[1]{$\mathbf{1}$}~\pnf[bot]{$\bot$}~~~~~~~~~~~
      \pnf[i]{$A_1$}~\pnf[d]{$\overset{p > 0}{\ldots}$}~\pnf[j]{$A_p$}
    }
    \pnaxiom[ax]{Xax,Xnax} 
    \pncut[cut]{Xcut,Xncut}
    \pnone{1}
    \pnbot{bot}
    \pntensor{Ao,Bo}[AoB]{$A\otimes B$}
    \pnpar{Ap,Bp}[ApB]{$A \parr B$}
    \pnexp{}{A1,l,An}[]{$\wn A$}
    \pnbag{}{A1e,le,Ane}[]{$\oc A$}
    \pndaimon{i,j}
  }
  
  \vspace*{-\baselineskip}
  \caption{Cells, with their labels and their typed inputs and outputs.
}
  \label{fig:resource-cells}
\end{figure}

We define here proof-structures corresponding to some subsystems and extensions of \LL:
\MELL, \DiLL and $\DiLL_0$.
Full differential linear logic (\DiLL) is an extension of \MELL (with the same
language as \MELL) provided with both promotion rule (\ie, boxes) and
co-structural rules for the $\oc$-modality (the duals of the structural rules handling the
$\wn$-modality): $\DiLL_0$ and \MELL are particular
subsystems of \DiLL, respectively the promotion-free one (\ie, without boxes) and
the one without co-structural rules. 
We reuse the syntax of proof-structures given in \cite{Wollic}, based on the graph notions introduced in \Cref{sect:graphs}.
Note that, unlike \cite{PaganiTasson2009}, we allow the presence of cuts (vertices of type $\cut$), even though we are not interested in cut-elimination and its study is left to future work.

\begin{defi}[module, proof-structure]
\label{def:proof-structure}
  A (\DiLL) \intro{module} $M = (\rvert M \lvert, \VertType, \Orient, \FlagType, \allowbreak <)$ is a \kl[labeled graph]{labeled} ($\VertType$), \kl[directed graph]{directed} ($\Orient$), \kl[colored graph]{colored} ($\FlagType$), \kl[ordered graph]{ordered} ($<$) \kl{graph} $\lvert M \rvert$ without \kl{loops} such that:
	\begin{itemize}
		\item the map $\VertType \colon V_{|M|} \to \{ \textup{\ax}, \textup{\cut}, \one, \bot, \otimes, \parr, \wn, \oc, \maltese\} 
		$ associates with each vertex $v$ its \intro[vertex type]{type} $\VertType(v)$;
		\item the map $\FlagType \colon F_{|M|} \to \Formulas_{\MELL}$ associates with each flag $f$ its \intro[edge type]{type} $\FlagType(f)$;
		\item $<$ is a strict partial order on the \kl{flags} of $|M|$ that is total on the \kl{tails} of $|M|$ and on the inputs of each \kl{vertex} of type $\parr$ or $\otimes$, separately; 

		\item for every vertex $v \in V_{|M|}$,
		\begin{itemize}
			\item if $\VertType(v) = \textup{\cut}$, $v$ has no \kl{output} and 
			two \kl{inputs} $i_1$ and $i_2$, such that $\FlagType(i_1) = \FlagType(i_2)^{\bot}$;
		
			\item if $\VertType(v) = \textup{\ax}$, $v$ has no inputs and 
			two outputs $o_1$ and $o_2$, with $\FlagType(o_1) = \FlagType(o_2)^{\bot}$ \kl{atomic};\footnotemark 
			\footnotetext{\label{note:atomic}We deal with \emph{atomic axioms} only to simplify the presentation in the next sections. 
				The general case with non-atomic axioms (\ie $\FlagType(o_1)$ and $\FlagType(o_2)$ are still \kl{dual} but not necessarily \kl{atomic}) is discussed in \Cref{sect:nonatomic}.}
			
			\item if $\VertType(v) \in \{\one, \bot\}$, $v$ has no
			inputs and only one output $o$, with $\FlagType(o) = \VertType(v)$;
		
			\item if $\VertType(v) \in \{ \otimes, \parr \}$, $v$ has 
			two inputs $i_1 < i_2$ and one output $o$, with $\FlagType(o) = \FlagType(i_1) \, \VertType(v) \, \FlagType(i_2)$;
			
			\item if $\VertType(v) \in \{\wn, \oc\}$, $v$ has  
			$n \geqslant 0$ inputs $i_1, \ldots, i_n$  and one output $o$, such that
			$\FlagType(o) = \VertType(v) \, \FlagType(i_j)$ for all $1\leqslant j \leqslant n$;\footnotemark
			\footnotetext{This implies that $\FlagType(i_j) = \FlagType(i_k)$ for all $1 \leqslant j,k \leqslant n$. There are no conditions on the number $n$ of inputs.}
			$v$ is a \emph{contraction} (\resp \emph{co-contraction}) if $\VertType(v) = \wn$ (\resp $\VertType(v) = \oc$);
			
			\item if $\VertType(v) =  \maltese$, 
			$v$ has no inputs and
			$p > 0$ outputs $o_1,\ldots,o_p$.\footnotemark
			\footnotetext{There are no conditions on the number of outputs $o_1,\ldots,o_p$ (besides $p > 0$) or on their types.} 
		\end{itemize}
		In Figure \ref{fig:resource-cells} we depicted the corollas associated with 
		all types of vertices.
	\end{itemize}

  A (\DiLL) \intro{proof-structure} is a triple $R = (|R|,\TreeT,\BoxFunction)$ where:
  \begin{itemize}
  \item $|R|  = (\rVert R \lVert, \VertType_R, \Orient_R, \FlagType_R, \allowbreak <_R)$ is a \kl{module} with no input tails, called the \intro{structured graph} of $R$, and $\lVert R \rVert  = (F_{\rVert R \lVert}, V_{\rVert R \lVert}, \partial_{\rVert R \lVert}, j_{\rVert R \lVert})$ is the (\emph{undirected}) \intro[undirected graph]{graph} of $R$;
  \item 
    $\TreeT$ is a \kl{rooted tree}
    with no input tails, called the \intro{box-tree} of $R$;\footnotemark
    \footnotetext{Intuitively, $\TreeT$ represents the tree-structure of the nested boxes of $R$, see Remark \ref{rmk:box} below.}
  \item  $\BoxFunction = (\BoxFunction_{F}, \BoxFunction_{V}) \colon |R| \to \TreeT^{\ReflexiveTransitive}$ is a \kl{directed graph morphism},\footnotemark
    \footnotetext{\label{foot:struct}The structured graph $|R|$ of
    	$R$ is more structured than a directed graph such as $\TreeT^{\ReflexiveTransitive}$: $|R|$ is directed but also labeled, colored, ordered. 
    	When we talk of a morphism between two structured graphs where one of the two, say $\sigma$, is less structured than the other, say $\tau$, we mean that $\tau$ must be only considered with the same structure as $\sigma$.
    	Thus, in this case, $\BoxFunction$ is a morphism from $(\lVert R \rVert, \Orient_{R})$---discarding $\VertType_{R}$, $\FlagType_{R}$, $<_{R}$---to $\TreeT^{\ReflexiveTransitive}$.} 
    called the \intro{box-function} of $R$, such that $\BoxFunction_{F}$ induces a bijection from a subset of $\bigcup_{ v \in V_{\lVert R \rVert}, \VertType(v) = \oc}\In_{\lvert R \rvert}(v)$ (the set of inputs of the vertices of type $\oc$ in $\lvert R \rvert$) to the set of input flags in $\TreeT$;\footnotemark
    \footnotetext{\label{foot:box}It means that for any input flag $f'$ in $\TreeT$ there is exactly one input $f$ of some vertex of type $\oc$ in $|R|$ such that $\BoxFunction_F(f) = f'$; 
    	but, for any input $f$ of some vertex of type $\oc$ in $|R|$, $\BoxFunction_{F}(f)$ need not be an input flag in $\TreeT$ (by definition of directed graph morphism, $\BoxFunction_{F}(f)$ is necessarily an input flag in $\TreeT^{\ReflexiveTransitive}$).
    	In particular, if $|R|$ has no vertices of type $\oc$, $\TreeT$ is a root with its output and no inputs; and if $\lVert R \rVert$ is the \kl[empty graph]{empty graph}, $\BoxFunction$ is the \kl{empty graph morphism} from $\lvert R \rvert$ to such a $\TreeT$.	
    	Intuitively, given a vertex $v$ of type $\oc$ in $\lvert R \rvert$,
    	a box is associated with (and only with) each input $f$ of $v$ such that $\BoxFunction_{F}(f)$ is an input flag in $\TreeT$ and not only in $\TreeT^{\ReflexiveTransitive}$: $f$ represents the principal door in the border of such a box, indeed
			by definition of directed graph morphism, $\BoxFunction_V((\partial_{\lVert R \rVert}\circ j_{\lVert R \rVert})(f)) \neq \BoxFunction_V(\partial_{\lVert R \rVert}(f))$, and  for all $f' \in F_{\lVert R \rVert}$, if $f' \neq f$ then $\BoxFunction_{F}(f') \neq \BoxFunction_{F}(f)$.}
    and for any 
    $v \in V_{\lVert R \rVert}$ and any input $f$ of $v$, 
    if $\BoxFunction_V((\partial_{\lVert R \rVert} \circ j_{\lVert R \rVert}) (f)) \neq
    \BoxFunction_V(\partial_{\lVert R \rVert} (f))$ then $\VertType(v) \in \{ \oc, \wn
    \}$.\footnotemark
    \footnotetext{\label{note:border}Roughly, it says that the border of a box is made of inputs
      of vertices of type $\oc$ or $\wn$ in $\lvert R \rvert$.} 
  \end{itemize}
	
  A (\DiLL) \kl{proof-structure} $R = (|R|, \TreeT, \BoxFunction)$ is said to be:
  \begin{enumerate}
  \item \intro[$\MELL$ proof-structure]{$\MELL$} if its structured graph $\lvert R \rvert$ has no vertices of type $\maltese$ and
    \begin{itemize}
    \item all vertices in $|R|$ of type $\oc$ have exactly one input,
    
		\item the bijection induced by $\BoxFunction_F$ is defined on the whole set $\bigcup_{ v \in V_{\lVert R \rVert}, \VertType(v) = \oc}\In_{\lvert R \rvert}(v)$;\footnotemark
    \footnotetext{It means that for any input flag $f'$ in $\TreeT$ there is exactly one vertex of type $\oc$ in $|R|$ whose unique input $f$ is such that $\BoxFunction_F(f) = f'$; 
    	and, if $f$ is the (only) input some vertex of type $\oc$ in $|R|$, $\BoxFunction_{F}(f)$ is an input flag in $\TreeT$. 
    	Intuitively, a box is associated with (and only with) the unique input of each vertex of type $\oc$~in~$\lvert R \rvert$.}
  \end{itemize}
  \item  \intro[$\DiLL_0$ proof-structures]{$\DiLL_0$} (or \emph{resource})
  if $\TreeT$ consists only of the root with its output, and either $|R|$ has no vertices  of type $\maltese$ or $\lvert R \rvert$ is a \intro{daimon}, \ie $\lvert R \rvert$ has only one vertex and it is of type $\maltese$;

	\item \intro[empty proof-structure]{empty} (which is both $\MELL$ and $\DiLL_0$) if its \kl[undirected graph]{graph} $\lVert R \rVert$ is \kl[empty graph]{empty}; it is denoted by $\emptynet$;
	
  \item \intro{cut-free} if its \kl{structured graph} $\lvert R \rvert$ has no vertices of type $\Cut$, otherwise it is \intro{with cuts}. 
  \end{enumerate}
\end{defi}

Our \kl{$\MELL$ proof-structures} correspond to 
usual $\MELL$ proof-structures (as in \cite{CarvalhoTortora}).
Our \kl{$\DiLL_0$ proof-structures} correspond to 
usual $\DiLL_0$ proof-structures (as in \cite{DBLP:journals/tcs/EhrhardR06}) except that we
allow also (proof-structures that are) \emph{daimons}. 
Daimons will be used in the Taylor expansion to deal with the content of a box taken $0$ times (see \Cref{sect:taylor}).
Our interest for $\DiLL$ proof-structures
is just to have a unitary syntax subsuming both \MELL and $\DiLL_0$ without considering
cut-elimination: for this reason, unlike \cite{Pagani09,Tranquilli09,Tranquilli11}, our $\DiLL$ proof-structures are not allowed to contain a
sum of $\DiLL$ proof-structure inside a box.\footnotemark 
\footnotetext{This restriction is without loss of generality, because a box containing a
sum of proof-structures is equivalent (see \cite{Tranquilli09}) to a co-contraction 
of the boxes of each proof-structure in
the sum. Also, the vertex of type $\maltese$ corresponds to the empty sum inside a box: roughly, it is the content of an ``empty box''.}

Given a proof-structure $R = (|R|, \TreeT, \BoxFunction)$, the output \kl{tails}
of $|R|$ are the \intro{conclusions} of $R$. 
The \intro[proof-structure type]{type} of $R$ is the list of the \kl[edge type]{types} of these
conclusions, ordered according to $<_{\lvert R \rvert}$.  

Borrowing the terminology from \cite{DBLP:journals/tcs/EhrhardR06,DBLP:conf/popl/Lafont90}, in proof-structures, we speak of \intro{cells} instead of vertices, and of \emph{$\ell$-cell} for a
cell of \kl[vertex type]{type} $\ell$. 
An \intro{hypothesis} is a cell without inputs.

\begin{rem}[box]
	\label{rmk:box}
	In our syntax, boxes do not have explicit constructors or cells, hence boxes and
	depth of a proof structure $R = (|R|, \TreeT, \BoxFunction)$ are recovered in a non-inductive~way.
	
	Roughly, any non-root vertex $v$ in $\TreeT$ induces a subgraph of $\TreeT^{\ReflexiveTransitive}$ made up of all vertices ``above $v$'' 
	with their inputs and outputs: the preimage of this subgraph through $\BoxFunction$ is the \emph{box} of $v$ in $\lvert R \rvert$. 
	The preimage of the root of $\TreeT$ through $\BoxFunction$ 
	is the part of $|R|$ outside~any~box.
	
	More precisely, with 	every flag $f$ of $|R|$ such that ${\BoxFunction_{F}}(f)$ is an input flag of $\TreeT$\footnotemark
	\footnotetext{\label{foot:box-tree}By the constraints on $\BoxFunction$, this
		condition can be fulfilled only by inputs of $\oc$-cells in $|R|$, and an input of a $\oc$-cell need not fulfill it; 
		in particular, if $R$ is a \kl{\MELL proof-structure}, then this
		condition is fulfilled by all and only the inputs of $\oc$-cells (and such
		an input is unique for any $\oc$-cell) in $|R|$; 
		but if $R$ is a $\DiLL_0$ proof-structure, this condition is not fulfilled by any flag in $|R|$ (since $\TreeT$ has no inputs) and so $\BoxFunction$ is a \kl{directed graph
		morphism} associating the root of $\TreeT$ with any cell of $R$.
		Thus, in a \kl[DiLL0 proof-structure]{$\DiLL_0$ proof-structure}
		$\rho = (|\rho|,\TreeT_{\rho},\BoxFunction_{\rho})$, there are no boxes, $\TreeT_{\rho}$ and
		$\BoxFunction_{\rho}$ do not induce any structure on $|\rho|$: $\rho$ can be
		\mbox{identified with $|\rho|$.}}   
	is associated a \intro{box} $B_{\!f}$, that is the \kl{subgraph}
	of $|R|$ (which is actually a proof-structure) made up of all the cells $v$ (with their inputs and outputs) such that there is a directed path on $\TreeT$ from ${\BoxFunction_{V}}(v)$ to 	${\BoxFunction_{V}}((\partial_{\lVert R \rVert} \circ j_{\lVert R \rVert})(f))$ (note that $f$ and the $\oc$-cell of which $f$ is an input 
	are not in $B_f$).
	A \emph{conclusion} of such a box $B_{\!f}$  
	is any output flag $f'$ 
	in $B_{\!f}$ such that $(\partial_{\lVert R \rVert} \circ j_{\lVert R \rVert})(f')$ is not in $B_{\!f}$.
	The tree-structure of	$\TreeT$ expresses the usual \emph{nesting condition}~of~boxes:
	two boxes in $|R|$ are either disjoint or contained one in the other.
	
	The \intro{depth} of a cell $v$ of $R$ is the \kl[path length]{length} of the directed path in
	$\TreeT$ from $\BoxFunction_{V}(v)$ to the root of $\TreeT$.  
	The \intro[depth of proof-structure]{depth} of $R$ is the maximal \kl{depth} of the cells~of~$R$.
\end{rem}

\begin{figure}[!t]
	\vspace{\beforepn}
	\centering
	\begin{subfigure}[b]{0.58\textwidth}
	\centering
	\scalebox{0.8}{ 
		\pnet{
			\pnformulae{
				~~~\pnf[botax]{$\bot$}~\pnf[1ax]{$\one$}~~~\pnf[Y]{$Y$}~\pnf[Yp]{$Y^\bot$} \\
				~~\pnf[bot]{$\bot$}~~~\pnf[1]{$\one$}
				\\
				~\pnf[X]{$X$}~~~~~~~~~\pnf[1']{$\one$} 
				\\ \\
				\pnf[Xp]{$X^\bot$}
			}
			\pnone[1axcell]{1ax}
			\pnbot[botaxcell]{botax}
			\pnaxiom[axprop]{Xp,X}
			\pnaxiom[axprop']{Y,Yp}
			\pnboxc{1axcell,botaxcell,botax,1ax}{purple}
			\pnbag{}{1ax}[!1ax]{$\oc \one$}[1.2]
			\pnbot[botcell]{bot}[1][-.02]
			\pnone[1cell]{1}
			\pnboxc{1,1cell,1axcell,botaxcell,botax,1ax,!1ax}{blue}
			\pnbag{}{1}[!1]{$\oc \one$}[1.7]
			\pnexp{}{bot,botax}[?bot]{$\wn \bot$}[1.8]
			\pnexp{}{!1ax,!1}[?!1]{$\wn\oc \one$}[1]
			\pntensor{X,?bot}[tens]{$X \otimes \wn \bot$}[0.9]
			\pnexp{}{Y}[?Y]{$\wn Y$}
			\pnpar{?Y,Yp}[par]{$\wn Y \parr Y^\bot$}
			\pnboxc{axprop',Y,Yp,?Y,par}{red}
			\pnbag{}{par}[!par]{$\oc (\wn Y \parr Y^\bot)$}[1.3]
			\pnone[1cell']{1'}
			\pnboxc{1',1cell'}{orange}
			\pnbag{}{1'}[!1']{$\oc \one$}[1.2]
		}
		}
		\caption{The \kl{structured graph} $\lvert S \rvert$.}
		\end{subfigure}
		\begin{subfigure}[b]{0.4\textwidth}
		\centering
		\scalebox{0.8}{
		\raisebox{-1cm}{
			\begin{forest}
				for tree={grow'=north, edge=<-, l-=2em}
				[$ $, name=root
					[${\,\bullet\,}$, name = root1
						[$ $, name = box11o,
							[$\textcolor{blue}{\bullet}$, name = box11, edge = blue
								[$ $, name = box111o, edge = blue
									[$\textcolor{purple}{\,\bullet\,}$, name = box111, edge = purple]
								]
							]
						]
						[$ $, name = box21o
							[$\textcolor{red}{\bullet}$, name = box21, edge = red]
						]
						[$ $, name = box22o
							[$\textcolor{orange}{\bullet}$, name = box22, edge = orange]
						]
					]
				]
				\coordinate (box111') at ([shift={(0:0.8)}]box111);
				\draw[<-,dotted,purple,line width=1pt] (box111) to[out=north east,
				in=north west] (box111');
				\draw[<-,dotted,purple,line width=1pt] (box111') to[out=south west,
				in=south east] (box111);
				\coordinate (box111root1) at ([shift={(-110:2.5)}]box111);
				\draw[->,dotted,purple,line width=1pt] (box111) to[out=south west,
				in=north west] (box111root1);
				\draw[->,dotted,line width=1pt] (box111root1) to[out=south east,
				in=west] (root1);        
				\coordinate (box11') at ([shift={(180:0.8)}]box11);
				\draw[<-,dotted,blue,line width=1pt] (box11) to[out=north west,
				in=north east] (box11');
				\draw[<-,dotted,blue,line width=1pt] (box11') to[out=south east,
				in=south west] (box11);
				\coordinate (box21') at ([shift={(90:0.8)}]box21);
				\draw[->,dotted,red,line width=1pt] (box21) to[out=north west,
				in=south west] (box21');
				\draw[->,dotted,red,line width=1pt] (box21') to[out=south east,
				in=north east] (box21);
				\coordinate (box22') at ([shift={(0:0.8)}]box22);
				\draw[<-,dotted,orange,line width=1pt] (box22) to[out=north east,
				in=north west] (box22');
				\draw[<-,dotted,orange,line width=1pt] (box22') to[out=south west,
				in=south east] (box22);
				\coordinate (root1') at ([shift={(-30:0.8)}]root1);
				\draw[->,dotted,line width=1pt] (root1) to[out=south,
				in=west] (root1');
				\draw[->,dotted,line width=1pt] (root1') to[out=north,
				in=east] (root1);
			\end{forest}
		}
	}
	\caption{The \kl{box-tree} \texorpdfstring{$\TreeT_{S}$}{} (without dotted
		lines) and the reflexive-transitive closure
		\texorpdfstring{$\TreeT_{S}^{\ReflexiveTransitive}$}{} of
		\texorpdfstring{$\TreeT_{S}$}{} (with also dotted lines).}
	\end{subfigure}
	\caption{A \MELL proof-structure $S = (\lvert S \rvert, \TreeT_{S}, \BoxFunction_{S})$.}
	\label{fig:pointed-proof-net}
\end{figure}

\begin{exa}
	In \Cref{fig:pointed-proof-net} a \MELL proof-structure $S = (\lvert S \rvert, \TreeT_{S}, \BoxFunction_{S})$ is depicted. 
	The box-function $\BoxFunction_{S}$ is kept implicit by
	means of colors: the colored areas in $|S|$ represent boxes (the preimages of non-root vertices of $\TreeT_{S}$ through $\BoxFunction_{S}$), and the same color is used on $\TreeT_{S}$ to show where each box is mapped by $\BoxFunction_{S}$.
\end{exa}

The proof-structures we have defined depend on the carrier sets for vertices and flags. 
It is natural to abstract away from them, and identify two proof-structures that differ only in the name of their vertices and flags, through a handy notion of isomorphism (\ie~structure-preserving bijection) inherited from the one of \kl{graph isomorphism} defined in \Cref{sect:graphs}.

\begin{definition}[isomorphism]
Let $R = (|R|, \TreeT, \BoxFunction)$ and $R' = (|R'|, \TreeT', \BoxFunction')$ be proof-structures.
	An \emph{isomorphism} from $R$ to $R'$ is a couple $(\varphi_{\lvert\cdot\rvert}, \varphi_\mathsf{tree})$ where
	 $\varphi_{\lvert\cdot\rvert} \colon |R| \to |R'|$ is a \kl{structured graph} isomorphism,
	 and $\varphi_\mathsf{tree} \colon \TreeT \to \TreeT'$ is a \kl[directed graph morphism]{directed graph isomorphism}, 
		such that the diagram below commutes.
		\begin{equation*}
		\begin{tikzcd}[row sep=3.0ex]
		\lvert R \rvert \arrow[r, "\BoxFunction"] \arrow[d, "\varphi_{\lvert\cdot\rvert}"]
		\& \mathcal{A}^\circlearrowleft \arrow[d, "\varphi_\mathsf{tree}^\circlearrowleft"] 
		\\
		\lvert R' \rvert \arrow[r, "\BoxFunction'"]
		\& \mathcal{A}'^\circlearrowleft
		\end{tikzcd}
		\end{equation*}
\end{definition}

Note that if $R$ is a \MELL or
$\DiLL_0$ proof-structure isomorphic to a proof-structure $R'$, then $R'$ is respectively \MELL or $\DiLL_0$.
Often we implicitly consider proof-structures up to isomorphism (as we do, for instance, in every figure representing proof-structures).

\subsection{Quasi-proof-structures}

We need to consider more general structures in order to accommodate our
rewrite rules, as discussed in \Cref{sect:outline}, p.~\pageref{subsect:intro-quasi-proof-structure} (the rewrite rules are defined in \Cref{sect:rules,sec:naturality}). 
This is why we extend all the definitions to \emph{tuples} of proof-structures.

\begin{defi}[quasi-proof-structure]
	\label{def:quasi-proof-structure}
	A (\DiLL) \intro{quasi-proof-structure} is a tuple
	$R = (R_1,\dots,R_n)$
	of proof-structures, for some $n > 0$;
	for all $ 1 \leqslant i \leqslant n$, $R_i$ is a \intro{component} of $R$.
	If $R_i = \emptynet$ 
		for all $ 1 \leqslant i \leqslant n$, then $R$ is an \intro{empty quasi-proof-structure}, noted $n \emptynet$ or $\emptynets$.\footnotemark
	\footnotetext{The $0$-ary tuple of proof-structures is \emph{not} a quasi-proof-structure. 
	Indeed, every quasi-proof-structure has at least one (possibly \kl[empty proof-structure]{empty}) component. Note that the components of a quasi-proof-structure are independent of each other, in particular they may differ in the number of their conclusions or in their type.}
	
	For convenience, given a proof-structure $R_i = (\lvert R_i \rvert, \TreeT_{i}, \BoxFunction_{i})$ for all $1 \leqslant i \leqslant n$, we denote the quasi-proof-structure $R = (R_1, \dots, R_n)$ as $R = (\lvert R \rvert, \ForestT_{R}, \BoxFunction_{R})$, where $|R| = (\lvert R_1 \rvert, \dots, \lvert R_n \rvert)$ is the \intro[structured graph of quasi-proof-structure]{structured graph} of $R$, 
	$\ForestT_{R} = (\TreeT_{1}, \dots, \TreeT_{n})$ is the \intro{box-forest} of $R$,  $\BoxFunction_{R} = (\BoxFunction_{1}, \dots, \BoxFunction_{n})$ is the \intro[box-function of quasi-proof-structure]{box-function} of $R$;
	the \emph{graph} of $R$ is $\lVert R \rVert = (\lVert R_1 \rVert, \dots, \lVert R_n\rVert)$.
	
	A \intro[conclusion of quasi-proof-structure]{conclusion} of $R$ is any \kl{conclusion} of any component of $R$. 
	
	A quasi-proof-structure $R$ is \intro[MELL quasi-proof-structure]{\MELL} (\resp \intro[DiLL0 quasi-proof-structure]{$\DiLL_0$}) if all components of $R$ are \MELL (\resp $\DiLL_0$) proof-structures.
\end{defi}

The short notation $R = (\lvert R \rvert, \ForestT_{R}, \BoxFunction_{R})$ for quasi-proof-structures makes sense because the structured graph $|R| = (|R_1|, \dots, |R_n|)$ can be seen as the disjoint union  of its components $|R_1|, \dots, |R_n|$, and similarly for $\ForestT_{R}$ and $\BoxFunction_{R}$. In particular, the box-forest $\ForestT_{R}$ is the disjoint union of the box-trees $\TreeT_{1}, \dots, \TreeT_{n}$ of $R_1, \dots, R_n$.
The box-function $\BoxFunction_{R}$ locates not only the boxes on $\lvert R \rvert$, but also the different components of $R$ on $\lvert R \rvert$ (see also Remark~\ref{rmk:components}).

The only delicate point is the definition of the order $<_{|R|}$ for the \kl[conclusion of quasi-proof-structure]{conclusions} of the quasi-proof-structure $R = (R_1, \dots, R_n)$, given the order $<_{|R_i|}$ for each component $|R_i|$.
Given the conclusions $f,f'$ of $R$, we set $f <_{|R|} f'$ if either $f$ is a conclusion of $R_i$ and $f'$ is a conclusion of $R_{i'}$ with $i < i'$, or $f$ and $f'$ are conclusions of the same \mbox{component $R_j$ and $f <_{|R_j|} f'$.}

We often identify the conclusions of $R$ and their order $<_{|R|}$ with a finite initial segment of $\Nat \setminus \{0\}$ and its natural order.
Note that $\emptynets$ has no conclusions, akin to $\emptynet$, as $\lVert\emptynet\rVert$ is \kl[empty graph]{empty}.

The \intro[quasi-proof-structure type]{type} of a quasi-proof-structure $R = (R_1, \dots, R_n)$ is a list $\Gamma = (\Gamma_1, \dots, \Gamma_n)$ of lists of \MELL formulas such that $\Gamma_i$ is the \kl[proof-structure type]{type} of $R_i$ for all $1 \leqslant i \leqslant n$.
When $n=1$, we often identify $R$ and $R_1$ (a quasi-proof-structure with only one component and a proof-structure), or $\Gamma$ and $\Gamma_1$ (the type of a quasi-proof-structure with only one component and the type of a proof-structure).
So, the type of a quasi-proof-structure $R$ \mbox{determines if $R$ is a proof-structure.}

The \kl[quasi-proof-structure type]{type} of an \kl{empty quasi-proof-structure} $\emptynets$ is a non-empty list of empty lists of \MELL formulas, where each occurrence of the empty list of \MELL formulas is the type of a component $\emptynet$ of $\emptynets$. 
	More precisely, $\emptylist$ is the type of $\emptynet$, and $\kl[empty quasi-proof-structure]{n\emptylist}$ is the type of $n \emptynet$ for every $n > 0$.

\begin{exa}
	In \Cref{fig:pointed-quasi-proof-net} a \(\MELL\)
	\kl{quasi-proof-structure} $R$ is depicted. The colored areas represent the preimages of
	boxes, and 
	each dashed box represents a component of $R$ (in other examples, the absence of a dashed line means that there is only one component).
\end{exa}

\begin{figure}[!t]
	\vspace{\beforepn}
	\centering
	\begin{subfigure}[b]{0.58\textwidth}
		\centering
		\scalebox{0.8}{
		\pnet{
			\pnformulae{
				~~~\pnf[botax]{$\bot$}~\pnf[1ax]{$\one$}~~~\pnf[Y]{$Y$}~\pnf[Yp]{$Y^\bot$} \\
				~~\pnf[bot]{$\bot$}~~~\pnf[1]{$\one$} \\
				~\pnf[X]{$X$}~~~~~~~~~\pnf[1']{$\one$} \\ \\ \\
				\pnf[Xp]{$X^\bot$}
			}
			\pnone[1axcell]{1ax}
			\pnbot[botaxcell]{botax}
			\pnaxiom[axprop]{Xp,X}
			\pnaxiom[axprop']{Y,Yp}
			\pnboxc{1axcell,botaxcell,botax,1ax}{purple}
			\pnbag{}{1ax}[!1ax]{$\oc \one$}[1.2]
			\pnbot[botcell]{bot}[1][-.02]
			\pnone[1cell]{1}
			\pnboxc{1,1cell,1axcell,botaxcell,botax,1ax,!1ax}{blue}
			\pnbag{}{1}[!1]{$\oc \one$}[1.7]
			\pnexp{}{bot,botax}[?bot]{$\wn \bot$}[1.7]
			\pnexp{}{!1ax,!1}[?!1]{$\wn\oc \one$}[.9]
			\pntensor{X,?bot}[tens]{$X \otimes \wn \bot$}[.9]
			\pnexp{}{Y}[?Y]{$\wn Y$}
			\pnpar{?Y,Yp}[par]{$\wn Y \parr Y^\bot$}
			\pnboxc{axprop',Y,Yp,?Y,par}{red}
			\pnbag{}{par}[!par]{$\,\,\oc (\wn Y \parr Y^\bot)$}[1.2]
			\pnone[1cell']{1'}
			\pnboxc{1',1cell'}{orange}
			\pnbag{}{1'}[!1']{$\oc \one$}[1.3]
			\pnsemiboxc{axprop,Xp,1,1axcell,botaxcell}{green}
			\pnsemiboxc{axprop',!1',Y}{gray}
		}
		}
		\caption{The structured graph $\lvert R \rvert$.}
	\end{subfigure}%
	\begin{subfigure}[b]{0.4\textwidth}
		\centering
		\scalebox{0.8}{
			\begin{forest}
				for tree={grow'=north, edge=<-, l-=2em}
				[$ $, name=root
					[$\textcolor{green}{\bullet}$, name = root3, edge = green
						[$ $, name = box11o, edge = green
							[$\textcolor{blue}{\bullet}$, name = box11, edge = blue
								[$ $, name = box111o, edge = blue
									[$\textcolor{purple}{\,\bullet\,}$, name = box111, edge = purple]
								]
							]
						]
					]
				]
				\coordinate (box111') at ([shift={(0:0.8)}]box111);
				\draw[<-,dotted,purple,line width=1pt] (box111) to[out=north east,
				in=north west] (box111');
				\draw[<-,dotted,purple,line width=1pt] (box111') to[out=south west,
				in=south east] (box111);
				\coordinate (box111root3) at ([shift={(-110:2.7)}]box111);
				\draw[->,dotted,purple,line width=1pt] (box111) to[out=south west,
				in=north west] (box111root3);
				\draw[->,dotted,green,line width=1pt] (box111root3) to[out=south east,
				in=west] (root3);        
				\coordinate (box11') at ([shift={(180:0.8)}]box11);
				\draw[<-,dotted,blue,line width=1pt] (box11) to[out=north west,
				in=north east] (box11');
				\draw[<-,dotted,blue,line width=1pt] (box11') to[out=south east,
				in=south west] (box11);
				\coordinate (root3') at ([shift={(-30:0.8)}]root3);
				\draw[->,dotted,green,line width=1pt] (root1) to[out=south,
				in=west] (root3');
				\draw[->,dotted,green,line width=1pt] (root3') to[out=north,
				in=east] (root3);
			\end{forest}
			\begin{forest}
				for tree={grow'=north, edge=<-, l-=2em}
				[$ $, name=root
					[$\textcolor{gray}{\bullet}$, name = root1, edge = gray
						[$ $, name = box21o, edge = gray
							[$\textcolor{red}{\bullet}$, name = box21, edge = red]
						]
						[$ $, name = box22o, edge = gray
							[$\textcolor{orange}{\bullet}$, name = box22, edge = orange]
						]
					]
				]
				\coordinate (box11') at ([shift={(180:0.8)}]box11);
				\coordinate (box21') at ([shift={(90:0.8)}]box21);
				\draw[->,dotted,red,line width=1pt] (box21) to[out=north west,
				in=south west] (box21');
				\draw[->,dotted,red,line width=1pt] (box21') to[out=south east,
				in=north east] (box21);
				\coordinate (box22') at ([shift={(0:0.8)}]box22);
				\draw[<-,dotted,orange,line width=1pt] (box22) to[out=north east,
				in=north west] (box22');
				\draw[<-,dotted,orange,line width=1pt] (box22') to[out=south west,
				in=south east] (box22);
				\coordinate (root1') at ([shift={(-30:0.8)}]root1);
				\draw[->,dotted,gray,line width=1pt] (root1) to[out=south,
				in=west] (root1');
				\draw[->,dotted,gray,line width=1pt] (root1') to[out=north,
				in=east] (root1);
			\end{forest}
		}
		\caption{The \kl{box-forest} \texorpdfstring{$\ForestT_{R}$}{} (without dotted
			lines) and the reflexive-transitive closure
			\texorpdfstring{$\ForestT_{R}^{\ReflexiveTransitive}$}{} of
			\texorpdfstring{$\ForestT_{R}$}{} (with also dotted lines).}
	\end{subfigure}
	\caption{A \(\MELL\) quasi-proof-structure $R = (\lvert R \rvert, \ForestT_{R}, \BoxFunction_{R})$.}
	\label{fig:pointed-quasi-proof-net}
\end{figure}

\begin{remark}[components]
	\label{rmk:components}
	A difference arises between a $\DiLL_0$ proof-structure and a $\DiLL_0$ quasi-proof-structure. 
	Both have no boxes.
	In the former, its box-tree and box-function do not induce any structure on its \kl{structured graph}, so we can identify a $\DiLL_0$ proof-structure with its \kl{structured graph}. 
	In the latter, its box-forest and box-function do 
	induce a structure on its structured graph: indeed, its box-forest is made up only of roots with their output and its box-function separates the components of its structured graph, by mapping the vertices to the roots. 
	So, we cannot identify a $\DiLL_0$ quasi-proof-structure with~its~\kl{structured graph}.
\end{remark}

%%% Local Variables:
%%% mode: latex
%%% TeX-master: "mainFinal"
%%% End:

% !TEX root = mainFinal.tex
\section{The Taylor expansion}
\label{sect:taylor}

The \emph{Taylor expansion} $\Taylor{R}$ \cite{Ehrhard:2008} of a \MELL (or more generally \DiLL) proof-structure $R$ is a possibly infinite set of $\DiLL_0$ proof-structures: 
roughly, each element of $\Taylor{R}$ is obtained from $R$
by approximating---\ie~replacing---each box $B$ in $R$ with $n_B$ copies of its content (for some $n_B \in \Nat$), recursively on the \kl{depth} of $R$. 
Note that $n_B$ depends not only on $B$ but also on which ``copy'' 
of all boxes containing $B$ we are considering. 
Usually, the Taylor expansion $\Taylor{R}$ of a \MELL proof-structure $R$ is defined globally and inductively \cite{MazzaPagani07,PaganiTasson2009,ChouquetVaux18}: with $R$ is directly associated the whole set $\Taylor{R}$ by induction on the depth of~$R$.
A drawback of this
approach is that, for each element of $\Taylor{R}$, the way the copies of the content of a box are merged is defined ``by hand'', which is syntactically heavy. 

Following \cite{Wollic}, we adopt an alternative non-inductive approach, which 
strongly refines \cite{FSCD2016}: the Taylor expansion is defined \emph{pointwise}
(see Example \ref{ex:proto-taylor} and \Cref{fig:taylor-expansion}).
Indeed, proof-structures have a tree structure made explicit by their \kl{box-function}.
The definition of the Taylor expansion uses this tree structure: 
first, we define how to ``\emph{expand}'' a tree via the notion of \kl{thick subtree} \cite{Boudes:2009} (Definition \ref{def:thick}; roughly, it states the
number of copies of each box to be taken, recursively), 
we then take all
these expansions of the box-tree of a proof-structure and we \emph{pull} them \emph{back} to the underlying graphs (Definition \ref{def:proto}), finally
we \emph{forget} the tree structures associated with them (Definition \ref{def:taylor}).
Thus, pullbacks give an abstract and elegant way to define the merging of copies of the content of a box in an
element of $\Taylor{R}$.
The use of pullbacks is made possible because all the ingredients in our
definition of a proof-structure live in the category of directed graphs (as defined in \Cref{sect:graphs}).

An advantage of our approach is that it smoothly generalizes to \kl{quasi-proof-structures}.

\begin{defi}[thick subtree and subforest]
  \label{def:thick}
  Let $\sigma$ be a \kl{rooted tree}. 
  A \intro{thick subtree} of $\sigma$ is a pair
  $(\tau, h)$ of a rooted tree $\tau$ and a directed graph morphism
  $h = (h_F, h_V) \colon \tau \to \sigma$.
  
  Let $\sigma = (\sigma_1, \dots, \sigma_n)$ be a \kl{forest} of \kl{rooted trees}, with $n > 0$.
  A \intro{thick subforest} of $\sigma$ is a tuple $((\tau_1, h_1), \dots, (\tau_n, h_n))$ where  $(\tau_i, h_i)$ is a thick subtree of $\sigma_i$ for all $1 \leqslant i \leqslant n$.
  It is denoted by $(\tau, h)$, where $\tau = (\tau_1, \dots, \tau_n)$ and $h = (h_1, \dots, h_n)$.
\end{defi}

\begin{exa}
  \label{ex:proto-taylor}
  The following is a graphical presentation of a \kl{thick subforest} $(\tau,h)$
  of the \kl{box-forest} $\ForestT_R$ of the \kl{quasi-proof-structure} in
  \Cref{fig:pointed-quasi-proof-net}, where the directed graph morphism
  $h = (h_F, h_V) \colon \tau \to \ForestT_R$ is depicted chromatically (same color means same
  image via $h$).  
  \vspace{-0.3\baselineskip}
  \begin{align*}
    \scalebox{0.7}{
    \raisebox{1.3cm}{\Large$\tau \ = \ $}
		\begin{forest}
			for tree={grow'=north, edge=<-, l=0.2pt}
			[$ $ 
				[$\textcolor{green}{\bullet}$, edge = green,
					[$ $, edge = green
						[$\textcolor{blue}{\bullet}$, edge = blue
							[$ $, edge = blue
								[$\textcolor{purple}{\bullet}$, edge = purple]
							]
						]
					]
					[$ $, edge = green,
						[$\textcolor{blue}{\bullet}$, edge = blue]
					]
					[$ $, edge = green,
						[$\textcolor{blue}{\bullet}$, edge = blue 
							[$ $, edge = blue
								[$\textcolor{purple}{\bullet}$, edge = purple]
							]
							[$ $, edge = blue
								[$\textcolor{purple}{\bullet}$, edge = purple]
							] 
						]
					]
				]
			]
		\end{forest}
		\begin{forest}
			for tree={grow'=north, edge=<-, l=0.2pt}
			[$ $, edge = gray
				[$\textcolor{gray}{\bullet}$, edge = gray
					[$ $, edge = gray
						[$\textcolor{red}{\bullet}$, edge = red]
					]
					[$ $, edge = gray
						[$\textcolor{orange}{\bullet}$, edge = orange]
					]
					[$ $, edge = gray
						[$\textcolor{orange}{\bullet}$, edge = orange]
					]
					[$ $, edge = gray
						[$\textcolor{orange}{\bullet}$, edge = orange]
					] 
					[$ $, edge = gray
						[$\textcolor{orange}{\bullet}$, edge = orange]
					]
				]
			]
		\end{forest}
		\raisebox{1.3cm}{\Large$\qquad\overset{h}{\longrightarrow}\qquad$}
		\begin{forest}
		for tree={grow'=north, edge=<-, l=0.2pt}
		[$ $, edge = green
			[$\textcolor{green}{\bullet}$, edge = green
				[$ $, edge = green
					[$\textcolor{blue}{\bullet}$, edge = blue
						[$ $, edge = blue
							[$\textcolor{purple}{\bullet}$, edge = purple]
						]
					]
				]
			]
		]
		\end{forest}
		\
		\begin{forest}
		for tree={grow'=north, edge=<-, l=0.2pt}
		[$ $, edge = gray
			[$\textcolor{gray}{\bullet}$, edge = gray
				[$ $, edge = gray
					[$\textcolor{red}{\bullet}$, edge = red]
				]
				[$ $, edge = gray
					[$\textcolor{orange}{\bullet}$, edge = orange]
				]
			]
		]
		\end{forest}
    \raisebox{1.3cm}{\Large$ \ =  \ \ForestT_R$}
    }
  \end{align*}
  Intuitively, it means that $\tau$ is obtained from $\ForestT_R$ by taking $3$
  copies of the blue box, $1$ copy of the red box and $4$ copies of the orange
  box; in the first (resp.~second; third) copy of the blue box, $1$ copy
  (resp.~$0$ copies; $2$ copies) of the purple box has been taken.
\end{exa}

\begin{remark}[roots]
	\label{rmk:roots}
	If $(\tau,h)$ is a \kl{thick subforest} of a forest $\sigma$ of rooted trees, $h$ sets up a bijection between the roots of $\tau$ and $\sigma$, by definition of \kl{directed graph morphism}; 
	so, we can identify the roots of $\tau$ with the ones of $\sigma$.
	In particular, if $\sigma$ is made of roots with their output and no inputs, the only thick subforest of $\sigma$ is $\sigma$ with its \kl{identity graph morphism}.
\end{remark}

The crucial point is to pull back the expansion of a forest to quasi-proof-structures.

\begin{defi}[proto-Taylor expansion]
  \label{def:proto}
  Let $R = (|R|, \ForestT_R, \BoxFunction_{R})$ be a \kl{quasi-proof-structure}.
  The \intro{proto-Taylor expansion} of $R$ is the set $\ProtoTaylor{R}$ of
  thick subforests of $\ForestT_{R}$.
  
  Let $t = (\tau_t, h_t) \in \ProtoTaylor{R}$.
  The \intro[tree expansion]{$t$-expansion} of $R$ is the \kl{pullback} $(R_t, p_t, p_R)$ below,
  computed in the category of
  directed graphs and directed graph morphisms.\footnotemark
  \footnotetext{So, $\lvert R \rvert$ is considered as directed graph (see \Cref{foot:struct}), forgetting that it is colored, labeled, ordered.}
  \begin{align*}
    \begin{tikzcd}
      R_t \ar[r,"p_t"] \ar[d,swap,"p_R"]
      \arrow[dr, phantom, "\scalebox{1.5}{$\lrcorner$}" , very near start, color=black]
      \& \tau_t^{\circlearrowleft}
      \ar[d,"h_t^{\circlearrowleft}"]\\
      \lvert R \rvert  \ar[r,"\BoxFunction_{R}"]\& \ForestT_{R}^{\circlearrowleft}
    \end{tikzcd}
  \end{align*}
\end{defi}

Given a \kl{quasi-proof-structure} $R$ and $t = (\tau_t, h_t) \in \ProtoTaylor{R}$,
the directed graph $R_t$ inherits the types of its vertices and flags by pre-composition of $\VertType_{\lvert R \rvert}$ and $\FlagType_{\lvert R \rvert}$ with the \kl{graph morphism} $p_R \colon R_t \to |R|$.
The order on the flags of $R_t$ is induced by the one on~$\lvert R \rvert$~via~$p_R$.

Let $[\tau_t]$ be the forest made up of the roots of $\tau_t$ and
$\iota \colon \tau_t \to [\tau_t]$ be the graph morphism sending each vertex of
$\tau_t$ to the root below it; $\iota^\circlearrowleft$ induces by
post-composition a morphism
$\overline{h_t} = \iota^\circlearrowleft \circ p_t \colon R_t \to
[\tau_t]^{\circlearrowleft}$. The triple $(R_t, [\tau_t], \overline{h_t})$ is a
$\DiLL_0$ \kl{quasi-proof-structure}, and it is a $\DiLL_0$ \kl{proof-structure} if $R$ is a
proof-structure.
We can now define the \emph{Taylor expansion} $\Taylor{R}$ of a quasi-proof-structure $R$ (an example of an element of a Taylor expansion~is~in~\Cref{fig:taylor-expansion}).

\begin{defi}[Taylor expansion]
  \label{def:taylor}
  Let $R$ be a quasi-proof-structure. The \intro{Taylor expansion} of $R$ is the
  set of \polyadic{} quasi-proof-structures
  \mbox{$\Taylor{R} = \{ (R_t, [\tau_t], \overline{h_t}) \mid t = (\tau_t, h_t) \in
  \ProtoTaylor{R}\}$.}
\end{defi}

An element $(R_t, [\tau_t], \overline{h_t})$ of the Taylor expansion of a quasi-proof-structure $R$ 
has much less
structure than the pullback $(R_t,p_t,p_R)$: the latter 
indeed is a $\DiLL_0$ quasi-proof-structure $R_t$ coming with its projections
$|R| \stackrel{p_R}{\longleftarrow} R_t \stackrel{p_t}{\longrightarrow}
\tau_t^\circlearrowleft$, which establish a precise correspondence between cells and flags of $R_t$ and cells and flags of $R$: 
a cell in $R_t$ is labeled (via 
the projections) by both the cell of $|R|$ and the branch of the box-forest of $R$ it
arose from.  But $(R_t, [\tau_t], \overline{h_t})$ where $R_t$ is without its
projections $p_t$ and $p_R$ loses the correspondence with
$R $. 

\begin{figure}[!t]
	\vspace{\beforepn}
	\centering
	\scalebox{0.75}{ 
		\pnet{
			\pnformulae{
				~~~\pnf[botax]{$\bot$}~\pnf[1ax]{$\one$}~\pnf[botaxbis]{$\bot$}~\pnf[1axbis]{$\one$}~\pnf[botaxtris]{$\bot$}~\pnf[1axtris]{$\one$}~~~~~\pnf[Y]{$Y$}~\pnf[Yp]{$Y^\bot$} \\
				~~\pnf[bot]{$\bot$}~~~~~~~\pnf[1]{$\one$}~\pnf[1bis]{$\one$}~\pnf[1tris]{$\one$} \\
				~\pnf[X]{$X$}~~~~~~~~~~~~~~\pnf[1']{$\one$}~\pnf[1'']{$\one$}~\pnf[1''']{$\one$}~\pnf[1'''']{$\one$} \\ 
				\pnf[Xp]{$X^\bot$}~~~~~~\pnf[1cow]{$\oc \one$}
			}
			\pnone[1axcell]{1ax}
			\pnbot[botaxcell]{botax}
			\pnaxiom[axprop]{Xp,X}
			\pnone[1axcellbis]{1axbis}
			\pnbot[botaxcellbis]{botaxbis}
			\pnone[1axcelltris]{1axtris}
			\pnbot[botaxcelltris]{botaxtris}
			\pnaxiom[axprop']{Y,Yp}
			\pnbag{}{1ax}[!1ax]{$\oc 1$}
			\pnbag{}{1axbis,1axtris}[!1axtris]{$\oc \one$}[1][.3]
			\pninitial{$\oc$}[cow]{1cow}
			\pnbot[botcell]{bot}
			\pnone[1cell]{1}
			\pnone[1cellbis]{1bis}			
			\pnone[1celltris]{1tris}
			\pnbag{}{1,1bis,1tris}[!1]{$\oc \one$}
			\pnexp{}{bot,botax,botaxbis,botaxtris}[?bot]{$\wn \bot$}[1.5][-1]
			\pnexp{}{!1ax,!1axtris,!1,1cow}[?!1]{$\wn\oc 1$}[1]
			\pntensor{X,?bot}[tens]{$X \otimes \wn \bot$}[.9]
			\pnexp{}{Y}[?Y]{$\wn Y$}
			\pnpar{?Y,Yp}[par]{$\wn Y \parr Y^\bot$}
			\pnbag{}{par}[!par]{$\,\,\oc (\wn Y \parr Y^\bot)$}
			\pnone[1cell']{1'}
			\pnone[1cell'']{1''}
			\pnone[1cell''']{1'''}
			\pnone[1cell'''']{1''''}
			\pnbag{}{1',1'',1''',1''''}[!1']{$\oc 1$}[1.5]
			\pnsemiboxc{1axcell,botaxcell,1axcellbis,botaxcellbis,1axcelltris,axprop,Xp,1,1bis,1tris,ax,tens}{green}
			\pnsemiboxc{axprop',!1',1'''',Y}{gray}
		}
		\qquad
		\begin{forest}
			for tree={grow'=north, edge=<-, l=0.2pt}
			[$ $, edge = green
				[$\textcolor{green}{\bullet}$, name = box12,  edge = green]
			]
			\coordinate (box12') at ([shift={(90:0.8)}]box12);
			\draw[->,dotted,green,line width=1pt] (box12) to[out=north west,
			in=south west] (box12');
			\draw[->,dotted,green,line width=1pt] (box12') to[out=south east,
			in=north east] (box12);
		\end{forest}
		\ 
		\begin{forest}
			for tree={grow'=north, edge=<-, l=0.2pt}
			[$ $, edge = gray
				[$\textcolor{gray}{\bullet}$, name = box21, edge = gray]
			]
			\coordinate (box21') at ([shift={(90:0.8)}]box21);
			\draw[->,dotted,gray,line width=1pt] (box21) to[out=north west,
			in=south west] (box21');
			\draw[->,dotted,gray,line width=1pt] (box21') to[out=south east,
			in=north east] (box21);
		\end{forest}
	}
	\caption{An element $\rho$ of the Taylor expansion of the $\MELL$ quasi-\-proof-structure $R$ in \Cref{fig:pointed-quasi-proof-net}, obtained from the element of $\ProtoTaylor{R}$ in Example~\ref{ex:proto-taylor}.}
	\label{fig:taylor-expansion}
\end{figure}

\begin{rem}[conclusions]
	\label{rmk:conclusions}
		By definition and Remark \ref{rmk:roots}, the Taylor expansion preserves \kl[conclusion of quasi-proof-structure]{conclusions} and \kl[quasi-proof-structure type]{type}: each element of the Taylor expansion of a quasi-proof-structure $R$ has the \emph{same conclusions} and the \emph{same type} as $R$.
	More precisely, there is a
  bijection $\varphi$ from the conclusions of a quasi-proof-structure $R$ to the
  ones in each element $\rho$ of $\Taylor{R}$ such that $i$ and $\varphi(i)$
  have the same type and the same root (\ie
  $\BoxFunction_{R_F}(i) = \BoxFunction_{\rho_F}(\varphi(i))$ up to the bijection of Remark \ref{rmk:roots}).
  So, the \kl[proof-structure type]{types} of $R$ and $\rho$ are the same (as a list of lists).
\end{rem}

\begin{example}[Taylor of the empty]
	\label{ex:taylor-empty}
	By Definitions \ref{def:proof-structure} and \ref{def:quasi-proof-structure}, the box-tree of $\emptynet$ is just a root with its output and no inputs (we call it a \emph{root-tree}), and the box-forest of $\kl[empty quasi-proof-structure]{n\emptynet}$ is a list of $n$ root-trees, for all $n > 0$; the box-function of $\emptynet$ is the \kl{empty graph morphism} and the box-function of $n\emptynet$ is a list of $n$ such morphisms. 
	By Remark \ref{rmk:roots}, $\ProtoTaylor{\emptynet}$ (\resp\ $\ProtoTaylor{n\emptynet}$ for all $n \!>\! 0$) is $\{(\tau, h)\}$ where $\tau$ is a root-tree (\resp\ a forest of $n$ root-trees) and $h$ is the \kl{identity graph morphism} on $\tau$.
	By Definition \ref{def:taylor}, $\Taylor{\emptynet} = \{\emptynet\}$ and $\Taylor{\kl[empty quasi-proof-structure]{n \emptynet}} = \{n \emptynet\}$~for~all~$n \!>\! 0$.
\end{example}

\subsection{The filled Taylor expansion}
\label{subsec:fat}

As discussed in \Cref{subsec:intro-fattened}
(p.~\pageref{subsec:intro-fattened}), the rewriting rules we will introduce in \Cref{sect:rules} need to ``represent'' the emptiness introduced by the Taylor expansion (taking $0$
copies of a box) so as to preserve the conclusions.  
Thus, an element of the \emph{filled Taylor expansion} $\FatTaylor{R}$ of a quasi-proof-structure $R$ 
(an example is in \Cref{fig:emptyings}) is obtained from an element of $\Taylor{R}$
where some \kl{components} can be erased and replaced by 
\kl{daimons} with the same conclusions (hence $\Taylor{R} \subseteq \FatTaylor{R}$).
The filled Taylor expansion (and not the plain one) will play a crucial role in \Cref{sec:naturality} to define a natural transformation.

\begin{defi}[emptying, filled Taylor expansion]
	\label{def:FilledTaylor}
	The \intro[emptying of proof-structure]{emptying} of a $\DiLL_0$ proof-structure~
	$\rho$ with at least one conclusion is a \kl{daimon} whose \kl{conclusions} and \kl[proof-structure type]{type} are the same as $\rho$. 
	
  An \intro{emptying} of a \kl[DiLL0 quasi-proof-structure]{$\DiLL_0$ \emph{quasi}-proof-structure} $\rho$ is a $\DiLL_0$ quasi-proof-structure obtained from $\rho$ by replacing 
	some (possibly none) \kl{components} of $\rho$ having at least one conclusion with their \kl[emptying of proof-structure]{emptying}.
		
	The \intro{filled Taylor expansion} $\FatTaylor{R}$ of a
	quasi-proof-structure $R$ is the set of all the emptyings of all the elements of
	its Taylor expansion $\Taylor{R}$.
	
	If $\rho = (R_t, [\tau_t], \overline{h_t}) \in \Taylor{R}$, and $r_1, \dots, r_n$ are some roots of  $[\tau_t]$, the \intro[emptying on]{emptying of $\rho$ on $r_1, \dots, r_n$}, denoted by $\rho_{r_1 \dots r_n}$, is the element of $\FatTaylor{R}$ obtained by substituting a daimon 
	for each component of $\rho$ with at least one conclusion corresponding to the root $r_i$ (\ie, if such a conclusion is an output of a vertex $v$, then $\overline{h_t}_V(v) = r_i$), for all $1 \leqslant i \leqslant n$.
\end{defi}

\begin{figure}[!t]
  \vspace{\beforepn}
  \centering
  \scalebox{0.75}{
    \pnet{
      \pnformulae{
        ~~~~~~\pnf[yg]{}~~\pnf[1']{$1$}~\pnf[1'']{$1$}\\
        \pnf[Xp]{$X^\bot$}~~\pnf[tens]{$X\otimes \wn \bot$}~~\pnf[?!1]{$\wn\oc 1$}~~~\pnf[YY]{$\oc (\wn Y \parr Y^\bot)$}
      }
      \pndaimon[dai]{Xp,tens,?!1}
      \pncown[!par]{YY}
      \pnone[1cell']{1'}
      \pnone[1cell'']{1''}
      \pnbag{}{1',1''}[!1']{$\oc 1$}
      \pnsemiboxc{dai,Xp,tens,?!1}{green}
      \pnsemiboxc{YY,1cell',1cell'',!par,yg}{gray}
    }
		\qquad
		\quad
		\raisebox{-.5cm}{
		\begin{forest}
			for tree={grow'=north, edge=<-, l=0.2pt}
			[$ $, edge = green
				[$\textcolor{green}{\bullet}$, name = box12, edge = green]
			]
			\coordinate (box12') at ([shift={(90:0.8)}]box12);
			\draw[->,dotted,green,line width=1pt] (box12) to[out=north west,
			in=south west] (box12');
			\draw[->,dotted,green,line width=1pt] (box12') to[out=south east,
			in=north east] (box12);
		\end{forest}
		\ 
		\begin{forest}
			for tree={grow'=north, edge=<-, l=0.2pt}
			[$ $, edge = gray
				[$\textcolor{gray}{\bullet}$, edge = gray]
			]
			\coordinate (box21') at ([shift={(90:0.8)}]box21);
			\draw[->,dotted,gray,line width=1pt] (box21) to[out=north west,
			in=south west] (box21');
			\draw[->,dotted,gray,line width=1pt] (box21') to[out=south east,
			in=north east] (box21);
		\end{forest}
		\quad
		}
	}
  \caption{An element of the filled Taylor expansion of the $\MELL$ quasi-proof-structure $R$ in \Cref{fig:pointed-quasi-proof-net}, obtained as an emptying of $\rho$ in~\Cref{fig:taylor-expansion}.}
  \label{fig:emptyings}
\end{figure}

\begin{remark}[conclusions of emptying]
	\label{rmk:conclusions-emptying}
	By construction, the \kl{filled Taylor expansion} preserves conclusions and type, as the Taylor expansion does (Remark \ref{rmk:conclusions}):  conclusions and type of a quasi-proof-structure $R$ and of \emph{any} emptying of \emph{any} element of $\Taylor{R}$ are the same.
\end{remark}

\begin{example}[filled Taylor of the empty]
	\label{ex:filled-taylor-empty}
	From Example \ref{ex:taylor-empty} and Definition \ref{def:FilledTaylor}, it follows that $\Taylor{\emptynet} = \{\emptynet\} = \FatTaylor{\emptynet}$ and for every $n > 0$, $\Taylor{n\emptynet} = \{\kl[empty quasi-proof-structure]{n \emptynet}\} = \FatTaylor{\kl[empty quasi-proof-structure]{n\emptynet}}$.
\end{example} 

\begin{remark}[connection]
	\label{rmk:connection}
	The Taylor expansion does not ``create connection''; the filled Taylor expansion does, but only in a component that has been filled by a daimon. 
	More precisely, let $i$ and $j$ be conclusions of a quasi-proof-structure $R = (|R|, \ForestT_R, \BoxFunction_R)$, let $\rho = \allowbreak(|\rho|, \ForestT_\rho, \BoxFunction_{\rho}) \!\in \Taylor{R}$ 
	and let $\rho_{r_1 \dots r_n} \in \FatTaylor{R}$ be the \kl[emptying on]{emptying of $\rho$ on} the roots $r_1, \dots, r_n$. 
	If $i$ and $j$ are not connected in the (undirected) \kl[undirected graph]{graph} $\lVert R \rVert$ of $R$, then $i$ and $j$ are not connected in the (undirected) graph $\lVert \rho \rVert$ of $\rho$.
	And if $i$ and $j$ are connected in the (undirected) graph of $\lVert \rho_{r_1 \dots r_n} \rVert$ of $\rho_{r_1 \dots r_n}$, then $i$ and $j$ are output tails of the same $\maltese$-cell in $\rho_{r_1 \dots r_n}$.

\end{remark}

%%% Local Variables:
%%% mode: latex
%%% TeX-master: "mainFinal"
%%% End:

% !TEX root = mainFinal.tex
\section{Means of destruction: unwinding \texorpdfstring{$\MELL$}{MELL} quasi-proof-structures}
\label{sect:rules}

\begin{figure}[!t]
  \centering
  \small
  \arraycolsep=1.4pt
  $\begin{array}{rcl}
     (\Gamma_1; \cdots ; \Gamma_k, \FlagType(i), \FlagType(i\!+\!1), \Gamma_{k}'; \cdots; \Gamma_n)
     &\xrightarrow{\Ex_{i}}
     &(\Gamma_1; \cdots ; \Gamma_k, \FlagType(i\!+\!1),  \FlagType(i), \Gamma_{k}'; \cdots; \Gamma_n)
     \\
     (\Gamma_1; \cdots ; \Gamma_{k}, \FlagType(i), \FlagType(i\!+\!1), \Gamma_{k}'; \cdots; \Gamma_n)
     &\xrightarrow{\Mix_{i}}
     &(\Gamma_1; \cdots ; \Gamma_{k}, \FlagType(i); \FlagType(i\!+\!1), \Gamma_{k}'; \cdots; \Gamma_n)
     \\
     (\Gamma_1; \cdots ; \Gamma_{k\!-\!1}; \FlagType(i\!-\!1), \FlagType(i); \Gamma_{k\!+\!1}; \cdots; \Gamma_n)
     &\xrightarrow{\Ax_{i}}
     &(\Gamma_1; \cdots ; \Gamma_{k\!-\!1}; \emptylist; \Gamma_{k\!+\!1} ; \cdots; \Gamma_n) \text{ if } \FlagType(i\!-\!1) = \FlagType(i)^\bot \text{ \kl{atomic}}
     \\
     (\Gamma_1; \cdots ; \Gamma_{k}; \cdots; \Gamma_n)
     &\xrightarrow{\Cut^i}
     &(\Gamma_1; \cdots ; \Gamma_k, \FlagType(i), \FlagType(i\!+\!1); \cdots; \Gamma_n)
       \text{ if } \FlagType(i) = \FlagType(i\!+\!1)^\bot 
       \\
     (\Gamma_1; \cdots ; \Gamma_{k\!-\!1}; \FlagType(i); \Gamma_{k\!+\!1}; \cdots; \Gamma_n)
     &\xrightarrow{\One_i}
     &(\Gamma_1; \cdots ; \Gamma_{k\!-\!1}; \emptylist; \Gamma_{k\!+\!1} ; \cdots; \Gamma_n) \text{ if } \FlagType(i) = \One 
     \\
     (\Gamma_1; \cdots; \Gamma_{k\!-\!1}; \FlagType(i); \Gamma_{k\!+\!1}; \cdots; \Gamma_n)
     &\xrightarrow{\bot_i}
     &(\Gamma_1; \cdots ; \Gamma_{k\!-\!1}; \emptylist; \Gamma_{k\!+\!1}; \cdots; \Gamma_n)  \text{ if } \FlagType(i) = \bot 
     \\
     (\Gamma_1; \cdots ; \Gamma_k, \FlagType(i); \cdots; \Gamma_n)
     &\xrightarrow{\otimes_i}
     &(\Gamma_1; \cdots ; \Gamma_{k}, A, B; \cdots; \Gamma_n) \text{ if } \FlagType(i) = A \otimes B 
     \\
     (\Gamma_1; \cdots ; \Gamma_{k}, \FlagType(i); \cdots; \Gamma_n)
     &\xrightarrow{\parr_i}
     &(\Gamma_1; \cdots ; \Gamma_{k}, A, B; \cdots; \Gamma_n) \text{ if } \FlagType(i) = A \parr B 
     \\
     (\Gamma_1; \cdots ; \Gamma_{k}, \FlagType(i); \cdots; \Gamma_n)
     &\xrightarrow{\Contr_i}
     &(\Gamma_1; \cdots ; \Gamma_{k}, \wn A, \wn A; \cdots; \Gamma_n) \text{ if } \FlagType(i) = \wn A 
     \\
     (\Gamma_1; \cdots ; \Gamma_{k}, \FlagType(i); \cdots; \Gamma_n)
     &\xrightarrow{\Der_i}
     &(\Gamma_1; \cdots ; \Gamma_{k}, A; \cdots; \Gamma_n)
       \text{ if } \FlagType(i) = \wn A
     \\
     (\Gamma_1; \cdots ; \Gamma_{k\!-\!1}; \FlagType(i); \Gamma_{k\!+\!1}; \cdots; \Gamma_n)
     &\xrightarrow{\Weak_i}
     &(\Gamma_1; \cdots ; \Gamma_{k\!-\!1}; \emptylist; \Gamma_{k\!+\!1} ; \cdots; \Gamma_n) \text{ if } \FlagType(i) = \wn A
     \\
		(\Gamma_{\!1}; \cdots ; \Gamma_{\!k\!-\!1}; \wn A_1, \overset{p \geqslant 0}{\dots}, \wn A_p, \FlagType(i); \Gamma_{\!k\!+\!1}\cdots; \Gamma_{\!n})
		&\xrightarrow{\BoxR_{i}}
		&(\Gamma_{\!1}; \cdots ; \Gamma_{\!k\!-\!1}; \wn A_1, \overset{p \geqslant 0}{\dots}, \wn A_p, A; \Gamma_{\!k\!+\!1}; \cdots; \Gamma_{\!n}) \text{ if } \FlagType(i) = \oc A
   \end{array}$
   \caption{The generators of $\Scheduling$.
     In the source $\Gamma$ of each arrow, $\FlagType(i)$ 
     is the $i^\text{th}$ formula in the \kl{flattening} of $\Gamma$ (the 
     arrow's name keeps track~of~$i$).
   	The $\Gamma_k$'s are (possibly empty) lists of \MELL formulas.}
   \label{fig:elementary-schedulings}
 \end{figure}

Our aim is to deconstruct ($\MELL$ or \polyadic) proof-structures 
from their conclusions. 
To do that, we introduce the category of schedulings. 
The arrows of this category are sequences of deconstructing
rules, acting on lists of lists of \MELL formulas. 
These arrows act through functors on (\MELL or $\DiLL_0$) quasi-proof-structures, exhibiting their sequential structure.

\begin{defi}[the category $\Scheduling$]
  \label{def:unwinding-paths}
  Let $\Scheduling$ be the category of \kl{schedulings} whose
  \begin{itemize}
  \item objects are lists $\Gamma = (\Gamma_1; \cdots; \Gamma_n)$ of lists of
    $\MELL$ formulas, with $n \geqslant 0$;
    
  \item arrows are freely generated by composition of the \intro{elementary
      schedulings} in \Cref{fig:elementary-schedulings}; 
	identities are the empty sequence of elementary schedulings, called \intro{empty scheduling}.
  \end{itemize}
  A \intro{scheduling} is any arrow $\sched \colon \Gamma \to \Gamma'$ in $\Scheduling$.  
  We write the composition of schedulings
  by juxtaposition in the diagrammatic
  order; so, if $\sched \colon \Gamma \to \Gamma'$ and
  $\sched' \colon \Gamma' \to \Gamma''$, then~$\sched\sched' \colon \Gamma\to \Gamma''$.
  If $\sched = a_1 \cdots a_k$ where $k \geqslant 0$ and the $a_i$'s are elementary schedulings, the \intro{length} of $\sched$ is $k$.
\end{defi}

Reading \Cref{fig:elementary-schedulings} left-to-right, $\Mix_{i}$ is the only elementary scheduling that changes (and increases) the number of lists in a list $\Gamma$ of lists of \MELL formulas. 
	The only elementary schedulings decreasing the number of \MELL formulas in the \kl{flattening} of $\Gamma$ are $\Ax_{i}, \One_i, \bot_{i}, \Weak_i$.

\begin{exa}
  \label{ex:barbara-path}
  $\parr_1 \, \parr_2 \, \parr_3 \, \otimes_1 \, \otimes_3 \, \Ex_1 \, \Ex_2 \, \Mix_{2} 
  \, \Ax_2 \, \Ex_2 \, \Mix_{2} \, \Ax_2 \, \Ax_2$ is a scheduling 
	from $\big((X \otimes Y^\bot) \parr ((Y \otimes Z^\bot) \parr (X^\bot \parr	Z))\big)$ 
	to $3\emptylist$, the list of 3 \kl{empty lists} of \MELL formulas.
\end{exa}

%%% Local Variables:
%%% mode: latex
%%% TeX-master: "mainFinal"
%%% End:

% !TEX root = mainFinal.tex

The category $\Scheduling$ acts on \kl{$\MELL$ quasi-proof-structures}, exhibiting a
sequential structure in their construction, by means of the rewrite rules in \Cref{fig:actions-all} (Definition \ref{def:unwinding}). 
To ease their reading, we make explicit the name of the active conclusions (to be intended as positive integers), we only draw the components affected by the rewrite rule and omit the ones left unchanged; 
\eg{}, if the affected component consists only of an $\Ax$-cell whose outputs are the conclusions $i\!-\!1$ and $i$, we write
\parbox[b][1.5\baselineskip]{1.05cm}{
	\(
	\scalebox{\smallproofnets}{
		\pnet{
			\pnformulae{
				\pnf[i]{$\,\,\,i\!-\!1$}~\pnf[j]{$i$}
			}
			\pnaxiom[ax]{i,j} [0.8]
			\pnsemibox{ax}
		}
	}
	\)
} ignoring the other components.
Given a list $\Gamma$ of lists of $\MELL$ formulas, $\MELLFunctor(\Gamma)$ is the set of \kl{$\MELL$ quasi-proof-structures} of type $\Gamma$.  

\begin{figure}[t]
	\vspace{\beforepn}
  \centering
  \begin{subfigure}[c]{0.4\textwidth}
    \begin{align*}
      \pnet{
      \pnsomenet[u]{}{2cm}{0.7cm}[at (-1,0.75)]
      \pnoutfromswitch{u.-153}[A]{$\Gamma_k\,$}
      \pnoutfrom{u.-135}[B]{$i$}
      \pnoutfrom{u.-65}[A']{$i\!+\!1$}
      \pnoutfromswitch{u.-27}[B']{$\,\,\Gamma_{k}'$}
      \pnsemibox{u,A,A',B,B'}
      }
      \pnrewrite[\Ex_{i}]
      \pnet{
      \pnsomenet[u]{}{2cm}{0.7cm}[at (-1,0.75)]
      \pnoutfromswitch{u.-153}[A]{$\Gamma_{k}\,$}
      \pnoutfrom{u.-110}[B]{$i\!+\!1$}
      \pnoutfrom{u.-45}[A']{$i$}
      \pnoutfromswitch{u.-27}[B']{$\,\,\Gamma_{k}'$}
      \pnsemibox{u,A,B,A',B'}
      }
    \end{align*}
    \vspace{-\baselineskip}
    \caption{Exchange}
    \label{fig:exc}
  \end{subfigure}\hfill%
  \begin{subfigure}[c]{0.4\textwidth}
    \begin{align*}
      \pnet{
      \pnsomenet[u]{}{1cm}{0.7cm}[at (-1,0.75)]
      \pnoutfromswitch{u.-125}[A]{$\Gamma_k\,$}
      \pnoutfrom{u.-55}[B]{$i$}
      \pnsomenet[u']{}{1cm}{0.7cm}[at (0.3,0.75)]
      \pnoutfrom{u'.-127}[A']{$i\!+\!1$}
      \pnoutfromswitch{u'.-52}[B']{$\,\,\Gamma_k'$}
      \pnsemibox{u,u',A,A',B,B'}
      }
      \pnrewrite[\Mix_{i}]
      \pnet{
      \pnsomenet[u]{}{1cm}{0.7cm}[at (-1,0.75)]
      \pnoutfromswitch{u.-125}[A]{$\Gamma_{k}$}
      \pnoutfrom{u.-55}[B]{$i$}
      \pnsomenet[u']{}{1cm}{0.7cm}[at (0.8,0.75)]
      \pnoutfrom{u'.-127}[A']{$i\!+\!1$}
      \pnoutfromswitch{u'.-52}[B']{$\,\,\Gamma_k'$}
      \pnsemibox{u,A,B}
      \pnsemibox{u',A',B'}
      }
    \end{align*}
    \vspace{-\baselineskip}
    \caption{Mix}
    \label{fig:mix}
  \end{subfigure}
  \begin{subfigure}[c]{0.3\textwidth}
  	\vspace*{-\baselineskip}
    \begin{align*}
      \pnet{
				\pnformulae{
					~~\pnf[i]{$i\!-\!1$}~\pnf[j]{$i$}
				}
	      \pnaxiom[ax]{i,j}
	      \pnsemibox{ax,i,j}
      } 
    	\pnrewrite[\Ax_{i}]
      \pnet{
				\pnformulae{
					\pnf[empty]{}
				}  
    		\pnsemibox{empty}   
      }
    \end{align*}
    \vspace{-\baselineskip}
    \caption{Hypothesis (\(\Ax, \One, \bot, \Weak\))}
    \label{fig:hypothesis}
  \end{subfigure}\hfill%
  \begin{subfigure}[c]{0.4\textwidth}
    \begin{align*}
      \pnet{
      \pnsomenet[u]{}{1.75cm}{0.7cm}[at (-2,0)]
      \pnoutfromswitch{u.-150}[k]{$\Gamma_{\!k}$}
			\pnoutfrom{u.-95}[A]{$A$}
			\pnoutfrom{u.-30}[B]{$\,A^\bot$}
      \pncut[cut]{A,B}
      \pnsemibox{u,cut}
      }
      \pnrewrite[\Cut^i]
      \pnet{
      \pnsomenet[u]{}{1.75cm}{0.7cm}[at (-2,0)]
      \pnoutfromswitch{u.-150}[k]{$\Gamma_{\!k}$}
      \pnoutfrom{u.-95}[B]{$i$}
      \pnoutfrom{u.-30}[A]{$\,i\!+\!1$}
      \pnsemibox{u,A,B}
      }
    \end{align*}
    \vspace{-3\baselineskip}
    \caption{Cut}
    \label{fig:cut}
  \end{subfigure}
  \begin{subfigure}[b]{0.4\textwidth}
    \begin{align*}
      \pnet{
      \pnsomenet[u]{}{1.75cm}{0.7cm}
      \pnoutfromswitch{u.-150}[k]{$\Gamma_{\!k}$}
			\pnoutfrom{u.-95}[A]{$A$}
			\pnoutfrom{u.-30}[B]{$B$}
      \pntensor{A,B}[i]{$i$}
      \pnsemibox{u,k,A,B,i}
      }
      \pnrewrite[\otimes_i]
      \pnet{
      \pnsomenet[u]{}{1.75cm}{0.7cm}
      \pnoutfromswitch{u.-150}[k]{$\Gamma_{\!k}$}
      \pnoutfrom{u.-95}[A]{$i$}
      \pnoutfrom{u.-30}[B]{$\,i\!+\!1$}
      \pnsemibox{u,k,A,B}
      }
    \end{align*}
    \vspace{-\baselineskip}
    \caption{Multiplicative (\(\otimes, \parr\))}
    \label{fig:tensor}
  \end{subfigure}\hfill%
  \begin{subfigure}[b]{0.4\textwidth}
    \begin{align*}
			\pnet{
				\pnsomenet[u]{}{1.75cm}{0.7cm}
				\pnoutfromswitch{u.-150}[k]{$\Gamma_{k}$}
				\pnoutfromswitch{u.-95}[C]{$A$}
				\pnoutfromswitch{u.-30}[D]{$A$}
				\pnexp{}{C*, D*}[i]{$i$}
				\pnsemibox{u,i}
			}
			\pnrewrite[\Contr_i]
			\pnet{
				\pnsomenet[u]{}{1.75cm}{0.7cm}
				\pnoutfromswitch{u.-150}[k]{$\Gamma_{k}$}
				\pnoutfromswitch{u.-95}[C]{$A$}
				\pnexp{}{C*}[i]{$i$}
				\pnoutfromswitch{u.-30}[D]{$A$}
				\pnexp{}{D*}[i1]{$i\!+\!1$}
				\pnsemibox{u,i,i1}
			}
		\end{align*}

    \vspace{-\baselineskip}
    \caption{Contraction}
    \label{fig:contraction}
  \end{subfigure}
  \begin{subfigure}[b]{0.4\textwidth}
    \begin{align*}
      \pnet{
      \pnsomenet[u]{}{1.25cm}{0.7cm}
      \pnoutfromswitch{u.-130}[k]{$\Gamma_k$}
      \pnoutfrom{u.-50}[A]{$A$}
      \pnexp{}{A}[i]{$i$}
      \pnsemibox{u,k,A,i}
      }
      \pnrewrite[\Der_i]
      \pnet{
      \pnsomenet[u]{}{1.25cm}{0.7cm}
      \pnoutfromswitch{u.-130}[k]{$\Gamma_{k}$}
      \pnoutfrom{u.-50}[A]{$i$}
      \pnsemibox{u,k,A}
      }
    \end{align*}
    \vspace{-\baselineskip}
    \caption{Dereliction}
    \label{fig:dereliction}
  \end{subfigure}\hfill%
  \begin{subfigure}[b]{0.4\textwidth}
    \begin{align*}
    \pnet{
    	\pnsomenet[u]{}{2cm}{0.7cm}
    	\pnbox{u}
    	\pnoutfrom{u.-25}[pi1c]{$A$}
    	\pnbag{}{pi1c}{$i$}
    	\pnoutfromswitch{u.-65}[ap]{$-$}
    	\pnexp{}{ap*}[?ap]{\small $\,\wn A_{p}$}
    	\pnoutfromswitch{u.-155}[a1]{$-$}
    	\pnexp{}{a1*}[?a1]{\small $\quad\wn A_1 \overset{p \geqslant 0}{\cdots}$}    
    	\pnsemibox{pi1c,ap,?ap,u}
    }
    \pnrewrite[\BoxR_{i}]
    \pnet{
    	\pnsomenet[u]{}{2cm}{0.7cm}
    	\pnoutfrom{u.-25}[pi1c]{$i$}[2.5cm][]
    	\pnoutfromswitch{u.-65}[ap]{$-$}
    	\pnexp{}{ap*}[?ap]{\small$\,\wn A_{p}$}
    	\pnoutfromswitch{u.-155}[a1]{$-$}
    	\pnexp{}{a1*}[?a1]{\small $\quad\wn A_1 \overset{p \geqslant 0}{\cdots}$}    
    	\pnsemibox{u,pi1c,ap,?ap}
    }
    \end{align*}
    \vspace{-\baselineskip}
    \caption{Box}
    \label{fig:box}
  \end{subfigure}
  \caption{\mbox{Action of elementary schedulings on \MELL quasi-proof-structures.}}
  \label{fig:actions-all}
\end{figure}

\begin{defi}[action of schedulings on $\MELL$ quasi-proof-structures]
  \label{def:unwinding}
  An \kl{elementary scheduling} {$a \colon \Gamma \to \Gamma'$} defines a
  relation $\mathop{\pnrewrite[a]} \subseteq \MELLFunctor(\Gamma) \times \MELLFunctor(\Gamma')$, called the \intro[action of elementary scheduling]{action} of $a$, as the smallest relation containing all the cases in
  \Cref{fig:actions-all}, with the following remarks:
  \begin{description}
  \item[exchange] $\pnrewrite[\Ex_i]$ swaps the order of two consecutive conclusions in the same component.
  
	\item[mix] given a \kl{component} $R$ with at least two conclusions $i$ and $i\!+\!1$ such that no conclusion $j \leqslant i$ is \kl{connected} in $\lVert R \rVert$ to any conclusion $j' \geqslant i\!+\!1$, $\!\pnrewrite[\!\Mix_i\!]$ splits $R$ into two components, one with the conclusions $j \leqslant i$, the other with the conclusions $j' \geqslant i\!+\!1$.\footnotemark
	\footnotetext{See \Cref{note:split-tree} for a more precise account. 
			Read in reverse, $\Mix_{i}$ is analogous to the mix rule in the sequent calculus for $\MELL$: it merges two components into one component putting together their conclusions.}
  
  \item[hypothesis] if  $a\in \{\Ax_{i}, \One_i, \bot_i, \Weak_i \}$, 
    $\pnrewrite[a]$ deletes an \kl{hypothesis} (\ie a \kl{cell} without inputs) that is the only
    cell in its \kl{component}, yielding an \kl[empty proof-structure]{empty} component.
    We have drawn the case $\Ax$ in
    \Cref{fig:hypothesis}, the other cases ($\One$, $\bot$, $\wn$) are similar, except for the number of conclusions.
    
  \item[cut] read in reverse, $\pnrewrite[\Cut^i]$ relates a quasi-proof-structure with two conclusions $i$
    and $i+1$ with the quasi-proof-structure where these two
    conclusions are cut by a $\cut$-\kl{cell} of \kl{depth} $0$. 
    This rule, from left to right, is nondeterministic (as there are many possible cuts).
  
  \item[multiplicative] if $a \in \{\otimes_i, \parr_i\}$, 
  $\pnrewrite[a]$ deletes a cell labeled by a multiplicative connective, $\otimes$ or $\parr$. 
  The new conclusions $i$ and $i+1$ inherit the order from the inputs of the erased cell.
  We have only drawn the case $\otimes$ in \Cref{fig:tensor}, the case $\parr$ is
    similar.  
    
  \item[contraction] $\pnrewrite[\Contr_i]$ splits a $\wn$-\kl{cell} with $h+k$ inputs into two
    $\wn$-cells (so, it duplicates a $\wn$-cell) with $h$ and $k$ inputs, respectively (\(h,k \geqslant 0\)). 
    The rule, from left to right, \mbox{is nondeterministic}.
    
  \item[dereliction] $\pnrewrite[\Der_i]$ only applies if the $\wn$-\kl{cell} (with $1$ input) does not
    shift a level in the \kl{box-forest}, \ie it is not just outside a \kl{box}
    (otherwise $\mathop{\pnrewrite[\Der_i]}$ would not yield a \mbox{$\MELL$ quasi-proof-structure)}.
    
  \item[box] $\pnrewrite[\BoxR_i]$ only applies if a box (and its border) is alone in its \kl{component}. 
  In particular, each $\wn$-cell represented in \Cref{fig:box} has at least one input (this is what ``$-$'' stands for).
  \end{description}
  
	The rewrite relation is extended by composition of relations to define, for every \kl{scheduling} $\sched \colon \Gamma \to \Gamma'$, a relation $\mathop{\pnrewrite[\sched]} \subseteq \MELLFunctor(\Gamma) \times \MELLFunctor(\Gamma')$, called the \intro[action of scheduling]{action} of $\sched$.\footnotemark
	\footnotetext{That is,
		given the schedulings $\sched \colon \Gamma \to \Gamma'$ and $\sched' \colon \Gamma' \to \Gamma''$ and the quasi-proof-structures $R \in \MELLFunctor(\Gamma)$ and $R'' \in \MELLFunctor(\Gamma'')$, $R \pnrewrite[\sched\sched'] R''$ if and only if there is $R' \in \MELLFunctor(\Gamma')$ such that $R \pnrewrite[\sched] R'$ and $R' \pnrewrite[\sched'] R''$ (often denoted by $R \pnrewrite[\sched] \, \pnrewrite[\sched'] R''$).
	The action of the \kl{empty scheduling} is the identity relation.}
\end{defi}
 
Except $\Mix_i$, $\Ex_i$ and $\Contr_i$, the \kl[action of elementary scheduling]{action} of an \kl{elementary scheduling} is a rewrite rule on \MELL quasi-proof-structures that destroys either a $\cut$-cell of \kl{depth} $0$ or a cell whose output is a conclusion. 
The types of the conclusions are updated according to \Cref{fig:elementary-schedulings}. 
The \kl[action of scheduling]{action of a scheduling} is just 
the composition of a number (possibly none) of these rewrite rules.

The action of schedulings cannot decrease the number of components of a \MELL quasi-proof-structure, and the action of $\Mix_{i}$ is the only one that increases this number.

Among the \kl[action of elementary scheduling]{actions} of \kl{elementary schedulings} in \Cref{fig:actions-all}, the cases $\Mix$ and  $\BoxR$ 
are the only ones that change 
the \kl{box-forest} of a quasi-proof-structure: $\Mix$ splits a tree in two distinct trees by splitting its root, while $\BoxR$ merges a root of a tree with a non-root vertex just above it.
For instance, the action $\Mix_i$ rewrites the \MELL proof-structure $S$ in \Cref{fig:pointed-proof-net} to the \MELL quasi-proof-structure $R$ in \Cref{fig:pointed-quasi-proof-net}, when $i$ is the conclusion of $S$ of type $\wn\oc \one$.

The way the \kl[action of elementary scheduling]{action of an elementary scheduling} is defined (Definition \ref{def:unwinding}) automatically rules out the possibility that a \MELL quasi-proof-structure rewrites to a \kl{module} that is not a \MELL quasi-proof-structure.
For instance, the elementary scheduling $\Der_1$ cannot \kl[action of elementary scheduling]{act} on the \MELL proof-structure $Q$ below, because it would yield a module $Q'$ that is not a \MELL quasi-proof-structure, as in $Q'$ it would be impossible to define a \kl{box-function}  that fulfills the constraints of Definition \ref{def:proof-structure} for \MELL (see \Cref{note:border});
$Q'$ is just a $\DiLL_0$ proof-structure.

\begin{center}
	\vspace{\beforepn}
	$Q \ = $
	\scalebox{0.8}{
		\pnet{
			\pnformulae{
				\pnf[i]{$X$}~\pnf[j]{$ X^\bot$}
			}
			\pnexp{}{i}[i']{$\wn X$}[1.2]
			\pnbag{}{j}[j']{$\oc X^\bot$}[1.2]
			\pnbox{i,j,ax}
			\pnaxiom[ax]{i,j}
		} 
	}
	$ \ \ \not\pnrewrite[\Der_{1}] \ $
	\scalebox{0.8}{
		\pnet{
			\pnformulae{
				\pnf[i]{$X$}~\pnf[j]{$ X^\bot$}
			}
			\pnbag{}{j}[j']{$\oc X^\bot$}
			\pnaxiom[ax]{i,j}
		}
	}
	$= \ Q'$
\end{center}

\kl[action of scheduling]{Actions of schedulings} can been seen as arrows in the category $\Rel$ of sets and relations.

\begin{defi}[functor $\MELLFunctor$]
  \label{def:FunctorMELL}
  We define a functor \intro[mell functor]{$\MELLFunctor \colon \Scheduling \to \Rel$} by:
    \begin{itemize}
    \item \emph{on objects:} for every list $\Gamma$ of lists of \MELL formulas, $\MELLFunctor(\Gamma)$ is the set of $\MELL$
      quasi-proof-structures of type $\Gamma$;
    \item \emph{on arrows:} for any $\sched \colon \Gamma \to \Gamma'$,
      $\MELLFunctor(\sched)$ is $\pnrewrite[\sched] \colon \MELLFunctor(\Gamma) \to \MELLFunctor(\Gamma')$ (Definition~\ref{def:unwinding}).
    \end{itemize}
\end{defi}

Lemmas \ref{prop:unwinding-co-functional}, \ref{lem:cell-or-mix}, \ref{lemma:contraction} and Proposition~\ref{prop:unwinding-to-empty} express some properties of our rewrite rules.

\begin{lem}[co-functionality]
  \label{prop:unwinding-co-functional}
  Let $\sched \colon \Gamma \to \Gamma'$ be a \kl{scheduling}.
  The action $\pnrewrite[\sched]$ of $\sched$ is a co-function 
  from $\MELLFunctor(\Gamma)$ to $\MELLFunctor(\Gamma')$,
  that is, 
  a function $ \op{\pnrewrite[\sched]} \colon \MELLFunctor(\Gamma') \to \MELLFunctor(\Gamma)$.
\end{lem}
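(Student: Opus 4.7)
The plan is to prove this by induction on the \kl[scheduling length]{length} $k$ of $\sched$. The base case $k = 0$ is the \kl{empty scheduling}, whose action is the identity relation on $\MELLFunctor(\Gamma)$ and is thus trivially a co-function. For $k \geq 1$, write $\sched = \sched_0 \, a$ with $a$ \kl[elementary scheduling]{elementary} and $\sched_0$ of length $k-1$, so that $\pnrewrite[\sched]$ is the composition (in $\Rel$) of $\pnrewrite[\sched_0]$ and $\pnrewrite[a]$. Since taking opposites of relations reverses composition and the composition of two functions is a function, the composition of two co-functions in $\Rel$ is again a co-function. By the inductive hypothesis applied to $\sched_0$, it then suffices to verify the statement for every \kl{elementary scheduling} $a$.

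The elementary case is handled by a case analysis on the generators in \Cref{fig:elementary-schedulings}. For each $a \colon \Gamma \to \Gamma'$ and each $R' \in \MELLFunctor(\Gamma')$, I exhibit the unique $R \in \MELLFunctor(\Gamma)$ with $R \pnrewrite[a] R'$. For most rules the reverse operation is immediate: $\Ex_i$ swaps the order on two conclusions of the same component; $\Ax_i, \One_i, \bot_i, \Weak_i$ replace an \kl[empty proof-structure]{empty component} of $R'$ by a \kl{hypothesis} whose type is dictated by $\Gamma$ and the side condition; $\Cut^i$ adds a $\cut$-cell linking conclusions $i$ and $i+1$ of $R'$; $\otimes_i$ and $\parr_i$ add a multiplicative cell above consecutive conclusions $i$ and $i+1$, inheriting the order on its inputs; $\Der_i$ adds a unary $\wn$-cell above conclusion $i$ at the same level of the \kl{box-forest}; $\Contr_i$ merges the two $\wn$-cells ending at conclusions $i$ and $i+1$ into a single $\wn$-cell whose inputs are the disjoint union of the two original sets. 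Uniqueness is forced because the types in $\Gamma$, together with $R'$ and $i$, pin down the new or modified cell; existence reduces to a routine check that the resulting object still satisfies all the constraints of Definitions~\ref{def:proof-structure} and~\ref{def:quasi-proof-structure} for \MELL.

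The main obstacle lies in the two cases that modify the \kl{box-forest}, namely $\Mix_i$ and $\BoxR_i$. For $\Mix_i$, the reverse operation glues the two \kl[component]{components} of $R'$ whose conclusions respectively end and begin with $i$ and $i+1$ into a single component by identifying their two roots in the box-forest; the side condition on $\Mix_i$ (that no conclusion $j \leq i$ of this merged component be \kl{connected} to any conclusion $j' \geq i+1$) is automatic because in $R'$ these conclusions already lie in distinct components. For $\BoxR_i$, the reverse operation takes the component of $R'$ containing conclusion $i$ (of type $A$), adds a fresh $\oc$-cell whose unique input is this conclusion and whose output becomes the new conclusion of type $\oc A$, and updates the box-tree by making the root of this component a non-root child of a new root, extending the \kl{box-function} so that the input of the new $\oc$-cell maps to the unique new input flag at the new root. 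Uniqueness is immediate because all the data are dictated by $R'$ and $i$; the delicate verification is that this new box-function satisfies the bijection requirement imposed on a \MELL proof-structure in Definition~\ref{def:proof-structure}, which holds by construction since the added input flag at the new root is in bijection with the unique input of the newly added $\oc$-cell.
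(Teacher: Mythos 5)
Your overall strategy---reduce by induction on the length of $\sched$ to the case of a single elementary scheduling, then exhibit for each generator and each $R'\in\MELLFunctor(\Gamma')$ the unique preimage---is exactly the paper's proof, which is presented there as a ``simple inspection'' of the rewrite rules. However, you pass over the one step that the paper explicitly singles out as subtle, and it is a genuine gap in your write-up: in the $\Contr_i$ case you speak of ``the two $\wn$-cells ending at conclusions $i$ and $i+1$'' of $R'$ as though their existence were automatic. It is not: you must argue that every conclusion of $R'$ of type $\wn A$ is necessarily the output of a $\wn$-cell. This holds only because axioms are atomic (an $\ax$-cell cannot have an output of type $\wn A$, and no other cell type can produce one either); with exponential axioms the preimage under $\Contr_i$ need not exist, which is precisely why \Cref{sect:nonatomic} has to rework the syntax. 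The same observation is silently needed for the auxiliary conclusions $\wn A_1,\dots,\wn A_p$ in your $\BoxR_i$ case. Since totality of the reverse relation is the entire content of the lemma, this justification cannot be omitted.

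Second, your reverse construction for $\BoxR_i$ does not produce the preimage prescribed by \Cref{fig:box}. Making the root of the component's box-tree a child of a new root pushes the \emph{whole} component---including the $p$ border $\wn$-cells and, on a literal reading, even the freshly added $\oc$-cell---inside the new box. In the left-hand side of \Cref{fig:box} those $\wn$-cells and the $\oc$-cell sit at depth $0$, outside the box, and only their inputs cross the border; the forward rule would therefore not apply to the quasi-proof-structure you build, so it is not a preimage at all. The correct reversal splits the root: the $\oc$-cell and the border $\wn$-cells remain mapped to the (new) root, every other cell of the component is sent to the new non-root vertex, and the box-function sends the input of the fresh $\oc$-cell to the unique input flag of that vertex. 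With these two repairs your argument coincides with the paper's.
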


Let $R, R'$ be \MELL quasi-proof-structures, $a$ be an \kl{elementary scheduling}. 
We say that:
\begin{itemize}
	\item $a$ \intro{applies} to $R$ if $R\pnrewrite[\!\Ex_{i_1}\!\dots\Ex_{i_n}a\!]R'$
	for some elementary schedulings $\Ex_{i_1},\dots,\Ex_{i_n}$ ($n \!\geqslant\! 0$);

	\item $R \pnrewrite[a] R'$ is \intro[nullary splitting for elementary scheduling]{nullary $\Contr$-splitting} if $a = \Contr_{i}$ and its action from $R$ to $R'$ splits the $\wn$-\kl{cell} of $R$ with output $i$ and $h \geqslant 0$ inputs into 
	two $\wn$-cells, one with $h$ inputs and one with $0$ inputs. 
\end{itemize}

\begin{lem}[applicability of elementary scheduling]
	\label{lem:cell-or-mix}
	Let $R$ be a non-empty $\MELL$ quasi-proof-structure.
	Then an \kl{elementary scheduling}
	$a\in \{\Mix_i, \Ax_i, \One_i, \bot_i, \otimes_i, \allowbreak \parr_i, \Contr_i, \Der_i, \Weak_i,
	\Cut^i,	\allowbreak \BoxR_{i}\}$ \kl{applies} to $R$, for some conclusion $i$ of $R$ (or input $i$ of a $\cut$-cell of \kl{depth} $0$ in $R$).
	
	No elementary scheduling applies to an \kl[empty quasi-proof-structure]{empty} quasi-proof-structure.
\end{lem}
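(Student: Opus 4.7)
The second assertion is immediate by inspection of \Cref{fig:elementary-schedulings}: every elementary scheduling requires at least one formula $\FlagType(i)$ in the flattening of its source list, which is vacuous for the type $n\emptylist$ of $\emptynets$; similarly, $\Ex$ requires two adjacent conclusions which $\emptynets$ lacks. So no rule can fire.

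For the first assertion, suppose $R$ is non-empty. If some component of $R$ has a $\cut$-cell at depth~$0$, then $\Cut^i$ applies for $i$ an input of that cell (the $\Ex$-prefixes in the definition of ``applies'' let us bring the two conclusions of the preimage into the required adjacent position). So henceforth I assume no depth-$0$ cuts and pick a non-empty component $R_j$ of $R$. The crux is to show that there is a cell $v$ at depth~$0$ in $|R_j|$ whose output is a conclusion of $R_j$. For this, I rely on two observations: (i) \MELL cells of type $\otimes,\parr,\wn,\oc$ produce an output whose formula strictly contains each input formula as a strict subformula, while $\ax$-cells produce only atomic outputs; and (ii) the morphism condition on $\BoxFunction_R$ together with the fact that $\TreeT^{\ReflexiveTransitive}$ only carries edges from descendants to ancestors forbids any edge of $|R|$ from crossing \emph{into} a box, so output chains starting at depth~$0$ stay at depth~$0$. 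The size-increase argument (i) then rules out cycles among such chains, and the assumed absence of depth-$0$ cuts means no chain is absorbed midway; hence every output chain ends at a conclusion.

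Given such a $v$ and one of its conclusion outputs $i$, I case-analyze on $\VertType(v)$. If $v\in\{\otimes,\parr\}$ then $\otimes_i$ or $\parr_i$ applies directly. If $v=\wn$ then $\Contr_i$ applies, with nullary $\Contr$-splitting when $v$ has $0$ or $1$ inputs. If $v=\oc$ and its box together with the border is the entire component $R_j$, then $\BoxR_i$ applies. If $v\in\{\ax,\one,\bot\}$ or $v=\wn$ with $0$ inputs, and $v$ is alone in its component, the matching hypothesis rule $\Ax_i,\One_i,\bot_i,\Weak_i$ applies. The remaining configurations---an $\oc$-cell whose box is not alone in $R_j$, or a hypothesis cell accompanied by other cells---are handled by observing that $v$'s conclusion(s) are then disconnected in $\lVert R_j \rVert$ from the other conclusions of $R_j$, whence $\Mix_i$ applies after suitable exchanges. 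The key sub-point is that if $R_j$ were connected with every conclusion-producing depth-$0$ cell being an $\oc$-cell, this would contradict the fact that $\oc$-cells receive no input at depth~$0$; so whenever no direct rule fires, $\Mix$ can split off a disconnected piece.

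The main obstacle is precisely the bookkeeping for boxes, and the cleanest route to formalize the above is an induction on the number of depth-$0$ cells of $R_j$: at each step either a direct rule applies, or the chosen cell $v$ is separable from the rest of $R_j$ via $\Mix_i$, strictly reducing the inductive measure on the chosen component.
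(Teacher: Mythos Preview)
The paper does not give a standalone proof of this lemma: it is dispatched together with Lemmas~\ref{prop:unwinding-co-functional} and~\ref{lemma:contraction} by the single sentence ``proven by simple inspection of the rewrite rules in \Cref{fig:actions-all}''. So you have written out what the authors leave implicit, and your overall strategy (find a depth-$0$ cell with a conclusion output via the output-chain argument, then case on its type) is the right one and works.

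Two places deserve tightening. First, in your ``remaining configurations'' paragraph you assert that when $v$ is an $\oc$-cell whose box is not alone, ``$v$'s conclusion(s) are then disconnected in $\lVert R_j\rVert$ from the other conclusions of $R_j$''. That claim is false as stated: the box of $v$ may have an auxiliary-door $\wn$-cell whose output feeds, say, a $\otimes$-cell with a conclusion output, and then $v$'s conclusion \emph{is} connected to that other conclusion. What actually saves you is precisely your ``key sub-point'', but it should be promoted to govern the whole case split rather than appended as an afterthought. The clean dichotomy is: either \emph{some} conclusion-output depth-$0$ cell has type $\otimes,\parr$ or $\wn$ (apply that rule and finish), or \emph{all} of them have type in $\{\oc,\ax,\one,\bot\}$. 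In the latter case one shows there is no edge between two depth-$0$ cells (any such edge would lead, via your output chain, to a conclusion-output cell with a depth-$0$ input, impossible for $\oc,\ax,\one,\bot$); hence every depth-$0$ cell has all its outputs as conclusions, there are no $\wn$-cells at depth~$0$ (so no auxiliary doors), and each $\oc$-cell together with its box, and each $\ax/\one/\bot$-cell alone, is a separate connected component of~$\lVert R_j\rVert$. Then either there is a single such piece (hypothesis or $\BoxR$ applies) or there are several ($\Mix$ applies).

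Second, your closing induction on the number of depth-$0$ cells is superfluous: the lemma only asks that \emph{one} elementary scheduling applies to $R$, so the moment you exhibit $\Mix_i$ (or any other rule) you are done---no recursion is needed. You may be conflating this lemma with Proposition~\ref{prop:unwinding-to-empty}, where an inductive size argument is indeed what drives normalization.
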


To deconstruct a \MELL quasi-proof-structure $R$, nullary $\Contr$-splitting actions are problematic because	they increase any reasonable size for $R$. 
Luckily, the lemma below says roughly that, in \MELL, nullary $\Contr$-splitting actions are superfluous as means of destruction. 
	Remark \ref{rmk:nullary-splitting} will show that they are needed to achieve the naturality result (\Cref{thm:projection-natural}).
	
\begin{lemma}[splitting]
	\label{lemma:contraction}
	Let $R$ be a \MELL quasi-proof-structure with at least a conclusion $i$.
	If $a = \Contr_{i}$  and $R \pnrewrite[a] R'$ is \kl[nullary splitting for elementary scheduling]{nullary $\Contr$-splitting}, then for some \MELL quasi-proof-structure $S$:
	\begin{enumerate}
		\item	either $i$ is the output of a $\wn$-cell of $R$ with $k \!\geqslant\! 2$ inputs and $R \!\pnrewrite[\!a]\! S$ is not nullary $\Contr\!$-splitting;
		
		\item\label{p:cell-or-mix-der}
		or $i$ is the output of a $\wn$-cell of $R$ with $0/1$ inputs and in the same component as $i$ there are conclusions $i_0, \dots, i_n$ with $n \!\geqslant\! 0$ (possibly $i \!=\! i_j$ or $i_j \!=\! i_{j'}$ for some $0 \leqslant j,j' \!\leqslant n$; $i_0$ may also be an input of a $\cut$-cell of \kl{depth} $0$) such that $R \pnrewrite[\!\Ex_{i_1} \!\dots \Ex_{i_n}\!] S' \pnrewrite[\!a'\!] S$ with $a' \in \allowbreak \{\Mix_{i_0}, \allowbreak\BoxR_{i_0}, \allowbreak\otimes_{i_0}, \parr_{i_0}, \Cut^{i_0}, \Der_{i_0}, \Weak_{i_0}, \Contr_{i_0}\}$ and if $a' = \Contr_{i_0}$ then $S' \pnrewrite[\!a'\!] S$ is not nullary~$\Contr$-splitting.

	\end{enumerate}
\end{lemma}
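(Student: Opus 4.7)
The plan is to argue by case analysis on the number $k$ of inputs of the $\wn$-cell $w$ with output $i$, and, when $k \leq 1$, on the structure of the component $C$ of $R$ containing $w$. If $k \geq 2$, one splits $w$ into two $\wn$-cells with $h$ and $k-h$ inputs respectively for any $0 < h < k$ (\eg $h=1$): this yields $R \pnrewrite[\Contr_i] S$ which is not a nullary $\Contr$-splitting, establishing conclusion~(1).

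Assume now $k \in \{0,1\}$. If $C$ consists solely of $w$, then necessarily $k = 0$ (a $\wn$-cell with $1$ input cannot be alone in its component, since its input must connect to another cell in $C$), so $\Weak_i$ applies directly, giving conclusion~(2) with $i_0 = i$, $n = 0$, $a' = \Weak_i$. Otherwise $C$ has cells besides $w$, and the plan is to distinguish whether $\lVert C \rVert$ is disconnected or connected. In the disconnected case---which includes the situation $k = 0$, since then $w$ is an isolated vertex of $\lVert C \rVert$ (its only flag is a tail)---a suitable sequence of exchanges $\Ex_{i_1} \cdots \Ex_{i_n}$ rearranges the conclusions of $C$ so that those belonging to one connected piece of $\lVert C \rVert$ precede those of the other, after which $\Mix_{i_0}$ applies at the boundary, giving conclusion~(2) with $a' = \Mix_{i_0}$.

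It remains to treat $k = 1$ with $\lVert C \rVert$ connected and $C$ containing cells other than $w$. I would invoke \Cref{lem:cell-or-mix} on $C$ viewed as a single-component quasi-proof-structure, obtaining an applicable elementary scheduling $a$ at some conclusion or cut-input $j$ of $C$. Because $C$ contains several cells, no cell of $C$ is alone in its component, so $a$ cannot be of type $\Ax$, $\One$, $\bot$, or $\Weak$; hence $a$ lies in the allowed list $\{\Mix_j, \BoxR_j, \otimes_j, \parr_j, \Cut^j, \Der_j, \Contr_j\}$. If $a \neq \Contr_j$, or if $a = \Contr_j$ at a $\wn$-cell with $\geq 2$ inputs (so that a non-nullary split is available), one concludes after using exchanges to move $j$ to the desired position $i_0$. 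The main obstacle is to rule out the degenerate scenario in which every scheduling delivered by \Cref{lem:cell-or-mix} on $C$ is a nullary $\Contr$-splitting; the plan here is a careful analysis at depth $0$ in $C$, tracing output chains from the $\oc$-door of any depth-$0$ box, or from the cell feeding the input of a $\wn$-cell with conclusion output, which always exhibits either a $\BoxR$ step (on a box whose border is alone in $C$), a $\otimes$, $\parr$, $\Cut$, or $\Der$ step, thereby producing the required non-nullary $a'$.
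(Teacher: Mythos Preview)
Your case analysis is correct and is exactly the kind of ``simple inspection of the rewrite rules'' that the paper invokes without spelling out; the paper gives no detailed argument for this lemma beyond that phrase, so your proposal is strictly more explicit than the original.

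One small correction to the final output-chain sketch: when tracing outputs at depth $0$ from the $\oc$-door, the terminal cell may well be a $\wn$-cell with $\geq 2$ inputs whose output is a conclusion, in which case the step you produce is a \emph{non-nullary} $\Contr_{i_0}$ rather than any of $\BoxR$, $\otimes$, $\parr$, $\Cut$, $\Der$; this case is explicitly allowed in the lemma's conclusion and should be added to your list. (Termination of the trace is guaranteed because each of $\otimes,\parr,\wn$ strictly increases the formula size from input to output, so no cycle at depth $0$ can occur, and $\oc$-cells cannot be re-entered since their unique input lies at depth $\geq 1$.) With that addition your argument is complete.
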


Lemmas \ref{prop:unwinding-co-functional}, \ref{lem:cell-or-mix} and \ref{lemma:contraction} are proven by simple inspection of the rewrite rules in \Cref{fig:actions-all}.
The proof of Lemma \ref{prop:unwinding-co-functional} has a subtle case: 
given $R' \!\in \MELLFunctor(\Gamma')$, if $\sched = \Contr_i$ there is a (unique) $R \in \MELLFunctor(\Gamma)$ such that $R\pnrewrite[\sched] R'$ because any conclusion of $R'$ of type $\wn A$ is the output of a $\wn$-cell (thanks to atomic axioms, \Cref{note:atomic}; 
for non-atomic axioms,~see~\Cref{sect:nonatomic}).

Given a scheduling $\sched$, we say $R \pnrewrite[\sched] R'$ \intro[nullary splitting for scheduling]{has nullary $\Contr$-splittings} if there is an elementary scheduling $a$ such that $R \pnrewrite[\sched] R' = R \pnrewrite[\sched_{1}] R_1 \pnrewrite[a] R_2 \pnrewrite[\sched_2] R'$ and $R_1 \pnrewrite[a] R_2$ is nullary $\Contr$-splitting.

Every \MELL quasi-proof-structure can be unwound from its conclusions to an \kl[empty quasi-proof-structure]{empty} one by the action of some scheduling, and nullary $\Contr$-splittings are not needed for that.

\begin{prop}[normalization]
	\label{prop:unwinding-to-empty}
	Let $R$ be a $\MELL$ quasi-proof-structure of \kl[quasi-proof-structure type]{type} $\Gamma$.
	Then, there exists a \kl{scheduling} $\sched \colon \Gamma \to \emptylists$ such that $R \pnrewrite[\sched] \emptynets$, without nullary $\Contr$-splittings. 
\end{prop}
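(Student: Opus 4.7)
The plan is to proceed by well-founded induction on a lexicographic size measure $(\Size{R}, m(R))$. Set $\Size{R} = \sum_v w(v)$, where the sum ranges over the \kl{cells} of $\lvert R \rvert$ and $w(v) = n^2 + n + 1$ if $v$ is a $\wn$-cell with $n$ \kl{inputs}, $w(v) = 1$ otherwise. Direct inspection of \Cref{fig:actions-all} shows that $\Size{R}$ is left unchanged by $\Ex_i$ and $\Mix_i$ and strictly decreases under every other elementary scheduling that is not a nullary $\Contr$-splitting. Indeed, a non-nullary splitting of a $\wn$-cell with $n = h+k$ inputs (with $h,k \geqslant 1$) into two $\wn$-cells with $h$ and $k$ inputs produces a change of $(h^2 + h + 1) + (k^2 + k + 1) - (n^2 + n + 1) = -(2hk - 1) \leqslant -1$; the remaining destructive rules delete either a non-$\wn$-cell (weight $1$), a $\wn$-cell with $0$ inputs via $\Weak$ (weight $1$), or a $\wn$-cell with $1$ input via $\Der$ (weight $3$).

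To handle $\Mix_i$, define as secondary measure the non-negative integer $m(R) = \sum_i \max(c_i - 1, 0)$, where $c_i$ is the number of \kl{connected components} of the underlying graph of the $i$-th \kl{component} of $R$. An application of $\Mix_i$ splits a component whose underlying graph has $c \geqslant 2$ connected components into two components with $c', c'' \geqslant 1$ connected components and $c' + c'' = c$, so it strictly decreases $m(R)$ by $1$ while leaving $\Size{R}$ unchanged. Any other elementary scheduling that does not decrease $\Size{R}$ is an $\Ex_i$, so the lexicographic pair $(\Size{R}, m(R))$ strictly decreases under every non-$\Ex$ elementary scheduling that is not a nullary $\Contr$-splitting.

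Now induct on $(\Size{R}, m(R))$. If $\Size{R} = 0$, then $\lvert R \rvert$ has no cells, so $R = \emptynets$ and the \kl{empty scheduling} witnesses the statement. Otherwise \Cref{lem:cell-or-mix} yields an elementary scheduling $a$ applying to $R$, that is, $R \pnrewrite[\Ex_{i_1}\cdots\Ex_{i_n}a] R'$ for some exchanges. If $R \pnrewrite[a] R'$ is a nullary $\Contr$-splitting, \Cref{lemma:contraction} provides an alternative: either a non-nullary splitting of the same $\wn$-cell (when that $\wn$-cell has $\geqslant 2$ inputs), or, after possibly different exchanges, a different destructive step $a'$ which, should it be a $\Contr_{i_0}$, is by construction non-nullary. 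In all cases we obtain $R \pnrewrite[\Ex_{j_1}\cdots\Ex_{j_m}a'] R'$ with $a'$ neither $\Ex$ nor a nullary $\Contr$-splitting, hence $(\Size{R'}, m(R')) < (\Size{R}, m(R))$. By the induction hypothesis, there is a scheduling $\sched' \colon \Gamma' \to \emptylists$ such that $R' \pnrewrite[\sched'] \emptynets$ without nullary $\Contr$-splittings; prepending the elementary steps yields the desired $\sched$.

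The main obstacle is the calibration of the weights: non-nullary $\Contr$-splittings preserve the total number of input flags of $\wn$-cells and merely rearrange them between two cells, so a purely additive or linear measure cannot detect them. A strictly convex (e.g. quadratic) weight such as $n^2 + n + 1$ is needed to ensure that two $\wn$-cells with $h$ and $k$ inputs weigh strictly less than a single $\wn$-cell with $h + k$ inputs; once this is in place, the coordination with $\Mix$ through the secondary measure $m(R)$ is routine.
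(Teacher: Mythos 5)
Your proof is correct and follows essentially the same route as the paper's: a well-founded induction driven by \Cref{lem:cell-or-mix} and \Cref{lemma:contraction}, with a measure that is invariant under $\Ex_i$ and strictly decreases under every other elementary action that is not a nullary $\Contr$-splitting. The only difference is the packaging of the measure --- your convex weight $n^2+n+1$ on $\wn$-cells together with the excess-connected-component count plays exactly the role of the paper's lexicographic triple (multiset of $\wn$-cell arities, number of cells, multiset of conclusion counts per component), where the multiset order on arities absorbs non-nullary splittings and the conclusion-count multiset absorbs $\Mix_i$.
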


\begin{proof}
	Let the \emph{size} of a \MELL quasi-proof-structure $R$ be the triple $(p,q,r)$ where:
	\begin{itemize}
		\item $p$ is the (finite) multiset of the number of inputs of each $\wn$-cell in $R$;
		\item $q$ is the number of cells in $R$;
		\item $r$ is the (finite) multiset of the number of \kl{conclusions} of each \kl{component} of $R$.
	\end{itemize}
	Multisets are well-ordered as usual, triples are well-ordered lexicographically.
	
	By Lemmas \ref{lem:cell-or-mix} and \ref{lemma:contraction}, it suffices to show that such a size is left unchanged by $\pnrewrite[\!\Ex_{i}]$ and decreases with any action of $\Contr_i$ that is not nullary $\Contr$-splitting ($p$ decreases) and with any action in \Cref{fig:actions-all} other than the ones of $\Ex_{i}$ and $\Contr_{i}$ (the action of $\Mix_{i}$ leaves $p$ and $q$ unchanged and decreases $r$; the remaining actions leave $p$ unchanged and decrease $q$).
\end{proof}

Proposition \ref{prop:unwinding-to-empty} and the last statement in Lemma \ref{lem:cell-or-mix} imply that actions of elementary schedulings define a weakly normalizing rewrite system on \MELL quasi-proof-structures: 
the normal forms are exactly the empty quasi-proof-structures $n \emptynet$, for all $n > 0$. 
This rewrite system is not strongly normalizing because of the actions of $\Ex_i$ and, more significantly, $\Contr_i$ when are nullary $\Contr$-splitting. 
An example of an indefinitely extendable rewriting is~below.

\vspace{\beforepn}
\begin{center}
	\scalebox{0.8}{
		\pnet{
		\pnformulae{\pnf[X]{$\wn \one$}}
		\pnwn[cown]{X}
		}
	}
		\ \ $\pnrewrite[\Contr_1]$ \ \
	\scalebox{0.8}{
		\pnet{
			\pnformulae{\pnf[X]{$\wn \one$}~\pnf[X']{$\wn \one$}}
			\pnwn[cown]{X}
			\pnwn{X'}
		}
	}
		\ \ $\pnrewrite[\Contr_1]$ \ \
	\scalebox{0.8}{
		\pnet{
			\pnformulae{\pnf[X]{$\wn \one$}~\pnf[X']{$\wn \one$}~\pnf[X'']{$\wn \one$}}
			\pnwn[cown]{X}
			\pnwn{X'}
			\pnwn{X''}
		}
	}
		\ \ $\pnrewrite[\Contr_1]$ \ \
		\dots
\end{center}

The action of a scheduling $\sched$ such that $R \pnrewrite[\sched] \emptynets$, read 
backward, can be seen as a 
sequence of elementary steps to build the $\MELL$ quasi-proof-structure $R$ from an empty one $\emptynets$.
An example of a rewriting from some $R$ to $\emptynets$ is in \Cref{fig:longex2} (\Cref{sec:glueable}).
Intuitively, a \kl{scheduling} $\sched \colon (\Gamma) \to \emptylists\,$ ``encodes'' a way to build a \MELL proof-structure of type $\Gamma$ from scratch.

%%% Local Variables:
%%% mode: latex
%%% TeX-master: "mainFinal"
%%% End:

% !TEX root = mainFinal.tex
\section{Naturality of unwinding \texorpdfstring{\polyadic}{DiLL0} quasi-proof-structures}
\label{sec:naturality}

We show here how the actions of schedulings act on $\DiLL_0$ quasi-proof-structures, mimicking the behavior of the actions of schedulings on \MELL quasi-proof-structures seen in \Cref{sect:rules}.
\kl{Schedulings} are the same, the novelty is that $\DiLL_0$ quasi-proof-structures have no boxes and might have daimons. 
The bridge between the $\MELL$ and $\DiLL_0$ frameworks is given by the \kl{filled Taylor expansion}, which actually defines a \emph{natural transformation} (\Cref{thm:projection-natural}).

For $\Gamma$ a list of lists of $\MELL$ formulas, $\intro[qDiLL0]{\PolyPN}(\Gamma)$ is the
set of $\DiLL_0$ quasi-proof-structures of type $\Gamma$ (we assume they have the \emph{same} conclusions).  
For any set $X$, its powerset is 
$\PowerSet{X}$.

\begin{figure}[t]
  % prevent right column from overflowing into right margin
  \newcommand{\figureNineRightColumnWidth}{0.5\textwidth}
	\vspace*{-3\baselineskip}
	\centering
	\begin{subfigure}[c]{0.4\textwidth}
		\scalebox{0.88}{
		\pnet{
			\pnformulae{
				\pnf[A]{$\Gamma_k$}~\pnf[B]{$i$}~\pnf[A']{$i\!+\!1$}~\pnf[B']{$\Gamma_{k}'$}
			}
			\pndaimon[dai]{A*,A',B,B'*}
			\pnsemibox{dai,A,A',B,B'}
		}
		$\pnrewrite[\Ex_{i}]$
		$\Bigg\{
		\pnet{
			\pnformulae{
				\pnf[A]{$\Gamma_k$}~\pnf[A']{$i\!+\!1$}~\pnf[B]{$i$}~\pnf[B']{$\Gamma_{k}'$}
			}
			\pndaimon[dai]{A*,A',B,B'*}
			\pnsemibox{dai,A,A',B,B'}
		}
		\Bigg\}$
		}
		\vspace{-.5\baselineskip}
		\caption{Daimoned exchange}
		\label{fig:exc-daimon}
	\end{subfigure}\hfill%
	\begin{subfigure}[c]{\figureNineRightColumnWidth}
		\scalebox{0.88}{
		\pnet{
			\pnformulae{
				\pnf[A]{$\Gamma_k$}~\pnf[B]{$i$}~\pnf[A']{$i\!+\!1$}~\pnf[B']{$\Gamma_{k}'$}
			}
			\pndaimon[dai]{A*,A',B,B'*}
			\pnsemibox{dai,A,A',B,B'}
		}
		$\pnrewrite[\Mix_{i}]$
		$\Bigg\{
		\pnet{
			\pnformulae{
				\pnf[A]{$\Gamma_k$}~\pnf[B]{$i$}~~\pnf[A']{$i\!+\!1$}~\pnf[B']{$\Gamma_{k}'$}
			}
			\pndaimon[dai]{A*,B}
			\pndaimon[dai']{A',B'*}
			\pnsemibox{dai,A,B}
			\pnsemibox{dai',A',B'}
		}
		\Bigg\}$
		}
		\vspace{-.5\baselineskip}
		\caption{Daimoned mix}
		\label{fig:mix-daimon}
	\end{subfigure}

	\begin{subfigure}[c]{0.4\textwidth}
		\vspace*{-\baselineskip}
		\scalebox{0.88}{
		\pnet{
			\pnformulae{
				\pnf[i]{$i\!-\!1$}~\pnf[j]{$i$}
			}
			\pndaimon[ax]{i,j}
			\pnsemibox{ax,i,j}
		} 
		$\pnrewrite[\Ax_{i}]$
		$\bigg\{
			\pnet{
				\pnformulae{
					\pnf[empty]{}
				}   
				\pnsemibox{empty}   
			}
		\bigg\}$
		}
		\vspace{-.5\baselineskip}
		\caption{Daimoned hypothesis (\(\Ax, \One, \bot, \Weak\))}
		\label{fig:hypothesis-daimon}
	\end{subfigure}\hfill%
	\begin{subfigure}[c]{\figureNineRightColumnWidth}
		\vspace*{-2\baselineskip}
		\scalebox{0.88}{
		\pnet{
			\pnformulae{
				\pnf[G]{$\Gamma_k$}
			}
			\pndaimon[dai]{G*}[1][.5]
			\pnsemibox{dai,G}
		} 
		$\pnrewrite[\Cut^i]$
		$\Bigg\{
		\pnet{
			\pnformulae{
				\pnf[G]{$\Gamma_k$}
				~\pnf[i]{$i$}~\pnf[j]{$i\!+\!1$}
			}
			\pndaimon[dai]{G*,i,j}
			\pnsemibox{dai,G,i,j}
		} 
		\Bigg\}$
		}
		\vspace{-.5\baselineskip}
		\caption{Daimoned cut ($\Gamma_{\!k} \neq \emptylist$)}
		\label{fig:cut-daimon}
	\end{subfigure}

	\vspace{-\baselineskip}
	\begin{subfigure}[b]{0.5\textwidth}
		\scalebox{0.88}{
			\pnet{
				\pnformulae{
					\pnf[G]{$\Gamma_k$}~\pnf[i]{$i$}
				}
				\pndaimon[dai]{G*,i}[1][.05]
				\pnsemibox{dai,G,i}
			} 
		$\pnrewrite[\otimes_i]$
		$\Bigg\{
			\pnet{
				\pnformulae{
					\pnf[G]{$\Gamma_k$}~\pnf[i]{$i$}~\pnf[j]{$i\!+\!1$}
				}
				\pndaimon[dai]{G*,i,j}
				\pnsemibox{dai,G,i,j}
			} 
		\Bigg\}$
		}
		\vspace{-.5\baselineskip}
		\caption{Daimoned multiplicative (\(\otimes, \parr\))}
		\label{fig:tensor-daimon}
	\end{subfigure}\hfill%
	\begin{subfigure}[b]{\figureNineRightColumnWidth}
		\scalebox{0.88}{
		\pnet{
			\pnformulae{
				\pnf[G]{$\Gamma_k$}~\pnf[i]{$i$}
			}
			\pndaimon[dai]{G*,i}[1][.05]
			\pnsemibox{dai,G,i}
		} 
		$\pnrewrite[\Contr_i]$
		$\Bigg\{
		\pnet{
			\pnformulae{
				\pnf[G]{$\Gamma_k$}~\pnf[i]{$i$}~\pnf[j]{$i\!+\!1$}
			}
			\pndaimon[dai]{G*,i,j}
			\pnsemibox{dai,G,i,j}
		} 
		\Bigg\}$
		}
		\vspace{-.5\baselineskip}
		\caption{Daimoned contraction}
		\label{fig:contraction-daimon}
	\end{subfigure}

	\vspace{-\baselineskip}
	\begin{subfigure}[b]{0.4\textwidth}
		\scalebox{0.88}{
		\pnet{
			\pnformulae{
				\pnf[G]{$\Gamma_k$}~\pnf[i]{$i$}
			}
			\pndaimon[dai]{G*,i}[1][.0]
			\pnsemibox{dai,G,i}
		} 
		$\pnrewrite[\Der_i]$
		$\Bigg\{
		\pnet{
			\pnformulae{
				\pnf[G]{$\Gamma_k$}~\pnf[i]{$i$}
			}
			\pndaimon[dai]{G*,i}[1][.0]
			\pnsemibox{dai,G,i}
		} 
		\Bigg\}$
		}
		\vspace{-.5\baselineskip}
		\caption{Daimoned dereliction}
		\label{fig:dereliction-daimon}
	\end{subfigure}\hfill%
	\begin{subfigure}[b]{\figureNineRightColumnWidth}
		\scalebox{0.88}{
		\pnet{
			\pnformulae{
				\pnf[a1]{$\quad\; \wn \! A_1\;\overset{p\geqslant 0}{\cdots}$}~~
				\pnf[ap]{$\wn \! A_p$}~   
				\pnf[i]{$i$}
			}
			\pndaimon[dai]{a1,ap,i}[1][.0]
			\pnsemibox{dai,a1,ap,i}
		} 
		$\pnrewrite[\BoxR_{i}]$
		$\Bigg\{
		\pnet{
			\pnformulae{
				\pnf[a1]{$\quad\; \wn \! A_1\;\overset{p\geqslant 0}{\cdots}$}~~
				\pnf[ap]{$\wn \! A_p$}~   
				\pnf[i]{$i$}
			}
			\pndaimon[dai]{a1,ap,i}[1][.0]
			\pnsemibox{dai,a1,ap,i}
		} 
		\Bigg\}$
		}
		\vspace{-.5\baselineskip}
		\caption{Daimoned box}
		\label{fig:box-daimon}
	\end{subfigure}

  \begin{subfigure}[b]{0.2\textwidth}
  	\scalebox{0.88}{
		\pnet{
			\pnformulae{
				\pnf[a1]{$\quad\; \wn \! A_1\;\overset{p\geqslant 0}{\cdots}$}~~
				\pnf[ap]{\!$\wn \! A_p$}~      
				\pnf[i]{$i$}
			}
			\pnwn[?a1]{a1}
			\pnwn[?ap]{ap}
			\pncown[!i]{i}
			\pnsemibox{a1,ap,i,?a1,?ap,!i}
		}
		$\pnrewrite[\BoxR_i]$
		$\Bigg\{
		\pnet{
			\pnformulae{
				\pnf[a1]{$\quad\; \wn \! A_1\;\overset{p\geqslant 0}{\cdots}$}~~
				\pnf[ap]{\!$\wn \! A_p$}~   
				\pnf[i]{$i$}
			}
			\pndaimon[dai]{a1,ap,i}[1][.0]
			\pnsemibox{dai,a1,ap,i}
		} 
		\Bigg\}$
		}
		\vspace{-.5\baselineskip}
		\caption{Empty box}
		\label{fig:empty-box}   
	\end{subfigure}\hfill%
	\begin{subfigure}[b]{\figureNineRightColumnWidth}
		\scalebox{0.85}{
			\pnet{
				\pnformulae{
					\pnf{$\dots$}}
				\pnsomenet[rhom]{$\rho_m$}{2cm}{0.5cm}[at (0,1.5)]
				\pnsomenet[rho1]{$\rho_1$}{1.4cm}{0.5cm}[at (0,0)]
				\pnoutfrom{rho1.-26}[rho1c]{$A$}
				\pnoutfrom{rhom.-18}[rhomc]{$A$}
				\pnbag[!]{}{rho1c,rhomc}{$i$}[1][0.2]
				\pnoutfromswitch{rho1.-82}[rho1ap]{$A_p$}
				\pnoutfromswitch{rhom.-40}[rhomap]{$A_p$}
				\pnoutfromswitch{rho1.-154}[rho1a1]{$A_1$}
				\pnoutfromswitch{rhom.-162}[rhoma1]{$A_1$}
				\pnexp[?a1]{}{rho1a1*,rhoma1*}{$\quad \wn \! A_{1} \,\overset{p \geqslant 0}{\cdots}$}[1][-0.2]
				\pnexp[?ap]{}{rho1ap*,rhomap*}{$\,\wn \! A_{p}$}		
				\pnsemibox{rho1,rhom,!,?a1}
			}
			$\pnrewrite[\BoxR_i]$
			$\left\{
			\pnet{
				\pnformulae{
					\pnf{$\dots$}}
				\pnsomenet[rho1]{$\rho_j$}{2cm}{0.5cm}[at (0,0)]
				\pnoutfrom{rho1.-20}[pi1c]{$i$}[2]
				\pnoutfromswitch{rho1.-45}[rho1ap]{$A_p$}
				\pnoutfromswitch{rho1.-160}[rho1a1]{$A_1$}
				\pnexp[?a1]{}{rho1a1*}{$\quad \wn \! A_{1} \,\overset{p\geqslant 0}{\cdots}$}
				\pnexp[?ap]{}{rho1ap*}{$\,\wn \! A_{p}$}
				\pnsemibox{rho1,?a1,?ap}
			}\right\}_{1 \leqslant j \leqslant m}$
		}
		\vspace{-.5\baselineskip}
		\caption{Non-empty box ($m > 0$)}
		\label{fig:polybox}
	\end{subfigure}

Lists $\Gamma_{\!k}$ and $\Gamma_{\!k'}$ may be empty, except in \Cref{fig:cut-daimon} since a $\maltese$-cell has at least  an output.
	\caption{\mbox{Action of elementary schedulings on $\DiLL_0$ quasi-proof-structures.}}
	\label{fig:actions-daimon}
\end{figure}

\begin{defi}[action of schedulings on \texorpdfstring{\polyadic}{DiLL0}
  quasi-proof-structures]
  \label{def:paths-on-polyadic}
  An \kl{elementary scheduling} $a \colon \Gamma \to \Gamma'$ defines a relation
  $\pnrewrite[a] \ \subseteq \kl{\PolyPN}(\Gamma) \times \PowerSet{\kl{\PolyPN}(\Gamma')}$, called the \intro[action of elementary scheduling on dill]{action} of $a$, as the smallest relation containing 
  all the rules in \Cref{fig:actions-all}---except \Cref{fig:box}, and with the same remarks\footnotemark---and in \Cref{fig:actions-daimon}.
 	\footnotetext{More precisely, if $\pnrewrite[a] \ \subseteq \kl{\PolyPN}(\Gamma) \times \PowerSet{\kl{\PolyPN}(\Gamma')}$, in the rewrite rule $\pnrewrite[a]$ in \Cref{fig:actions-all} the quasi-proof-structure on the right-hand-side must be replaced by the singleton set containing it.}
  We extend it to a relation $ \pnrewrite[a] \ \subseteq \PowerSet{\PolyPN(\Gamma)} \times\PowerSet{\PolyPN(\Gamma')}$ by the multiplication of the monad powerset $X \mapsto \PowerSet{X}$.\footnotemark
  \footnotetext{\label{foot:lifting}
  	More explicitly, given $\Pi \subseteq \PolyPN(\Gamma)$ and $\Pi' \subseteq \PolyPN(\Gamma')$, $\Pi \pnrewrite[a] \Pi'$ means that 
  	$\Pi' = \bigcup\{\rho' \subseteq \PolyPN(\Gamma') \mid \exists \, \rho \in \Pi : \rho \pnrewrite[a] \rho'\}$. 
  	Thus, $\rho \pnrewrite[a] \rho' \subseteq \Pi'$ for every $\rho \in \Pi$.
  	And, in particular, $\emptyset \pnrewrite[a] \emptyset$.
		Roughly, the same rewrite rule applies to the \emph{same} conclusion of \emph{each} element of a set of $\DiLL_0$ quasi-proof-structures. 
		Note that $\emptyset \neq \Pi \pnrewrite[a] \Pi'$ implies $\Pi' \neq \emptyset$, since there is no elementary scheduling $a$ such that $\rho \pnrewrite[a] \emptyset$.}
  The rewrite relation on
  $\PowerSet{\PolyPN(\Gamma)} \times\PowerSet{\PolyPN(\Gamma')}$ is extended by composition of relations to define, for any \kl{scheduling} $\sched \colon \Gamma \to \Gamma'$, a relation $\pnrewrite[\sched] \,\subseteq \PowerSet{\PolyPN(\Gamma)} \times \PowerSet{\PolyPN(\Gamma')}$, \mbox{the
  \intro[action of scheduling on dill]{action} of $\sched$.}
\end{defi}

Roughly, any rewrite rule in \Cref{fig:actions-daimon}---except
\Cref{fig:empty-box,fig:polybox}---mimics the behavior of the corresponding rule in
\Cref{fig:actions-all} using a $\maltese$-cell.
Since every elementary scheduling has at least two versions of its action, one in \Cref{fig:actions-all} and one (or two for $\BoxR_{i}$) in \Cref{fig:actions-daimon}, when $\Pi \pnrewrite[a] \Pi'$, some elements of $\Pi$ may rewrite according to one version of the action of $a$, others elements according to another version.
	In any case, all the elements of $\Pi'$ have the same conclusions, as the elements of $\Pi$ do.
	Types are updated according to \Cref{fig:elementary-schedulings}.
  
The \kl[action of elementary scheduling on dill]{action} in \Cref{fig:empty-box} acts on a component made of one \emph{co-weakening} (a $\oc$-cell without inputs) and some \emph{weakenings} ($\wn$-cells without inputs), it erases them and creates a $\maltese$-cell, whose outputs are all the conclusions of the component. 
If we just erased the co-weakening with output of type $\oc A$, we would lose the information about the type $A$; 
this is one of the reasons why a $\maltese$-cell is needed.
Intuitively, a co-weakening represents a box taken $0$ times, so there is no information about the content of the box, apart from its type.
The \kl{daimon} is a ``universal'' $\DiLL_0$ proof-structure representing this information plainly~missing.

The \kl[action of elementary scheduling on dill]{action} in \Cref{fig:polybox} separates the copies of a candidate for a box. It
 requires that, 
 in the \kl[undirected graph]{graph} of the component on the left of $\pnrewrite[\!\BoxR_i]$, $\rho_{j}$ is not connected to $\rho_{j'}$ for $j \neq j'$, except for the $\oc$-cell and the $\wn$-cells in the
conclusions.  Read in reverse, the rule associates with a non-empty finite set
of $\DiLL_0$ quasi-proof-structures $\{\rho_1, \dots, \rho_m\}$ the merging of
$\rho_1, \dots, \rho_m$, that is the $\DiLL_0$ quasi-proof-structure depicted on
the left of $\pnrewrite[\!\BoxR_i]$.
Intuitively, $\rho_1, \dots, \rho_m$ represent $m\!>\!0$ possible copies of a box, 
whose ``compatibility'' must be analyzed ``in parallel'': this is why the \kl[action of elementary scheduling on dill]{action} rewrites a single $\DiLL_0$ quasi-proof-structure to a (finite) \emph{set} of them.

Note that a $\maltese$-cell can be created only by the rewrite rule in \Cref{fig:empty-box}, and erased only by the \kl[action of elementary scheduling on dill]{action of elementary scheduling} called hypothesis in \Cref{fig:hypothesis-daimon}: $\Ax_{i}, \One_i, \bot_i, \Weak_i$\,.

\begin{example}
	\label{ex:box}
	Consider the $\DiLL_0$ proof-structures below ($X$ is a propositional variable).
	We have that $\{\rho_0,\rho_1,\rho_2\} \pnrewrite[\BoxR_{2}] \{\rho_0',\rho_1',\rho_2'\}$ because $\rho_0 \pnrewrite[\BoxR_{2}] \{\rho_0'\}$ (according to \Cref{fig:empty-box}), $\rho_1 \pnrewrite[\BoxR_{2}] \{\rho_1'\}$ (according to \Cref{fig:polybox}) and $\rho_2 \pnrewrite[\BoxR_{2}] \{\rho_1',\rho_2'\}$ (according to \Cref{fig:polybox}).
	\begin{center}
%		$\rho_0$
%		\qquad\qquad\quad
%		$\rho_1$
%		\qquad\qquad\quad
%		$\rho_2$
%		\qquad\qquad\qquad\quad
%		$\rho_0'$
%		\qquad\qquad
%		$\rho_1'$
%		\qquad\qquad
%		$\rho_2'$
%		\qquad\qquad
		
		\vspace*{-2.5\baselineskip}
		$\rho_0 =\!\! $
		\scalebox{0.8}{
			\pnet{
				\pnformulae{\\
					\pnf[x]{$\wn X$}~\pnf[bbx']{$\oc\oc X^{\!\perp}$}
				}
				\pnwn{x}
				\pncown{bbx'}
			}
		}
		\quad
		$\rho_1 =\!\! $
		\scalebox{0.8}{
			\pnet{
				\pnformulae{
					\\
					\pnf[x]{$\wn X$}~\pnf[bx']{$\oc X^{\!\perp}$}
				}
				\pnwn{x}
				\pncown{bx'}
				\pnbag{}{bx'}{$\oc\oc X^{\!\perp}$}
			}
		}
		\quad
		$\rho_2 =\!\! $
		\scalebox{0.8}{
			\pnet{
				\pnformulae{\pnf[x]{$X$}~\pnf[x']{$X^{\!\perp}$}
					\\
					~~\pnf[bx'']{$\oc X^{\!\perp}$}
				}
				\pnaxiom{x,x'}
				\pnexp{}{x}{$\wn X$}
				\pnbag{}{x'}[bx']{$\oc X^{\!\perp}$}
				\pncown{bx''}
				\pnbag{}{bx',bx''}{$\oc\oc X^{\!\perp}$}
			}
		}
		\quad 
		$\rho_0' =\!\! $
		\scalebox{0.8}{
			\pnet{
				\pnformulae{\\
					\pnf[x]{$\wn X$}~\pnf[bx']{$\oc X^{\!\perp}$}
				}
				\pndaimon{x,bx'}
			}
		}
		\quad
		$\rho_1' =\!\! $
		\scalebox{0.8}{
			\pnet{
				\pnformulae{\\
					\pnf[x]{$\wn X$}~\pnf[bx']{$\oc X^{\!\perp}$}
				}
				\pnwn{x}
				\pncown{bx'}
			}
		}
		\quad
		$\rho_2' =\!\! $
		\scalebox{0.8}{
			\pnet{
				\pnformulae{\pnf[x]{$X$}~\pnf[x']{$X^{\!\perp}$}
				}
				\pnaxiom{x,x'}
				\pnexp{}{x}{$\wn X$}
				\pnbag{}{x'}[bx']{$\oc X^{\!\perp}$}
			}
		}
	\end{center}
\end{example}

\begin{example}
	\label{ex:church}
	Consider the $\DiLL_0$ proof-structures $\pi$ and $\pi'$ below.
	\begin{center}
		\vspace{\beforepn}
		$\pi = $
		\scalebox{0.8}{
			\pnet{
				\pnformulae{~~\pnf[ba]{$\oc X$}~\pnf[na]{$X^{\bot}$}\\\\\\
					\\
					\pnf[na']{$\wn X^{\!\bot}$}~~~~~\pnf[a]{$X$}
				}
				\pnwn{na'}
				\pncown{ba}
				\pnaxiom{na,a}
				\pntensor{ba,na}[bana]{$\oc X \otimes X^{\bot}$}
				\pnexp{}{bana}{$\wn (\oc X \otimes X^{\bot})$}[1.1]
			}
			\qquad\qquad\qquad
		}
		$\pi' = $
		\scalebox{0.8}{
			\pnet{
				\pnformulae{~~~~~~\pnf[ba]{$\oc X$}~\pnf[na]{$X^{\bot}$}\\\\\\
					\\
					\pnf[na']{$\wn X^{\!\bot}\,$}~~~\pnf[wntens]{$\wn (\oc X \!\otimes\! X^{\!\bot})$}~~~~~~\pnf[a]{$X$}
				}
				\pnwn{na'}
				\pnwn{wntens}
				\pncown{ba}
				\pnaxiom{na,a}
				\pntensor{ba,na}[bana]{$\oc X \otimes X^{\bot}$}
				\pnexp{}{bana}{$\wn (\oc X \!\otimes\! X^{\!\bot})$}[1.1]
			}
		}
	\end{center}

		For every $n > 0$, the singleton $\{\pi\}$ rewrites to \(\{(n\!+\!1)\emptynet\}\) by the \kl[action of scheduling]{action} of the scheduling $\nu_n^1$ where, for all $k > 0$, $\nu_n^k \colon (\wn X^\bot, \wn (\oc X \otimes X^\bot), X; \underbrace{\emptylist; \cdots; \emptylist}_{k\!-\!1 \text{ times}}) \to (k\!+\!n)\emptylist$ is defined by:
		\begin{align*}
		\nu_1^k &= \Der_2 \, \otimes_2 \, \Mix_{2} \, \Ax_{4} \, \BoxR_2 \, \Der_1 \, \Ax_{2}
		&
		\nu_{n+1}^k &= \Contr_2 \, \Der_3 \, \otimes_3 \, \Mix_{3} \, \Ax_{5} \, \BoxR_3 \, \nu_{n}^{k+1} \,.
		\end{align*}
			
	\vspace{\beforepn}
	Indeed, we have the following rewritings, where
	$\pi_\maltese = \scalebox{0.75}{
		\pnet{
			\pnformulae{
				\pnf[na']{$\wn X^{\!\bot}$}~~\pnf[wntens]{$\,\,\wn (\oc X \!\otimes\! X^{\!\bot}\!)$}~~\pnf[ba]{$ X\!\!$}
			}
			\pndaimon[d]{na',wntens,ba}
		}
	}$
		
	\vspace{-2.5\baselineskip}
	\begin{align*}
	\{\pi_\maltese\}
	\pnrewrite[\!\Der_2] \, \pnrewrite[\!\otimes_2] \, \pnrewrite[\!\Mix_{2}] \, \pnrewrite[\!\Ax_{4}] \,
	\pnrewrite[\!\BoxR_{2}] \,
	\pnrewrite[\!\Der_1] \, \pnrewrite[\!\Ax_{2}]
	\{2\emptynet\}
	\quad\quad
	\{\pi_\maltese\}
	\pnrewrite[\!\Contr_2] \,
	\pnrewrite[\!\Der_3] \, \pnrewrite[\!\otimes_3] \, \pnrewrite[\!\Mix_{3}] \, \pnrewrite[\!\Ax_{5}] \,
	\pnrewrite[\!\BoxR_3] 
	\Big\{\!\!
	\scalebox{0.75}{
		\pnet{
			\pnformulae{
				\pnf[na']{$\wn X^{\!\bot}$}~~\pnf[wntens]{$\,\,\wn (\oc X \!\otimes\! X^{\!\bot}\!)$}~~\pnf[ba]{$ X\!\!$}~~\pnfsmall[empty]{}
			}
			\pndaimon[d]{na',wntens,ba}
			\pnsemibox{empty}   
			\pnsemibox{d,na',wntens,ba}   
	}}
	\Big\} 
	\end{align*}
	\noindent and therefore, for every $n > 1$, the scheduling $\nu_n^1$ acts on $\{\pi\}$ in the way below
	
	\vspace{-2.5\baselineskip}
	\begin{align*}
	\{\pi\} 
	\pnrewrite[\!\Contr_2] 
	\{\pi'\} 
	\pnrewrite[\!\Der_3] \, \pnrewrite[\!\otimes_3] \, \pnrewrite[\!\Mix_{3}] \, \pnrewrite[\!\Ax_{5}] 
	\Big\{\!\!
		\scalebox{0.75}{
		\vspace*{\beforepn}
		\pnet{
		\pnformulae{
			\pnf[na']{$\wn X^{\!\bot}$}~~\pnf[wntens]{$\,\,\wn (\oc X \!\otimes\! X^{\!\bot}\!)$}~~\pnf[ba]{$\oc X\!\!\!$}~~\pnfsmall[empty]{}
		}
		\pnwn{na'}
		\pnwn{wntens}
		\pncown{ba}
		\pnsemibox{empty}   
		\pnsemibox{d,na',wntens,ba}  
	}}
	\Big\} 
	\pnrewrite[\!\BoxR_3] 
	\Big\{\!\!
	\scalebox{0.75}{
	\pnet{
		\pnformulae{
			\pnf[na']{$\wn X^{\!\bot}$}~~\pnf[wntens]{$\,\,\wn (\oc X \!\otimes\! X^{\!\bot}\!)$}~~\pnf[ba]{$ X\!\!$}~~\pnfsmall[empty]{}
		}
		\pndaimon[d]{na',wntens,ba}
		\pnsemibox{empty}   
		\pnsemibox{d,na',wntens,ba}   
	}}
	\Big\} 
	\pnrewrite[\!\nu_{\!n\!-\!1}^2\!] 
	\{(n\!+\!1)\emptynet\}.
	\end{align*}
\end{example}

\begin{figure}[!t]
	\vspace{\beforepn}
	\centering
			$\Pi = \Bigg\{\!\!\!\!\!$
			\scalebox{0.75}{
			\!\pnet{
				\pnformulae{
					\pnf[botax]{$X^\bot$}~\pnf[Xp]{$X^\bot$}~\pnf[1ax]{$X$}~\pnf[X]{$X$}
				}
				\pnaxiom[ax]{botax,1ax}[1][-0.3]
				\pnaxiom[axprop]{Xp,X}[1][0.3]
				\pnbag{}{1ax,X}[!1ax]{$\oc X$}
				\pnexp{}{Xp,botax}[?bot]{$\wn X^\bot$}
			}
			}
			$\!\!\!\!\!\!\Bigg\}$
			\quad
			$\Pi' = \Bigg\{\!\!\!\!\!$
			\scalebox{0.75}{
			\pnet{
				\pnformulae{
					\pnf[botax]{$X^\bot$}~\pnf[Xp]{$X^\bot$}~\pnf[1ax]{$X$}~\pnf[X]{$X$}
				}
				\pnaxiom[ax]{botax,1ax}[1][-0.3]
				\pnaxiom[axprop]{Xp,X}[1][0.3]
				\pnbag{}{1ax,X}[!1ax]{$\oc X$}
			}
			}
			$\!\!\!\!\!\!\Bigg\}$	
			\quad \ \
			$\Pi 
			\pnrewrite[\Contr_1]\,\pnrewrite[\Der_1]\,\pnrewrite[\Der_2]
			\Pi'$
			\quad
			$\Pi 
			\pnrewrite[\BoxR_2]$
			$\Bigg\{\!\!\!\!$
			\scalebox{0.75}{
			\!\pnet{
				\pnformulae{
					\pnf[botax]{$X^\bot$}~\pnf[1ax]{$X$}
				}
				\pnaxiom[ax]{botax,1ax}[1]
				\pnexp{}{botax}[?bot]{$\wn X^{\!\bot}$}
			}\!
			}
			$\!\!\!\!\!\!\Bigg\}
			\pnrewrite[\Der_{1}] \, \pnrewrite[\Ax_{2}]
			\{\emptynet\}
			$	
	\caption{Examples of actions of schedulings for $\DiLL_0$.}
	\label{fig:rewrite-diff}
\end{figure}

The rewriting on $\DiLL_0$ quasi-proof-structures behaves differently than on $\MELL$ quasi-proof-structures. 
In $\DiLL_0$,	Lemma \ref{lem:cell-or-mix} and Proposition \ref{prop:unwinding-to-empty} are false.
Indeed, in \Cref{fig:rewrite-diff} no elementary scheduling applies to the singleton $\Pi'$ of a non-empty $\DiLL_0$ proof-structure.
So, $\Pi \pnrewrite[\Contr_1] \, \pnrewrite[\Der_1] \, \pnrewrite[\Der_2] \Pi'$ and gets stuck, but 
the action of another scheduling rewrites $\Pi$ to $\{\emptynet\}$ 
(\Cref{fig:rewrite-diff} on the right). 
Thus, some rewritings from $\Pi$ end in $\{\emptynets\}$, others get stuck before.

\begin{defi}[functor \texorpdfstring{$\PPolyPN$}{PPolyPN}]
	\label{def:FunctorPPoly}
	We define a functor $\PPolyPN \colon \Scheduling \to \Rel$ by:
	\begin{itemize}
		\item\emph{on objects:} for every list $\Gamma$ of lists of \MELL formulas,
		$\PPolyPN(\Gamma) = \PowerSet{\PolyPN(\Gamma)}$, the powerset of the set $\PolyPN(\Gamma)$ of
		$\DiLL_0$ quasi-proof-structures of type $\Gamma$;
		\item\emph{on arrows:} for any $\sched \colon \Gamma \to \Gamma'$,
		$\PPolyPN(\sched)$ is $\pnrewrite[\sched] \colon \PPolyPN(\Gamma) \to \PPolyPN(\Gamma')$ (Def.~\ref{def:paths-on-polyadic}).
	\end{itemize}
\end{defi}

We can now compare the functors $\MELLFunctor$ and $\PPolyPN$ from $\Scheduling$ to $\Rel$.

\begin{thm}[naturality]
  \label{thm:projection-natural}
  The filled Taylor expansion defines a natural transformation
  \begin{align*}
    \NatTransf \colon \PPolyPN \Rightarrow \MELLFunctor \colon \Scheduling \to
    \Rel
  \end{align*}
  by: for $\Gamma$ a list of lists of \MELL 
  formulas, $(\Pi, R) \in \NatTransf_{\Gamma}$ iff $\Pi \subseteq \FatTaylor{R}$ and the type of $R$~is~$\Gamma$.  
\end{thm}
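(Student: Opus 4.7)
The plan is to first observe that $(\NatTransf_\Gamma)_\Gamma$ is a well-defined family of relations (immediate from the definition), so the substance of the proof consists in establishing the naturality square for every scheduling $\sched \colon \Gamma \to \Gamma'$ in $\Scheduling$. Explicitly, for every $\Pi \in \PPolyPN(\Gamma)$ and $R' \in \MELLFunctor(\Gamma')$, I need the equivalence
\[
	\bigl(\exists\, \Pi' \colon \Pi \pnrewrite[\sched] \Pi' \text{ and } \Pi' \subseteq \FatTaylor{R'}\bigr)
	\Longleftrightarrow
	\bigl(\exists\, R \colon \Pi \subseteq \FatTaylor{R} \text{ and } R \pnrewrite[\sched] R'\bigr).
\]
Since arrows of $\Scheduling$ are freely generated by \kl{elementary schedulings} and both functors send composition to relational composition, a routine induction on the \kl[scheduling length]{length} of $\sched$ reduces the equivalence to the case $\sched = a$ for a single elementary scheduling. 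Moreover, co-functionality (Lemma~\ref{prop:unwinding-co-functional}) ensures that the $R$ appearing on the right-hand side is unique given $a$ and $R'$.

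For each elementary scheduling $a$ I would then perform a case analysis following \Cref{fig:elementary-schedulings}, exploiting the fact that the Taylor expansion is defined pointwise via a pullback with the \kl{box-function} (Definition~\ref{def:proto}), so that modifying a depth-$0$ cell of $R$ modifies the corresponding cells in every element of $\Taylor{R}$ coherently. For the non-splitting and non-box rules ($\Ex_i$, $\Ax_i$, $\One_i$, $\bot_i$, $\otimes_i$, $\parr_i$, $\Der_i$, $\Weak_i$, $\Cut^i$), the affected cell has a unique counterpart in each Taylor element, so the matching non-daimoned rule of \Cref{fig:actions-daimon} mirrors the rule of \Cref{fig:actions-all} pointwise, while the daimoned rule of \Cref{fig:actions-daimon} handles the components of elements of $\FatTaylor{R}$ that have been daimonized by \kl[emptying on]{emptying}. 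For $\Mix_i$ and $\Contr_i$, Remark~\ref{rmk:connection} guarantees that the disconnection witnessed in $R$ propagates to every Taylor element, so the split on the $\DiLL_0$ side is well-defined; in particular, nullary $\Contr$-splittings on the $\MELL$ side are indispensable to simulate Taylor elements whose relevant $\wn$-cell happens to have $0$ or $1$ inputs.

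The main obstacle is the box rule $\BoxR_i$, which opens the box $B$ whose principal door is the $\oc$-cell of conclusion $i$ and discards its border. Every $\rho \in \FatTaylor{R}$ comes from a \kl{thick subforest} $(\tau, h) \in \ProtoTaylor{R}$ possibly followed by an \kl[emptying on]{emptying}; the portion of $\tau$ above the vertex of $\ForestT_R$ associated with $B$ prescribes the number $m \geqslant 0$ of copies of $B$'s content that appear in $\rho$. The case analysis branches: if the component of $\rho$ containing conclusion $i$ has been daimonized then \Cref{fig:box-daimon} applies; if $m = 0$ the $\oc$-door of $\rho$ is a co-weakening and the $\wn$-border cells of $B$ have become weakenings, so \Cref{fig:empty-box} applies, replacing the configuration by a fresh $\maltese$-cell (this is precisely why the filled, rather than plain, Taylor expansion is needed); and if $m > 0$ the $\oc$-door is a co-contraction with $m$ inputs and the $\wn$-border cells are $m$-ary contractions receiving one input per copy, so \Cref{fig:polybox} applies and splits $\rho$ into the $m$ independent copies.

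For each sub-case I then verify both directions of the equivalence. The direct implication follows by inspecting how the thick subforest of $R$ descends to a thick subforest of $R'$: the vertex of $\ForestT_R$ associated with $B$ disappears and its $m$ children become new roots, the pullback respecting this modification in a straightforward way. The converse reconstructs, for any element of $\Pi'$, a canonical element of $\FatTaylor{R}$ where $R$ is the unique preimage of $R'$ under $\pnrewrite[a]$ supplied by Lemma~\ref{prop:unwinding-co-functional}: the disjointness side-condition of \Cref{fig:polybox} is precisely what allows several copies of $B$'s content to be re-assembled into a single box, and this matches the pullback construction of the Taylor expansion exactly, closing the diagram.
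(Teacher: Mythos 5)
Your proposal is correct and follows essentially the same route as the paper's proof: reduction to elementary schedulings, separate treatment of the two inclusions (with co-functionality from Lemma~\ref{prop:unwinding-co-functional} supplying the unique preimage $R$ in the backward direction), a case analysis matching daimoned and non-daimoned versions of each rule, the same three-way split for $\BoxR_i$ (daimonized component, zero copies, $m>0$ copies), and the same bookkeeping on thick subforests (root-splitting for $\Mix_i$, erasing the root and collecting the $m$ preimages for $\BoxR_i$, and the reverse constructions for the converse). The only minor imprecision is invoking Remark~\ref{rmk:connection} for $\Contr_i$, which needs no disconnection hypothesis, but this does not affect the argument.
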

In other words, diagram \eqref{eq:naturality} below commutes for every scheduling
$\sched \colon \Gamma \to \Gamma'$.

\noindent
\!\!\!\!
\begin{minipage}{0.5\textwidth}
\begin{align}
\label{eq:naturality}
	\!\!\!\!\!
  \begin{tikzcd}
    \!\!\!\!\PPolyPN(\Gamma) \ar[rr,"\PPolyPN(\sched)"]
    \ar[d,swap,"\NatTransf_{\Gamma}"] \& \&
    \PPolyPN(\Gamma')\!\!\!\!\!\!\!\!\! \ar[d,"\NatTransf_{\Gamma'}"]
    \\
    \!\!\!\!\MELLFunctor(\Gamma) \ar[rr,"\MELLFunctor(\sched)"]\& \& \MELLFunctor(\Gamma')\!\!\!\!\!\!\!\!\!\!
  \end{tikzcd}
\end{align}
\end{minipage}
\quad
\begin{minipage}{0.23\textwidth}
	\small
	\begin{align}
	\label{eq:naturality-square}
	\begin{tikzcd}
	\Pi \ar[r,squiggly,"\sched"]
	\ar[d,swap,gray,"\NatTransf_{\Gamma}"] \& 
	\Pi'\!\!\! \ar[d,swap,"\NatTransf_{\Gamma'}"]
	\\
	\textcolor{gray}{R} \ar[r,squiggly,"\sched",gray]\& R'\!\!\!
	\end{tikzcd}
	\end{align}
\end{minipage}
\quad
\begin{minipage}{0.23\textwidth}
	\small
	\begin{align}
	\label{eq:naturality-square-bis}
	\begin{tikzcd}
	\Pi \ar[r,squiggly,gray,"\sched"]
	\ar[d,swap,"\NatTransf_{\Gamma}"] \& 
	\textcolor{gray}{\Pi'}\!\!\! \ar[d,gray,swap,"\NatTransf_{\Gamma'}"]
	\\
	R \ar[r,squiggly,"\sched"]\& R'\!\!\!
	\end{tikzcd}
	\end{align}
\end{minipage}

\noindent 
That is, given $\Pi \pnrewrite[\sched] \Pi'$ with
$\Pi'\subseteq \FatTaylor{R'}$, we can \emph{simulate backward} the rewriting to $R$
(here the co-functionality of actions on $\MELLFunctor$ expressed by
Lemma \ref{prop:unwinding-co-functional} comes in handy) so that $R \pnrewrite[\sched] R'$
and $\Pi\subseteq \FatTaylor{R}$ (square \eqref{eq:naturality-square}); 
and conversely, given $R \pnrewrite[\sched] R'$,
we can \emph{simulate forward} the rewriting for any $\Pi \subseteq \FatTaylor{R}$, so that
$\Pi \pnrewrite[\sched] \Pi'$ for some $\Pi'\subseteq \FatTaylor{R'}$ (square \eqref{eq:naturality-square-bis}).
In both squares, the \emph{same} kind of action is performed on $\DiLL_0$ and on \MELL.

\begin{proof}
  $\NatTransf$ is a family of arrows of $\Rel$ indexed by the objects in $\Scheduling$. 
	We have to show 
	that squares \eqref{eq:naturality-square}--\eqref{eq:naturality-square-bis} commute for any \kl{scheduling}
  $\sched \colon \Gamma\to \Gamma'$, and actually, it is enough to show 
  commutation for \kl{elementary schedulings}.  
  Let $a \colon \Gamma\to \Gamma'$ be an elementary scheduling.

  \begin{enumerate}
  \item \emph{Square \eqref{eq:naturality-square-bis}:} Let us prove that
    $\MELLFunctor(a) \circ \NatTransf_{\Gamma} \subseteq \NatTransf_{\Gamma'} \circ
    \PPolyPN(a)$.

    Let $(\Pi,R') \in \MELLFunctor(a) \circ \NatTransf_{\Gamma}$. Let
    $R = (\lvert R \rvert, \ForestT, \BoxFunction)\in \MELLFunctor(\Gamma)$ be a witness of composition, that is, an element such
    that $(\Pi,R) \in \NatTransf_{\Gamma}$ and $R\pnrewrite[a] R'$.
    If $\Pi = \emptyset$ then we are done because $\Pi \pnrewrite[a] \emptyset$ (see \Cref{foot:lifting}) and $\emptyset \subseteq \FatTaylor{R'}$. 
    So, we suppose that $\Pi \neq \emptyset$.

    Let $(t,h)$ be a \kl{thick subforest} of $\ForestT$, let
    $r_{1},\ldots,r_{n}$ be some roots in $\ForestT$ (and in $t$, by Remark \ref{rmk:roots}), and let $\rho_{r_{1}\ldots r_{n}} \in \Pi$ be the element of the \kl{filled Taylor
    expansion} $\FatTaylor{R}$ of $R$ associated with $(t,h)$ and $r_{1},\ldots,r_{n}$ (\ie~$\rho_{r_{1}\ldots r_{n}}$ is the \kl{emptying on} $r_{1},\ldots,r_{n}$ of the element of $\Taylor{R}$ obtained via the thick subforest $(t,h)$ of $\ForestT$, see Definition \ref{def:FilledTaylor}).
  	By Remark \ref{rmk:conclusions-emptying}, we can identify the conclusions of $R$ and of $\rho_{r_{1}\ldots r_{n}}$.
  	The case $a = \Ex_{i}$ is trivial since it just swaps the order of two consecutive conclusions. 
  	Other cases for $a$:
    \begin{itemize}
    \item If $a = \Mix_{i}$, 
    	let $r$ be the root in $\ForestT$ corresponding to the conclusion $i$ (\ie $\BoxFunction_{F}(i)$ is the output of $r$) and let $p,\ldots, p+k$ be the conclusions of the component $S$ of $R$ containing $i$. 
    	In the \kl{undirected graph} $\lVert S \rVert$, none of the conclusions $p, \dots, i$ is  \kl{connected} to any conclusion $i+1, \dots, p+k$.
      By Definitions~\ref{def:taylor} and~\ref{def:FilledTaylor},
      as $\rho_{r_{1}\ldots r_{n}} \in \Pi \subseteq \FatTaylor{R}$, 
			$r$ is a root in the box-forest of $\rho_{r_{1}\ldots r_{n}}$, and $p,\ldots, p+k$ are  the conclusions of $\rho_{r_{1}\ldots r_{n}}$ corresponding to $r$.
			There are two cases, according to Remark~\ref{rmk:connection}:
      \begin{itemize}
      \item the \kl{connected components} of $i$ and $i\!+\!1$ are disjoint in the undirected graph
        $\lVert \rho_{r_{1}\ldots r_{n}} \rVert$;
      \item $i$ and $i\!+\!1$ belong to the same connected component of  $\lVert \rho_{r_1 \dots r_n} \rVert$, 
			and their component in $\rho_{r_{1}\ldots r_{n}}$ is a daimon with conclusions $p, \dots,  i, i\!+\!1, \dots, p\!+\!k$.
      \end{itemize}
      In both cases the rule $\Mix_{i}$ is also applicable to
      $\rho_{r_{1}\ldots r_{n}}$, yielding a $\DiLL_0$ quasi-proof-structure
      $\rho'$. 
			The box-forest $\ForestT'$ of $R'$ is obtained from the box-forest $\ForestT$ of $R$ by root-splitting\footnotemark the tree with root $r$ in two trees with roots $r_1'$, $r_2'$.
			Let $t'$ be a forest obtained from $t$ in a similar way.
			\footnotetext{\label{note:split-tree}It means that $\ForestT = \ForestT'$ except that in $\ForestT'$ the root $r$ with its output is replaced by the roots $r_1'$ and $r_2'$ with their output, and that the inputs of $r$ are split into the inputs of $r_1'$ and the inputs of $r_2'$.} 
			Let $h' \colon t'\to \ForestT'$ be the directed graph morphism such that $h_V'(r_1') = r_1'$, $h_V'(r_2') = r_2'$ and $h_V'(v) = h_V(v)$ for all 
			$v \in V_{\ForestT} \cap V_{\ForestT'}$.
      We verify that $\rho'$ is the element of 
      $\FatTaylor{R'}$ associated with the thick subforest $(t',h')$ of $\ForestT'$.
       
			\item If $a \in \{\otimes_i, \parr_i, \Der_i,\Contr_i, \Weak_i, \One_i, \bot_i, \Ax_{i}\}$, let $k$ be such that the rule $a$ acts on the conclusions $i-k,\dots,i$ of a component of $R$, and let
			$\ell$ be the type of the cell in $R$ whose output is the conclusion
			$i$ (or whose outputs are the conclusions $i-1$ and $i$, if $\ell = \Ax$). 
			In $\rho_{r_{1}\ldots r_{n}}$ 
			either there is a $\ell$-cell with $\ell \neq \Ax$ whose output is the conclusion $i$, or there is a $\Ax$-cell whose outputs are the conclusions $i-1$ and $i$, or the component containing the conclusion $i$ is a daimon with conclusions $i-k, \dots, i$.
			Clearly $a$ is applicable to $\rho_{r_{1}\ldots r_{n}}$, which yields a $\DiLL_0$
			quasi-proof-structure $\rho'$.
			
			The box-forest $\ForestT'$ of $R'$ is $\ForestT$, the one of $R$. 
			We verify that $\rho'$ is the element of the filled Taylor expansion $\FatTaylor{R'}$ of $R'$ associated with the thick subforest $(t,h)$ of $\ForestT'$.

    \item If $a = \Cut^i$, let $c$ be the $\Cut$-cell of \kl{depth} $0$ in $R$ to which $a$
      is applied. As in the case above, the $\Cut$-cell $c$ has either one image in
      $\rho_{r_{1}\ldots r_{n}}$ or is represented by a $\maltese$-cell (with at least one conclusion). In both
      cases, $\Cut^i$ is applicable to $\rho_{r_{1}\ldots r_{n}}$, yielding
      $\rho'$.

      The box-forest $\ForestT'$ of $R'$ is $\ForestT$, the one of $R$. 
      We verify that $\rho'$ is the element of the filled Taylor expansion $\FatTaylor{R'}$ of $R'$ associated with the thick subforest $(t,h)$ of $\ForestT'$.
      
    \item If $a = \BoxR_{i}$, consider the component of $R$ whose conclusions are $i\!-\!k,\dots,i$ with $k \geqslant 0$.
				In the component $\pi$ of $\rho_{r_1 \dots r_n}$ with conclusions $i\!-\!k, \dots, i$, there are three cases:
      \begin{enumerate}[ref=\alph*]
			\item\label{p:box-daimoned} $\pi$ is a daimon whose conclusions are $i\!-\!k,\dots,i$; 
				so, the rule in \Cref{fig:box-daimon} applies;
			\item\label{p:box-empty} $\pi$ consists of a $\oc$-cell with output tail $i$ and no inputs; and of $k$ $\wn$-cells without inputs;
			therefore, the rewrite rule in \Cref{fig:empty-box} applies;
			\item\label{p:box-nonempty} $\pi$ has a $\oc$-cell $c$ with output tail $i$ and at least an input, $k$ $\wn$-cells $c_1, \dots, c_k$ immediately above each of the other $k$ conclusions, and other cells that can be split in $m$ pairwise disjoint sub-graphs of $\lVert \pi' \rVert$, where $m \geqslant 1$ and $\pi'$ is $\pi$ without $c, c_1, \dots, c_k$ and their output;
			therefore, the rewrite rule in \Cref{fig:polybox} applies.
      \end{enumerate}
      In any case, $\BoxR_i$ applies to $\rho_{r_1 \dots r_n}$, yielding a family
      	$\rho'_1,\dots,\rho'_{m}$ of $\DiLL_0$ proof-structures, where $m = 1$ in cases \eqref{p:box-daimoned} and \eqref{p:box-empty}, while $m \geqslant 1$ in case \eqref{p:box-nonempty}.
      
      The box-forest $\ForestT'$ of $R'$ is obtained from the box-forest
      $\ForestT$ of $R$ by erasing the root $b$ corresponding to the conclusions $i\!-\!k,\dots,i$ (\ie, $\BoxFunction_{F}(i\!-\!k) = \dots = \BoxFunction_{F}(i)$ is the output of $b$):
      the new root $b'$ of this tree of $\ForestT'$ is the unique vertex immediately above $b$ in $\ForestT$.
    	We have $h^{-1}(b')=\{b'_{1},\ldots,b'_{m}\}$, and $m$ trees
      $t'_{1},\ldots,t'_{m}$, where $b'_{j}$ is the root of
      $t'_{j}$. The directed graph morphisms $h'_{j} \colon t'_{j}\to \ForestT'$ are defined
      accordingly.
      We verify that $\rho'_j$ is the element of $\FatTaylor{R'}$ associated~with the thick subforest $(t'_j,h'_{j})$ of $\ForestT'$.
    \end{itemize}
  
  \item \emph{Square \eqref{eq:naturality-square}:} Let us prove that
    $\NatTransf_{\Gamma'} \circ \PPolyPN(a) \subseteq \MELLFunctor(a) \circ
    \NatTransf_{\Gamma}$.
    
    Let $(\Pi,R') \in \NatTransf_{\Gamma'} \circ \PPolyPN(a)$ with $R' = (|R'|, \ForestT', \BoxFunction')$. Let $\Pi'$ be a
    witness of composition, \ie a set in $\PPolyPN(\Gamma')$ such that 
    $\Pi \pnrewrite[a] \Pi'$ and $(\Pi',R') \in \NatTransf_{\Gamma'}$.

    We want to exhibit a $\MELL$ quasi-proof-structure $R$ such that
    $R \pnrewrite[a] R'$ and $\Pi$ is a part of the filled Taylor expansion $\FatTaylor{R}$ of
    $R$. By co-functionality of $\MELLFunctor(a)$
    (Lemma \ref{prop:unwinding-co-functional}), we have a candidate for such an $R$:
    the preimage of $R'$ by this co-functional relation $\pnrewrite[a]$. 
    In other terms, if $a \colon \Gamma \to \Gamma'$ and
    $R'\in\MELLFunctor(\Gamma')$, then there exists a unique $R = (|R|, \ForestT, \BoxFunction) \in\MELLFunctor(\Gamma)$ such that
    $R\pnrewrite[a] R'$ (Lemma \ref{prop:unwinding-co-functional}). 
    We only have to check that $\Pi\subseteq \FatTaylor{R}$.
    If $\Pi = \emptyset$ then trivially $\Pi \subseteq \FatTaylor{R}$. 
    So, we suppose that $\Pi \neq \emptyset$.

		Let $\rho \in \Pi$ and $\{\rho'_1, \dots, \rho'_m\} \subseteq \Pi'$ such that $\rho \pnrewrite[a] \{\rho'_1, \dots, \rho'_m\}$. 
		In all cases except $a = \BoxR_i$, we have $m = 1$ \ie $\{\rho'_1, \dots, \rho'_m\} = \{\rho'_1\}$.
		For $a \neq \BoxR_i$, let $(t',h')$ be a \kl{thick subforest} of $\ForestT'$, let $r'_{1},\ldots,r'_{k}$ be some \kl{roots} in $\ForestT'$ (and in $t'$, by Remark \ref{rmk:roots}), and let
		$\rho'_1 = \rho'_{r'_{1}\ldots r'_{k}} \in \Pi'$ be the element of the \kl{filled Taylor expansion} $\FatTaylor{R'}$ of $R'$ associated with $(t',h')$ and $r'_{1},\ldots,r'_{k}$ (\ie $\rho'_{r'_{1}\ldots r'_{k}}$ is the \kl{emptying on} $r'_{1},\ldots,r'_{k}$ of the element of $\Taylor{R'}$ obtained via the thick subforest $(t',h')$ of $\ForestT'$, see Definition \ref{def:FilledTaylor}).
		By Remark \ref{rmk:conclusions-emptying}, we can identify the conclusions of $R'$ and of $\rho'_{r'_{1}\ldots r'_{k}}$.
		The case $a = \Ex_{i}$ is trivial since it just swaps the order of two consecutive conclusions. 
		Other cases for $a$:
		\begin{itemize}
			\item If $a \in \{\Cut^i, \otimes_i, \parr_i, \Der_i, \Contr_i, \Weak_i, \One_i, \bot_i, \Ax_{i} \}$ then $\ForestT = \ForestT'$; let $t = t'\!$, $h = h'\!$.
			We verify that $\rho$ is the element of $\FatTaylor{R}$ associated with $r'_{1},\ldots,r'_{k}$ and the thick subforest~$(t,h)$~of~$\ForestT\!$.
			
			\item If $a = \Mix_{i}$, let $s'_1, s'_2$ be the roots in $\ForestT'$ (and in $t'$, by Remark~\ref{rmk:roots}) corresponding to the conclusions $i, i\!+\!1$ in $R'$, respectively (\ie $\BoxFunction_{F}'(i)$ is the output of $s_1'$, $\BoxFunction_{F}'(i\!+\!1)$ is the output of $s_2'$).
			Let $s$ be the root of $\ForestT$ corresponding to $i, i\!+\!1$ (\ie $\BoxFunction_{F}(i) = \BoxFunction_{F}(i\!+\!1)$ is the output of $s$). 
			We define the thick subforest $(t,h)$ of~$\ForestT$: $t$ is the forest obtained from $t'$ by merging its roots $s'_1$ and $s'_2$ into the root $s$, and $h \colon t \to \ForestT$ is such that $h_V(s) = s$ and $h_V(v) = h'_V(v)$ for all $v \in V_t \cap V_{t'}$. 
			We verify that $\rho$ is the element of $\FatTaylor{R}$ associated with $r'_{1},\ldots,r'_{k}$ and the thick subforest $(t,h)$~of~$\ForestT\!$.
			
			\item If $a = \BoxR_{i}$, we describe the rule for non-empty box 
			(\Cref{fig:polybox}); 
			the easier rules for daimoned and empty boxes (\Cref{fig:box-daimon,fig:empty-box}) are left to the reader.
			Let $(t'_1, h'_{1}), \dots, \allowbreak (t'_m, h'_{m})$ 
			be \kl{thick subforests} of $\ForestT'$, let $r'_{1},\ldots,r'_{k}$ be some \kl{roots} in $\ForestT'$ (and in $t'_1, \dots, t'_m$, by Remark \ref{rmk:roots}).
			For all $1 \leqslant j \leqslant m$, let 
			$\rho'_j$ be the element of $\FatTaylor{R'}$ associated with $(t'_j, h'_{j})$ and $r'_{1},\ldots,r'_{k}$ (\ie $\rho'_j$ is the \kl{emptying on} $r'_{1},\ldots,r'_{k}$ of the element of $\Taylor{R'}$ obtained via the thick subforest $(t'_j,h'_j)$ of $\ForestT'$, see Definition \ref{def:FilledTaylor}). 
			The forests $t_j'$'s differ from each other only in the rooted tree corresponding to the conclusion $i$ of the $\rho_{j}'$'s.
			Let $t$ be the forest made of all the rooted trees on which the forests $t_1', \dots, t'_m$ do not differ, and of the union of the rooted trees on which the forests differ, joined by a new root;
			define $h$ from $h'_1, \dots, h'_m$ accordingly. 
			We verify that $\rho$ is the element of $\FatTaylor{R}$ associated with $r'_{1},\ldots,r'_{k}$ and the thick subforest $(t,h)$ of $\ForestT$.
			\qedhere
		\end{itemize}               
            
  \end{enumerate}
\end{proof}

The rewriting system, the \kl{filled Taylor expansion} (which generalizes the usual Taylor expansion) and the notion of \kl{quasi-proof-structure} (which generalizes usual proof-structures) are designed to obtain the naturality result (Theorem~\ref{thm:projection-natural}), as shown in the example below.

\begin{example}[the need for quasi]
	Extending proof-structures to \kl{quasi-proof-structures}, with dashed boxes at the root level to separate their \kl{components}, is crucial to achieve naturality (Theorem~\ref{thm:projection-natural}) when dealing with a box in a \MELL \kl{proof-structure} \(R\) copied zero times in an element of its Taylor expansion. Take for instance $\rho_1$, $\rho_2$ and $R$ below.
      
  \begin{center}
    $\rho_1$
    \quad\qquad\quad
    $\rho_2$
    \quad\qquad\quad
    $R$
    \quad\qquad\quad
    $R'$
    \quad\qquad\qquad\quad
    $R_0$
    \quad\qquad\qquad
    $R_0'$
    \qquad
    
    \vspace{-2\baselineskip}
    \scalebox{0.8}{
    \pnet{
	    \pnformulae{
	    	~\pnf[in]{$\mathbf{1}$}\\[0.3cm]
	    	\pnf[out]{$\mathbf{1}$}
	    }
	    \pnone{in}
	    \pnone{out}
	    \pnbag{}{in}{$\oc \mathbf{1}$}
    }}
  	\qquad
  	\scalebox{0.8}{
    \pnet{
	    \pnformulae{
	    	\pnf[out]{$\mathbf{1}$}~\pnf[in]{$\oc \mathbf{1}$}
	    }
	    \pnone{out}
	    \pncown{in}
    }}
  	\qquad
  	\scalebox{0.8}{
    \pnet{
	    \pnformulae{
		    ~\pnf[in]{$\mathbf{1}$}\\[0.3cm]
		    \pnf[out]{$\mathbf{1}$}
	    }
	    \pnone[in1]{in}
	    \pnone{out}
	    \pnbag{}{in}{$\oc \mathbf{1}$}[1.1]
	    \pnbox{in1,in}
    }}
  	\qquad
  	\scalebox{0.8}{
  	\pnet{
  		\pnformulae{
  			\pnf[out]{$\mathbf{1}$}~\pnf[in]{$\mathbf{1}$}
  		}
  		\pnone[in1]{in}
  		\pnone{out}
  	}}
  	\quad\qquad
  	\scalebox{0.8}{
  	\pnet{
  		\pnformulae{
  			~~\pnf[in]{$\mathbf{1}$} \\[0.3cm]
  			\pnf[out]{$\mathbf{1}$}
  		}
  		\pnone[in1]{in}
  		\pnone[out1]{out}
  		\pnbag{}{in}[b]{$\oc \mathbf{1}$}[1.1]
  		\pnbox{in1,in}
  		\pnsemibox{in,in1,b}
  		\pnsemibox{out,out1}
  	}}
  	\qquad
  	\scalebox{0.8}{
  	\pnet{
  		\pnformulae{
  			\pnf[out]{$\mathbf{1}$}~~\pnf[in]{$\mathbf{1}$}
  		}
  		\pnone[in1]{in}
  		\pnone[out1]{out}
  		\pnsemibox{in,in1}
  		\pnsemibox{out,out1}
  	}}
  \end{center}
  We have $\{\rho_1,\rho_2\} \subseteq \Taylor{R}$.
  If the action of a scheduling $a$ were a binary relation $\pnrewrite[a]$ on \MELL proof-structures and not on \MELL quasi-proof-structure---ignoring the dashed boxes at the root level in \Cref{fig:actions-all} to separate their \kl{components}---we would have $R \pnrewrite[\BoxR_{2}] R'$. 
  To make square \eqref{eq:naturality-square-bis} commute (as required by naturality), we observe the following facts: 
  \begin{enumerate}
  	\item the action of $\BoxR_{2}$ on $\rho_2$ cannot simply erase the $\oc$-cell, otherwise we would lose the match with the type of $R'$;
  	so, the action of $\BoxR_{2}$ on $\rho_2$ must create a $\maltese$-cell of type $\One$;
  	\item thus, when $\{\rho_1,\rho_2\} \pnrewrite[\BoxR_{2}] \Pi'$, the relation between $\Pi'$ and $R'$ is not $\Pi' \subseteq \Taylor{R'}$, because of the $\maltese$-cell;
  	this is why we introduced the filled Taylor expansion, so that $\Pi' \subseteq \FatTaylor{R'}$;
  	\item the two $\One$-cells in $R'$ must be handled in two different ways;
  	indeed, the second one comes from the content of the box in $R$, and can be erased by the Taylor expansion of $R$ by taking 0 copies of the box (as done in $\rho_2$), while the first $\One$-cell cannot; 
  	to mark this difference, and to make explicit the scope of the $\maltese$-cell introduced by the action of $\BoxR_{2}$ on $\rho_2$, we split the two conclusions of $R$ into two distinct \kl{components} via $R \pnrewrite[\Mix_{1}] R_0$. 
  \end{enumerate}  
	
	According to our rewrite rules, $R \pnrewrite[\Mix_{1}] R_0 \pnrewrite[\BoxR_{2}] R_0'$ and this rewriting commutes with the \kl{filled Taylor expansion} on quasi-proof-structures, as required by square \eqref{eq:naturality-square-bis} for naturality.
\end{example}

%%% Local Variables:
%%% mode: latex
%%% TeX-master: "mainFinal"
%%% End:

% !TEX root = mainFinal.tex
\section{\Gluability of \texorpdfstring{\polyadic}{DiLL} quasi-proof-structures and inhabitation}
\label{sec:glueable}

The properties of actions of schedulings allow us to prove two of our main original results: 
\begin{enumerate}
	\item a characterization of the sets of $\DiLL_0$ proof-structures that are in the Taylor expansion of some $\MELL$ proof-structure (\emph{inverse Taylor expansion problem}, \Cref{thm:characterization});
	
	\item a characterization of the lists of \MELL formulas inhabited by some \kl{cut-free} \MELL proof-structure (\emph{type inhabitation problem} for cut-free \mbox{$\MELL$ proof-structures, \Cref{thm:inhabitation}}).
\end{enumerate}

Both characterizations are \emph{constructive}: when we claim the existence of an object (a \MELL proof-structure with a certain property), we can actually construct a witness of it, through our rewrite rules. 
Thus, our characterizations can be seen as \emph{nondeterministic} procedures, because the rewritings to build witnesses are not unique.
Our procedures also \emph{decide} the problems above in special cases of interest (\Cref{thm:decisionFinite}, Proposition \ref{fact:decide-without-contractions}).

\subsection{\Gluability}
\label{sec:glueability}
We introduce the \emph{\gluability criterion} to solve the inverse Taylor expansion problem.
We actually solve it in two different domains. 
We look for a solution:
\begin{enumerate}
	\item in the set of \MELL proof-structures (\Cref{thm:characterization}.\ref{p:characterization-with-cut});
	\item in the set of \kl{cut-free} \MELL proof-structures (\Cref{thm:characterization}.\ref{p:characterization-cut-free}).
\end{enumerate}

We say that a scheduling is \intro[cut-free scheduling]{cut-free} if does not contain the \kl{elementary scheduling} $\Cut^i$.

\begin{lemma}[cut-free scheduling]
	\label{lemma:cut-free-scheduling}
	Let $R$ be a \MELL quasi-proof-structure and $\sched$ be a \kl{scheduling} such that $R \pnrewrite[\sched] \emptynets$.
	We have that $R$ is \kl{cut-free} if and only if $\sched$ is \kl[cut-free scheduling]{cut-free}.
\end{lemma}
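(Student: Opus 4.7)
The plan is to track the number of $\cut$-cells in a \MELL quasi-proof-structure throughout the rewriting and to show that this number is strictly decreased (by exactly $1$) by every $\Cut^i$ step while being left invariant by every other elementary scheduling. Since the empty quasi-proof-structure has zero $\cut$-cells, the number of $\Cut^i$ occurrences in $\sched$ must coincide with the number of $\cut$-cells in $R$, and the equivalence follows immediately by restriction to this count being zero.

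First I would proceed by inspection of each rewrite rule in Figure~\ref{fig:actions-all}. The actions of $\Ex_i$ and $\Mix_i$ do not touch any cell (they only reorder conclusions or split a component); the hypothesis actions $\Ax_i$, $\One_i$, $\bot_i$, $\Weak_i$ each delete exactly one hypothesis cell of type $\ax$, $\one$, $\bot$, $\wn$ respectively; $\otimes_i$ and $\parr_i$ delete one cell of type $\otimes$ or $\parr$; $\Der_i$ deletes a $\wn$-cell with one input; $\Contr_i$ replaces one $\wn$-cell by two $\wn$-cells; $\BoxR_i$ deletes an $\oc$-cell together with the $\wn$-cells on the border of its box; and $\Cut^i$ deletes precisely one $\cut$-cell of depth $0$. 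Crucially, no rule introduces a $\cut$-cell.

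Let $c(R)$ denote the total number of $\cut$-cells in the structured graph of a \MELL quasi-proof-structure $R$ (at any depth). A straightforward induction on the length of $\sched$, using the case analysis above, shows that whenever $R \pnrewrite[\sched] R'$,
\begin{equation*}
c(R') = c(R) - n_{\cut}(\sched),
\end{equation*}
where $n_{\cut}(\sched)$ is the number of occurrences of elementary schedulings of the form $\Cut^i$ in $\sched$. Applied with $R' = \emptynets$, and since $c(\emptynets) = 0$, this gives $c(R) = n_{\cut}(\sched)$. Hence $c(R) = 0$ if and only if $n_{\cut}(\sched) = 0$, which is exactly the equivalence ``$R$ cut-free iff $\sched$ cut-free''.

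This proof is essentially routine, so no step is genuinely hard. The only subtle point to keep in mind is that, although $\BoxR_i$ may expose $\cut$-cells that previously lay inside a box (so that future $\Cut^i$ steps can eventually destroy them), the total count $c(\cdot)$ is unaffected by opening a box; this is why the invariant works uniformly across all rules and not just for cells at depth $0$.
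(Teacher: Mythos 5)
Your proof is correct and takes essentially the same route as the paper's: both rest on inspecting the rules of \Cref{fig:actions-all} to see that no action introduces a $\cut$-cell and only $\Cut^i$ erases one, combined with an induction on the length of the scheduling. Your phrasing via the conserved count $c(\cdot)$ is just a slightly more quantitative packaging of the paper's preservation argument, and your remark about $\BoxR_i$ merely exposing (not destroying) $\cut$-cells inside boxes is exactly the point that makes the invariant uniform.
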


\begin{proof}
	Let $R$ be \MELL quasi-proof-structure.
	If $R$ is \kl{cut-free} and $R \pnrewrite[a] R'$, then $a \neq \Cut^i$ because the action $R \pnrewrite[\Cut^i] R'$ requires the presence of a $\cut$-cell in $R$; 
	moreover, $R'$ is cut-free because no rewrite rule in \Cref{fig:actions-all}---when read left to right---introduces a $\cut$-cell. 

	Conversely, if $R$ is \kl{with cuts} and $R \pnrewrite[a] R'$ where $R'$  is cut-free, then $a = \Cut^i$, because among the rewrite rules in \Cref{fig:actions-all} only $\pnrewrite[\Cut_i]$ erases a $\cut$-cell. 
	
	A straightforward induction on the length of the scheduling $\sched$ allows us to conclude.
\end{proof}

\begin{definition}[\gluability, cut-free \gluability]
	Let $\Pi$ be a set of $\DiLL_0$ quasi-proof-structures of type $\Gamma$. 
	If $\Pi = \emptyset$, then $\Pi$ is \emph{\gluable} and \emph{cut-free \gluable}.
	If $\Pi \neq \emptyset$, $\Pi$ is \intro[glueable]{\gluable} (\resp~\intro[cut-free glueable]{cut-free \gluable}) if
	$ \Pi \pnrewrite[\sched] \{\emptynets\}$ for some \kl{scheduling} (\resp~\kl{cut-free scheduling}) $\sched \colon \Gamma \to \emptylists$.
\end{definition}

\begin{theorem}[\gluability criterion]
  \label{thm:characterization}
	Let \(\Pi\) be a 
	set of $\DiLL_0$ proof-structures without $\maltese$-cells.
	\begin{enumerate}
		\item\label{p:characterization-with-cut} \(\Pi\) is \gluable if and only if $\Pi \subseteq \Taylor{R}$ for some \(\MELL\) proof-structure $R$.
		\item\label{p:characterization-cut-free} \(\Pi\) is cut-free \gluable if and only if $\Pi \subseteq \Taylor{R}$ for some cut-free \(\MELL\) proof-structure~$R$.
	\end{enumerate}
\end{theorem}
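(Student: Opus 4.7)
The plan is to derive both parts of the criterion from the naturality of the filled Taylor expansion (\Cref{thm:projection-natural}), using \Cref{prop:unwinding-to-empty} (normalization of MELL quasi-proof-structures) in one direction and \Cref{prop:unwinding-co-functional} (co-functionality, which underlies backward simulation) in the other, with \Cref{lemma:cut-free-scheduling} providing the cut-free refinement. Throughout we assume \(\Pi \ne \emptyset\); the empty case is handled by the definition of \gluability and by choosing any (cut-free) MELL proof-structure of the relevant type.

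For the ``if'' direction of~\ref{p:characterization-with-cut}, I start from \(\Pi \subseteq \Taylor{R} \subseteq \FatTaylor{R}\), which means \((\Pi,R) \in \NatTransf_{\Gamma}\) where \(\Gamma\) is the (single-list) \kl{type} of \(R\). By \Cref{prop:unwinding-to-empty} there is a \kl{scheduling} \(\sched\colon \Gamma \to \emptylists\) with \(R \pnrewrite[\sched] \emptynets\). Applying naturality in the forward direction (square~\eqref{eq:naturality-square-bis}) yields some \(\Pi'\) with \(\Pi \pnrewrite[\sched] \Pi'\) and \(\Pi' \subseteq \FatTaylor{\emptynets} = \{\emptynets\}\) by \Cref{ex:filled-taylor-empty}. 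The footnote to \Cref{def:paths-on-polyadic} guarantees that a non-empty input to \(\pnrewrite[\sched]\) produces a non-empty output, so \(\Pi' = \{\emptynets\}\), proving \(\Pi\) is \gluable.

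For the ``only if'' direction of~\ref{p:characterization-with-cut}, assume \(\Pi \pnrewrite[\sched] \{\emptynets\}\) for some \(\sched\colon \Gamma \to \emptylists\) where \(\Gamma = (\Gamma_1)\) is the shared type of the elements of \(\Pi\) (a single list, since they are proof-structures). Because \(\{\emptynets\} \subseteq \FatTaylor{\emptynets}\), we have \((\{\emptynets\},\emptynets) \in \NatTransf_{\emptylists}\). Applying naturality backward (square~\eqref{eq:naturality-square}, whose commutation rests on the co-functionality of \(\MELLFunctor(\sched)\)), I obtain the unique \(R \in \MELLFunctor(\Gamma)\) with \(R \pnrewrite[\sched] \emptynets\) and \(\Pi \subseteq \FatTaylor{R}\). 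Since the type of \(R\) is the single list \(\Gamma_1\), \(R\) has exactly one \kl{component} and hence is a \MELL proof-structure. The hypothesis ``without \(\maltese\)-cells'' then upgrades \(\FatTaylor{R}\) to \(\Taylor{R}\): by \Cref{def:FilledTaylor}, any element of \(\FatTaylor{R}\setminus \Taylor{R}\) is obtained from some element of \(\Taylor{R}\) by replacing at least one \kl{component} with a \kl{daimon}, so every \(\maltese\)-free element of \(\FatTaylor{R}\) already lies in \(\Taylor{R}\); hence \(\Pi \subseteq \Taylor{R}\).

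The cut-free case~\ref{p:characterization-cut-free} is then immediate using \Cref{lemma:cut-free-scheduling}. In the ``only if'' direction, if \(\sched\) is \kl[cut-free scheduling]{cut-free} then the \(R\) produced above satisfies \(R \pnrewrite[\sched] \emptynets\), so \(R\) is \kl{cut-free}. In the ``if'' direction, if \(R\) is cut-free, then the \(\sched\) furnished by \Cref{prop:unwinding-to-empty} is cut-free, and the forward naturality argument above goes through unchanged. The main subtlety lies in the backward direction of~\ref{p:characterization-with-cut}: one must check that the quasi-proof-structure recovered via co-functionality is genuinely a proof-structure (single component) and that stripping the filler \(\maltese\)-cells is legitimate; both follow cleanly from the type being a single list and from the ``no \(\maltese\)'' hypothesis on \(\Pi\).
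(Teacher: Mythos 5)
Your proposal is correct and follows essentially the same route as the paper's proof: normalization plus forward naturality for the ``if'' direction, backward naturality (via co-functionality) for the ``only if'' direction, the no-$\maltese$ hypothesis to pass from $\FatTaylor{R}$ to $\Taylor{R}$, and Lemma~\ref{lemma:cut-free-scheduling} for the cut-free refinement. The only difference is presentational: you spell out the single-component and $\maltese$-stripping steps slightly more explicitly than the paper does.
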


\begin{proof} If $\Pi = \emptyset$, Points \ref{p:characterization-with-cut}--\ref{p:characterization-cut-free} trivially hold. 
	So, we can now assume $\Pi \neq \emptyset$.
	\begin{enumerate}
		\item 
		If $\Pi \subseteq \Taylor{R}$ for some $\MELL$ proof-structure $R$,
	then by normalization (Proposition \ref{prop:unwinding-to-empty})
	$R \pnrewrite[\sched] \emptynets$ for some scheduling $\sched \colon (\Gamma) \to \emptylists$, where $\Gamma$ is the type of $R$ (and of each element of $\Pi$, by Remark \ref{rmk:conclusions}).
	Therefore, $\Pi \pnrewrite[\sched] \{\emptynets\}$ by naturality
	(\Cref{thm:projection-natural}), since $\emptyset \neq \Pi \pnrewrite[\sched] \Pi'$ implies $\Pi' \neq \emptyset$ (\Cref{foot:lifting}), and	$\FatTaylor{\emptynets} = \{\emptynets\}$ (Example~\ref{ex:filled-taylor-empty}). 
	
  Conversely, if $\Pi \pnrewrite[\sched] \{\emptynets\}$ for some scheduling $\sched \colon (\Gamma) \to \emptylists$ (where $\Gamma$ is the type of each element of $\Pi$), then by naturality (\Cref{thm:projection-natural}, as 
  $\FatTaylor{\emptynets} = \{\emptynets\}$) $\Pi \subseteq \FatTaylor{R}$ for some $\MELL$ quasi-proof-structure $R \pnrewrite[\sched] \emptynets$,
  and so  $\Pi \subseteq \Taylor{R}$ since each element of $\Pi$ is without $\maltese$-cells.
  The (\MELL) quasi-proof-structure $R$ is indeed a proof-structure as each element of $\Pi$ is a ($\DiLL_0$) proof-structure of type $\Gamma$, and by Remark \ref{rmk:conclusions} the type of $R$ is $\Gamma$.
  
  	\item Just repeat the same proof as above, paying attention that:
  	\begin{itemize}
  		\item in the ``if'' part,  if $R$ is cut-free, then the scheduling $\sched$ 
  		is cut-free by Lemma~\ref{lemma:cut-free-scheduling};
  		\item in the ``only if'' part, if $\Pi \pnrewrite[\sched] \{\emptynets\}$ for some cut-free scheduling $\sched \colon (\Gamma) \to \emptylists$, then by Lemma~\ref{lemma:cut-free-scheduling} $R$ is cut-free, since $R \pnrewrite[\sched] \emptynets$ according to naturality (\Cref{thm:projection-natural}).
  		\qedhere
  	\end{itemize}
	\end{enumerate}
\end{proof}

Theorem \ref{thm:characterization} means that the action of a scheduling rewriting a set $\Pi$ of $\DiLL_0$ proof-structures to the empty proof-structure can be seen, read in reverse, as a sequence of elementary steps to build a \MELL proof structure that includes $\Pi$ in its Taylor expansion.

\begin{example}
An example of the \kl[action of scheduling on dill]{action of a scheduling} starting from  the singleton of a $\DiLL_0$ proof-structure $\rho$ and ending in \(\{\emptynets\}\) is in
\Cref{fig:longex2} (it is by no means the shortest possible rewriting). 
When replayed backward,  by naturality (\Cref{thm:projection-natural}) it induces a \(\MELL\) proof-structure
\(R\) such that \(\rho\in \Taylor{R}\).
In \Cref{fig:longex2} we represent simultaneously the rewriting from $\Pi$ to $\{\emptynets\}$ and the rewriting from $R$ to $\emptynets$.
\end{example}

% !TEX root = mainFinal.tex
\begin{figure}[p]
  \centering
  \vspace{\beforepn}
  \begin{align*}
    \begin{array}{c@{\hspace{1\tabcolsep}}c@{\hspace{1\tabcolsep}}c@{\hspace{1\tabcolsep}}c@{\hspace{1\tabcolsep}}c@{\hspace{1\tabcolsep}}l@{\hspace{1\tabcolsep}}c@{\hspace{1\tabcolsep}}cc}
      \big\{\rho\big\} =
      &\tikzsetnextfilename{images/long-path-poly-1}
        \Biggl\{
        \scalebox{\smallproofnets}
        {\pnet{
        \pnformulae{
        ~\pnf[?b]{\small $\wn\wn \bot$}~~~
        \pnf[!A-oA]{\small$\!\!\!\oc\oc(X^{\!\bot} \!\parr\! X) \quad $}~\\
th        }
        \pnwn[wn]{?b}
        \pncown[cown]{!A-oA}
        \pnsemibox{?b,!A-oA,wn,cown}
      }}
      \Biggr\}
      &\pnrewrite[\BoxR_2]&
      \tikzsetnextfilename{images/long-path-poly-2}
      \Biggl\{
      \scalebox{\smallproofnets}                              
      {\pnet{
      \pnformulae{~\pnf[?b]{\small $\wn\wn \bot$}~~~
      	\pnf[!A-oA]{\small$\!\!\oc(X^{\!\bot} \!\parr\! X) \quad $}~
      }
      \pndaimon[d]{?b,!A-oA}
      \pnsemibox{?b,!A-oA,d}
      }}\Biggr\}
    	&\pnrewrite[\Der_1]&
      \tikzsetnextfilename{images/long-path-poly-3}
      \Biggl\{
      \scalebox{\smallproofnets}                              
      {\pnet{
      \pnformulae{~\pnf[?b]{\small $\wn \bot$}~~~
      \pnf[!A-oA]{\small$\!\!\oc (X^{\!\bot} \!\parr\! X)\quad$}~}
      \pndaimon[d]{?b,!A-oA}
      \pnsemibox{?b,!A-oA,d}
      }}\Biggr\}
    	\\[-0.5cm]
      R \ =
      &\tikzsetnextfilename{images/long-path-mell-1}
      \scalebox{\smallproofnets}{\pnet{
      \pnformulae{~~\pnf[a1]{\small $X^{\!\bot}$}~\pnf[a2]{\small $X$}~\\
      \pnf[b1]{\small $\bot$}~\pnf[b2]{\small $\bot$}}
      \pnaxiom[axa]{a1,a2}
      \pnpar{a1,a2}[A-oA]{\small $X^{\!\bot} \!\parr\! X$}
      \pnbag{}{A-oA}[!A-oA]{\small $\oc (X^{\!\bot} \!\parr\! X)$}[1.3]
      \pnbot[bot1]{b1}
      \pnbot[bot2]{b2}
      \pnexp{}{b1,b2}[wbot]{\small $\wn \bot$}[1.5]]
      \pnbox{axa,b1,b2,A-oA,a1,a2}
      \pnbox{axa,b1,b2,!A-oA,wbot,a1,a2}
      \pnexp{}{wbot}[wwbot]{\small $\wn \wn \bot$}[1.3]
      \pnbag{}{!A-oA}[!!A-oA]{\small $\oc \oc (X^{\!\bot} \!\parr\! X)$}[1.2]
      \pnsemibox{wbot,b1,b2,a1,a2,axa,!!A-oA}
      }}
    	&\pnrewrite[\BoxR_2]&
      \tikzsetnextfilename{images/long-path-mell-2}
      \scalebox{\smallproofnets}{\pnet{
      \pnformulae{~~\pnf[a1]{\small $X^{\!\bot}$}~\pnf[a2]{\small $X$}~\\
      \pnf[b1]{\small $\bot$}~\pnf[b2]{\small $\bot$}}
      \pnaxiom[axa]{a1,a2}
      \pnpar{a1,a2}[A-oA]{\small $X^{\!\bot} \!\parr\! X$}
      \pnbot[bot1]{b1}
      \pnbot[bot2]{b2}
      \pnexp{}{b1,b2}[wbot]{\small $\wn \bot$}[1.5]
      \pnbox{axa,b1,b2,A-oA,a2}
      \pnbag{}{A-oA}[!A-oA]{\small $\oc (X^{\!\bot} \!\parr\! X)$}[1.2]
      \pnexp{}{wbot}[wwbot]{\small $\wn \wn \bot$}
      \pnsemibox{wbot,wwbot,b1,b2,a1,a2,axa}
      }}
    	&\pnrewrite[\Der_1]&
      \tikzsetnextfilename{images/long-path-mell-3}
      \scalebox{\smallproofnets}{\pnet{
      \pnformulae{~~\pnf[a1]{\small $X^{\bot}$}~\pnf[a2]{\small $X$}~\\
      	\pnf[b1]{\small $\bot$}~\pnf[b2]{\small $\bot$}}
      \pnaxiom[axa]{a1,a2}
      \pnpar{a1,a2}[A-oA]{\small $X^{\!\bot} \!\parr\! X$}
      \pnbot[bot1]{b1}
      \pnbot[bot2]{b2}
      \pnexp{}{b1,b2}[wbot]{\small $\wn \bot$}[1.5]
      \pnbox{axa,b1,b2,A-oA,a2}
      \pnbag{}{A-oA}[!A-oA]{\small $\oc (X^{\!\bot} \!\parr\! X)$}[1.2]
            \pnsemibox{wbot,b1,b2,a1,a2,axa}
      }}
      \\
      \hline
      \pnrewrite[\BoxR_2]&
      \tikzsetnextfilename{images/long-path-poly-4}
      \Biggl\{\scalebox{\smallproofnets}{\pnet{
      \pnformulae{~\pnf[?b]{\small $\wn \bot$}~~~\pnf[A]{\small $\!X^\bot \!\parr\! X$}~}
      \pndaimon[d]{?b,A}
      \pnsemibox{?b,d,A}
      }}\Biggr\}
    	&\pnrewrite[\parr_2]&
      \tikzsetnextfilename{images/long-path-poly-5}
      \Biggl\{\scalebox{\smallproofnets}{\pnet{
      \pnformulae{~\pnf[?b]{\small $\wn \bot$}~~
      \pnf[Ab]{\small$X^{\bot}$}~\pnf[A]{\small $X$}~}
      \pndaimon[d]{?b,Ab,A}
      \pnsemibox{?b,d,A}
      }}\Biggr\}
      &\pnrewrite[\Mix_1]&                                
      \tikzsetnextfilename{images/long-path-poly-6}
      \Biggl\{\scalebox{\smallproofnets}{\pnet{
      \pnformulae{~\pnf[?b]{\small $\wn \bot$}~~~
      \pnf[Ab]{\small$X^{\bot}$}~\pnf[A]{\small $X$}~}
      \pndaimon[daib]{?b}
      \pndaimon[daiA]{Ab,A}
      \pnsemibox{daib,?b}
      \pnsemibox{daiA,Ab,A}                     
      }}\Biggr\}\\[-0.5cm]       
      \pnrewrite[\BoxR_2]&
      \tikzsetnextfilename{images/long-path-mell-4}
      \scalebox{\smallproofnets}{\pnet{
      \pnformulae{
      \pnf[b1]{\small $\bot$}~\pnf[b2]{\small $\bot$} ~\pnf[a1]{\small
                          $X^{\bot}$}~\pnf[a2]{\small $X$}~}
      \pnaxiom[axa]{a1,a2}
      \pnpar[A-oA]{a1,a2}{\small $X^\bot \!\parr\! X$}
      \pnbot[bot1]{b1}
      \pnbot[bot2]{b2}
      \pnexp{}{b1,b2}[wbot]{\small $\wn \bot$}
      \pnsemibox{bot1,wbot,a1,a2,axa,A-oA}
      }}
    	&\pnrewrite[\parr_2]&
      \tikzsetnextfilename{images/long-path-mell-5}
      \scalebox{\smallproofnets}{\pnet{
      \pnformulae{
      \pnf[b1]{\small $\bot$}~\pnf[b2]{\small $\bot$} ~\pnf[a1]{\small
                          $X^{\bot}$}~\pnf[a2]{\small $X$}~}
      \pnaxiom[axa]{a1,a2}
      \pnbot[bot1]{b1}
      \pnbot[bot2]{b2}
      \pnexp{}{b1,b2}[wbot]{\small $\wn \bot$}     
      \pnsemibox{bot1,wbot,a1,a2,axa}                   
      }}&\pnrewrite[\Mix_1]&
      \tikzsetnextfilename{images/long-path-mell-6}
      \scalebox{\smallproofnets}{\pnet{
      \pnformulae{
      \pnf[b1]{\small $\bot$}~\pnf[b2]{\small $\bot$}~~~\pnf[a1]{\small
                          $X^{\bot}$}~\pnf[a2]{\small $X$}~}
      \pnaxiom[axa]{a1,a2}
      \pnbot[bot1]{b1}
      \pnbot[bot2]{b2}
      \pnexp{}{b1,b2}[wbot]{\small $\wn \bot$}
      \pnsemibox{bot1,bot2,wbot}                       
      \pnsemibox{axa,a1,a2}
      }}\\
      \hline
      \pnrewrite[\Ax_{3}]&
      \tikzsetnextfilename{images/long-path-poly-7}
      \Biggl\{\scalebox{\smallproofnets}{\pnet{
      \pnformulae{~\pnf[?b]{\small $\wn \bot$}~~\pnfsmall[empty]{}}
      \pndaimon[d]{?b}
      \pnsemibox{d,?b}
      \pnsemibox{empty}
      }}\Biggr\}&\pnrewrite[\Contr_1]
      &\tikzsetnextfilename{images/long-path-poly-8}
      \Biggl\{\scalebox{\smallproofnets}{\pnet{
      \pnformulae{~\pnf[b1]{\small $\wn \bot$}~\pnf[b2]{\small $\wn \bot$}~~\pnfsmall[empty]{}}
      \pndaimon[d]{b1,b2}
      \pnsemibox{d,b1,b2}
      \pnsemibox{empty}
      }}\Biggr\}&\pnrewrite[\Der_2]
      &\tikzsetnextfilename{images/long-path-poly-9}
      \Biggl\{\scalebox{\smallproofnets}{\pnet{
      \pnformulae{~\pnf[b1]{\small $\wn \bot$}~\pnf[b2]{\small $\bot$}~~\pnfsmall[empty]{}}
      \pndaimon[d]{b1,b2}
      \pnsemibox{d,b1,b2}
      \pnsemibox{empty}
      }}\Biggr\}\\[-0.5cm]
      \pnrewrite[\Ax_{3}]&
      \tikzsetnextfilename{images/long-path-mell-7}
      \scalebox{\smallproofnets}{\pnet{
      \pnformulae{
      \pnf[b1]{\small $\bot$}~\pnf[b2]{\small $\bot$}~~\pnfsmall[empty]{}}
      \pnbot[bot1]{b1}
      \pnbot[bot2]{b2}
      \pnexp{}{b1,b2}[wbot]{\small $\wn \bot$}
      \pnsemibox{bot1,bot2,b1,b2,wbot}
      \pnsemibox{empty}
      }}&\pnrewrite[\Contr_1]
      &\tikzsetnextfilename{images/long-path-mell-8}
      \scalebox{\smallproofnets}{\pnet{
      \pnformulae{
      \pnf[b1]{\small $\bot$}~\pnf[b2]{\small $\bot$}~~\pnfsmall[empty]{}}
      \pnbot[bot1]{b1}
      \pnbot[bot2]{b2}
      \pnexp{}{b1}[wb1]{\small $\wn \bot$}
      \pnexp{}{b2}[wb2]{\small $\wn \bot$}
      \pnsemibox{bot1,bot2,b1,b2,wb1,wb2}
      \pnsemibox{empty}
      }}&\pnrewrite[\Der_2]
      &\tikzsetnextfilename{images/long-path-mell-9}
      \scalebox{\smallproofnets}{\pnet{
      \pnformulae{
      \pnf[b1]{\small $\bot$}~\pnf[b2]{\small $\bot$}~~\pnfsmall[empty]{}}
      \pnbot[bot1]{b1}
      \pnbot[bot2]{b2}
      \pnexp{}{b1}[wb1]{\small $\wn \bot$}
      \pnsemibox{bot1,bot2,b1,b2,wb1}
      \pnsemibox{empty}
      }}
      \\
      \hline
      \pnrewrite[\Mix_1]
      &\tikzsetnextfilename{images/long-path-poly-10}
      \Biggl\{\scalebox{\smallproofnets}{\pnet{
      \pnformulae{
      	\pnf[b1]{\small $\wn \bot$}~~\pnf[b2]{\small $\bot$}~~\pnfsmall[empty]{}
      }
      \pndaimon[db1]{b1}
      \pndaimon[db2]{b2}
      \pnsemibox{db1,b1}  
      \pnsemibox{db2,b2}
      \pnsemibox{empty}
      }}\Biggr\}
      &\pnrewrite[\bot_2]
      &\tikzsetnextfilename{images/long-path-poly-11}
      \Biggl\{\scalebox{\smallproofnets}{\pnet{
      \pnformulae{
      	\pnf[b1]{\small $\wn \bot$}~~\pnfsmall[empty2]{}~~\pnfsmall[empty]{}
      }
      \pndaimon[d]{b1}
      \pnsemibox{d,b1}
      \pnsemibox{empty2}
      \pnsemibox{empty}
      }}\Biggr\}&\pnrewrite[\Der_1]
      &\tikzsetnextfilename{images/long-path-poly-12}
      \Biggl\{\scalebox{\smallproofnets}{
      	\pnet{
	      \pnformulae{
	      	\pnf[b1]{\small $\bot$}~~\pnfsmall[empty2]{}~~\pnfsmall[empty]{}
	      }
	      \pndaimon[d]{b1}
	      \pnsemibox{d,b1}
	      \pnsemibox{empty2}
	      \pnsemibox{empty}
      }}\Biggr\}
    	\ \pnrewrite[\bot_1] \ 
      \left\{\,3\emptynet \, \right\}
      \\[-0.5cm]
      \pnrewrite[\Mix_1]&\tikzsetnextfilename{images/long-path-mell-10}
      \scalebox{\smallproofnets}{
      	\pnet{
	      \pnformulae{
	      	\pnf[b1]{\small $\bot$}~~\pnf[b2]{\small $\bot$}~~\pnfsmall[empty]{}
	      }
	      \pnbot[bot1]{b1}
	      \pnbot[bot2]{b2}
	      \pnexp{}{b1}[wb1]{\small $\wn \bot$}
	      \pnsemibox{bot1,b1,wb1}
	      \pnsemibox{bot2,b2}
	      \pnsemibox{empty}
      	}
      }
      &\pnrewrite[\bot_2]
      &\tikzsetnextfilename{images/long-path-mell-11}
      \scalebox{\smallproofnets}{
      	\pnet{
		      \pnformulae{
		      \pnf[b1]{\small $\bot$}~~\pnfsmall[empty2]{}~~\pnfsmall[empty]{}}
		      \pnbot[bot1]{b1}
		      \pnexp{}{b1}[wb1]{\small $\wn \bot$}
		      \pnsemibox{b1,bot1}
		      \pnsemibox{empty2}
		      \pnsemibox{empty}
	      }
      }
    	&\pnrewrite[\Der_1]
      &\tikzsetnextfilename{images/long-path-mell-12}
      \scalebox{\smallproofnets}{
      	\pnet{
		      \pnformulae{
		      	\pnf[b1]{\small $\bot$}~~\pnfsmall[empty2]{}~~\pnfsmall[empty]{}
	      	}
		      \pnbot[bot1]{b1}
		      \pnsemibox{bot1,b1}
		      \pnsemibox{empty2}
		      \pnsemibox{empty}
		    }
	    }
    	\quad 
    	\pnrewrite[\bot_1] 
    	\quad 
     	3\emptynet
    \end{array}
  \end{align*}
  
  Scheduling $\BoxR_2 \, \Der_1  \BoxR_2 \parr_2 \, \Mix_1  \Ax_{3} \, \Contr_1 \, \Der_2  \Mix_1  \bot_2 \, \Der_1 \, \bot_1 
  : (\wn \wn \bot, \oc \oc (X^\bot \parr X)) \to 3\emptylist $
  
  \caption{Action of a scheduling witnessing that $\rho \in \Taylor{R}$.   
  }
  \label{fig:longex2}
\end{figure}

We introduced two different \gluability criteria (Theorems \ref{thm:characterization}.\ref{p:characterization-with-cut} and \ref{thm:characterization}.\ref{p:characterization-cut-free}) because, given a non-empty set $\Pi$ of \emph{cut-free} $\DiLL_0$ proof-structures without $\maltese$-cells, the existence of a solution to the inverse Taylor expansion problem depends on whether we look for it among \MELL proof-structures or among \kl{cut-free} \MELL proof-structures (Example~\ref{ex:glueable-non-cut-free}).

\begin{example}
	\label{ex:glueable-non-cut-free}
	Let $\Sigma = \{\sigma_0\}$ as in \Cref{fig:rewrite-with-cut}. 
	This singleton of a cut-free $\DiLL_0$ proof-structure without $\maltese$-cells is \kl[glueable]{\gluable} (as shown by the action of the scheduling in \Cref{fig:rewrite-with-cut}) but not \kl[cut-free glueable]{cut-free \gluable} (as $\{\sigma_0\} \pnrewrite[\BoxR_1] \{\sigma\}$ and, apart from $\Cut^i$, no \kl{elementary schedulings} apply to $\{\sigma\}$ because of its type).
	Indeed, $\sigma_0$ ``masks'' any information about the content $B$ of the box represented by its $\oc$-cell, apart from its type $\oc X$ which does not allow $B$ to be a cut-free $\MELL$ proof-structure (see also \Cref{sec:inhabitation}).
	\Gluability (with cuts) of $\{\sigma\}$ is related to the fact that there is a \MELL proof-structure of type $X$ \kl{with cuts}, for every \kl{atomic formula} $X$ (see Remark \ref{rmk:non-cut-free}).
	Note that $\sigma_0 \in \Taylor{S_0}$, and $\sigma \in \FatTaylor{S}$ and $\sigma' \in \FatTaylor{S'}$, where $S_0$, $S$ and $S'$ are the \MELL proof-structures with cuts represented in \Cref{fig:rewrite-with-cut}.
\end{example}

\begin{figure}[!t]
	\vspace{\beforepn}
	\centering
		\scalebox{0.8}{
		\begin{tabular}{c@{}c@{}c@{}c@{}c@{}c@{}c@{}c@{}c@{}c@{}c@{}c@{}c@{}c}
			$\Sigma = \Bigg\{\!\!$\!\!
			\scalebox{0.8}{
			\vspace{\beforepn}
				\pnet{
					\pnformulae{\pnf[1']{$\oc X$}}
					\pncown[1]{1'}
			}}
			$\!\!\!\Bigg\}$
			&
			$\pnrewrite[\!\BoxR_1]$
			&
			$\Bigg\{\!\!$
			\!\!\scalebox{0.8}{
						\vspace{\beforepn}
				\pnet{
					\pnformulae{\pnf[1']{$X$}}
					\pndaimon[1]{1'}
			}}
			$\!\!\!\Bigg\}$
			&
			$\pnrewrite[\!\Cut^2\!]$
			&
			$\Bigg\{\!\!$
			\!\!\!\scalebox{0.8}{
							\vspace{\beforepn}
				\pnet{
					\pnformulae{\pnf[1']{$X$}~\pnf[Xp]{$\wn X^\bot$}~\pnf[X]{$\oc X$}}
					\pndaimon[1]{1',Xp,X}
			}}\!
			$\!\!\!\Bigg\}$
			&
			$\pnrewrite[\!\Contr_2]$
			&
			$\Bigg\{\!\!$
			\!\!\!\scalebox{0.8}{
				\vspace{\beforepn}
				\pnet{
					\pnformulae{\pnf[1']{$X$}~\pnf[Xp]{$\wn X^{\!\bot}$}~\pnf[Xp2]{$\wn X^{\!\bot}$}~\pnf[X]{$\oc X$}}
					\pndaimon[1]{1',Xp,Xp2,X}
			}}\!
			$\!\!\!\Bigg\}$
			&
			$\pnrewrite[\!\Mix_2]$
			&
			$\Bigg\{\!\!$
			\!\!\!\scalebox{0.8}{
							\vspace{\beforepn}
				\pnet{
					\pnformulae{\pnf[1']{$X$}~\pnf[Xp]{$\wn X^{\!\bot}\!\!$}~~\pnf[Xp2]{$\wn X^{\!\bot}$}~\pnf[X]{$\oc X\,$}}
					\pndaimon[1]{1',Xp}
					\pndaimon[2]{Xp2,X}
					\pnsemibox{1,Xp}
					\pnsemibox{2,Xp2}
			}}\!
			$\!\!\Bigg\}$
			&
			$\pnrewrite[\!\BoxR_4]$
			&
			$\Bigg\{\!\!$
			\!\!\!\scalebox{0.8}{
				\vspace{\beforepn}
				\pnet{
					\pnformulae{\pnf[1']{$X$}~\pnf[Xp]{$\wn X^{\!\bot}\!\!$}~~\pnf[Xp2]{$\wn X^{\!\bot}$}~\pnf[X]{$X\,$}}
					\pndaimon[1]{1',Xp}
					\pndaimon[2]{Xp2,X}
					\pnsemibox{1,Xp}
					\pnsemibox{2,Xp2}
			}}\!
			$\!\!\Bigg\}$
			&
			$\pnrewrite[\!\Der_2]\,\pnrewrite[\!\Der_3] \, \pnrewrite[\!\Ax_4] \, \pnrewrite[\!\Ax_2]$
			&
			$\big\{\emptynets\big\}$
			\\
			\qquad $\sigma_0$ & & $\sigma$ & & $\sigma'$
		\end{tabular}		
	}

	\vspace{-\baselineskip}
		$S_0 = $
		\scalebox{0.8}{
			\pnet{
				\pnformulae{
					\pnf[X]{$X$}~\pnf[Xb]{$X^{\bot}$}~\pnf[Ab]{$X^{\bot}$}~\pnf[A]{$X$}
				}
				\pnaxiom[ax]{X,Xb}
				\pnaxiom[axA]{A,Ab}
				\pnbox{axA,Ab,A}
				\pnexp{}{Xb,Ab}[a]{$\wn X^\bot$}[1.2]
				\pnbag{}{A}[b]{$\oc X$}[1.2]
				\pncut[cut]{b,a}
				\pnbox{ax,Ab,A,X,Xb,axA,b,a,cut}
				\pnbag{}{X}{$\oc X$}[3]
			}
		}
		\qquad
		$S = $
		\scalebox{0.8}{
			\pnet{
				\pnformulae{
					\pnf[X]{$X$}~\pnf[Xb]{$X^{\bot}$}~\pnf[Ab]{$X^{\bot}$}~\pnf[A]{$X$}
				}
				\pnaxiom{X,Xb}
				\pnaxiom[ax]{A,Ab}
				\pnbox{ax,Ab,A}
				\pnexp{}{Xb,Ab}[a]{$\wn X^\bot$}[1.2]
				\pnbag{}{A}[b]{$\oc X$}[1.2]
				\pncut{b,a}
			}
		}
		\qquad
		$S' = $
		\scalebox{0.8}{
			\pnet{
				\pnformulae{
					\pnf[X]{$X$}~\pnf[Xb]{$X^{\bot}$}~\pnf[Ab]{$X^{\bot}$}~\pnf[A]{$X$}
				}
				\pnaxiom{X,Xb}
				\pnaxiom[ax]{A,Ab}
				\pnbox{ax,Ab,A}
				\pnexp{}{Xb,Ab}[a]{$\wn X^\bot$}[1.2]
				\pnbag{}{A}[b]{$\oc X$}[1.2]
			}
		}

	\vspace{-\baselineskip}
	\caption{Example of \gluability that is not cut-free \gluability.}
	\label{fig:rewrite-with-cut}
\end{figure}

\begin{example}
  \label{ex:not-coherent}
  The three $\DiLL_0$ proof-structures $\rho_1, \rho_2, \rho_3$ below 
  are not \gluable as a whole, but are \gluable two by two (this is a slight variant of the example in \cite[pp.~244-246]{Tasson:2009}). 
  In fact, there is no \MELL proof-structure whose Taylor expansion contains $\rho_1, \rho_2, \rho_3$, but any pair of them is in the Taylor expansion of some \MELL proof-structure. 
  
  \begin{center}
    $\rho_1$ 
    \qquad\qquad\qquad\qquad\qquad\qquad
		$\rho_2$ 
    \qquad\qquad\qquad\qquad\qquad\qquad
		$\rho_3$ 
	
		\vspace*{-1.5\baselineskip}
    \scalebox{0.65}{
      \pnet{
        \pnformulae{
          \pnf[1]{$\mathbf{1}$}~\pnf[2]{$\mathbf{1}$}~\pnf[3]{$\mathbf{1}$}~
          \pnf[4]{$\mathbf{1}$}~
          \pnf[5]{$\bot$}~\pnf[6]{$\bot$}~\pnf[7]{$\bot$}~
          \pnf[8]{$\bot$}~\pnf[9]{$\bot$}
        }
        \pnone{1}
        \pnone{2}
        \pnone{3}
        \pnone{4}
        \pnbot{5}
        \pnbot{6}
        \pnbot{7}
        \pnbot{8}
        \pnbot{9}
        \pnbag{}{1,2}{$\oc \mathbf{1}$}
        \pnbag{}{3}{$\oc \mathbf{1}$}
        \pnbag{}{4}{$\oc \mathbf{1}$}
        \pnexp{}{5}{$\wn \bot$}
        \pnexp{}{6,7}{$\wn \bot$}
        \pnexp{}{8,9}{$\wn \bot$}
      }}
    \qquad 
    \tikzsetnextfilename{images/tasson-example-2}
    \scalebox{0.65}{
      \pnet{
        \pnformulae{
          \pnf[1]{$\mathbf{1}$}~\pnf[2]{$\mathbf{1}$}~\pnf[3]{$\mathbf{1}$}~
          \pnf[4]{$\mathbf{1}$}~
          \pnf[5]{$\bot$}~\pnf[6]{$\bot$}~\pnf[7]{$\bot$}~
          \pnf[8]{$\bot$}~\pnf[9]{$\bot$}
        }
        \pnone{1}
        \pnone{2}
        \pnone{3}
        \pnone{4}
        \pnbot{5}
        \pnbot{6}
        \pnbot{7}
        \pnbot{8}
        \pnbot{9}
        \pnbag{}{1}{$\oc \mathbf{1}$}
        \pnbag{}{2,3}{$\oc \mathbf{1}$}
        \pnbag{}{4}{$\oc \mathbf{1}$}
        \pnexp{}{5,6}{$\wn \bot$}
        \pnexp{}{7}{$\wn \bot$}
        \pnexp{}{8,9}{$\wn \bot$}
      }}
    \qquad
    \tikzsetnextfilename{images/tasson-example-3}
    \scalebox{0.65}{
      \pnet{
        \pnformulae{
          \pnf[1]{$\mathbf{1}$}~\pnf[2]{$\mathbf{1}$}~\pnf[3]{$\mathbf{1}$}~
          \pnf[4]{$\mathbf{1}$}~
          \pnf[5]{$\bot$}~\pnf[6]{$\bot$}~\pnf[7]{$\bot$}~
          \pnf[8]{$\bot$}~\pnf[9]{$\bot$}
        }
        \pnone{1}
        \pnone{2}
        \pnone{3}
        \pnone{4}
        \pnbot{5}
        \pnbot{6}
        \pnbot{7}
        \pnbot{8}
        \pnbot{9}
        \pnbag{}{1}{$\oc \mathbf{1}$}
        \pnbag{}{2}{$\oc \mathbf{1}$}
        \pnbag{}{3,4}{$\oc \mathbf{1}$}
        \pnexp{}{5,6}{$\wn \bot$}
        \pnexp{}{7,8}{$\wn \bot$}
        \pnexp{}{9}{$\wn \bot$}
      }
    }
  \end{center}
  Indeed, for any $i,j \in \{1,2,3\}$ with $i \neq j$, \(\rho_i\) and \(\rho_j\) are in the Taylor expansion of \(R_{ij}\) below.
  \begin{center}
  	$R_{12}$ 
  	\qquad\qquad\qquad\qquad\qquad\qquad
  	$R_{13}$ 
  	\qquad\qquad\qquad\qquad\qquad\qquad
  	$R_{23}$ 
  	
  	\vspace*{-1.5\baselineskip}
    \scalebox{0.65}{
      \pnet{
        \pnformulae{
          \pnf[1]{$\mathbf{1}$}~~~~\pnf[5]{$\bot$}\\\\
          ~\pnf[2]{$\mathbf{1}$}~~\pnf[4]{$\bot$}\\\\
          ~~\pnf[3]{$\mathbf{1}$}~
          ~~\pnf[6]{$\bot$}~
          \pnf[7]{$\bot$}
        }
        \pnone[o1]{1}
        \pnone[o2]{2}
        \pnone[o3]{3}
        \pnbot{4}
        \pnbot{5}
        \pnbot{6}
        \pnbot{7}
        \pnbag{}{1}{$\oc \mathbf{1}$}[3.75]
        \pnbag{}{2}{$\oc \mathbf{1}$}[2.3]
        \pnbag{}{3}{$\oc \mathbf{1}$}
        \pnexp{}{4}{$\wn \bot$}[2.3]
        \pnexp{}{5}{$\wn \bot$}[3.75]
        \pnexp{}{6,7}{$\wn \bot$}
        \pnbox{o1,1,5}
        \pnbox{o2,2,4}
        \pnbox{o3,3}
      }
    }
  	\qquad\qquad
  	\scalebox{0.65}{
  		\pnet{
  			\pnformulae{
  				\pnf[1]{$\mathbf{1}$}~~~~~~\pnf[7]{$\bot$}\\\\
  				~~\pnf[3]{$\mathbf{1}$}~\pnf[4]{$\bot$}~~\\\\
  				~\pnf[2]{$\mathbf{1}$}~~
  				~\pnf[5]{$\bot$}~\pnf[6]{$\bot$}
  			}
  			\pnone[o1]{1}
  			\pnone[o2]{2}
  			\pnone[o3]{3}
  			\pnbot{4}
  			\pnbot{5}
  			\pnbot{6}
  			\pnbot{7}
  			\pnbag{}{1}{$\oc \mathbf{1}$}[3.75]
  			\pnbag{}{2}{$\oc \mathbf{1}$}
  			\pnbag{}{3}{$\oc \mathbf{1}$}[2.3]
  			\pnexp{}{4}{$\wn \bot$}[2.3]
  			\pnexp{}{5,6}{$\wn \bot$}
  			\pnexp{}{7}{$\wn \bot$}[3.75]
  			\pnbox{o1,1,7}
  			\pnbox{o2,2}
  			\pnbox{o3,3,4}
  		}
  	}
  	\qquad\qquad
  	\scalebox{0.65}{
  		\pnet{
  			\pnformulae{
  				~\pnf[2]{$\mathbf{1}$}~~~~~\pnf[7]{$\bot$}\\\\
  				~~\pnf[3]{$\mathbf{1}$}~~~\pnf[6]{$\bot$}\\\\
  				\pnf[1]{$\mathbf{1}$}~~~\pnf[4]{$\bot$}
  				~\pnf[5]{$\bot$}~
  			}
  			\pnone[o1]{1}
  			\pnone[o2]{2}
  			\pnone[o3]{3}
  			\pnbot{4}
  			\pnbot{5}
  			\pnbot{6}
  			\pnbot{7}
  			\pnbag{}{1}{$\oc \mathbf{1}$}
  			\pnbag{}{2}{$\oc \mathbf{1}$}[3.75]
  			\pnbag{}{3}{$\oc \mathbf{1}$}[2.3]
  			\pnexp{}{4,5}{$\wn \bot$}
  			\pnexp{}{6}{$\wn \bot$}[2.3]
  			\pnexp{}{7}{$\wn \bot$}[3.75]
  			\pnbox{o1,1}
  			\pnbox{o2,2,7}
  			\pnbox{o3,3,6}
  		}
  	}
  \end{center}
\end{example}

Example \ref{ex:not-coherent} can be generalized to finite sets of any cardinality: for any \(n > 1\), there is a finite set $\Pi$ of \(n+1\) $\DiLL_0$ proof-structures that is not \gluable as a whole, but all subsets of $\Pi$ of at most \(n\) elements are \gluable. 
This means that, for every $n > 1$, \gluability of a set $\Pi$ of $\DiLL_0$ proof-structures is not reducible to test that all subsets of $\Pi$ with at most $n$ elements satisfy some $n$-ary coherence relation, unless (obviously) $\Pi$ is a finite set with at most $n$ elements.
As explained in \Cref{sect:intro}, this difficulty in the inverse Taylor expansion problem is typical of $\MELL$ and does not arise in the $\lambda$-calculus.

\begin{remark}[infinite sets]
  \label{rmk:infinite}
The \gluability criterion (\Cref{thm:characterization}) is not limited to finite sets of \(\DiLL_0\) proof-structures, it holds for infinite sets too. 
This raises the question of the relationship between \gluability and finite \gluability, namely: when every finite subset of a given infinite set of $\DiLL_0$ proof-structures is \gluable, is the infinite set itself \gluable too?
The answer is negative.
  Let \(\Pi\) be an \emph{infinite} set of \(\DiLL_0\) proof-structures such that \emph{every finite} subset of \(\Pi\) is \gluable. 
  There are two cases for $\Pi$.
  \begin{itemize}

  \item \(\Pi\) itself is \gluable. It is typically the case when \(\Pi\) is exactly the
    Taylor expansion of a \(\MELL\) proof-structure $R$ with at least one box, such as in the following example.
    \begin{center}
    	\vspace{\beforepn}
    	$\Pi = \Bigg\{
			\scalebox{0.8}{
			\pnet{
				\pnformulae{\pnf[1']{$\one$}~\pnf{$\overset{n}{\dots}$}~\pnf[1'']{$\one$}}
				\pnone[1]{1'}
				\pnone[1bis]{1''}
				\pnbag{}{1',1''}{$\oc \one$}
			}}
    	\mathrel{\bigg|} n \in \Nat \Bigg\}
    	\  \pnrewrite[\BoxR_1] \ 
    	\bigg\{
    		\scalebox{0.8}{
	    	\pnet{
	    		\pnformulae{\pnf[1']{$\one$}}
	    		\pnone[1]{1'}
	    	}}
    	\bigg\}
    	\ \pnrewrite[\One_1] \ 
    	\{\emptynet\}
    	$
    	\qquad\qquad\quad
    	$R =
    	\scalebox{0.8}{
    	\pnet{
    		\pnformulae{\pnf[1']{$\one$}}
    		\pnone[1]{1'}
    		\pnbox{1,1'}
    		\pnbag{}{1'}{$\oc \one$}[1.2]
    	}}
    	$
    \end{center}
   
    We might say that
    \(\Pi\) is \emph{infinite in width}: infinity is related to the number of
    copies chosen for the boxes of $R$, a number that has to be finite but can
    be arbitrarily large.
    
  \item \(\Pi\) is not \gluable. 
  This is the case in the following set (we draw three elements only).

\begin{center}
  \vspace{\beforepn}
  $\Pi = \Bigg\{$ \!\!
  \scalebox{\smallproofnets}{
    \raisebox{-1.4cm}{
      \pnet{
        \pnformulae{~~\pnf[ba]{$\oc X$}~\pnf[na]{$X^{\bot}$}\\\\\\
          \\
          \pnf[na']{$\wn X^{\bot}$}~~~~~\pnf[a]{$X$}
      	}
      	\pnwn{na'}
        \pncown{ba}
        \pnaxiom{na,a}
        \pntensor{ba,na}[bana]{$\oc X \otimes X^{\bot}$}
        \pnexp{}{bana}{$\wn (\oc X \otimes X^{\bot})$}[1.1]
      }
    },
    \quad
    \raisebox{-0.7cm}{
      \pnet{
        \pnformulae{~\pnf[ba]{$\oc X$}~\pnf[na]{$X^{\bot}$}~\pnf[a]{$X$}\\\\
          ~~~~\pnf[na1]{$X^{\bot}$}\\\\\\\\
          \pnf[na']{$\wn X^{\bot}$}~~~~~~\pnf[a1]{$X$}
        }
        \pnwn{na'}
        \pncown{ba}
        \pnaxiom{na,a}
        \pntensor{ba,na}[bana]{$\oc X \otimes X^{\bot}$}[1.2]
        \pnbag{}{a}[bal]{$\oc X$}[1.2]
        \pnaxiom{na1,a1}
        \pntensor{bal,na1}[balna1]{$\oc X \otimes X^{\bot}$}
        \pnexp{}{bana,balna1}{$\wn (\oc X \otimes X^{\bot})$}[1.1]
      }
    },
    \quad
    \pnet{
      \pnformulae{\pnf[ba2]{$\oc X$}~\pnf[na2]{$X^{\bot}$}~\pnf[a2]{$X$}\\\\
        ~~~~\pnf[na]{$X^{\bot}$}~\pnf[a]{$X$}\\\\
        ~~~~~~\pnf[na1]{$X^{\bot}$}\\\\\\\\
      \pnf[na']{$\wn X^{\bot}$}~~~~~~~~\pnf[a1]{$X$}}
      \pnwn{na'}
      \pnbag{}{a2}[ba]{$\oc X$}[1.2]
      \pncown{ba2}
      \pnaxiom{na2,a2}
      \pntensor{ba2,na2}[bana2]{$\oc X \otimes X^{\bot}$}[1.2]
      \pnaxiom{na,a}
      \pntensor{ba,na}[bana]{$\oc X \otimes X^{\bot}$}[1.2]
      \pnbag{}{a}[bal]{$\oc X$}[1.2]
      \pnaxiom{na1,a1}
      \pntensor{bal,na1}[balna1]{$\oc X \otimes X^{\bot}$}
      \pnexp{}{bana,balna1,bana2}{$\wn (\oc X \otimes X^\bot)$}[1.1]
      }, 
    }
  	\ \dots $\Bigg\}$
    \end{center}
  Every finite subset of $\Pi$ is \gluable, but the whole infinite set $\Pi$ is not.
  We might say that \(\Pi\) is \emph{infinite in depth}: in $\Pi$, for every $n > 0$ there is a $\DiLL_0$ proof-structure with $n$ $\oc$-cells.
  Any finite subset $\Pi_n$ of $\Pi$ ($n$ is the maximal number of $\oc$-cells in the elements of $\Pi_n$) is included in the Taylor expansion of a $\MELL$ proof-structure with \kl[depth of proof-structure]{depth} $n$, and indeed the action of the scheduling $\nu_n^1$ defined in Example \ref{ex:church} rewrites $\Pi_n$ to $\{\emptynets\}$.\footnotemark
  \footnotetext{\label{note:church}In fact, if $\pi_h$ is the element of such a $\Pi$ with exactly $h$ $\oc$-cells, $\{\pi_h\} \pnrewrite[\nu_n^1] \{\emptynets\}$ for all $n \geqslant h$.} 
  However, the whole $\Pi$ can only been seen as the Taylor expansion of an ``infinite'' proof-structure with infinitely deep nested boxes. 
  Such infinite proof-structures are not in the syntax of $\MELL$. 
  An ongoing research  relates them to non-well-founded proofs for fixed-point~logics~\cite{Infinets1,Infinets2}.
  \end{itemize}
\end{remark}

In the inverse Taylor expansion problem, both kinds of infinity---in width and in depth---arise already in the $\lambda$-calculus, they are not specific to $\MELL$. 
	An instance of a set of resource $\lambda$-terms that is infinite in width is the Taylor expansion of the $\lambda$-term $xy$.
	The set $\Pi = \{\lambda f. \lambda x. f [\,],\allowbreak \
	     \lambda f. \lambda x. f [f [\,] ],\allowbreak \
	     \lambda f. \lambda x. f [f [f [\,] ]], \
	     \dots\}$ 
	of resource $\lambda$-terms is an example of infinity in depth: every finite subset of $\Pi$ is included in the Taylor expansion of some $\lambda$-term (a Church numeral), but $\Pi$ is not included in the Taylor expansion of any $\lambda$-term.

\subsection{Inhabitation}
\label{sec:inhabitation}
We can characterize type inhabitation in $\MELL$ proof-structures. 
We first put the question of type inhabitation for $\MELL$ proof-structures in a proper way.

\begin{remark}[inhabitation with cuts is trivial]
  \label{rmk:non-cut-free}
  For every \kl{atomic} formula $X$ there is a \MELL proof-structure of type $X$ \kl{with cuts}, the one called $S$ in \Cref{fig:rewrite-with-cut}.
  More generally, for every $\MELL$ formula $A$ (\resp~list $\Gamma$ of \MELL formulas), there is a \MELL proof-structure of type $A$ (\resp~$\Gamma$) possibly with cuts. 
  Indeed, for $A$ just ``$\eta$-expand'' $S$ in \Cref{fig:rewrite-with-cut}, by replacing each $\ax$-cell in $S$ with the ``canonical'' \MELL proof-structure of type $A, A^\bot$. 
   For $\Gamma = \allowbreak(A_1, \dots, A_n)$, repeat the same procedure for each $A_i$ separately, and juxtapose the \MELL proof-structures to yield a unique \MELL proof-structure $R$ of type $\Gamma$ (take $R = \emptynet$ if $\Gamma = \emptylist$).
  
  Therefore, type inhabitation for \MELL proof-structures is trivial, if $\cut$-cells are allowed. 
\end{remark} 

Type inhabitation for \MELL proof-structures is a non-trivial problem only if we restrict to \emph{\kl{cut-free}} \MELL proof-structures. 
Indeed, given two propositional variables $X \neq Y$, there is no \emph{cut-free} \MELL proof-structure of type $X$ or $Y, X$.
So, the problem is the following: 
\begin{description}
	\item[Data] a list $\Gamma$ of \MELL formulas;
	\item[Question] does there exist a \kl{cut-free} \MELL proof-structure of type $\Gamma$?
\end{description}
We study the \emph{type inhabitation problem for cut-free \MELL proof-structures} via schedulings, following the idea that a \kl{scheduling} $\sched \colon (\Gamma) \to \emptylists$ ``encodes'' a \MELL proof-structure of type $\Gamma$.

Let us see the behavior of \kl[cut-free schedulings]{\emph{cut-free} schedulings}. 
For some list $\Gamma$ of \MELL formulas there is no cut-free scheduling $\sched \colon (\Gamma) \to \emptylists$.
For instance, if $X \neq Y$ are propositional variables and we exclude $\Cut^i$, then there are no \kl{elementary schedulings} from $(X)$, while the only other elementary schedulings from $(Y,X)$ are $\Mix_1$ and $\Ex_1$.
The non-existence of a cut-free-scheduling from $(X)$ or $(Y,X)$ to $\emptylists$ goes hand in hand with the non-existence of a cut-free $\MELL$ proof-structure of type $X$ or $Y,X$, by Proposition \ref{prop:unwinding-to-empty} and Lemma~\ref{lemma:cut-free-scheduling}.

Conversely, for any list $\Gamma$ of \MELL formulas, a \MELL proof-structure $R(\sched)$ of type $\Gamma$ is associated with every scheduling $\sched \colon (\Gamma) \to \emptylists$: such a $R(\sched)$ is built by reading $\sched$ backward, it exists and is uniquely determined because the action of $\sched$ is \emph{co-functional} (Lemma \ref{prop:unwinding-co-functional}).

\begin{definition}[building from scheduling]
	\label{def:building}
	Let $\Gamma \neq \emptylist$ be a list of lists of \MELL formulas, and $\sched \colon \Gamma \to n\emptylist$ be a  scheduling with $n \!>\! 0$.
	The \MELL quasi-proof-structure $R(\sched)$ is defined by induction on the \kl[scheduling length]{length} of $\sched$: if $\sched$ is \kl[empty scheduling]{empty}, $R(\sched) = n \emptynet$;
		otherwise $\sched = a \schedtwo$ for some elementary scheduling $a$, and $R(\sched)$ is the only \MELL quasi-proof-structure $R$ such that $R \pnrewrite[a] R(\schedtwo)$.
\end{definition}

Clearly, for any scheduling $\sched \colon \Gamma \to \emptylists$, $R(\sched)$ is of type $\Gamma$ and $R(\sched) \pnrewrite[\sched] \emptynets$, and $R(\sched)$ is cut-free if and only if so is $\sched$ (Lemma \ref{lemma:cut-free-scheduling}).
Actually, cut-free schedulings are a way to characterize the (lists of) \MELL formulas that are the type of some \kl{cut-free} \MELL proof-structure.

\begin{thm}[type inhabitation for cut-free \MELL proof-structures]
	\label{thm:inhabitation}
	Let $\Gamma$ be a list of \MELL formulas.
	There is a \kl{cut-free} \MELL proof-structure of type $\Gamma$ if and only if there is a  \kl{cut-free scheduling} $\sched \colon (\Gamma) \to \emptylists$.
\end{thm}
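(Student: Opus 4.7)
The plan is to derive this theorem directly from the rewriting machinery already in place, specifically Proposition \ref{prop:unwinding-to-empty} (normalization), the co-functionality of scheduling actions packaged into Definition \ref{def:building} via Lemma \ref{prop:unwinding-co-functional}, and Lemma \ref{lemma:cut-free-scheduling} (the correspondence between cut-freeness of quasi-proof-structures and of schedulings). The theorem is essentially a repackaging of these three facts.

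For the ``only if'' direction, I take a cut-free \MELL proof-structure $R$ of type $\Gamma$ and view it as a single-component \MELL quasi-proof-structure of type $(\Gamma)$, legitimate by the identification noted just after Definition \ref{def:quasi-proof-structure}. Proposition \ref{prop:unwinding-to-empty} then yields a scheduling $\sched \colon (\Gamma) \to \emptylists$ with $R \pnrewrite[\sched] \emptynets$, and since $R$ is cut-free, Lemma \ref{lemma:cut-free-scheduling} gives that $\sched$ is cut-free as well.

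For the ``if'' direction, starting from a cut-free scheduling $\sched \colon (\Gamma) \to \emptylists$, I invoke Definition \ref{def:building} to obtain the \MELL quasi-proof-structure $R(\sched)$ of type $(\Gamma)$; its well-definedness at each inductive step is guaranteed by the co-functionality of $\pnrewrite[a]$ for each elementary scheduling $a$ composing $\sched$ (Lemma \ref{prop:unwinding-co-functional}). Because the target type $(\Gamma)$ is a list containing a single list, $R(\sched)$ has exactly one component and hence can be identified with a genuine \MELL proof-structure of type $\Gamma$. By construction $R(\sched) \pnrewrite[\sched] \emptynets$, and since $\sched$ is cut-free, Lemma \ref{lemma:cut-free-scheduling} yields that $R(\sched)$ is cut-free.

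There is no real obstacle to overcome, as the heavy lifting has been done in the preceding sections. The only point requiring mild attention is the book-keeping identification between a proof-structure of type $\Gamma$ (a list of \MELL formulas) and a single-component quasi-proof-structure of type $(\Gamma)$ (a one-element list whose unique entry is the list $\Gamma$), together with verifying that the ambient quasi-proof-structure produced in the ``if'' direction indeed has a single component, which follows from its type $(\Gamma)$ being a length-one list.
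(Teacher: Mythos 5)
Your proof is correct and follows essentially the same route as the paper: the ``if'' direction builds $R(\sched)$ via Definition \ref{def:building} (backed by co-functionality, Lemma \ref{prop:unwinding-co-functional}) and concludes cut-freeness from Lemma \ref{lemma:cut-free-scheduling}, while the ``only if'' direction combines Proposition \ref{prop:unwinding-to-empty} with Lemma \ref{lemma:cut-free-scheduling}. Your extra remark that $R(\sched)$ has a single component because its type $(\Gamma)$ is a length-one list is just a slightly more explicit version of the paper's observation that $\Gamma$ being a list of \MELL{} formulas makes $R$ a proof-structure.
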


\begin{proof}
	If $\sched \colon (\Gamma) \to \emptylists$ is a cut-free scheduling, $R = R(\sched)$ is a \MELL quasi-proof-structure of type $\Gamma$. As $\Gamma$ is a list of \MELL formulas, $R$ is a proof-structure.
	By Lemma~\ref{lemma:cut-free-scheduling}, $R$ is cut-free.
	
	Conversely, if $R$ is a cut-free \MELL proof-structure of type $\Gamma$, then 
	$R \pnrewrite[\sched] \emptynets$ for some cut-free scheduling $\sched \colon (\Gamma) \to \emptylists$, by Proposition \ref{prop:unwinding-to-empty} and Lemma~\ref{lemma:cut-free-scheduling}.
\end{proof}

	Lemma \ref{prop:unwinding-co-functional}, Definition \ref{def:building} and the proof of Theorem \ref{thm:inhabitation} say that, given a list $\Gamma$ of \MELL formulas, any cut-free scheduling from  $(\Gamma)$ to $\emptylists$ can be seen, when read backward, as a sequence of elementary steps to build a cut-free \MELL proof-structure of type $\Gamma$~from~scratch.
	
	Consider now the \emph{procedure} $\mathcal{P}_1$ below, given a list $\Gamma$ of \MELL formulas as input:
	\begin{enumerate}
		\item for all $n \geqslant 0$, take all cut-free schedulings from $(\Gamma)$ of \kl[scheduling length]{length} $n$ (there are finitely many);
		\item as soon as a list of the form $\emptylists$ is reached by a cut-free scheduling, accept $\Gamma$.
	\end{enumerate}

\Cref{thm:inhabitation} guarantees that such a \emph{nondeterministic} procedure accepts $\Gamma$ if and only if there is a cut-free \MELL proof-structure of type $\Gamma$, and any cut-free scheduling $\sched$ that allows the procedure to accept $\Gamma$ also constructs a witness $R(\sched)$.
Therefore, we can conclude:

\begin{fact}[semi-decidability]
	Procedure $\mathcal{P}_1$ semi-decides the type inhabitation problem for cut-free \MELL proof-structures.
\end{fact}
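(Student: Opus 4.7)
The plan is to verify that procedure $\mathcal{P}_1$ is effectively executable and that its acceptance condition coincides with the existence of a cut-free \MELL proof-structure of type $\Gamma$; the latter is essentially provided by \Cref{thm:inhabitation}.

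First I would establish the effectiveness of $\mathcal{P}_1$. Given any list $\Delta$ of lists of \MELL formulas, the set of elementary schedulings (other than $\Cut^i$) whose source is $\Delta$ is finite and effectively computable: inspecting \Cref{fig:elementary-schedulings}, each elementary scheduling is determined by its name and by an index $i$ bounded by the total number of formulas in the flattening of $\Delta$, and for each such index one can decide whether the corresponding rule applies by checking the shape of the $i$-th formula. Consequently, by a straightforward induction on $n$, the set of cut-free schedulings from $(\Gamma)$ of length $n$ is finite and can be effectively enumerated, and its target can be computed; since equality of lists of lists of formulas is decidable, it is decidable whether this target has the form $\emptylists$. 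Thus step (1) of $\mathcal{P}_1$ is executable and step (2) is a decidable test at each stage~$n$.

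Next I would verify soundness and completeness of the acceptance condition using \Cref{thm:inhabitation}. For soundness, if $\mathcal{P}_1$ accepts $\Gamma$ at stage $n$, then a cut-free scheduling $\sched \colon (\Gamma) \to \emptylists$ has been produced; by \Cref{thm:inhabitation}, the \MELL quasi-proof-structure $R(\sched)$ (Definition \ref{def:building}) is a cut-free \MELL proof-structure of type $\Gamma$, providing not only a yes-answer but an effectively constructed witness. For completeness, if there exists a cut-free \MELL proof-structure $R$ of type $\Gamma$, then by the ``only if'' direction of \Cref{thm:inhabitation} (which itself rests on Proposition~\ref{prop:unwinding-to-empty} and Lemma~\ref{lemma:cut-free-scheduling}) there is a cut-free scheduling $\sched \colon (\Gamma) \to \emptylists$; letting $n$ be its length, $\sched$ will appear in the (finite) enumeration of cut-free schedulings from $(\Gamma)$ of length $n$, so $\mathcal{P}_1$ accepts $\Gamma$ at stage $n$.

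Combining the two directions: $\mathcal{P}_1$ halts and accepts $\Gamma$ iff there is a cut-free \MELL proof-structure of type $\Gamma$, which is exactly the definition of semi-decidability for the type inhabitation problem for cut-free \MELL proof-structures. I expect no serious obstacle here: the content of the proof is already done by \Cref{thm:inhabitation}; the only point that needs care is the effectiveness argument in the first paragraph, and in particular observing that although the sequence of targets reached by schedulings of increasing length need not be bounded in size (rules such as $\otimes_i$, $\parr_i$, $\Contr_i$, $\BoxR_i$ may enlarge the flattening), at each fixed length $n$ the branching remains finite, which is all that $\mathcal{P}_1$ needs to proceed stage by stage.
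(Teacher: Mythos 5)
Your proposal is correct and follows essentially the same route as the paper: the paper's justification is precisely that \Cref{thm:inhabitation} makes acceptance of $\mathcal{P}_1$ equivalent to inhabitation, with the finiteness of the set of cut-free schedulings of each length guaranteeing that the stage-by-stage enumeration is effective. The extra care you take in spelling out why the enumeration is computable (finitely many applicable elementary schedulings per source once $\Cut^i$ is excluded, decidable target test) is a reasonable elaboration of what the paper leaves implicit, but it does not change the argument.
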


Procedure $\mathcal{P}_1$ need not halt. 
Indeed, in general, there is no bound on the length of cut-free schedulings, and the elementary scheduling $\Contr_i$ can create cut-free schedulings of arbitrary length with target $\emptylists$, such as $\schedtwo_n$ below---of length $3n+1$---for every $n \geqslant 0$.
$$
\schedtwo_n = \underbrace{\Contr_1 \cdots \Contr_1}_{n \text{ times}} \underbrace{\Mix_{1} \Weak_1 \cdots \Mix_{1} \Weak_1}_{n \text{ times}} \Weak_1 \ \colon \, (\wn X) \to (n+1)\emptylist
$$

Consider now the \emph{procedure} $\mathcal{P}_2$ below, a slight variant of $\mathcal{P}_1$ for the same input $\Gamma$:  

\begin{enumerate}
	\item for all $n \geqslant 0$, take all cut-free schedulings from $(\Gamma)$ of \kl[scheduling length]{length} $n$ that do not contain the elementary scheduling $\Contr_{i}$ (there are finitely many);
	\item as soon as a list of the form $\emptylists$ is reached by one of those schedulings, accept $\Gamma$.
\end{enumerate}

Procedure $\mathcal{P}_2$ always halts. 
Indeed, define the following \emph{sizes}: 
\begin{itemize}
	\item the size $\Size{A}$ of a \MELL formula $A$ is the number of \kl[atomic formula]{atoms}, \kl{connectives} and \kl{units} occurring in $A$, \ie $\Size{X} = \Size{\One} = 1$, $\Size{A \otimes B} =  \Size{A} + \Size{B} + 1$, $\Size{\oc A} = \Size{A} + 1$ and $\Size{A^\perp} = \Size{A}$;

	\item the size of a list $(A_1, \dots, A_n)$ of \MELL formulas is the sum $\sum_{i=i}^n\Size{A_i}$;
	\item the size of a (non-empty) list $\Gamma$ of lists of \MELL formulas is $\Size{\Gamma} = (|\Gamma|, \mathsf{n}(\Gamma))$ with the lexicographic order, where $|\Gamma|$ is the size of the \kl{flattening} of $\Gamma$, and $\mathsf{n}(\Gamma)$ is the finite multiset of the lengths of each list in $\Gamma$, with the multiset order.
\end{itemize}

One has $\Size{\Gamma} > \Size{\Gamma'}$ for any \kl{elementary scheduling} $a \colon \Gamma \to \Gamma'$ except
\begin{itemize}
	\item when $a = \Cut^{i}$ since $\Size{\Gamma} < \Size{\Gamma'}$, but $\mathcal{P}_2$  considers cut-free schedulings only;
	\item when $a = \Ex_{i}$ since $\Size{\Gamma} = \Size{\Gamma'}$, but infinite chains of consecutive $\Ex_{i}$ are irrelevant, because for every $\Gamma$, sufficiently long schedulings $\Ex_{i_1} \dots \Ex_{i_n}$ are from $\Gamma$ to $\Gamma$;\footnotemark
	\footnotetext{\label{note:exc-schedule}Indeed, for every list $\Gamma \neq \emptylist$ of lists of \MELL formulas, there is $N_\Gamma \geqslant 0$ such that, for every scheduling $\sched \colon \Gamma \to \Delta$ only made of $\Ex_{i}$, there is a scheduling $\schedtwo \colon \Gamma \to \Delta$ only made of $\Ex_{i}$ and of length $\leqslant N_\Gamma$.}
	
	\item when $a = \Contr_{i}$ since $\Size{\Gamma} < \Size{\Gamma'}$, but $\mathcal{P}_2$ only considers schedulings without $\Contr_{i}$.
\end{itemize}

The nondeterministic procedure $\mathcal{P}_2$ decides the type inhabitation problem in a subset of \MELL proof-structures.
We say that a $\wn$-cell with a least two inputs is a \emph{strict contraction}.
Note that if $a = \Contr_{i}$ and $R \pnrewrite[a] R'$ is not \kl[nullary splitting for elementary scheduling]{nullary $\Contr$-splitting}, then $R$ has a strict contraction.

\begin{proposition}[decidability without strict contractions]
	\label{fact:decide-without-contractions}
	Procedure $\mathcal{P}_2$ decides the type inhabitation problem for cut-free $\MELL$ proof-structures \emph{without strict contractions}, that is, 
	\begin{description}
		\item[Data] a list $\Gamma$ of \MELL formulas;
		\item[Question] is there a cut-free \MELL proof-structure of type $\Gamma$ without strict contractions?
	\end{description}
\end{proposition}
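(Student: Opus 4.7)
The plan is to split the argument into termination of $\mathcal{P}_2$ and the two directions of correctness.

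For termination, I would follow the size hint in the text: define $\Size{\Gamma} = (|\Gamma|, \mathsf{n}(\Gamma))$ lexicographically and verify case by case from \Cref{fig:elementary-schedulings} that every cut-free, $\Contr_i$-free elementary scheduling other than $\Ex_i$ strictly decreases $\Size{\cdot}$: the multiplicative, unary, and box rules decrease $|\Gamma|$; $\Mix_i$ preserves $|\Gamma|$ but decreases $\mathsf{n}(\Gamma)$; and the hypothesis rules decrease both. Because consecutive $\Ex_i$-chains from a fixed $\Gamma$ cycle through a finite set of permutations and can be replaced by ones of bounded length (the footnote on $N_\Gamma$), the lengths of cut-free $\Contr_i$-free schedulings out of $(\Gamma)$ admit a uniform a priori bound, so the enumeration in $\mathcal{P}_2$ is effectively finite.

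For the ``$\Leftarrow$'' direction, let $R$ be a cut-free \MELL proof-structure of type $\Gamma$ with no strict contraction. By Proposition~\ref{prop:unwinding-to-empty} there is a scheduling $\sched$ with $R \pnrewrite[\sched] \emptynets$ and no nullary $\Contr$-splitting, and by Lemma~\ref{lemma:cut-free-scheduling} $\sched$ is cut-free. The key claim is that $\sched$ contains no $\Contr_i$ at all. I would prove by induction along $\sched$, using a direct inspection of \Cref{fig:actions-all}, that none of $\Ax_i, \One_i, \bot_i, \Weak_i, \otimes_i, \parr_i, \Der_i, \Ex_i, \Mix_i, \BoxR_i$ ever creates or enlarges a $\wn$-cell (in particular, $\BoxR_i$ leaves the auxiliary-door $\wn$-cells untouched), so every intermediate quasi-proof-structure along the reduction still has no strict contraction. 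But then any $\Contr_i$ occurring in $\sched$ would have to act on a $\wn$-cell of arity $\leq 1$, making it nullary $\Contr$-splitting—a contradiction. Hence $\sched$ is enumerated by $\mathcal{P}_2$, which accepts $\Gamma$.

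For the ``$\Rightarrow$'' direction, suppose $\mathcal{P}_2$ accepts $\Gamma$ via a cut-free $\Contr_i$-free scheduling $\sched \colon (\Gamma) \to \emptylists$. By Definition~\ref{def:building}, $R(\sched)$ is a \MELL quasi-proof-structure of type $\Gamma$, hence a proof-structure since $\Gamma$ is a single list, and by Lemma~\ref{lemma:cut-free-scheduling} it is cut-free. Reading $\sched$ backward from $\emptynets$ and inducting on its length, the inverse of each step either does not touch $\wn$-cells, or introduces a fresh $\wn$-cell with $0$ or $1$ inputs ($\Weak_i^{-1}$, $\Der_i^{-1}$), or only wraps pre-existing cells inside a freshly formed box ($\BoxR_i^{-1}$). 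Only $\Contr_i^{-1}$ could merge two $\wn$-cells into one with $\geq 2$ inputs, and $\Contr_i$ is banned from $\sched$; thus $R(\sched)$ has no strict contraction.

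The main obstacle is really the bookkeeping for the invariant ``no strict contraction'' along the rewriting, and the key little observation is that the ``no nullary $\Contr$-splitting'' clause of Proposition~\ref{prop:unwinding-to-empty} is precisely what forces any surviving $\Contr_i$ to act on a $\wn$-cell of arity $\geq 2$—i.e.\ on a strict contraction that does not exist in $R$. Once this is seen, both directions reduce to routine case analysis of the rewrite rules of \Cref{fig:actions-all}.
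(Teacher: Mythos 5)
Your proposal is correct and follows essentially the same route as the paper's proof: termination via the size $\Size{\Gamma}$, the backward reading of a $\Contr_i$-free scheduling to get a witness without strict contractions, and Proposition~\ref{prop:unwinding-to-empty} plus Lemma~\ref{lemma:cut-free-scheduling} for the converse. The only difference is that you spell out the two steps the paper leaves implicit (that no rule other than $\Contr_i$ creates or enlarges a $\wn$-cell, so the absence of strict contractions is invariant along the rewriting and any surviving $\Contr_i$ would be nullary $\Contr$-splitting), which is exactly the right justification.
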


\begin{proof}
	Let $\Gamma$ be a list of \MELL formulas. 
	Termination of $\mathcal{P}_2$ on input $\Gamma$ is proved above.
	
	If $\mathcal{P}_2$ accepts $\Gamma$, then there is a cut-free scheduling $\sched \colon (\Gamma) \to \emptylists$ that does not contain $\Contr_{i}$.
	According to Definition \ref{def:building}, it is immediate to prove that $R(\sched)$ has no strict contractions.
	
	If there is a cut-free \MELL proof-structure $R$ of type $(\Gamma)$ without strict contractions, then there is a  scheduling $\sched \colon (\Gamma) \to \emptylists$ such that $R \pnrewrite[\sched] \emptynets$ without nullary $\Contr$-splittings (Proposition \ref{prop:unwinding-to-empty}). 
	Since $R$ has no strict contractions, then $\sched$ does not contain any $\Contr_i$, so $\mathcal{P}_2$ accepts $\Gamma$. 
\end{proof}

Type inhabitation for cut-free \MELL proof-structures depends on whether strict contractions are allowed or not: the list $\oc X, \wn X^\perp, \oc X$ is accepted by $\mathcal{P}_1$ but rejected by $\mathcal{P}_2$, indeed the only cut-free \MELL proof-structure of type $\oc X, \wn X^\perp, \oc X$ has a strict contraction.

\begin{remark}[on decidability of $\MELL$]
  Our characterization of type inhabitation for cut-free $\MELL$ proof-structures (Theorem \ref{thm:inhabitation}) can be 
  seen in relation to the question of deciding if a \(\MELL\) formula is provable: indeed, a
	\MELL formula \(A\) is provable if and only if there is a \emph{correct} cut-free $\MELL$ proof-structure of type $A$ (for some notion of correctness).

  In the propositional setting, $\MLL$ provability is decidable and
  \(\mathbf{NP}\)-complete \cite{Kanovich94},   while 
  \(\LL\) provability is undecidable  \cite{Lincoln1992}; 
  a longstanding open problem is whether provability in \(\MELL\) is  decidable or not.
	Our result offers a fresh perspective on it.
	
	Take again procedure $\mathcal{P}_1$ above 
	with the same input $\Gamma$, and replace its second step by:
	\begin{enumerate}
		\setcounter{enumi}{1}
		\item if a list of the form $\emptylists$ is reached by a cut-free scheduling $\sched$, construct the cut-free \MELL proof-structure $R = R(\sched)$ of type $\Gamma$; 
		\item check if $R$ 
		fulfills a suitable correctness criterion that characterizes all and only the \MELL proof-structures corresponding to a derivation in
		\MELL sequent calculus, \eg \cite{Tortora03};
		\item as soon as a correct (cut-free) \MELL proof-structure is found, accept $\Gamma$.
	\end{enumerate}
	As explained above, there may be infinitely many cut-free schedulings  $\sched \colon (\Gamma) \to \emptylists$, and also infinitely many cut-free \MELL proof-structures of type $\Gamma$. 
	But they are finite in number in the case without strict contractions.
	As correctness of a \MELL proof-structure is decidable, the procedure above \emph{semi-decides} provability in \MELL (is a given sequent provable in \MELL sequent calculus?) and \emph{decides} provability in \MELL \emph{without contractions} (is a given sequent provable in \MELL sequent calculus without the contraction rule?). 
	Our procedure restricts its search to cut-free objects, without loss of generality by cut-elimination.
	Semi-decidability of \MELL provability and decidability of \MELL provability without contractions are not new results; 
	another (and simpler) proof relies on the fact that in \MELL sequent calculus every inference rule other than contraction and cut, read bottom-up, decreases  the size of a sequent, for a suitable notion of size. 
	What is new is the analogy between the type inhabitation problem for cut-free \MELL proof-structures and the provability problem in \MELL.

  Summing up, (strict) contraction is the only reason that makes hard to prove whether type inhabitation for cut-free $\MELL$ proof-structures, as well as $\MELL$ provability, are decidable or not.
  This seems to suggest that \emph{correctness} (the fact that a $\MELL$ proof-structure is a proof of a $\MELL$ formula) does not play an essential role in the question if $\MELL$ provability is decidable or not. 
  However, for the time being, we are not able to reduce decidability of provability in \MELL to decidability of type inhabitation for cut-free \MELL proof-structures. 	
\end{remark}

%%% Local Variables:
%%% mode: latex
%%% TeX-master: "mainFinal"
%%% End:

% !TEX root = mainFinal.tex
\section{Decidability of the finite inverse Taylor expansion problem}
\label{sec:decidability}

Consider the decision problem below, called the \emph{finite inverse Taylor expansion problem}.

\begin{description}
	\item[Data] a \emph{finite} set $\Pi$ of $\DiLL_0$ proof-structures without $\maltese$-cells;
	\item[Question] does there exist a \MELL proof-structure $R$ such that $\Pi \subseteq \Taylor{R}$?
\end{description} 

By Theorem~\ref{thm:characterization}, 
a positive answer is equivalent to the \gluability of \(\Pi\), thus the question amounts to decide the \kl[glueable]{\gluability} of a finite set $\Pi$ of $\DiLL_0$ proof-structures without $\maltese$-cells.
In this section, we show that the problem is actually \emph{decidable}, using our rewrite rules.
The challenge here is to restrict the search space for rewritings to guarantee termination.

Recall that the action of $\BoxR_{i}$ is the only rewrite rule in $\DiLL_0$ that can introduce $\maltese$-cells (\Cref{fig:empty-box}).
We identify $\maltese$-cells that have the same number of outputs, with the same type.

\begin{remark}[$\maltese$-cell]
Let $\Sigma, \Pi$ be sets of $\DiLL_0$ quasi-proof-structures such that every element of $\Sigma$ is without $\maltese$-cells, $\Sigma \pnrewrite[\sched] \Pi$ and $\Pi$ is \gluable.
The presence of $\maltese$-cells in $\Pi$  witnesses the absence of information on the content of some box of some element of $\Sigma$. 
However, in order that $\Pi$ is \gluable, either this lack of information in some element of $\Pi$ is provided by another element of $\Pi$, or the information is plainly missing. 
Indeed, let $\Pi$ be a set of $\DiLL_0$ proof-structures, and $i$ be a conclusion of some $\rho \in\Pi$ and the output of a $\maltese$-cell of $\rho$; 
we have two cases (Definition \ref{def:daimon-nonDaimonSched} below relies on this distinction):
\begin{enumerate}
	\item either $i$ is the output of a $\ell$-cell of \emph{some} $\DiLL_0$ quasi-proof-structure $\pi \in\Pi$, with $\ell\neq\maltese$;
	\item or $i$ is the output of a $\maltese$-cell of \emph{every} $\DiLL_0$ quasi-proof-structure $\pi \in\Pi$; in this case the component of every $\pi \in\Pi$ having $i$ among its conclusions consists of the same $\maltese$-cell. 
\end{enumerate}
\end{remark}

Let $n>0$ and $\Gamma = (\Gamma_{\!1}; \dots; \Gamma_{\!n})$ be a list of lists of \MELL formulas. 
	We denote by $\maltese_\Gamma$ the $\DiLL_0$ quasi-proof-structure $(\rho_{\Gamma_{\!1}}, \dots, \rho_{\Gamma_{\!n}})$ of type $\Gamma$, where $\rho_{\Gamma_{\!i}}$ is the daimon of type $\Gamma_{\!i}$ if $\Gamma_{\!i} \neq \emptylist$, and $\rho_{\Gamma_{\!i}} = \emptynet$ if $\Gamma_{\!i} = \emptylist$.
	Lemma \ref{lemma:action-daimons} below states that every $\{\maltese_\Gamma\}$ is \gluable: 
	an example for $\Gamma = (X)$ with $X$ \kl{atomic} is in \Cref{fig:rewrite-with-cut} (see the daimon $\sigma$ there).
A set $\Pi$ of $\DiLL_0$ quasi-proof-structures is \intro[full]{$\maltese$-full} if $\Pi = \{\maltese_\Gamma\}$ for some $\Gamma$.
Note that any $\{\emptynets\}$ is $\maltese$-full.

An elementary scheduling $\Cut^i$ is \emph{with conclusion} if $\Gamma_{\!k} \neq \emptylist$ in its definition in \Cref{fig:elementary-schedulings}.

\begin{lemma}[\gluability of daimons]
	\label{lemma:action-daimons}
	Every $\maltese$-full set of $\DiLL_0$ quasi-proof-structures is \gluable.
	More precisely, let $\Gamma$ be a non-empty list of lists of \MELL formulas.
	Then,
		\begin{enumerate}
			\item\label{p:action-daimons-scheduling} there is a scheduling $\sched \colon \Gamma \to \emptylists$ such that every $\Cut^i$ in $\sched$ is with conclusion;
			\item\label{p:action-daimons-glueable} if $\sched \colon \Gamma \to \emptylists$ and every $\Cut^i$ in $\sched$ is with conclusion, then $\{\maltese_{\Gamma}\} \pnrewrite[\sched] \{\emptynets\}$.
		\end{enumerate}
\end{lemma}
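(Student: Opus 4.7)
The plan is to treat the two parts independently: Part 1 is proved by exhibiting a concrete scheduling, with the atomic formula case being the only genuinely delicate step (handled by the cut-trick already witnessed by \Cref{fig:rewrite-with-cut}), while Part 2 is an induction on the length of $\sched$ that tracks, step by step, how each elementary scheduling acts on a daimon-full set according to \Cref{fig:actions-daimon}.

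For Part 1, I would first use iterated applications of $\Mix_i$ to reduce $\Gamma$ to a list whose non-empty components are all singletons (this preliminary phase contains no $\Cut^i$). Then, by structural induction on a \MELL formula $A$, I would build a scheduling $\sigma_A \colon (A) \to n_A \emptylist$ (with $n_A \geqslant 1$) whose every $\Cut^i$ is with conclusion: the cases $A \in \{\One, \bot, \wn B\}$ are dispatched directly by $\One_1, \bot_1, \Weak_1$; the case $A = \oc B$ applies $\BoxR_1$ to reach $(B)$ and invokes the IH; the cases $A \in \{B \otimes C, B \parr C\}$ apply $\otimes_1$ (resp.~$\parr_1$) followed by $\Mix_1$ to split into $(B;C)$ and then compose the two IH schedulings; and for $A$ atomic I would reuse the concrete $8$-step scheduling $\Cut^2 \, \Contr_2 \, \Mix_2 \, \BoxR_4 \, \Der_2 \, \Der_3 \, \Ax_4 \, \Ax_2$ of \Cref{fig:rewrite-with-cut}, whose only $\Cut$ is $\Cut^2$ acting on the non-empty component $(X)$ (with the symmetric choice of dual pair for $X^\bot$). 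Concatenating these singleton-schedulings then yields a scheduling $\Gamma \to \emptylists$ as required; crucially, when processing the $j$-th singleton the preceding components are already $\emptylist$, contribute nothing to the flattening, and hence do not shift the indices of the elementary schedulings produced by the IH.

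For Part 2, I would proceed by induction on the length $|\sched|$. The base case $\sched$ empty forces $\Gamma = \emptylists$, whence $\maltese_\Gamma = \emptynets$ by definition. For the inductive step, write $\sched = a \sched'$ with $a \colon \Gamma \to \Gamma''$; the key sub-claim is that $\{\maltese_\Gamma\} \pnrewrite[a] \{\maltese_{\Gamma''}\}$ for every elementary scheduling $a$ distinct from $\Cut^i$, and for every $\Cut^i$ with conclusion. This is a case-by-case verification against \Cref{fig:actions-daimon}: \Cref{fig:exc-daimon,fig:mix-daimon,fig:tensor-daimon,fig:contraction-daimon,fig:dereliction-daimon,fig:box-daimon} show that $\Ex_i, \Mix_i, \otimes_i, \parr_i, \Contr_i, \Der_i, \BoxR_i$ transform daimon-component(s) into the expected new daimon-component(s); \Cref{fig:hypothesis-daimon} shows that $\Ax_i, \One_i, \bot_i, \Weak_i$ each collapse their (singleton or two-atom) daimon-component to $\emptynet$, exactly matching the new $\emptylist$ appearing in $\Gamma''$; and \Cref{fig:cut-daimon}, whose firing condition is precisely $\Gamma_k \neq \emptylist$, handles the with-conclusion $\Cut^i$. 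Composing with the IH applied to $\sched'$ (whose $\Cut$s are still with conclusion) then yields $\{\maltese_\Gamma\} \pnrewrite[\sched] \{\emptynets\}$.

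The main obstacle I anticipate is the atomic case of Part 1: unlike every other formula shape, no rule directly decreases the complexity of an atomic conclusion, so progress towards $\emptylists$ is impossible without introducing a cut, and the dual pair introduced (namely $\wn X^\bot$ and $\oc X$ for $X$, symmetric for $X^\bot$) must be carefully chosen so that the subsequent $\Contr, \Mix, \BoxR, \Der$ manipulations can be closed off by two atomic axiom-eliminations. For Part 2, the only subtlety is verifying that each daimon produced by an action in \Cref{fig:actions-daimon} still has at least one conclusion (as a daimon must, by the proof-structure definition); this is automatic everywhere except for $\Mix_i$ and $\Cut^i$ with conclusion, where it is ensured respectively by the presence of $\FlagType(i)$ and $\FlagType(i+1)$ on either side of the split, and by the hypothesis $\Gamma_k \neq \emptylist$.
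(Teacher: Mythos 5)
Your proof is correct. Part 2 is essentially the paper's own argument: induction on the length of $\sched$ via the one-step invariant $\{\maltese_\Gamma\} \pnrewrite[a] \{\maltese_{\Gamma''}\}$, checked case by case against \Cref{fig:actions-daimon}, with the with-conclusion hypothesis on $\Cut^i$ being exactly what keeps the source daimon of \Cref{fig:cut-daimon} from needing zero outputs. Part 1, however, takes a genuinely different route. The paper defines $\sched_A$ by applying the cut-trick $\schedtwo_A = \Cut^2\,\Contr_2\,\Mix_2\,\BoxR_4\,\Der_2\,\Der_3 \colon (A) \to (A, A^\bot; A^\bot, A)$ at every inductive step except the units, and then recurses on $B$, $C$, $B^\bot$, $C^\bot$ inside both components (eight recursive calls in the multiplicative case); you instead decompose compound formulas directly ($\otimes_i$ or $\parr_i$ followed by $\Mix$, then $\BoxR_1$ for $\oc B$, $\Weak_1$ for $\wn B$) and reserve the cut for the atomic case, where it is genuinely unavoidable since no cut-free elementary scheduling applies to a lone atom. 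Your construction yields much smaller schedulings (linear rather than exponential in the size of $A$) and isolates exactly where cuts are forced; the paper's uniform use of $\schedtwo_A$ buys a single inductive pattern at the price of redundancy. Both produce a formal arrow $\Gamma \to \emptylists$ whose every $\Cut^i$ is with conclusion, which is all Part 1 requires, and your preliminary $\Mix$ phase together with the observation that exhausted components contribute nothing to the flattening plays the same bookkeeping role as the paper's transport fact $\sched_A \colon (A, \Delta; \Gamma) \to (\Delta; \Gamma)$.
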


\begin{proof} 
	By Points \ref{p:action-daimons-scheduling} and \ref{p:action-daimons-glueable}, every $\maltese$-full set of $\DiLL_0$ quasi-proof-structures is \gluable.
	\begin{enumerate}		
		\item For every \MELL formula $A$, we define by induction on $A$ a scheduling $\sched_A \colon (A) \to \emptylists$ where every $\Cut^i$ in $\sched_A$ is with conclusion.
		We shall use the fact that for every \MELL formula $A$ and lists $\Delta, \Gamma$ of \MELL formulas, given a scheduling $\sched_A \colon (A) \to \emptylists$, one has $\sched_A \colon (A, \Delta; \Gamma) \to (\Delta; \Gamma)$ and $\sched_A \colon (\emptylist; A, \Gamma) \to (\emptylist; \Gamma)$.
		Let $\schedtwo_A = \Cut^2 \Contr_2 \Mix_{2} \BoxR_{4} \Der_2 \Der_3 \colon \allowbreak (A) \to (A, A^\perp; A^\perp, A)$ and note that $\Cut^2$ in $\schedtwo_A$ is with conclusion.
		\begin{itemize}
			\item If $A = X$ (and similarly if $A = X^\perp$), then $\sched_A = \schedtwo_A \Ax_{4} \Ax_{2}$.
			\item If $A = \One$ then $\sched_A = \One_1$; if $A = \bot$ then $\sched_A = \bot_1$.
			\item If $A = B \otimes C$ (similarly if $A = B \parr C$), $\sched_A = \schedtwo_A \otimes_4 \parr_3 \parr_2 \otimes_1 \sched_{B} \sched_{C} \sched_{B^\perp} \sched_{C^\perp} \sched_{B^\perp} \sched_{C^\perp} \sched_{B} \sched_{C}$.
			\item If $A = \oc B$ (and similarly if $A = \wn B$), then $\sched_{A} = \schedtwo_{A} \BoxR_4 \Der_3 \BoxR_1 \Der_2 \sched_{B} \sched_{B^\perp}  \sched_{B^\perp} \sched_{B}$.
		\end{itemize}
		Let $\Gamma = (A_1, \dots, A_{i_1}; \cdots; A_{i_{n\!-\!1}\!+\!1}, \dots, A_{m})$ be a non-empty list of lists of \MELL formulas. 
		In the scheduling $\sched_{A_1\!} \cdots \sched_{A_{i_1}\!} \cdots \sched_{A_{i_{n\!-\!1}\!+\!1}\!} \cdots \sched_{A_{m}} \colon \Gamma \to \emptylists$ every $\Cut^i$ is with conclusion.
		
		\item By induction on the length $k \geqslant 0$ of the scheduling $\sched = a_1 \!\cdots a_k$, where $a_j$ is an elementary scheduling for all $1 \leqslant j \leqslant k$. 
			It is crucial that every $a_j = \Cut^i$ is with conclusion: otherwise, for a daimon $\rho'$ there is no $\rho$ such that $\{\rho\} \pnrewrite[\!\Cut^i\!] \{\rho'\}$ (see \Cref{fig:cut-daimon}).
			\qedhere
	\end{enumerate}
\end{proof}

Therefore, by Lemma \ref{lemma:action-daimons}, to prove that a set $\Pi$ of $\DiLL_0$ proof-structures is \gluable, it suffices to show that  $\Pi$ rewrites to a \kl[full]{$\maltese$-full} set $\Pi'$.
A more subtle question is to prove that if $\Pi$ cannot rewrite to a \kl[full]{$\maltese$-full} set $\Pi'$, then $\Pi$ is not \gluable (Lemma \ref{lemma:CanonicalScheduling} below).

\begin{definition}[$\maltese$-/$\lnot\maltese$-actions, canonicity]
	\label{def:daimon-nonDaimonSched}
Let $\Pi$ be a set of $\DiLL_0$ quasi-proof-structures with the same conclusions.
Suppose that $\Pi\pnrewrite[a]\Pi'$ where $a \neq \Cut^{i}$ (resp.\ $a = \Cut^{i}$) is an elementary scheduling that 
acts on the conclusion $i$ of all elements of $\Pi$ (\resp $\Pi'$). If for every $\rho \in\Pi$ (resp.\ $\rho'\in\Pi'$) the conclusion $i$ of $\rho$ (\resp $\rho'$) is the output of a $\maltese$-cell, then 
we say that $\Pi \pnrewrite[a] \Pi'$ is a \intro[daimon action]{$\maltese$-action}. 
If $\Pi \pnrewrite[\sched]\Pi'$ and for every elementary scheduling $a$ such that $\Pi \pnrewrite[\sched]\Pi'=\Pi \pnrewrite[\sched_{1}]\Pi_{1}\pnrewrite[a]\Pi_{2}\pnrewrite[\sched_{2}]\Pi'$ 
one has that $\Pi_1 \pnrewrite[a] \Pi_2$ is (resp.\ is not) a $\maltese$-action, we say that $\Pi\pnrewrite[\sched]\Pi'$ is a \emph{$\maltese$-action} (\resp \intro[nondaimon action]{$\neg\maltese$-action}), written $\Pi\pnrewrite[\maltese]\Pi'$ (\resp $\Pi\pnrewrite[\!\neg\maltese]\Pi'$). 

If $\Pi\pnrewrite[\sched]\{\emptynets\} = \Pi\pnrewrite[\!\neg\maltese]\,\pnrewrite[\maltese]\{\emptynets\}$, 
then $\Pi\pnrewrite[\sched]\{\emptynets\}$ is \intro{canonical} and $\Pi$ is \intro[canonically glueable]{canonically \gluable}.
\end{definition}

So, given a set $\Pi_0$ of $\DiLL_0$ quasi-proof-structures and a $\maltese$-action $\Pi_0 \pnrewrite[\!a_1] \allowbreak\dots \pnrewrite[\!a_n] \Pi_n$, for all $1\leqslant j \leqslant n$ the elementary scheduling $a_j$ acts on a conclusion of a daimon in \emph{all} elements of $\Pi_{j-1}$.
	On the other hand, in a $\lnot\maltese$-action $\Pi_0 \pnrewrite[\!a_1] \allowbreak\dots \pnrewrite[\!a_n] \Pi_n$, for all $1\leqslant j \leqslant n$ the elementary scheduling $a_j$ acts on the output of a $\ell$-cell with $\ell \neq \maltese$ in \emph{some} element of~$\Pi_{j-1}$.

\begin{remark}[fullness]
	\label{rmk:full}
	Let $\Pi$ be a set of $\DiLL_0$ quasi-proof-structures, and $\sched$ be a scheduling.
	By straightforward induction on the length of $\sched$, we can prove the facts below. 
	\begin{itemize}
		\item If $\Pi$ is $\maltese$-full and $\Pi \pnrewrite[\sched] \Pi'$ for some set $\Pi'$, then $\Pi \pnrewrite[\sched] \Pi'$ is a $\maltese$-action and $\Pi'$ is $\maltese$-full.
		\item If $\Pi \pnrewrite[\sched] \{\emptynets\}$, then $\Pi \pnrewrite[\sched] \{\emptynets\}$ is a \kl[daimon action]{$\maltese$-action} if and only if $\Pi$ is $\maltese$-\kl{full}. 
		Therefore, $\Pi\pnrewrite[\sched]\{\emptynets\}$ is \kl{canonical} if and only if $\Pi\pnrewrite[\sched]\{\emptynets\} = \Pi \pnrewrite[\lnot\maltese] \Pi' \pnrewrite[\maltese] \{\emptynets\}$ for some \kl[full]{$\maltese$-full} $\Pi'$.
	\end{itemize}
\end{remark}

The following lemma relies on the fact that $\maltese$-actions can always be delayed.

\begin{lemma}[postponement of $\maltese$-actions]
	\label{lemma:CanonicalScheduling}
	Every \kl[glueable]{\gluable} set of $\DiLL_0$ quasi-proof-structures is \kl[canonically glueable]{canonically \gluable}.
\end{lemma}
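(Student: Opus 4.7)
The plan is to prove a \emph{postponement} principle: any occurrence of a $\maltese$-step immediately followed by a $\lnot\maltese$-step can be swapped, so all $\maltese$-steps can be pushed to the end of a witnessing scheduling. Concretely, I will first establish a local swap lemma of the form: if $\Pi \pnrewrite[a] \Sigma \pnrewrite[b] \Pi'$ where $a$ is a single-step $\maltese$-action (acting on a conclusion $i$ which is a $\maltese$-output in every element of $\Pi$) and $b$ is a single-step $\lnot\maltese$-action (acting on a conclusion $j$ which is a non-$\maltese$-output in at least one element of $\Sigma$), then there exist elementary schedulings $b', a'$ (equal to $b$ and $a$ up to a shift of conclusion indices that is forced by the size/component change induced by the swapped step) and a set $\Sigma'$ such that $\Pi \pnrewrite[b'] \Sigma' \pnrewrite[a'] \Pi'$, with $b'$ still a $\lnot\maltese$-step and $a'$ still a $\maltese$-step. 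The point is that $a$ only alters $\maltese$-cells, while the non-$\maltese$-cell acted on by $b$ in $\Sigma$ must already occur in the corresponding element of $\Pi$ (the actions in \Cref{fig:actions-daimon}, including \Cref{fig:empty-box,fig:polybox}, do not create non-$\maltese$ material); dually, $a$'s applicability in $\Pi$ is preserved after applying $b$ in every element, because every rewrite rule in \Cref{fig:actions-daimon} for a daimoned conclusion is defined for any conclusion type and any number of surrounding conclusions.

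The swap lemma is proved by case analysis on the pair of rules $(a,b)$. For the ``non-destructive'' pair of a $\maltese$-step and a $\lnot\maltese$-step at different conclusions, commutation is essentially routine, with the caveat that the conclusion re-indexing on one side of the swap differs from the other: one must track precisely how $\Ax_i, \One_i, \bot_i, \Weak_i$ decrease, and how $\Mix_i, \otimes_i, \parr_i, \Contr_i, \BoxR_i$ (applied to a daimon, \Cref{fig:tensor-daimon,fig:contraction-daimon,fig:box-daimon,fig:mix-daimon}) possibly increase, the number of conclusions/components. Given the swap lemma, I apply bubble-sort to the witnessing scheduling: if $\Pi \pnrewrite[\sched] \{\emptynets\}$ with $\sched = a_1\cdots a_n$ and $\Pi_0 \pnrewrite[a_1] \Pi_1 \pnrewrite[a_2] \cdots \pnrewrite[a_n] \Pi_n = \{\emptynets\}$, I tag each step $\Pi_{k-1} \pnrewrite[a_k] \Pi_k$ as $\maltese$ or $\lnot\maltese$ per Definition~\ref{def:daimon-nonDaimonSched}, and I iterate the swap on adjacent $\maltese$-then-$\lnot\maltese$ pairs. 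The strict decrease of the measure ``number of pairs $(i,j)$ with $i<j$, $a_i$ a $\maltese$-step and $a_j$ a $\lnot\maltese$-step'' guarantees termination. The resulting scheduling factorizes as $\sched_1 \sched_2$ with $\Pi \pnrewrite[\sched_1] \Pi_\bullet \pnrewrite[\sched_2] \{\emptynets\}$, $\sched_1$ entirely $\lnot\maltese$ and $\sched_2$ entirely $\maltese$; Remark~\ref{rmk:full} then ensures that $\Pi_\bullet$ is $\maltese$-full, so the rewriting is canonical by Definition~\ref{def:daimon-nonDaimonSched}.

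The main obstacle will be the swap lemma itself, and within it the interaction with $\BoxR_i$. In its non-empty form (\Cref{fig:polybox}) this rule relates a single element to a whole set $\{\rho'_1,\dots,\rho'_m\}$, so swapping $\BoxR_j$ past a $\maltese$-step $a$ requires checking that the multi-valued image behaves uniformly: if some element of $\Pi$ contains a proper box at $j$ while others contain only a daimon or a co-weakening structure, the three variants of $\BoxR_j$ in \Cref{fig:empty-box,fig:polybox,fig:box-daimon} must jointly produce the same $\Sigma'$ as the compositional order on the other side. A parallel care is needed for $\Mix_i$, which may split a daimon component across indices $i$ and $i+1$ and thus alter the component structure before $b$ is applied; one verifies by inspecting \Cref{fig:actions-all,fig:actions-daimon} that in every sub-case the two squares of actions close up, up to the expected index shift.
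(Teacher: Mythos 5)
Your proposal matches the paper's own proof: the paper establishes exactly the same local swap fact (a $\maltese$-step immediately followed by a non-$\maltese$-step can be replaced by a non-$\maltese$-step followed by a $\maltese$-step, with the two elementary schedulings ``essentially'' exchanged up to re-indexing of conclusions) and concludes by iterating that swap. Your additional detail --- the inversion-counting termination measure, the observation that a $\maltese$-action only modifies $\maltese$-cells so the non-$\maltese$ material acted on later must already be present, and the careful cases for $\BoxR_i$ and $\Mix_i$ --- just fills in what the paper leaves implicit.
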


\begin{proof}
Immediate consequence of the following fact: if $\Pi_{1}, \Pi_2, \Pi_3$ are sets of $\DiLL_0$ quasi-proof-structures such that $\Pi_{1}\pnrewrite[\!a_1]\Pi_{2}$ is a $\maltese$-action and $\Pi_{2}\pnrewrite[\!a_2]\Pi_{3}$ is not a $\maltese$-action for some elementary schedulings $a_1, a_2$, then there are elementary schedulings $c_1,c_2$ and a set $\Sigma$ of $\DiLL_0$ quasi-proof-structures such that $\Pi_{1}\pnrewrite[\!c_1] \Sigma$ is not a $\maltese$-action and $\Sigma \pnrewrite[\!c_2]\Pi_{3}$ is a $\maltese$-action. 
Notice that $c_1$ (\resp $c_2$) is ``essentially'' the same as $a_2$ (\resp $a_1$).
\end{proof}

We have defined nullary $\Contr$-splitting actions on \MELL quasi-proof-structures in \Cref{sect:rules}. This notion can be easily extended to $\DiLL_0$.
Let $\Pi$ be a set of $\DiLL_0$ quasi-proof-structures. 
	Given an elementary scheduling $a$, $\Pi \pnrewrite[a] \Pi'$ is \intro[dill0 nullary splitting]{nullary $\Contr$-splitting} if $a = 
	\Contr_i$ and, for every $\rho \in \Pi$, it splits the $\wn$-\kl{cell} of $\rho$ with output $i$ and $h_\rho \geqslant 0$ inputs into two $\wn$-cells, one $\wn$-cell with output $i$ (\resp $i+1$) and $h_\rho$ inputs, the other $\wn$-cell with output $i+1$ (\resp $i$) and $0$ inputs.
Given a scheduling $\sched$ such that $\Pi\pnrewrite[\sched]\Pi'$, \intro[number of nullary splittings]{$\Contr^{0}(\Pi \pnrewrite[\sched] \Pi')$} is the number of nullary $\Contr$-splitting actions in $\Pi\pnrewrite[\sched]\Pi'$, that is, the number of elementary schedulings $a = \Contr_i$ such that $\Pi \pnrewrite[\sched] \Pi' = \Pi \pnrewrite[\sched_{1}] \Pi_{1} \pnrewrite[a] \Pi_{2} \pnrewrite[\sched_{2}] \Pi'$ and $\Pi_{1} \pnrewrite[a]\Pi_{2}$ is nullary $\Contr$-splitting. 
	We say that:
	\begin{itemize}
		\item $\Pi\pnrewrite[\sched]\Pi'$ is \emph{without nullary $\Contr$-splittings} if $\Contr^{0}(\Pi \pnrewrite[\sched] \Pi') = 0$; \emph{has nullary $\Contr$-splittings} otherwise;
		\item $\Pi$ is \emph{canonically \gluable without nullary $\Contr$-splittings} if there is a scheduling $\sched$ such that $\Pi \pnrewrite[\sched] \{\emptynets\}$ is canonical and without nullary $\Contr$-splittings.
	\end{itemize}

For instance, in Example \ref{ex:church}, $\{\pi\} \pnrewrite[\Contr_2] \{\pi'\}$ is nullary $\Contr$-splitting.
And $\{\sigma_1, \sigma_2\} \pnrewrite[\Contr_2] \Sigma$ is nullary $\Contr$-splitting if $\Sigma = \{\sigma_1', \sigma_2'\}$, but it is not nullary $\Contr$-splitting if $\Sigma = \{\sigma_1', \sigma_2''\}$, where $\sigma_1, \sigma_2, \sigma_1', \sigma_2', \sigma_2''$ are the $\DiLL_0$ proof-structures below.
\begin{center}
	$\sigma_1
	\qquad\qquad\qquad\quad
	\sigma_2
	\qquad\qquad\qquad\quad\ \
	\sigma_1'
	\qquad\qquad\qquad\quad\
	\sigma_2'
	\qquad\qquad\qquad\quad
	\sigma_2''$
	
	\vspace{\beforepn}
		\scalebox{0.8}{
		\pnet{
			\pnformulae{~\pnf[na]{$X$}~\pnf[ba]{$X^{\bot}$}
				\\
				\pnf[ba']{$\oc X^{\!\bot}\,$}
			}
			\pncown{ba'}
			\pnaxiom{na,ba}
			\pnbag{}{ba}[bba]{$\,\oc X^{\bot}$}
			\pnexp{}{na}{$\wn X $}[1]
		}
	}
	\qquad
	\scalebox{0.8}{
		\pnet{
			\pnformulae{
				\pnf[ba']{$X^{\!\bot}\,$}~\pnf[na]{$X$}
				\\
				~~\pnf[ba]{$\oc X^{\bot}$}
			}
			\pncown{ba}
			\pnaxiom{na,ba'}
			\pnbag{}{ba'}[bba']{$\,\oc X^{\bot}$}
			\pnexp{}{na}{$\wn X $}[1]
		}
	}
	\qquad
	\scalebox{0.8}{
		\pnet{
			\pnformulae{~~\pnf[na]{$X$}~\pnf[ba]{$X^{\bot}$}
				\\
				\pnf[ba']{$\oc X^{\!\bot}$}~\pnf[na']{$\wn X$}
			}
			\pnwn{na'}
			\pncown{ba'}
			\pnaxiom{na,ba}
			\pnbag{}{ba}[bba]{$\,\oc X^{\bot}$}
			\pnexp{}{na}{$\wn X $}[1]
		}
	}
	\qquad
	\scalebox{0.8}{
		\pnet{
			\pnformulae{
				\pnf[ba']{$X^{\!\bot}$}~~\pnf[na']{$X$}
				\\
				~\pnf[na]{$\wn X$}~~\pnf[ba]{$\oc X^{\bot}$}
			}
			\pnwn{na}
			\pncown{ba}
			\pnaxiom{na',ba'}
			\pnbag{}{ba'}[bba']{$\,\oc X^{\bot}$}
			\pnexp{}{na'}{$\wn X $}[1]
		}
	}
	\qquad
	\scalebox{0.8}{
		\pnet{
			\pnformulae{
				\pnf[ba']{$X^{\!\bot}$}~\pnf[na']{$X$}
				\\
				~~\pnf[na]{$\wn X$}~\pnf[ba]{$\oc X^{\bot}$}
			}
			\pnwn{na}
			\pncown{ba}
			\pnaxiom{na',ba'}
			\pnbag{}{ba'}[bba']{$\,\oc X^{\bot}$}
			\pnexp{}{na'}{$\wn X $}[1]
		}
	}
\end{center}

\begin{remark}[need for nullary splitting]
	\label{rmk:nullary-splitting}
	For $\pi_h$ ($h > 0$) as in \Cref{note:church} on p.~\pageref{note:church}, one has $\{\pi_1,\pi_2\} \allowbreak \pnrewrite[\sched] \allowbreak\{\emptynets\}$ only with schedulings $\sched$ whose action on $\{\pi_1\}$ has nullary $\Contr$-splittings. 
	So, nullary $\Contr$-splittings are somehow ``necessary'' in $\DiLL_0$, differently from \MELL where they are superfluous for normalization (Proposition \ref{prop:unwinding-to-empty}).
	To make diagram \eqref{eq:naturality-square} commute for naturality (\Cref{thm:projection-natural}), nullary $\Contr$-splitting actions in $\DiLL_0$ must be mimicked by nullary $\Contr$-splitting actions on \MELL, hence the action of $\Contr_{i}$ in \MELL must allow nullary $\Contr$-splittings. 
\end{remark}

Akin to \MELL, nullary $\Contr$-splitting actions in $\DiLL_0$ are superfluous as means of destruction.
They may be needed for \emph{some} elements in a set $\Pi$ of $\DiLL_0$ quasi-proof-structures, but not for \emph{all} of them, thus nullary $\Contr$-splitting actions are useless for the set $\Pi$ as a whole.

\begin{lemma}[splitting in $\DiLL_0$]
	\label{lemma:NoContrCanonicalScheduling}
	Every \kl[canonically glueable]{canonically \gluable} set $\Pi$ of $\DiLL_0$ quasi-proof-structures is
	canonically \gluable \kl[dill0 nullary splitting]{without nullary $\Contr$-splittings}.
\end{lemma}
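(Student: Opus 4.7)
The approach is to decompose a canonical scheduling of $\Pi$ into its $\neg\maltese$ and $\maltese$ parts, handle each separately, and lift the difficult part to \MELL, where Proposition~\ref{prop:unwinding-to-empty} already rules out nullary $\Contr$-splittings.

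I would start by assuming $\Pi \neq \emptyset$ (the empty case is trivial) and fixing a canonical scheduling $\sched \colon \Gamma \to \emptylists$ with $\Pi \pnrewrite[\sched] \{\emptynets\}$. By canonicity (\Cref{def:daimon-nonDaimonSched}) and Remark~\ref{rmk:full}, decompose $\sched = \sched_1 \sched_2$ with $\Pi \pnrewrite[\sched_1] \{\maltese_\Gamma\}$ a $\neg\maltese$-action and $\{\maltese_\Gamma\} \pnrewrite[\sched_2] \{\emptynets\}$ a $\maltese$-action. Throughout $\sched_2$ the current set stays $\maltese$-full, so every $\Contr_i$ in $\sched_2$ is a daimoned contraction (\Cref{fig:contraction-daimon}); such actions do not split any $\wn$-cell and hence cannot be nullary $\Contr$-splittings by definition. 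All the responsibility therefore lies in redoing $\sched_1$ without nullary $\Contr$-splittings while still reaching $\{\maltese_\Gamma\}$ at the end (or, equivalently, $\{\emptynets\}$ when composed with a fixed $\maltese$-tail supplied by Lemma~\ref{lemma:action-daimons}).

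For $\sched_1$ I would pass through \MELL via naturality. By the backward square \eqref{eq:naturality-square} of \Cref{thm:projection-natural} applied to the whole composite $\sched = \sched_1\sched_2$, there is a \MELL quasi-proof-structure $R$ with $\Pi \subseteq \FatTaylor{R}$ and $R \pnrewrite[\sched] \emptynets$. Proposition~\ref{prop:unwinding-to-empty} then produces another scheduling $\sigma$ with $R \pnrewrite[\sigma] \emptynets$ and no nullary $\Contr$-splittings on the \MELL side. Pushing $\sigma$ forward through the square \eqref{eq:naturality-square-bis} yields $\Pi \pnrewrite[\sigma] \Pi'$ with $\Pi' \subseteq \FatTaylor{\emptynets} = \{\emptynets\}$ (Example~\ref{ex:filled-taylor-empty}), and hence $\Pi' = \{\emptynets\}$ since $\Pi \neq \emptyset$. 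Finally, Lemma~\ref{lemma:CanonicalScheduling} recasts the resulting scheduling into canonical form; since that lemma only swaps adjacent $\maltese$/$\neg\maltese$ actions without altering their nature, it does not reintroduce nullary $\Contr$-splittings.

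The hard part will be verifying that the $\DiLL_0$ lift of a non-nullary $\Contr_i$ step on \MELL really is non-nullary as a set-action. At such a step, $R$'s $\wn$-cell has $h \geq 2$ inputs $e_1,\dots,e_h$ split into arities $k_1,k_2 \geq 1$; by the pullback description of the Taylor expansion (\Cref{def:proto}), the inputs of each $\rho$'s $\wn$-cell at the attacked conclusion are copies of the $e_j$'s, and the nondeterminism of the $\DiLL_0$ contraction rule lets us choose the split of each $\rho$ independently. The set-action fails to be nullary as soon as at least one current $\rho$ has two inputs straddling both sides of the partition. I expect this to follow from a minimality argument on $R$ — if some $e_j$ were systematically copied zero times across every current element of $\FatTaylor{R}$, one could replace $R$ by a strictly smaller witness still satisfying $\Pi \subseteq \FatTaylor{R}$ — combined with local rearrangements using $\Der_i$, $\Weak_i$, or $\Mix_j$ in the spirit of Lemma~\ref{lemma:contraction}, so as to bypass the degenerate configurations in which all current elements have $\wn$-arity $\leq 1$ at the position under attack.
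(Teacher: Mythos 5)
Your overall strategy---lift to \MELL{} via naturality, invoke Proposition~\ref{prop:unwinding-to-empty} to obtain a \MELL{} scheduling without nullary $\Contr$-splittings, and push it forward to $\DiLL_0$---is genuinely different from the paper's, and it breaks at exactly the step you flag as ``the hard part''. The forward simulation of a \emph{non}-nullary \MELL{} contraction can perfectly well be a nullary $\Contr$-splitting as a set-action in $\DiLL_0$, and no clever choice of splits can repair this: if the $\wn$-cell of $R$ with output $i$ has two inputs $e_1,e_2$ where $e_2$ crosses the border of a box $B$, and every $\rho\in\Pi$ takes $0$ copies of $B$, then every $\rho\in\Pi$ has at most one input at that $\wn$-cell, so \emph{every} possible application of $\Contr_i$ to $\Pi$ is nullary $\Contr$-splitting by definition (a $\wn$-cell with $h_\rho\leqslant 1$ inputs can only be split into $h_\rho$ inputs on one side and $0$ on the other). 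The only escape is not to perform $\Contr_i$ there at all, i.e.\ to restructure the scheduling; your minimality-of-$R$ sketch does not deliver this. The witness $R$ is forced by co-functionality once the scheduling is fixed, ``shrinking'' $R$ means deleting boxes or auxiliary doors, which changes the type and invalidates the rest of the rewriting, and you provide no measure ensuring the modification process terminates.

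The paper instead argues directly in $\DiLL_0$, by induction on the number $\Contr^{0}$ of nullary $\Contr$-splittings occurring in a $\lnot\maltese$-action reaching a $\maltese$-full set: the $\wn$-cell with $0$ inputs created by a nullary splitting must later disappear, either via a $\Weak_j$ step (in which case both the splitting and the weakening are simply deleted) or via the empty-box rule for $\BoxR_j$ (in which case the splitting is deleted and the $\BoxR$ step is retargeted to a component with one fewer conclusion); each such surgery strictly decreases $\Contr^{0}$. Your opening reduction---splitting the canonical scheduling into a $\lnot\maltese$-prefix and a $\maltese$-tail, and observing that daimoned contractions are never nullary splittings---does match the paper's first step (via Remark~\ref{rmk:full} and Lemma~\ref{lemma:action-daimons}); but the core of the argument has to be this combinatorial surgery on the $\DiLL_0$ rewriting itself, not a transfer from \MELL.
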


\begin{proof}
	By Remark \ref{rmk:full}, it suffices to prove that for every canonically \gluable set $\Pi$ of $\DiLL_0$ quasi-proof-structures, there is a scheduling $\sched$ such that $\Pi\pnrewrite[\sched]\Pi'$ is a $\lnot\maltese$-action without $\Contr$-splittings, for some \kl[full]{$\maltese$-full} $\Pi'$.
	Indeed, by Lemma \ref{lemma:action-daimons}, there is a scheduling $\schedtwo$ such that $\Pi' \pnrewrite[\schedtwo] \{\emptynets\}$, it is a \kl[daimon action]{$\maltese$-action} by Remark \ref{rmk:full} and is clearly without nullary $\Contr$-splittings.
	
	Given a $\lnot\maltese$-action $\Pi\pnrewrite[\sched]\Pi'$ where $\Pi'$ is $\maltese$-full, we prove by induction on $\kl[number of nullary splittings]{\Contr^{0}(\Pi \pnrewrite[\sched] \Pi')} \in \Nat$ that it is possible to build a $\neg\maltese$-action $\Pi\pnrewrite[\sched']\Pi''$ without nullary $\Contr^{0}$-splittings for some $\maltese$-full $\Pi''$. 
	Clearly, it is enough to show that if $\kl[number of nullary splittings]{\Contr^{0}(\Pi \pnrewrite[\sched] \Pi')} > 0$ then we can find a $\maltese$-full $\Pi''$ and a $\neg\maltese$-action $\Pi\pnrewrite[\sched']\Pi''$ such that $\Contr^{0}(\Pi \pnrewrite[\sched] \Pi') > \Contr^{0}(\Pi \pnrewrite[\sched'] \Pi'')$.
	
	Let $\Pi \pnrewrite[\sched]\Pi' = \Pi \pnrewrite[\sched_{1}] \Pi_{1} \pnrewrite[a]\Pi_{2} \pnrewrite[\sched_{2}]\Pi'$ be a $\neg\maltese$-action, where $\Pi'$ is $\maltese$-full, $a = \Contr_{k}$ and $\Pi_{1} \pnrewrite[a]\Pi_{2}$ is nullary $\Contr$-splitting. 
	Since the $\wn$-\kl{cells} with $0$ inputs of $\Pi_{2}$ created by $\Pi_1 \pnrewrite[a] \Pi_2$ must disappear in $\Pi'$ (which contains only $\maltese$-cells), there are only two cases for $\Pi_{2} \pnrewrite[\sched_{2}] \Pi'$:
	\begin{enumerate}
		\item\label{item:1NoContrCanonicalScheduling}
		either $\Pi_{2}\pnrewrite[\sched_{2}]\Pi' = \Pi_{2} \pnrewrite[\sched_{21}] \Pi_{3} \pnrewrite[\Weak_i] \Pi_{4} \pnrewrite[\sched_{22}] \Pi'\!$, where $\Pi_{3} \pnrewrite[\Weak_i] \Pi_{4}$ erases the $\wn$-cells without inputs and their unique output (a conclusion of each element of $\Pi_3$) created by $\Pi_{1} \pnrewrite[a]\Pi_{2}$;
		
		\item\label{item:2NoContrCanonicalScheduling}
		or $\Pi_{2} \pnrewrite[\sched_{2}] \Pi' = \Pi_{2} \pnrewrite[\sched_{21}] \Pi_{3} \pnrewrite[\!\BoxR_{i}]\Pi_{4} \pnrewrite[\sched_{22}]\Pi'$, where the component of every element of $\Pi_{3}$ containing the $\wn$-cell without inputs created by $\Pi_{1} \pnrewrite[a]\Pi_{2}$ is transformed by $\Pi_{3} \pnrewrite[\!\BoxR_{i}]\Pi_{4}$ into a $\maltese$-cell, according to the rewrite rule in \Cref{fig:empty-box} applied to all the elements of $\Pi_{3}$.
	\end{enumerate}
	
	In Case~\ref{item:1NoContrCanonicalScheduling}, just erase the actions $\pnrewrite[a]$ and $\pnrewrite[\Weak_i]$ from $\Pi \pnrewrite[\sched] \Pi'$: let $\Pi\pnrewrite[\sched'] \Pi' = \Pi \pnrewrite[\sched_{1}] \allowbreak\Pi_{1} \allowbreak\pnrewrite[\schedtwo_{21}] \allowbreak \Pi_{4} \allowbreak\pnrewrite[\sched_{22}] \Pi'$, where $\Pi_{1} \pnrewrite[\schedtwo_{21}] \Pi_{4}$ is (essentially) the same as $\Pi_2 \pnrewrite[\sched_{21}] \Pi_3$ except that the name of the conclusions on which the elementary actions apply have changed (since the action $\Pi_1 \pnrewrite[a] \Pi_2$ is no more present). 
	Clearly, $\Pi \pnrewrite[\sched'] \Pi'$ is a $\neg\maltese$-action and  $\Contr^{0}(\Pi \pnrewrite[\sched] \Pi') > \Contr^{0}(\Pi \pnrewrite[\sched'] \Pi')$.
	
	In Case~\ref{item:2NoContrCanonicalScheduling}, erase the action $\pnrewrite[a]$ and slightly modify the action $\pnrewrite[\BoxR_{i}]$: indeed, let $\Pi\pnrewrite[\sched']\Pi'' = \Pi \pnrewrite[\sched_{1}] \Pi_{1} \pnrewrite[\schedtwo_{21}] \Pi'_{3} \pnrewrite[\BoxR_{i\!-\!1}] \Pi'_{4} \pnrewrite[\schedtwo_{22}] \Pi''$, where 
	\begin{itemize}
		\item
		$\Pi_{1} \pnrewrite[\schedtwo_{21}] \Pi'_{3}$ is the same as $\Pi_2 \pnrewrite[\sched_{21}] \Pi_3$ except that the name of the conclusions on which the elementary actions apply have changed (since the action $\Pi_1 \pnrewrite[a] \Pi_2$ is no more present);
		
		\item	the set $\Pi'_{3}$ is $\Pi_{3}$ with one conclusion and one $\wn$-cell with $0$ inputs less in each of its elements;
		
		\item	the action $\Pi'_{3} \pnrewrite[\BoxR_{i\!-\!1}] \Pi'_{4}$ applies the rewrite rule in Figure~\ref{fig:empty-box} to the conclusion $i-1$ of every element in $\Pi'_{3}$, which corresponds to the conclusion $i$ of every element in $\Pi_{3}$;
		
		\item	the $\maltese$-cell of each element in $\Pi'_{4}$ created by $\Pi'_{3} \pnrewrite[\BoxR_{i\!-\!1}] \Pi'_{4}$ has one output less than the $\maltese$-cell of each element in $\Pi_{4}$ created by $\Pi_{3} \pnrewrite[\BoxR_{i}] \Pi_{4}$;
		the action $\Pi'_{4} \pnrewrite[\schedtwo_{22}] \Pi''$ and the set $\Pi''$ are obtained from $\Pi_{4} \pnrewrite[\sched_{22}] \Pi'$ and $\Pi'$ accordingly (in particular $\Pi''$ is $\maltese$-full).
	\end{itemize}
	Clearly, even in Case~\ref{item:2NoContrCanonicalScheduling}, $\Pi \pnrewrite[\sched'] \Pi''$ is a $\neg\maltese$-action and  $\Contr^{0}(\Pi \pnrewrite[\sched] \Pi') > \Contr^{0}(\Pi \pnrewrite[\sched'] \Pi'')$.
\end{proof}

By Lemmas \ref{lemma:CanonicalScheduling} and \ref{lemma:NoContrCanonicalScheduling}, a set $\Pi$ of $\DiLL_0$ quasi-proof-structures is \gluable if and only if it is canonically \gluable without nullary $\Contr$-splittings.
This restricts the search space for rewritings to decide if $\Pi$ is \gluable, making \gluability a  \emph{decidable} problem in the finite case.

\begin{proposition}[decidability of finite \gluability]
	\label{prop:decisionFiniteDai}
The problem below is decidable.

\begin{description}
	\item[Data] a finite set $\Pi$ of $\DiLL_0$ quasi-proof-structures;
	\item[Question] is $\Pi$ \gluable?
\end{description}
\end{proposition}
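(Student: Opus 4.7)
The plan is to reduce the \gluability of a finite $\Pi$ to a terminating search. By combining Lemma~\ref{lemma:CanonicalScheduling}, Lemma~\ref{lemma:NoContrCanonicalScheduling} and Remark~\ref{rmk:full}, $\Pi$ is \gluable if and only if there exists a scheduling $\sched$ with $\Pi \pnrewrite[\sched] \Pi'$ being a $\neg\maltese$-action without nullary $\Contr$-splittings and $\Pi'$ being \kl[full]{$\maltese$-full}; by Lemma~\ref{lemma:action-daimons}, the residual rewriting from any such $\Pi'$ to $\{\emptynets\}$ is then guaranteed. The algorithm will therefore enumerate all $\neg\maltese$-rewriting sequences from $\Pi$ that avoid nullary $\Contr$-splittings and accept as soon as a $\maltese$-full set is reached, rejecting if the (finite) search tree is exhausted; recognizing a $\maltese$-full set is immediate from the syntax.

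To prove termination of this search, I shall introduce a well-founded measure that strictly decreases along every admissible elementary step except $\Ex_i$. For a $\DiLL_0$ quasi-proof-structure $\rho$, set $\Size{\rho} = (p_\rho, q_\rho, -c_\rho)$, where $p_\rho$ is the multiset of input counts of the $\wn$-cells of $\rho$, $q_\rho$ is the number of non-$\maltese$ cells of $\rho$, and $c_\rho$ is the number of components of $\rho$, ordered lexicographically (with the multiset order on $p_\rho$). Let $\Size{\Pi}$ be the Dershowitz--Manna multiset of the $\Size{\rho}$ for $\rho \in \Pi$. A case analysis on the elementary scheduling $a$ in a $\neg\maltese$-step $\Pi \pnrewrite[a] \Pi'$ will then show strict decrease of $\Size{\Pi}$: on any $\rho$ having a $\maltese$-cell at the active conclusion, $a$ merely changes the output count of that $\maltese$-cell or, for $\Mix_i$, splits the $\maltese$-component into two, leaving $p_\rho$ and $q_\rho$ invariant and never increasing $\Size{\rho}$; on at least one $\rho$ having a non-$\maltese$ cell at the active conclusion (which exists by the $\neg\maltese$-hypothesis), $\Size{\rho}$ strictly decreases, mimicking the argument of Proposition~\ref{prop:unwinding-to-empty} ($\Mix_i$ increases $c_\rho$, non-nullary $\Contr_i$ decreases $p_\rho$, the other erasing rules decrease $q_\rho$); and in the non-empty case of $\BoxR_i$ (\Cref{fig:polybox}) one $\rho$ is replaced by finitely many elements each of strictly smaller triple, which is a strict multiset decrease. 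Coupling this strict decrease with the finite branching at each step (finitely many elementary schedulings, finitely many non-nullary $\Contr_i$-splittings, finitely many $\BoxR_i$-partitions) and the bounded treatment of consecutive $\Ex_i$-blocks (footnote~\ref{note:exc-schedule}) will yield a finite search tree.

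The main obstacle will be pinning down a measure simultaneously insensitive to the growth of $\maltese$-cells (inevitable when a rule acts on a daimoned component but carrying no real information) and strictly sensitive to every real destructuring of a non-$\maltese$ cell. The triple $(p_\rho, q_\rho, -c_\rho)$ should achieve this because $\maltese$-cells contribute neither to $p_\rho$ nor to $q_\rho$, and daimon-side actions create no new components nor new $\wn$-cells; the only delicate case left to handle is the $\BoxR_i$ rule of \Cref{fig:polybox}, where a single element of $\Pi$ is replaced by several smaller ones, but this is precisely the situation for which the Dershowitz--Manna multiset ordering is designed.
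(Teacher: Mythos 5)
Your overall strategy is the paper's: reduce \gluability, via Lemma~\ref{lemma:CanonicalScheduling}, Lemma~\ref{lemma:NoContrCanonicalScheduling}, Remark~\ref{rmk:full} and Lemma~\ref{lemma:action-daimons}, to a search for a $\lnot\maltese$-rewriting without nullary $\Contr$-splittings reaching a $\maltese$-full set, and establish termination by a measure that strictly decreases on every admissible non-$\Ex_i$ step. The reduction and the finite-branching analysis are fine. The gap is in the measure. Your $p_\rho$ counts \emph{all} $\wn$-cells, including those with $0$ inputs, and your $q_\rho$ counts all non-$\maltese$ cells, again including $0$-input $\wn$-cells; and you aggregate the per-element triples by the Dershowitz--Manna multiset order over $\Pi$. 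But a $\Contr_i$ step that is \emph{not} a nullary $\Contr$-splitting at the level of the set may still attach a fresh $0$-input $\wn$-cell to some (even every) individual element: the definition only excludes the case where \emph{every} element is split with a $0$-input part on a \emph{common} side. The paper's own example just before Remark~\ref{rmk:nullary-splitting} exhibits exactly this: $\{\sigma_1,\sigma_2\}\pnrewrite[\Contr_2]\{\sigma_1',\sigma_2''\}$ is declared \emph{not} nullary $\Contr$-splitting (hence admissible in the restricted search, and it is not a $\maltese$-action), yet both $\sigma_1'$ and $\sigma_2''$ gain a $\wn$-cell with $0$ inputs, so both of your triples strictly \emph{increase} ($p_\rho$ gains a $0$, $q_\rho$ gains a cell) and your multiset measure strictly increases on an admissible step. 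The same failure occurs whenever one element is split non-trivially and another nullarily. The paper's size is built precisely to dodge this: $p$ counts only $\wn$-cells with at least one input, $q$ excludes $0$-input $\wn$-cells, and both are aggregated \emph{globally} over $\Pi$ rather than element by element, so the one usefully split cell makes the global $p$ decrease while the nullarily split ones leave it unchanged; the extra components $r$ and $s$ handle the residual cases. Your termination proof, as stated, does not go through.

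A second, smaller defect: the component $-c_\rho$ is not well-founded (the negated naturals have no least element), so the lexicographic triple is not a termination measure as written. You would need to add that, for fixed $p_\rho$ and $q_\rho$, the number of components is bounded (by the number of conclusions plus the number of already-empty components), or better, replace $-c_\rho$ by the multiset of the numbers of conclusions of the components --- the paper's $s$, already used in Proposition~\ref{prop:unwinding-to-empty} --- which is well-founded outright and strictly decreases under $\Mix_i$. Both defects are repairable by adopting the paper's global quadruple $(p,q,r,s)$, but with your measure as given the search tree is not provably finite.
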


\begin{proof}
	We first define some notions that will be used in the proof.
	Akin to the size of a \MELL quasi-proof-structure used in the proof of Proposition~\ref{prop:unwinding-to-empty}, the \emph{size} $\Size{\Pi}$ of a finite set $\Pi$ of $\DiLL_0$ quasi-proof-structures is the quadruple $(p,q,r,s)$~where:	
	\begin{itemize}
		\item $p$ is the finite multiset of the number of inputs of each $\wn$-cell with at least one input in each element of $\Pi$ (hence, $p$ is a finite multiset of positive integers);
		\item $q$ is the number of cells in $\Pi$ different from $\wn$-cells without inputs and from $\maltese$-cells;
		\item $r$ is the finite multiset $[k_1, \dots, k_n]$ where $n$ is the number of conclusions in each element of $\Pi$ and, for all $1 \leqslant i \leqslant n$, $k_i$ is the number of $\rho \in \Pi$ whose conclusion $i$ is the output of a $\wn$-cell with at least one input;
		\item $s$ is the finite multiset of the number of \kl{conclusions} of each \kl{component} of each element~of~$\Pi$.
	\end{itemize}
	Finite multisets are well-ordered as usual, quadruples are well-ordered lexicographically.
	
	We say that a non-$\maltese$-full set $\Pi$ of $\DiLL_0$ quasi-proof-structures \emph{safely rewrites to} $\Pi'$ if $\Pi \pnrewrite[\sched] \Sigma \pnrewrite[a] \Pi'$ for some set $\Sigma$, some scheduling $\sched$ only made of an arbitrary number (possibly $0$) of $\Ex_{i}$ such that $\Pi \pnrewrite[\sched] \Sigma$ is a $\lnot\maltese$-action, and some elementary scheduling $a \neq \Ex_{i}$ such that $\Sigma \pnrewrite[a] \Pi'$ is neither a $\maltese$-action nor \kl[dill0 nullary splitting]{nullary $\Contr$-splitting}.

	For every finite set $\Pi$ of $\DiLL_0$ quasi-proof-structures, exactly one of the following holds:
	\begin{enumerate}
		\item\label{item:One} either $\Pi$ is $\maltese$-full (this includes the case $\Pi = \{\emptynets\}$) and then it is \gluable by Lemma \ref{lemma:action-daimons};
		
		\item\label{item:Two} or $\Pi$ is not $\maltese$-full, and every scheduling $\sched$ such that $\Pi \pnrewrite[\sched] \Pi'$ is a $\lnot\maltese$-action for some $\Pi'$ is only made of an arbitrary number\,---\,possibly $0$\,---\,of $\Ex_{i}$ (this includes the case when no elementary scheduling applies to $\Pi$, in particular when two elements of $\Pi$ have different types);
		then, by definition, $\Pi$ is \gluable if $\Pi = \emptyset$, and it is not \gluable otherwise;
		
		\item\label{item:Three}
		or $\Pi$ is not $\maltese$-full, and there are $h > 0$ sets $\Pi_{1}, \dots, \Pi_h$ such that $\Pi$ safely rewrites to $\Pi_j$ for all $1 \leqslant j \leqslant h$, and if $\Pi$ safely rewrites to $\Pi'$ then $\Pi' = \Pi_{j}$ for some $1 \leqslant j \leqslant h$; 
		then, $\Pi$ is canonically \gluable without nullary $\Contr$-splittings (which is equivalent to be \gluable, by Lemmas \ref{lemma:CanonicalScheduling} and \ref{lemma:NoContrCanonicalScheduling}) if and only if so is at least one among $\Pi_1, \dots, \Pi_{h}$.
	\end{enumerate}

	Let us prove that when neither Case~\ref{item:One} nor Case~\ref{item:Two} holds, we are in Case \ref{item:Three}.
	If neither Case~\ref{item:One} nor Case~\ref{item:Two} holds, $\Pi$ is not $\maltese$-full and $\Pi \pnrewrite[\sched] \Sigma \pnrewrite[a] \Pi'$ for some sets $\Sigma, \Pi'$, some scheduling $\sched$ only made of $\Ex_{i}$ such that $\Pi \pnrewrite[\sched] \Sigma$ is a $\lnot\maltese$-action, and some elementary scheduling $a \neq \Ex_{i}$ such that $\Sigma \pnrewrite[a] \Pi'$ is not a $\maltese$-action. 
	If $a = \Contr_{i}$ and $\Sigma \pnrewrite[a] \Pi'$ is nullary $\Contr$-splitting, then $i$ is the output of a $\wn$-cell in every element of $\Sigma$, and so there is a set $\Pi''$ such that  $\Sigma \pnrewrite[c] \Pi''$ is neither a $\maltese$-action nor nullary $\Contr$-splitting,  with $c \in \{\Contr_{i}, \Weak_i,\Der_i\}$ (if $c = \Contr_{i}$, $\Sigma \pnrewrite[c] \Pi''$ just splits the inputs of the $\wn$-cell with output $i$ in some element of $\Sigma$ in a way different from $\Sigma \pnrewrite[a] \Pi'$).
	Therefore, $\Pi$ safely rewrites to a finite set of $\DiLL_0$ quasi-proof-structures.
	Clearly, it is impossible that $\Pi$ safely rewrites to infinitely many sets, since each element of $\Pi$ has a finite number of conclusions.
	Hence, Case \ref{item:Three} holds.
	
	The \emph{nondeterministic} procedure to decide if a finite set $\Pi$ of $\DiLL_0$ quasi-proof-structures is \gluable consists of repeating the step described in Case \ref{item:Three},  until all the sets $\Pi_1, \dots, \Pi_n$ which $\Pi$ rewrites to are of the form described in Case \ref{item:One} or \ref{item:Two}: $\Pi$ is \gluable if and only if so is at least one among $\Pi_1, \dots, \Pi_n$. 
	The procedure terminates because (recall the size above):
	\begin{itemize}
		\item if $\Pi \pnrewrite[\Ex_{i}] \Pi'$ then $\Size{\Pi} = \Size{\Pi'}$ but infinite chains of $\Ex_{i}$ are irrelevant (see \Cref{note:exc-schedule});
		\item if $\Pi$ safely rewrites to $\Pi'$ then $\Size{\Pi} > \Size{\Pi'}$ (according to the lexicographic order). 
		\qedhere
	\end{itemize} 
\end{proof}

The size used in the proof of Proposition \ref{prop:decisionFiniteDai} is well defined only if the set $\Pi$ of $\DiLL_0$ quasi-proof-structures is finite. 
The \emph{finiteness} hypothesis in deciding \gluability of $\Pi$ is necessary, as we have seen in Remark \ref{rmk:infinite}: the infinite in depth $\Pi$ shown there is not \gluable and there are infinite rewritings from $\Pi$ that are canonical and without nullary $\Contr$-splittings.

\begin{theorem}[decidability]
	\label{thm:decisionFinite}
The finite inverse Taylor expansion problem is decidable.
\end{theorem}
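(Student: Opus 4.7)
The plan is to obtain this theorem as a direct combination of the two main results already established: the gluability criterion (Theorem~\ref{thm:characterization}.\ref{p:characterization-with-cut}) and the decidability of finite gluability (Proposition~\ref{prop:decisionFiniteDai}). Since the input of the finite inverse Taylor expansion problem is a finite set $\Pi$ of $\DiLL_0$ proof-structures without $\maltese$-cells, and every $\DiLL_0$ proof-structure is in particular a $\DiLL_0$ quasi-proof-structure (with a single component), the same $\Pi$ can be fed as input to the decision procedure of Proposition~\ref{prop:decisionFiniteDai}.

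First, I would invoke Theorem~\ref{thm:characterization}.\ref{p:characterization-with-cut} to reduce the question ``is there a \MELL proof-structure $R$ such that $\Pi \subseteq \Taylor{R}$?'' to the question ``is $\Pi$ \gluable?''. This equivalence holds precisely because the elements of $\Pi$ are $\DiLL_0$ proof-structures without $\maltese$-cells, which is the hypothesis of that theorem.

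Then I would apply Proposition~\ref{prop:decisionFiniteDai}: since $\Pi$ is finite, the gluability of $\Pi$, viewed as a finite set of $\DiLL_0$ quasi-proof-structures, is decidable via the nondeterministic procedure described there (which iteratively explores the finitely many safe rewritings and terminates because the size quadruple $\Size{\Pi}$ decreases strictly along safe rewritings, while exchanges can be bounded as in \Cref{note:exc-schedule}). Composing the two reductions yields the desired decision procedure: given $\Pi$, decide whether $\Pi$ is \gluable, and answer positively to the inverse Taylor expansion problem if and only if the gluability test succeeds.

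The only subtle point in the argument is checking that the hypotheses match, in particular that viewing a proof-structure as a single-component quasi-proof-structure does not alter the notion of \gluability used in Proposition~\ref{prop:decisionFiniteDai}, which is immediate from Definitions~\ref{def:proof-structure} and~\ref{def:quasi-proof-structure}. No genuine obstacle arises beyond this bookkeeping, since all the combinatorial and termination work has already been carried out in Lemmas~\ref{lemma:action-daimons},~\ref{lemma:CanonicalScheduling},~\ref{lemma:NoContrCanonicalScheduling} and Proposition~\ref{prop:decisionFiniteDai}.
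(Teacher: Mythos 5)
Your proposal is correct and follows exactly the paper's own proof: the paper likewise reduces the question to \gluability via \Cref{thm:characterization} (using the hypothesis that $\Pi$ has no $\maltese$-cells) and then invokes the decision procedure of Proposition~\ref{prop:decisionFiniteDai} for finite sets. The extra bookkeeping you mention about viewing proof-structures as single-component quasi-proof-structures is harmless and implicit in the paper's two-line argument.
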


\begin{proof}
	Let $\Pi$ be a finite set of $\DiLL_0$ proof-structures without $\maltese$-cells.
	By Proposition~\ref{prop:decisionFiniteDai}, we can decide if $\Pi$ is \gluable. 
	By \Cref{thm:characterization}, since there is no $\maltese$-cell in $\Pi$, this is equivalent to deciding if $\Pi \subseteq \Taylor{R}$ for some \MELL proof-structure $R$.
\end{proof}

The nondeterministic procedure to decide the finite inverse Taylor expansion problem is given by the proof of decidability of finite \gluability (Proposition \ref{prop:decisionFiniteDai}) and is \emph{constructive}: if a finite set of $\DiLL_0$ proof-structures without $\maltese$-cells is included in the Taylor expansion $\Taylor{R}$ of some \MELL proof-structure $R$, our procedure allows us to construct one of these $R$.

% !TEX root = mainFinal.tex
\section{Non-atomic axioms}
\label{sect:nonatomic}

Our (quasi-)proof-structures (Definitions \ref{def:proof-structure} and \ref{def:quasi-proof-structure}, \Cref{fig:resource-cells}) deal with \emph{atomic axioms} only: the two outputs of any $\Ax$-cell are \kl{dual} atomic formulas.
We can relax the definitions and allow also the presence of \emph{non-atomic axioms}: the outputs of any $\ax$-cell are typed by \kl{dual} \MELL formulas, not necessarily atomic.
We can extend our results to this more general setting, with some technical complications. 
The only trouble is with \emph{exponential axioms}, where the outputs of an $\ax$-cell are typed by \MELL formulas of the form $\oc A^\bot, \wn A$. 
Indeed, consider the daimons $\sigma$ and $\sigma'$ and the $\MELL$ proof-structure $S'$ below.
\vspace{\beforepn}
\begin{center}
	$\sigma = $
  \scalebox{0.8}{
    \pnet{
      \pnformulae{
      \pnf[!A1]{$\oc A^{\bot}$}~\pnf[!A2]{$\oc A^{\bot}$}~\pnf[?A]{$\wn A$}
      }
      \pndaimon[d]{!A1,!A2,?A}[1][.2]
    }
  } 
  \qquad
  $\sigma' = $
  \scalebox{0.8}{
    \pnet{
      \pnformulae{
	\pnf[!A1]{$\oc A^{\bot}$}~\pnf[!A2]{$\oc A^{\bot}$}~\pnf[?A1]{$\wn A$}~\pnf[?A2]{$\wn A$}
      }
      \pndaimon[d]{!A1,!A2,?A1,?A2}
    }
  }
  \qquad
  $S' = $
  \scalebox{0.8}{
    \pnet{
      \pnformulae{
	\pnf[!A1]{$\oc A^{\bot}$}~\pnf[!A2]{$\oc A^{\bot}$}~\pnf[?A1]{$\wn A$}~\pnf[?A2]{$\wn A$}
      }
      \pnaxiom{!A1,?A2}[1.5]
      \pnaxiom{!A2,?A1}
    }
  }
\end{center}
We have that $\{\sigma\} \pnrewrite[\Contr_3] \{\sigma'\} \subseteq \FatTaylor{S'}$.
	But no elementary scheduling $\Contr_3$ can be applied backwards to $S'$ (because of the lack of a $\wn$-cell), which \emph{breaks naturality} (Theorem \ref{thm:projection-natural}).
	This is actually due to the fact that, with exponential axioms, co-functionality (Lemma \ref{prop:unwinding-co-functional}) fails for the
rewrite rule $\pnrewrite[\Contr_i]$: read from right to left, $\pnrewrite[\Contr_i]$ is a functional but not total relation.

The solution we propose here
to deal with non-atomic axioms asks for a technical refinement of only few notions, 
the rest is unchanged and goes smoothly.
	The definitions of \kl{module}, \kl{proof-structure} and \kl{quasi-proof-structure} (Definitions \ref{def:proof-structure} and \ref{def:quasi-proof-structure}) change~as~follows.
	\begin{enumerate}
		\item In \kl{modules}, non-atomic axioms are allowed, except the exponential ones. 
		We add a new type of cells, the $\ocax$-cells, with no inputs and two outputs of type $\oc A^\bot$ and $A$. 
		\item\label{p:no-bangaxiom-alone} In \kl{proof-structures} (and hence in \kl{quasi-proof-structures}), we require that, for every $\ocax$-cell whose outputs have type $\oc A^\bot$ and $A$, the output $o$ of type $A$ is an input of a $\wn$-cell (more precisely, $o$ is an \kl{half} of an \kl{edge} whose other half is the input of a $\wn$-cell).  
	\end{enumerate} 

\noindent
	The requirement ``\kl{atomic}'' in the \kl{elementary scheduling} $\Ax_i$ in \Cref{fig:elementary-schedulings} is replaced by ``non-\kl{exponential}'', and we add the new \kl{elementary scheduling} $\OcAx_i$ below to the list in \Cref{fig:elementary-schedulings},
	\begin{center}
		\small
	$(\Gamma_1; \cdots ; \Gamma_{k\!-\!1}; \FlagType(i), \FlagType(i\!+\!1); \Gamma_{k\!+\!1}; \cdots; \Gamma_n)
	\xrightarrow{\!\OcAx_{i}\!}
	(\Gamma_1; \cdots ; \Gamma_{k\!-\!1}; \emptylist; \Gamma_{k\!+\!1} ; \cdots; \Gamma_n)$  if $\FlagType(i) = \wn A = \FlagType(i\!+\!1)^\bot$
	\end{center}
	whose \kl[action of elementary scheduling]{action}---also in its daimoned version---is like the hypothesis in \Cref{fig:hypothesis,fig:hypothesis-daimon}:

	\vspace{-2\baselineskip}
	\begin{center}
		\scalebox{0.8}{
			\pnet{
				\pnformulae{
					\pnf[i']{$A$}
					\\
					~~\pnf[j]{$i\!+\!1$}
				}
				\pnexp{}{i'}[i]{$i$}
				\pnaxiombang[ax]{i',j}
				\pnsemibox{ax,i',j}
			}
			\ $\pnrewrite[\OcAx_{i}]$ \
			\pnet{
				\pnformulae{
					\pnfsmall[empty]{}
				}      
				\pnsemibox{empty}
			}
			\qquad\qquad\qquad
			\pnet{
				\pnformulae{
					\pnf[i]{$i$}~\pnf[j]{$i\!+\!1$}
				}
				\pndaimon[ax]{i,j}
				\pnsemibox{ax,i,j}
			} 
			\ $\pnrewrite[\OcAx_{i}]$ \
			$\bigg\{
			\pnet{
				\pnformulae{
					\pnfsmall[empty]{}
				}      
				\pnsemibox{empty}
			}
			\bigg\}$
		}
	\end{center}

The way the \kl[action of elementary scheduling]{action of an elementary scheduling} is defined (Definitions \ref{def:unwinding} and \ref{def:paths-on-polyadic}) automatically rules out the possibility that a quasi-proof-structure rewrites to a \kl{module} that is not a quasi-proof-structure.
For instance, the elementary scheduling $\Der_1$ does not \kl{apply} to the (\MELL and $\DiLL_0$) proof-structure $Q$ below, because it would yield a module $Q'$ that is not a quasi-proof-structure (see \Cref{p:no-bangaxiom-alone} above).

\begin{center}
	\vspace{\beforepn}
	$Q = $
	\scalebox{0.8}{
		\pnet{
			\pnformulae{
				\pnf[i']{$A$}~\pnf[j]{$\oc A^\bot$}
			}
			\pnexp{}{i'}[i]{$\wn A$}
			\pnaxiombang[ax]{i',j}
		} 
	}
	$ \ \not\pnrewrite[\Der_{1}] \ $
	\scalebox{0.8}{
		\pnet{
			\pnformulae{
				\pnf[i']{$A$}~\pnf[j]{$\oc A^\bot$}
			}
			\pnaxiombang[ax]{i',j}
		}
	}
	$= Q'$
\end{center}

Apart from the changes we mentioned, all definitions and 
claims in \Cref{sect:proof-nets,sect:taylor,sect:rules,sec:naturality,sec:glueable,sec:decidability} remain unaffected and valid (just add $\OcAx_{i}$ to the list of elementary schedulings that can apply to $R$ in Lemmas \ref{lem:cell-or-mix} and \ref{lemma:contraction}.\ref{p:cell-or-mix-der}).
The counterexample to naturality shown at the beginning of this section does not apply here, 
as $S'$ is not a \MELL quasi-proof-structure (because of its exponential axioms). 
In our framework with non-atomic axioms, the only \MELL proof-structures containing $\sigma'$ in their \kl{filled Taylor expansion} are the ones below (up to the order of their conclusions), and the elementary scheduling $\Contr_3$ \kl{applies} backward~to~each~of~them. 
	
\begin{center}
		\vspace{\beforepn}
	 \scalebox{0.8}{
		\pnet{
			\pnformulae{
				\\
				~~~\pnf[?A1]{$A$}~\pnf[?A2]{$A$}
				\\
				\pnf[!A1]{$\oc A^{\bot}\,$}~\pnf[!A2]{$\,\oc A^{\bot}$}
			}
			\pnaxiombang{!A1,?A2}[1.8]
			\pnaxiombang{!A2,?A1}
			\pnexp{}{?A1,?A2}{$\wn A$}
		}
		\qquad\qquad\qquad
		\pnet{
			\pnformulae{
				\\
				~~\pnf[!A2]{$A^{\bot}$}~\pnf[?A1]{$A$}~\pnf[?A2]{$A$}
				\\
				\pnf[!A1]{$\oc A^{\bot}$}
			}
			\pnaxiombang{!A1,?A2}[2.2]
			\pnaxiom[ax]{!A2,?A1}
			\pnexp{}{?A1,?A2}{$\wn A$}[1.1]
			\pnbag{}{!A2}{$\oc A^\bot$}[1.1]
			\pnbox{ax,!A2,?A1}
		}
		\qquad\qquad\qquad
		\pnet{
			\pnformulae{
				\pnf[!A1]{$A^{\bot}$}~~~~~\pnf[?A1]{$A$}
				\\\\
				~~\pnf[!A2]{$A^{\bot}$}~\pnf[?A2]{$A$}
			}
			\pnaxiom[ax1]{!A1,?A1}
			\pnaxiom[ax2]{!A2,?A2}
			\pnexp{}{?A1,?A2}{$\wn A$}[1.1]
			\pnbag{}{!A2}{$\oc A^\bot$}[1.1]
			\pnbag{}{!A1}{$\oc A^\bot$}[2.2]
			\pnbox{ax2,!A2,?A2}
			\pnbox{ax1,!A1,?A1}
		}
	}
\end{center}

The way we propose here to deal with non-atomic axioms is more elegant and less \textit{ad hoc} than the one we presented in \cite[Section 7]{CSL2020}. 
In the latter, we allowed the outputs of a $\maltese$-cell to be inputs of a $\wn$-cell, and we changed the action of the elementary scheduling $\Contr_i$ for $\DiLL_0$ quasi-proof-structures in the daimoned case (\Cref{fig:contraction-daimon}), by explicitly requiring the presence of a $\wn$-cells above the conclusion $i$.
Consequently, in \cite{CSL2020} we had to redefine the filled Taylor expansion in the non-atomic case, and thus to recheck that naturality holds for \emph{all} the rewrite rules.
Here instead we do not have to redefine the filled Taylor expansion and we have to check naturality (\Cref{thm:projection-natural}) in the non-atomic case only for the action~of~$\OcAx_i$.

As another possible solution to deal with non-atomic axioms, instead of using generalized $\wn$-cells in Definition \ref{def:proof-structure} (with an arbitrary number of inputs of type $A$ and one output of type $\wn A$, as in \cite{DanosRegnier95}), we could have used different kinds of $\wn$-cells for dereliction (one input of type $A$, one output of type $\wn A$), contraction (two inputs of type $\wn A$, one output of type $\wn A$), and weakening (no inputs, one output of type $\wn A$), as in \cite{Girard:1987}.
Such a choice should be supported by a rework of the elegant definition of the Taylor expansion of a \MELL proof-structure via the notion of pullback (Definitions \ref{def:proto} and \ref{def:taylor}), since it collapses dereliction, contraction and weakening in a generalized $\wn$-cell.

% !TEX root = mainFinal.tex
\section{Conclusions and perspectives}
\label{sec:conclusions}

\subsubsection*{Daimons for empty boxes}

Our \gluability criterion (\Cref{thm:characterization}) solves the inverse Taylor expansion problem in an ``asymmetric'' way: we characterize the sets of $\DiLL_0$ proof-structures without $\maltese$-cells that are included in the Taylor expansion of some \MELL proof-structure, but in general $\DiLL_0$ proof-structures might contain $\maltese$-cells (while \MELL proof-structures  cannot, see Definition \ref{def:proof-structure}).
\kl{Daimons} and \kl{emptyings} are needed to get a natural transformation 
(via the \emph{\kl{filled}} Taylor expansion, \Cref{thm:projection-natural}), which is the main ingredient to prove our \gluability criterion. 
But we are interested in frameworks without $\maltese$-cells.
This asymmetry is technically inevitable, 
due to the fact that a \gluable set of $\DiLL_0$ proof-structures might not contain any information on the content of some box, when they take 0 copies of~it.

\subsubsection*{Comparison with Pagani and Tasson \cite{PaganiTasson2009}}

Pagani and Tasson's solution \cite{PaganiTasson2009} to the inverse Taylor expansion problem is less
general than ours, because it characterizes \emph{finite} sets of $\DiLL_0$
proof-structures that are included in the Taylor expansion of some
\emph{cut-free} \MELL proof-structure with \emph{atomic axioms}. 
We do not have these limitations (finite, cut-free, atomic axioms), our characterization yields a decision procedure (\Cref{thm:decisionFinite}) in the finite case as in \cite{PaganiTasson2009}, and we can smoothly adapt our criterion to the cut-free case (\Cref{thm:characterization}.\ref{p:characterization-cut-free}, which is not trivial as explained in Example \ref{ex:glueable-non-cut-free}).
In \cite{PaganiTasson2009}, the restrictions to cut-free 
and to atomic axioms aim to simplify their presentation and might be overcome. 
But the restriction to \emph{finite} sets of $\DiLL_0$ proof-structures is somehow essential in their approach. 
Roughly, their machinery takes a finite set $\Pi $ of $\DiLL_0$ proof-structures as input,  juxtaposes its elements in a unique graph $\pi$ and then runs a rewriting by means of some tokens that go through the whole $\pi$ and try to merge its components in a $\MELL$ proof-structure whose Taylor expansion includes $\Pi$, if any. 
If $\Pi$ were an infinite set, $\pi$ would be an infinite graph and, apart from the technical intricacies of dealing with infinite objects, it would be impossible 
to provide a characterization in that case.
In particular, their approach could not distinguish whether $\Pi$ is infinite in width (which can be in the Taylor expansion of some \MELL proof-structure) or infinite in depth (which cannot, see Remark \ref{rmk:infinite}).
Our approach, instead, defines a rewrite relation on a single---finite---element of $\Pi$ and extends it to the whole (possibly infinite) $\Pi$ by requiring that the same rewrite rule applies to each element of $\Pi$ (Definition \ref{def:paths-on-polyadic}). 
Thus, we can accommodate the case where $\Pi$ is infinite.

We believe that our rewriting rules are also simpler than the ones in
\cite{PaganiTasson2009} and rely on a more abstract and less \textit{ad hoc}
property (naturality), that allows us to prove also semi-decidability of
another problem: \emph{type inhabitation} for cut-free \MELL proof-structures.

Finally, Pagani and Tasson's solution of the inverse Taylor expansion problem is affected by another limitation, even though not particularly emphasized in \cite{PaganiTasson2009}: 
their Theorem 2 (analogous to the ``only if'' part of our \Cref{thm:characterization}) assumes not only that their rewriting starting from a set of $\DiLL_0$ proof-structures terminates 
but also that it ends on a \MELL proof-structure, according to their definition of \MELL proof-structure.
This is limiting when in the set of $\DiLL_0$ proof-structures there is no information about the content of a box.
For instance, consider the singletons $\Pi$ and $\Pi'$ of $\DiLL_0$ proof-structures below.
\begin{center}
	\vspace{\beforepn}
	\small
	$\Pi = \bigg\{
	\scalebox{0.8}{
	\pnet{
		\pnformulae{\pnf[X]{$\oc \one$}}
		\pncown[cown]{X}
	}}
	\bigg\}$
	\qquad\qquad\qquad
	$\Pi' = \bigg\{
	\scalebox{0.8}{
	\pnet{
		\pnformulae{\pnf[1]{$\oc X$}}
		\pncown[cown]{1}
	}}
	\bigg\}$
	\qquad\qquad\qquad
		$R =
		\scalebox{0.8}{
	\pnet{
		\pnformulae{\pnf[1']{$\one$}}
		\pnone[1]{1'}
		\pnbox{1,1'}
		\pnbag{}{1'}{$\oc \one$}[1.2]
	}}$
\end{center}
Pagani and Tasson's rewrite rules do not distinguish the two singletons, each one is included in the Taylor expansion of cut-free \MELL proof-structures with an ``empty box'', due to the lack of information.
So their notion of \MELL proof-structure is wider and non-standard (because it allows the presence of empty boxes).
On the contrary, our rewrite rules distinguish $\Pi$ and $\Pi'$: 
via the action of cut-free schedulings, the former can rewrite to $\{\emptynet\}$ and is included in the Taylor expansion of the \MELL proof-structure $R$ above, the latter cannot rewrite to $\{\emptynet\}$ and is not part of the Taylor expansion of any cut-free \MELL proof-structure. 
Summing up, our characterization follows the standard notion of \MELL proof-structures (unlike \cite{PaganiTasson2009}) and is more fine-grained and informative than the one in \cite{PaganiTasson2009}.

\subsubsection*{The \texorpdfstring{\(\lambda\)}{lambda}-calculus, connectedness and coherence}

Our rewriting system and \gluability criterion might help to prove that a binary coherence relation can solve the inverse Taylor expansion problem for \MELL proof-structures fulfilling some geometric property related to connectedness, despite the impossibility for the full \MELL fragment. Such a coherence would extend the coherence criterion for resource $\lambda$-terms. Note that our \gluability criterion is actually an extension of the criterion for resource $\lambda$-terms. Indeed, in the case of the $\lambda$-calculus, there are three rewrite steps, corresponding to abstraction, application and variable (which can be encoded in our rewrite steps), and coherence is defined inductively: if a set of resource
$\lambda$-terms is coherent, then any set of resource $\lambda$-term that rewrites to it is also coherent.

Presented in this way, the main difference between the $\lambda$-calculus and ``connected'' \MELL
(concerning the inverse Taylor expansion problem) would not be because of the rewriting
system, but because the structure of any resource $\lambda$-term uniquely determines the rewriting
path, while for $\DiLL_0$ proof-structures we have to quantify existentially over all possible
paths. This is an unavoidable consequence of the fact that proof-structures do not have a
tree-structure, contrary to $\lambda$-terms.

Moreover, it is possible to match and mix different rewritings. 
Indeed, consider three $\DiLL_0$ proof-structures pairwise \gluable: 
proving that they are \gluable as a whole amounts to computing a rewriting from the three rewritings witnessing their pairwise \gluability. 
Our rewriting system has been designed with that mixing-and-matching operation in mind, in the particular case where the boxes are connected. This is reminiscent of \cite{FSCD2016}, where we also showed that a certain property enjoyed by the \(\lambda\)-calculus can be extended to proof-structures, provided they are connected inside boxes. 
We leave it as future work.

\subsubsection*{Functoriality and naturality}

Our functorial point of view on proof-structures might unify many results. Let us
cite two of them.
\begin{itemize}
\item A sequent calculus proof of \(\vdash \Gamma\) can be translated to a
  path from the empty sequent to \(\Gamma\). 
  This could be the starting point for the formulation of a new correctness criterion.
\item The category \(\Scheduling\) can be extended with a higher
  structure---transforming it from a category into a 2-category---which allows cut-elimination to
  be represented as a 2-arrow. The functors \(\MELLFunctor\) and
  \(\PPolyPN\) can also be extended to 2-functors, so as to prove 
	via naturality that cut-elimination and the Taylor expansion commute.
\end{itemize}

\subsubsection*{Acknowledgments} The authors are grateful to Olivier Laurent, Lionel Vaux and the anonymous reviewers for their insightful comments.

%%% Local Variables:
%%% mode: latex
%%% TeX-master: "mainFinal"
%%% End:

\bibliographystyle{alphaurl}
\bibliography{biblio.bib}

\end{document}